\title[Thresholds for election methods]
{Thresholds quantifying proportionality criteria for election methods}
\date{12 October, 2018}
\author{Svante Janson}
\address{Department of Mathematics, Uppsala University, PO Box 480,
SE-751~06 Uppsala, Sweden}
\email{svante.janson@math.uu.se}
\newcommand\urladdrx[1]{{\urladdr{\def~{{\tiny$\sim$}}#1}}}
\subjclass[2010]{} 
\numberwithin{equation}{section}
\renewcommand\le{\leqslant}
\renewcommand\ge{\geqslant}
\theoremstyle{plain}
\newenvironment{property}[1]%
{\begin{thxmetod}}{\end{thxmetod}}
\newtheorem{thxmetod}{}
\newenvironment{metod}[1]%
{\begin{thxmetod}}{\end{thxmetod}}
\theoremstyle{plain}
\newtheorem{theorem}{Theorem}[section]
\newtheorem{lemma}[theorem]{Lemma}
\newtheorem{corollary}[theorem]{Corollary}
\newtheorem{conjecture}[theorem]{Conjecture}
\newtheorem{theoremx}[theorem]{\Theoremx}
\newcommand\Theoremx{``Theorem''}
\newcommand\Theoremsx{``Theorems''}
\newtheorem{definition}[theorem]{Definition}
\theoremstyle{definition}
\newtheorem{example}[theorem]{Example}
\newtheorem{problem}[theorem]{Problem}
\newtheorem{remark}[theorem]{Remark}
\theoremstyle{remark}
\newenvironment{acks}{%
\section*{Acknowledgement}
}
{}
\newenvironment{romenumerate}[1][-10pt]{
\addtolength{\leftmargini}{#1}\begin{enumerate}
 }{\end{enumerate}}
\newcounter{oldenumi}
{\setcounter{oldenumi}{\value{enumi}}
\begin{romenumerate} \setcounter{enumi}{\value{oldenumi}}}
{\end{romenumerate}}
\newcounter{thmenumerate}
\newcounter{xenumerate}   
\newenvironment{val}[1][0pt]{
\addtolength{\leftmargini}{#1}
\begin{itemize}%
 }{\end{itemize}}
\newcommand\xfootnote[1]{\unskip\footnote{#1}$ $} 
\newcommand\pfitemx[1]{\par#1:}
\newcommand\pfitemref[1]{\pfitemx{\ref{#1}}}
\newcommand{\refT}[1]{Theorem~\ref{#1}}
\newcommand{\refTx}[1]{\Theoremx~\ref{#1}}
\newcommand{\refTs}[1]{Theorems~\ref{#1}}
\newcommand{\refTxs}[1]{\Theoremsx~\ref{#1}}
\newcommand{\refC}[1]{Corollary~\ref{#1}}
\newcommand{\refCs}[1]{Corollaries~\ref{#1}}
\newcommand{\refL}[1]{Lemma~\ref{#1}}
\newcommand{\refLs}[1]{Lemmas~\ref{#1}}
\newcommand{\refR}[1]{Remark~\ref{#1}}
\newcommand{\refRs}[1]{Remarks~\ref{#1}}
\newcommand{\refS}[1]{Section~\ref{#1}}
\newcommand{\refSs}[1]{Sections~\ref{#1}}
\newcommand{\refSS}[1]{Section~\ref{#1}}
\newcommand{\refD}[1]{Definition~\ref{#1}}
\newcommand{\refDs}[1]{Definitions~\ref{#1}}
\newcommand{\refE}[1]{Example~\ref{#1}}
\newcommand{\refApp}[1]{Appendix~\ref{#1}}
\newcommand{\refApps}[1]{Appendices~\ref{#1}}
\newcommand{\refConj}[1]{Conjecture~\ref{#1}}
\newcommand{\refTab}[1]{Table~\ref{#1}}
\newcommand{\refTabs}[1]{Tables~\ref{#1}}
\newcommand{\refFn}[1]{Footnote~\ref{#1}}
\xdef\klockan{\the\count1.0\the\count255}
\xdef\klockan{\the\count1.\the\count255}\fi
\newcommand\nopf{\qed}   
\newcommand{\sumin}{\sum_{i=1}^n}
\newcommand{\sumkn}{\sum_{k=1}^n}
\newcommand{\sumjS}{\sum_{j=1}^S}
\newcommand\set[1]{\ensuremath{\{#1\}}}
\newcommand\bigset[1]{\ensuremath{\bigl\{#1\bigr\}}}
\newcommand\Bigset[1]{\ensuremath{\Bigl\{#1\Bigr\}}}
\newcommand\xpar[1]{(#1)}
\newcommand\bigpar[1]{\bigl(#1\bigr)}
\newcommand\Bigpar[1]{\Bigl(#1\Bigr)}
\newcommand\biggpar[1]{\biggl(#1\biggr)}
\newcommand\bigabs[1]{\bigl|#1\bigr|}
\newcommand\Bigabs[1]{\Bigl|#1\Bigr|}
\def\rompar(#1){\textup(#1\textup)}    
\newcommand\xfrac[2]{#1/#2}
\newcommand\xqfrac[2]{#1/(#2)}
\def\xexp(#1){e^{#1}}
\newcommand\ceil[1]{\lceil#1\rceil}
\newcommand\floor[1]{\lfloor#1\rfloor}
\newcommand\setn{\set{1,\dots,n}}
\newcommand\nn{[n]}
\newcommand\ntoo{\ensuremath{{n\to\infty}}}
\newcommand\Ntoo{\ensuremath{{N\to\infty}}}
\newcommand\Vtoo{\ensuremath{{V\to\infty}}}
\newcommand\Stoo{\ensuremath{{S\to\infty}}}
\newcommand\punkt{.\spacefactor=1000}    
\newcommand\ie{i.e\punkt}
\newcommand\eg{e.g\punkt}
\newcommand\viz{viz\punkt}
\newcommand\cf{cf\punkt}
\newcommand\ii{\mathrm{i}}
\newcommand\bbC{\mathbb C}
\newcounter{CC}
\newcounter{cc}
\newcommand\ga{\alpha}
\newcommand\gd{\delta}
\newcommand\gD{\Delta}
\newcommand\gam{\gamma}
\newcommand\gk{\kappa}
\newcommand\gs{\sigma}
\newcommand\eps{\varepsilon}
\renewcommand\phi{\xxx}  
\newcommand\cA{\mathcal A}
\newcommand\cB{\mathcal B}
\newcommand\cC{\mathcal C}
\newcommand\cD{\mathcal D}
\newcommand\cE{\mathcal E}
\newcommand\cS{{\mathcal S}}
\newcommand\cU{{\mathcal U}}
\newcommand\cV{\mathcal V}
\newcommand\cW{\mathcal W}
\newcommand\qw{^{-1}}
\newcommand\oi{[0,1]}
\newcommand\dd{\,\mathrm{d}}
\newcommand\rhs{right-hand side}
\newcommand\phragmen{Phrag\-m{\'e}n}
\newcommand\xx[1]{^{(#1)}}
\newcommand\MM{S}
\newcommand\ellx{\ell}
\newcommand\lv{L}
\newcommand\pix{\overline\pi}
\newcommand\ls{(\ellx,\MM)}
\newcommand\Ls{(L,\MM)}
\newcommand\is{(1,\MM)}
\newcommand\party{_{\mathsf{party}}}
\newcommand\same{_{\mathsf{same}}}
\newcommand\sameq{_\sameqx}
\newcommand\tactic{_{\mathsf{tactic}}}
\newcommand\partyx{{\mathsf{party}}}
\newcommand\samex{{\mathsf{same}}}
\newcommand\sameqx{\ensuremath{\mathsf{same_{=}}}}
\newcommand\tacticx{{\mathsf{tactic}}}
\newcommand\qEJR{{EJR}}
\newcommand\qPJR{{PJR}}
\newcommand\qJR{{JR}}
\newcommand\DPC{\PSC}
\newcommand\PSCx{\mathsf{PSC}}
\newcommand\PSC{_{\PSCx}}
\newcommand\wPSCx{\mathsf{w}\PSCx}
\newcommand\wPSC{_{\wPSCx}}
\newcommand\EJRx{\mathsf{EJR}}
\newcommand\PJRx{\mathsf{PJR}}
\newcommand\SJRx{\mathsf{SJR}}
\newcommand\SSJRx{\mathsf{SSJR}}
\newcommand\EJR{_{\EJRx}}
\newcommand\PJR{_{\PJRx}}
\newcommand\SJR{_{\SJRx}}
\newcommand\SSJR{_{\SSJRx}}
\newcommand\qSx{\fS}
\newcommand\qS{_{\qSx}}
\newcommand\piS{\pi\qS}
\newcommand\cVW{\cW\compl}
\newcommand\compl{^{\mathsf c}}
\newcommand\fS{{\mathfrak S}}
\newcommand\fE{{\mathfrak E}}
\newcommand\opt{optimization}
\newcommand\MMMx{\mathfrak{M}}
\newcommand\MMM{^{\MMMx}}
\newcommand\Phrux{\ensuremath{\mathsf{Phr\text{-}u}}}
\newcommand\Phrox{\ensuremath{\mathsf{Phr\text{-}o}}}
\newcommand\Thax{\ensuremath{\mathsf{Th\text{-}add}}}
\newcommand\Thawx[1]{\ensuremath{\mathsf{Th\text{-}add}(#1)}}
\newcommand\Thex{\ensuremath{\mathsf{Th\text{-}elim}}}
\newcommand\Thox{\ensuremath{\mathsf{Th\text{-}o}}}
\newcommand\Thoptx{\ensuremath{\mathsf{Th\text{-}opt}}}
\newcommand\Thoptwx[1]{\ensuremath{\mathsf{Th\text{-}opt}(#1)}}
\newcommand\Bordax[1]{\ensuremath{\mathsf{Borda}(#1)}}
\newcommand\Divx[1]{\ensuremath{\mathsf{Div}(#1)}}
\newcommand\Qx[1]{\ensuremath{\mathsf{Q}(#1)}}
\newcommand\DHx{\ensuremath{\mathsf{D'H}}}
\newcommand\StLx{\ensuremath{\mathsf{StL}}}
\newcommand\Adamsx{\ensuremath{\mathsf{Adams}}}
\newcommand\vkx{\ensuremath{\mathsf{LR}}}
\newcommand\Droopx{\ensuremath{\mathsf{Droop}}}
\newcommand\BVx{\ensuremath{\mathsf{BV}}}
\newcommand\AVx{\ensuremath{\mathsf{AV}}}
\newcommand\CVx{\ensuremath{\mathsf{CV}}}
\newcommand\CVqx{\ensuremath{\mathsf{CV^{=}}}}
\newcommand\LVxx{\ensuremath{\mathsf{LV}}}
\newcommand\LVx[1]{\ensuremath{\mathsf{LV}(#1)}}
\newcommand\SNTVx{\ensuremath{\mathsf{SNTV}}}
\newcommand\STVx{\ensuremath{\mathsf{STV}}}
\newcommand\Phru{^{\Phrux}}
\newcommand\Phro{^{\Phrox}}
\newcommand\Tha{^{\Thax}}
\newcommand\Thaw[1]{^{\Thawx{#1}}}
\newcommand\The{^{\Thex}}
\newcommand\Tho{^{\Thox}}
\newcommand\Thopt{^{\Thoptx}}
\newcommand\Thoptw[1]{^{\Thoptwx{#1}}}
\newcommand\Borda[1]{^{\Bordax{#1}}}
\renewcommand\DH{^{\DHx}}
\newcommand\StL{^{\StLx}}
\newcommand\Adams{^{\Adamsx}}
\newcommand\vk{^{\vkx}}
\newcommand\Droop{^{\Droopx}}
\newcommand\Q[1]{^{\Qx{#1}}}
\newcommand\BV{^{\BVx}}
\newcommand\AV{^{\AVx}}
\newcommand\CV{^{\CVx}}
\newcommand\CVq{^{\CVqx}}
\newcommand\LV[1]{^{\LVx{#1}}}
\newcommand\SNTV{^{\SNTVx}}
\newcommand\STV{^{\STVx}}
\newcommand\StLn{Sainte-Lagu\"e}
\newcommand\DHn{D'Hondt}
\newcommand\Vkn{Method of Largest Remainder}
\newcommand\vkn{the method of Largest Remainder}
\newcommand\SNTVn{Single Non-Transferable Vote}
\newcommand\SNTVxn{SNTV}
\newcommand\BVn{Block Vote}
\newcommand\AVn{Approval Vote}
\newcommand\LVn{Limited Vote}
\newcommand\CVn{Cumulative Vote}
\newcommand\Than{Thiele's addition method}
\newcommand\Then{Thiele's elimination method}
\newcommand\Thoptn{Thiele's optimization method}
\newcommand\Thanww{Thiele's addition method with weights}
\newcommand\Thoptweakn{Thiele's weak optimization method}
\newcommand\Thaweakn{Thiele's weak addition method}
\newcommand\WWW{|\cW|}
\newcommand\piinf{\pi^*_*}
\newcommand\abbrev[1]{ \ensuremath{(#1)}}
\newcommand\lele{\ensuremath{1\le\ell\le S}}
\newcommand\ccA{\cC\setminus\cA}
\newcommand\cAx{\cA^*}
\newcommand\setx{\ensuremath}
\newcommand\tha{a}
\newcommand\thax{\widehat a}
\newcommand\thb{b}
\newcommand\thc{c}
\newcommand\bw{\overline w}
\newcommand\ww{\mathbf{w}}
\newcommand\vx{v_*}
\newcommand\citat[1]{``#1''}
\newcommand\sumCE{\sum_{C\in\cE}}
\newcommand\weakw{\mathsf{weak}}
\newcommand\kol{: }
\newcommand\xoo{_0^\infty}
\newcommand\Ex{\hat\cE}
\begin{document}

\begin{abstract} 
We define several different thresholds for election methods by considering
different scenarios, corresponding to different proportionality criteria
that have been proposed by various authors. In particular, we reformulate
the criteria known as DPC, PSC, JR, PJR, EJR in our setting.
We consider multi-winner 
election methods of different types,  using ballots with
unordered lists of candidates or ordered lists, and for
comparison also methods using only party lists.
The thresholds are calculated for
many different election methods. The considered methods include
classical ones such as BV, SNTV and STV (with some results going back to
e.g.\ Droop and Dodgson in the 19th century); we also study in detail several
perhaps lesser known methods by Phragmén and Thiele.
There are also many cases left as open problems.
\end{abstract}

\maketitle

\begingroup
\renewcommand\footnote[1]{\relax}
\tableofcontents
\endgroup

\section{Introduction}\label{S:intro}

We consider election methods where a number of persons are elected from some
set of candidates. For example, this is the case
in a multi-member constituency in
a parliamentary election or a local election, 
but also in many other situations such as
electing a board or a committee in an organization.
We will here use the language of a parliamentary election;
some other authors instead talk about \eg{} \emph{committee voting rules}.
We assume throughout that the number of \emph{seats}, i.e., the number of
elected representatives, is given in advance; we denote the number of seats  
by $S$.

Many different election methods have been suggested for this type of
elections, and many of them are, or have been, used somewhere.
For some important examples and discussions, from both mathematical and
political aspects, see \eg{}
\cite{Farrell,Politics,Pukelsheim};
for the election methods actually used at present
in parliamentary elections around the
world, see
\cite{IPU};
see also my surveys \cite{SJV6} (in Swedish) and \cite[Appendix E]{SJV9}.

One important aspect of election methods is whether they are
\emph{proportional} or not, \ie, whether different groups of voters
(parties)  get numbers of seats that are (more or less) proportional to
their numbers of votes.
Whether an election method is proportional or not is not
  precisely defined, even when there are formal parties.
Obviously, there are necessarily deviations from exact proportionality
since the number of seats is an integer for each party.
Moreover, in practice,
whether the outcome of an election is (approximatively)
proportional depends not only on the method, but also on the number of seats
in the constituency, and on other factors such as the number of parties and
their sizes.

One way that has been used by many authors to study proportionality of
election methods theoretically
is to formulate some criterion, which is supposed to be
a desirable  requirement for a method to be regarded as
proportional, 
and then investigate whether a particular method satisfies this criterion
or not.
A typical example is the 
\emph{Droop Proportionality Criterion}  (DPC)
by \citet{Woodall:properties} (see also \citet[p.~283]{Dummett}); 
this is formulated for STV type elections 
where each ballot contains a ranked list of candidates, 
i.e., election
methods with \emph{ordered ballots} in the notation introduced in \refS{Sprel}.

\begin{property}{Droop Proportionality Criterion}
If\/ $V$ votes are cast in an election to fill $S$ seats, 
and, for some whole numbers $\ell$ and $m$ satisfying $0 < \ell \le m$, more
than   $\ell\cdot V/(S+1)$ 
voters put the same $m$ candidates (not necessarily in the same order) 
as the top $m$ candidates in their preference listings,
then at least $\ell$ of those $m$ candidates should be elected. (In the event
of a tie, this should be interpreted as saying that every outcome that
is chosen with non-zero probability should include at least $\ell$ of these $m$
candidates.)
\end{property}

A related criterion is
\emph{Proportionality for Solid Coalitions} (PSC)
\cite{Tideman}, which differs from the DPC above only in that ``more
than $\ell V/(S+1)$ voters'' is replaced by ``at least $\ell V/S$ voters''.
\xfootnote{The formulation of the PSC in \cite{Tideman} is somewhat
  different, but is easily seen to be equivalent to our version; 
\cf{} \refTx{Tx5}.
}
See further \refS{SPSC}.

Further examples of similar proportionality criteria
are
JR, PJR and EJR discussed in \refS{SJR}.

Several authors study such criteria and whether specific elections methods
satisfy them or not; for example, \citet{Woodall:properties} points out that
any version of STV  (Single Transferable Vote)
satisfies the DPC above, at least
provided the Droop quota $V/(S+1)$ is used without rounding.
See  \refS{SPSC}.
\xfootnote{
Recall that there are many versions of STV, 
see \refApp{ASTV};
recall also that in many versions, the Droop quota is rounded to an integer,
which may lead to minor violations of the DPC as stated above.
}

The purpose of the present paper is to promote an alternative
and more quantitative, but closely
related, way of studying such proportionality properties.
Instead of, as in DPC and PSC above, 
fixing a number of voters and asking whether
a set of voters of this size 
always can get at least $\ell$ candidates elected (under certain
assumptions on how these voters vote), we turn the question around and ask
how large a set of voters has to be in order to be guaranteed to get
at least $\ell$ candidates elected
(again under certain assumptions on how they vote).
Corresponding to the DPC and PSC
criteria above, we thus define, for a given
election method,
the threshold
$\pi\DPC\ls$ as the minimum (or rather infimum) fraction of votes that
guarantees at least $\ell$ elected out of $S$
under the conditions above, see \refS{SPSC} for
details.
The DPC criterion then can be formulated as $\pi\DPC\ls\le \ell/(S+1)$ (for all
$S\ge1$ and $\ell\le S$), and the PSC criterion as 
$\pi\PSC\ls<\ell/S$ (ignoring a possible
minor problem in the case of ties, see
further \refR{Rrefined}).

An advantage  of this quantitative version of proportionality
criteria
is thus that it treats both DPC and PSC at once.
Moreover, it is not necessary to consider only the thresholds $\ell/(S+1)$
and $\ell/S$. In fact, \citet{AzizLee} have defined a property
{$q$-PSC}, where the conclusion above is supposed to hold for every set
of at least $q \ell$ voters; with $q=\gk V$ this is thus the same as
$\pi\PSC\ls<\ell\gk$. Thus, to consider this parametric family of
proportionality criteria is equivalent to considering the quantity
$\pi\PSC$. 
As another example, note that the property \emph{majority} in
\cite{Woodall:properties} in our notation can be written as 
$\pi\DPC(1,S)\le \xfrac12$.

Another advantage  of the quantitative version 
is
that it enables us to see not only whether a criterion hold or
not, but also how badly it fails if it does. 
Conversely, if a
criterion holds, perhaps it
turns out that it holds with some marginal.
A threshold such as $\pi\PSC$ therefore gives more information, and 
 may give more insight in the behaviour of an election method.
Furthermore, even if the goal is to study a specific criterion, 
the arithmetic of the thresholds may simplify or clarify proofs. 
(See \eg{} \refR{RThoptww} and \refE{Eonlyweak} for examples of this.)

We have here used DPC and PSC as examples only. Much of our work below deals
with other similar properties, and we will define a number of thresholds
$\pi\ls$ for different scenarios. 

Our point of view is far from new. In fact, thresholds of the type $\pi\ls$
figure prominently in the 19th century discussion on parliamentary reform in
(for example) Britain, \eg{} in \citet{Droop} and \citet{Dodgson}; 
for party list systems, they were defined  
by \citet{RaeEtal}, \cite{Rae} under the name
\emph{threshold of exclusion}, see also \citet{Lijphart}.
Nevertheless, the idea is still useful and seems to deserve further
attention;
in the present paper we extend the study of such thresholds to further
scenarios and further election methods.

We consider many different election methods.
These include several classical ones such as 
\DHn's method, \StLn's method, \BVn, \LVn, SNTV and STV;
we also study, in a rather large part of the paper, 
several election methods by \phragmen{} and Thiele.
(Nevertheless, many election methods remain to be studied.)
We partly survey know results, but most of the results seem to be new.

\begin{remark}\label{Rquota}
  In the present paper, we aim at computing $\pi\ls$ in many cases, but we
  do not try to give any criteria for what is good and what is not.
Nevertheless, note that historically, the two main thresholds that have been
proposed for various scenarios
are $\ell/S$ and $\ell/(S+1)$; 
in other words that the required number of votes $\cW$ to guarantee 
$\ell$ seats is
$\ell$ times the \emph{Hare quota} $V/S$ or the \emph{Droop quota} $V/(S+1)$,
which both have been put forth as a natural number of votes deserving a seat.
The Hare quota is simpler and perhaps more intuitive; see \eg{} \refTx{Tx5}
below;
the Droop quota was
proposed  by \citet{Droop}
based on the argument that if a candidate has more than
$1/(S+1)$ of the votes, then there cannot be $S$ others with at least as
many votes, so the candidate ought to be among the $S$ elected; 
\xfootnote{This argument holds for  \eg{} SNTV (\refApp{ASNTV}) and
\CVn{} (\refApp{ACV}; the case really discussed by \citet{Droop}). 
}
see also \cite{Dodgson} and \cite{Tideman}.
In our context, $\ell/(S+1)$ is essentially the best (\ie, smallest)
threshold that can be achieved, see \refR{Rbest};
hence $\pi\ls=\ell/(S+1)$ can be regarded as the optimal value.
We will see many cases where this value is achieved, and many where it is not.
(See \refTab{tab:optimal} in \refApp{Anum} for some numerical values.)
\end{remark}

\section{Notations and general definitions}\label{Sprel}
We  consider only election methods where each voter votes by leaving a
ballot in a single round. 
The outcome is then determined by these ballots and some mathematical algorithm.
Furthermore, we assume that either each ballot
contains the name of a party, or that each ballot contains a list of one or
several candidates. (In fact, the first case can be regarded as a special
case of the second, since each party may be regarded as a predefined list of
candidates.) Moreover, we consider both methods where
the order of the names on the ballot matter and those where it does not,
and it is usually important to distinguish between these two cases.
Hence, we consider the following three different types of election methods;
we will see that  these three types often have to be considered separately.
\xfootnote{A more general type of ballot, which includes both unordered
  ballots and ordered ballots as special cases, is a ballot with a 
\emph{weak ordering} of the candidates. Election methods for such ballots are
  sometimes discussed in the literature, and have even been used,
see for example 
\cite[\S18.1-2]{SJV9} and the references there.
We will not consider such ballots in the present paper,
but see
\cite{AzizLee} for versions of PSC (and implicitly of
our $\pi\PSC$) for such ballots and methods. 
}

\begin{description}
\item[party ballots]
The candidates are organized in parties, with separate lists of
candidates. Each ballot contains the name of a party, and seats are
distributed among the parties. In this case we are only interested in the
distribution of seats among the parties. (We thus ignore how seats are
distributed within each party.)
\item[unordered ballots] 
Each ballot contains a set of one or several candidates; their order on the
ballot does not matter.
\item[ordered ballots]  
Each ballot contains a list of one or several candidates, in order of
preference.
\xfootnote{An extreme version, more popular in theoretic work than in
  practical use, is that each voter ranks \emph{all} candidates in a linear
  order.}
\end{description}

\subsection{Some notation}\label{Snot}
Let $\cV$ denote the set of all voters, and $V:=|\cV|$ the number of votes.
Formally, an election may be regarded as 
a family $(\gs_\nu)_{\nu\in\cV}$ of ballots,
where each ballot belongs to the set $\cB$  of possible ballots.
\xfootnote{\label{fnxgs}%
We consider only election methods that are \emph{anonymous}, so the order of
the ballots does not matter. Hence, an
election can equivalently be described by a sequence $(x_\gs)_{\gs\in\cB}$
of integers $\ge0$,
where
$x_\gs$ is the number of votes for $\gs$, i.e., the number of ballots $\gs$.
}
Let $\cC$ be the set of parties in the party ballot case and otherwise the set
of candidates. Then,
in the party ballot case, $\cB=\cC$ (the set of parties); with unordered
ballots, $\cB$ is the set of subsets of $\cC$; 
with ordered ballots, $\cB$ is the set of sequences of distinct elements of 
$\cC$. (We assume tacitly that $\cC$ is finite, but sufficiently large when
needed.) 

\begin{remark}\label{Rsum}
For convenience, we may identify the voters with their ballots,
and write \eg{} $\gs\in\cW\subset\cV$,
thus treating $\cV$ as a multiset of ballots.
Similarly,
  we often abuse notation and write $\sum_{\gs\in\cV}$ instead of 
$\sum_{\nu\in\cV}$ (with summands depending on $\gs$ or $\gs_\nu$, respectively).
\end{remark}

 We consider only election methods where the number of seats,
$S$, is fixed in advance. We are mainly interested in the case $S\ge2$, but
the case $S=1$ (single-member constituency) 
is included for completeness and comparisons.

The \emph{outcome} of the election is the set $\cE$ of elected candidates. 
Thus, for
elections with unordered or ordered ballots, $\cE$ is a subset of $\cC$ with
$|\cE|=S$. (In the case of elections with party ballots, $\cE$ can be
regarded as a multi-set; we will not use the notation $\cE$ in this case.)
Note that there might be ties and therefore several possible outcomes for a
given set of ballots $(\gs_\nu)_{\nu\in\cV}$. Formally, we regard the different
possibilities as different elections; thus an election can formally
be defined as a
pair $\bigpar{(\gs_\nu)_{\nu\in\cV},\cE}$ of a sequence of ballots and a
possible outcome.

Some general mathematical notation:
$x\land y :=\min\set{x,y}$;
$x\lor y :=\max\set{x,y}$;
$\nn:=\setn$;
$|\cS|$ is the number of elements in a set $\cS$;
$H_n:=\sumin 1/i$, the $n$-th harmonic number. 

\begin{remark}\label{Rreal}
It is possible to let different voters have different weights, which in
principle could be any positive real numbers. 
Then, the ``number of voters'' in a subset of $\cV$ has to be
interpreted as their total weight, which thus is a (positive)  
real number, not necessarily an integer; similarly $\sum_{\gs\in\cV}$ (see
\refR{Rsum}) has to be interpreted with weights.
We leave the trivial modifications for this extension to the reader, and
continue to talk about ``numbers'' of votes.
\end{remark}

\begin{remark}\label{Rhomo}
The election methods considered below are almost all \emph{homogeneous},
meaning that the outcome remains the same if the number of votes for each
type of ballot is multiplied by the same number; 
in other words, only the proportions of votes matter. 
\xfootnote{Note that this fails for quota methods and STV if the quota is
rounded (up or down) to an integer, which is often the case in practical
uses. This is mathematically inconvenient, and theoretically bad also for
other reasons, but has very little importance in large elections.
We consider  mainly  mathematically ideal versions without rounding, and
they are homogeneous, as are all other methods considered here.
For an inhomogenous case, see \refR{RQrounding}.
}
Hence the methods are well-defined also when the ``number of votes'' for each
type of ballot is a positive rational number. 
In fact, the homogeneous methods considered here (and all reasonable
homogeneous methods) all apply to
the case when these numbers are arbitrary positive real numbers as in 
\refR{Rreal}.

In some cases, we find it convenient to use this and  
allow vote numbers to be arbitrary
positive real numbers.
This is not essential; real numbers can be approximated by rational numbers
if necessary, and rational numbers can be multiplied by a common denominator
to get an equivalent election with integer numbers of votes. 
\end{remark}

\subsection{Proportionality thresholds}
As explained in the introduction, our goal is to define and study thresholds
of proportionality $\pi(\ellx,S)=\pi\MMM(\ellx,S)$, where $\MMMx$ denotes the
election method. Informally, this is the smallest proportion
of  votes that a set $\cW$ of voters may have in order to be
guaranteed at least $\ellx$ elected candidates out of $S$.
Here, the set $\cW$ can be an organized group (party) or not, and we may
consider several different situations regarding both the votes from $\cW$
and the votes from the other voters $\cVW:=\cV\setminus\cW$. For example,
the voters in $\cW$ may all vote with identical ballots, or according to
some organized tactical scheme, or be unorganized with different ballots
that happen to contain some common candidate(s).
Hence, we will define several such thresholds $\pi$, which we distinguish
by subscripts. (As above, the election method is indicated by superscript.)
Further versions are possible, and are left to the readers imagination.

In general, a subscript $\fS$ defines some  \emph{scenario}, 
\ie, a class of possible elections with some 
restrictions on the votes from the set
$\cW$ and possibly also on the votes from
$\cVW$; furthermore, $\fS$ specifies, for a given $\ellx$ and $S$, what we
mean by a \emph{good} outcome. 
This should mean, in some sense, that $\cW$ gets at
least $\ellx$ candidates elected; however, as is seen below, this can be
made precise in different ways, 
since in some scenarios
the voters in $\cW$ might vote for partly different candidates, and some
voters in $\cVW$ also might  vote for some of these.
We say that an outcome that is not good is \emph{bad}.
(\citat{Good} and \citat{bad} are always from the perspective of the chosen 
set $\cW$ of voters.
$\cW$ will always denote such a set of voters, arbitrary but with the
scenario
specifying some assumption on their votes.)

Given a scenario $\fS$ and an
election method $\MMMx$,
we define
$\pi(\ellx,S)=\piS\MMM(\ellx,S)$ as the smallest proportion of votes that
guarantees a good outcome. More precisely and formally:

\begin{definition}\label{Dpi}
\begin{equation}\label{pi}
  \piS\MMM(\ellx,S):=\sup\Bigset{\frac{|\cW|}{|\cV|}: \text{elections satisfying
        $\fS$ with a bad outcome}}.
\end{equation}
\end{definition}

\begin{remark}\label{Rties}
One reason for the form  \eqref{pi} of the definition is that
when there are ties, the same set of ballots might lead to both a good and a
bad outcome. (In practice, the result is typically decided by lot.)
The formulation \eqref{pi} includes such cases; thus the idea is that 
$|\cW|/|\cV|>\pi(\ellx,S)$ guarantees a good outcome, even in the event of
ties. 
Similarly, there exist election methods that are randomized algorithms, so
the outcome may be random. (E.g.\ some versions of STV, see \refApp{ASTV}.) 
In such cases, any case where there is a
positive probability of a bad outcome is included in \eqref{pi}.
\end{remark}

\begin{remark}
  Of course, the outcome might be good also in some cases where
$|\cW|/|\cV|<\pi(\ellx,S)$ (depending on, e.g., how other voters vote);
to have more that $\pi(\ellx,S)$ of the votes is sufficient but not
necessary for a good outcome.
\end{remark}

\begin{remark}\label{Rspecial}
  The case $\ellx=1$ is of particular interest: $\piS(1,S)$ is the smallest
  proportion that guarantees the election of at least one representative of
  $\cW$. 
This is called the \emph{threshold of exclusion} 
\cite{RaeEtal,Lijphart}.

Also the other extreme case $\ell=S$ is of interest: $\piS(S,S)$ is the
smallest proportion that guarantees $\cW$ to get all seats, excluding all
minorities (assuming $\fS$). 

As another example, if $S$ is odd, then 
$\piS\bigpar{(S+1)/2,S}$ is the proportion of votes that guarantees a
majority of the seats.
\end{remark}

\begin{remark}\label{Rrefined}
  Typically, the supremum in \eqref{pi} is attained because ties might appear
when $|\cW|/|\cV|=\piS(\ellx,S)$, \cf{} \refR{Rties}; in such cases, a
proportion of votes exactly equal to $\piS(\ellx,S)$ is not enough to
guarantee a good outcome.
We may indicate whether this happens or not by a more refined notation:
if $p$ denotes the supremum in \eqref{pi}, we may write
$\piS(\ellx,S)=p+$  or $\piS(\ellx,S)=p-$,
with $p+$ meaning that the supremum is attained, i.e., that there exists a
bad outcome with a proportion of the votes exactly $p$, and $p-$ meaning
that there is no such bad outcome.
In other words, if $\piS(\ellx,S)=p-$, then a proportion of votes equal to
$p$ guarantees a good outcome, while if $\piS(\ellx,S)=p+$, then we need
strictly more than $p$ in order to be sure.

For example, the formulation of the DPC in \refS{S:intro} uses ``more than'',
and is thus equivalent to $\pi\DPC(\ellx,S)\le \frac{\ellx}{S+1}+$,
while PSC  uses ``at least'',
and is thus equivalent to $\pi\PSC(\ellx,S)\le \frac{\ellx}{S}-$.

We will occasionally use the refined notation, but usually not; this is then
left to the reader.

Note that almost all values calculated below are of the \citat+ type,
because of the possibility of ties at the threshold, but we note
a few, more or less trivial, exceptions with a \citat{$-$} type:
\begin{romenumerate}
\item 
When $\piS\ls=1$:
$\piS\ls=1+$ means that even if $\cW$ comprises all voters, this does not
guarantee all seats, which does not happen for reasonable methods and
scenarios, while
$\piS\ls=1-$ is perfectly possible, also for $\ell<S$, see 
\eg{} \refT{TPhrPSCbad}.
\item \label{-irrational}
When $\piS\ls$ is irrational. 
(Unless we allow real numbers of votes, see \refR{Rreal}.)
This is usually not the case, but can
happen \eg{} for the divisor method $\Divx\gam$ in \refT{Tdiv} 
with irrational $\gam$
(never used in practice as far as I know) or for 
the Estonian method (\refFn{fnEst} in \refApp{Alist}).
\item 
Some cases for
 election methods with special
  tie-breaking rules.
For example, for \DHn's method with ties always broken in favour of the
largest party, 
$\pi\DH\party\ls=\frac{\ell}{S+1}+$  for $\ell\le(S+1)/2$, but
$\pi\DH\party\ls=\frac{\ell}{S+1}-$ for $\ell>(S+1)/2$, \cf{} \eqref{party-DH}.
Another example is given in \refR{RAVtie}.
\end{romenumerate}

Hence, the property that some given proportion $p\in\oi$ of votes is enough to
guarantee $\ell$ seats, which really means $\pi\ls\le p-$, is 
in practice 
equivalent to $\pi\ls<p$.
(Note that this holds even in the exceptional case \ref{-irrational}, 
since then $\pi\ls=\WWW/V$ cannot occur.)
For example, 
in practice, 
PSC can be written as 
$\pi\PSC(\ellx,S)< \frac{\ellx}{S}$, with strict inequality,  when
$\ell<S$. 
\end{remark}

\begin{remark}
  Formally, the refined notation in \refR{Rrefined}
means that we regard $\piS(\ellx,S)$ as a
  member of the \emph{split interval} (also called \emph{two arrow space}),
  which contains two elements $x+$ and $x-$ for each $x\in\oi$,
 and that \eqref{pi} is interpreted in this space
  with its natural order, and a real $x$ identified with $x+$.
The split interval, with its order topology, has interesting topological
properties and is a standard example in topology, but we do not need any of
this here.
\end{remark}

In some cases, $\piS(\ellx,S)$ may be determined by some bad outcome with
small $V=|\cV|$, while a smaller proportion of votes suffices to get a good
outcome when $V$ is large. 
This may happen, for example, becuse of rounding effects in
the election method
as in \refR{RQrounding}. 
It may also happen because some tactical voting scheme requires votes
to be split \eg{} equally between two candidates, which is impossible when
$V$ is odd, see \refE{EpixSNTV}.
In such cases it is more interesting to consider only large $V$, or
more precisely, the limit as $V\to\infty$, and define
\begin{equation}\label{pix}
  \pix_{\fS}(\ellx,S):=\limsup_{|\cV|\to\infty}
 \Bigset{\frac{|\cW|}{|\cV|}: \text{elections satisfying
        $\fS$ with a bad outcome}}.
\end{equation}

Note that by the definitions \eqref{pi} and \eqref{pix},
always
\begin{align}\label{pixpi}
  \pix_{\fS}\ls\le\piS\ls.
\end{align}

\begin{remark}  \label{Rpix}
For an homogeneous election method, and a scenario that does not assume that
the voters in $\cW$ agree to split their votes on different lists according
to some strategy, it is easy to see that $\pix\MMM\qS\ls=\pi\qS\MMM\ls$.
In fact, consider any election with a bad outcome, and replace each ballot
by $N$ identical ballots, for some large integer $N$.
This gives a new election with 
(by our assumptions on $\MMMx$ and $\qSx$)
the same elected set $\cE$ and thus a bad outcome with the same proportion
$\WWW/V$, but $V$ replaced by $NV$. 
Letting \Ntoo, we obtain 
$\pix\MMM\qS\ls\ge\WWW/V$, and thus
$\pix\MMM\qS\ls\ge\pi\qS\MMM\ls$, showing that  the general inequality
\eqref{pixpi} is an equality in this case.
This applies to all
election methods considered in the present paper except versions of quota
methods and STV with rounding of the quota,
and to all scenarios defined below except $\tacticx$.
\end{remark}

In most cases studied below, $\pix=\pi$
is obvious from \refR{Rpix}, or follows from the given proofs;
we will usually not comment on this or mention $\pix$ at all in such cases,
leaving this to the reader. 
We will thus discuss $\pix$ mainly in cases where it differs from $\pi$.
(In such cases we  instead often leave $\pi$ to the reader.)
\smallskip

In the next section we give a few general properties of the definition.
In the following sections we put life in the general definitions above
by specifying some scenarios $\fS$, and calculating $\piS\MMM\ls$ for some
methods $\MMMx$.
We usually treat the three different types of ballots separately.
The case of party ballots in \refS{Sparty} is straightforward, and can be seen
as a warm-up to the remaining sections with ordered and unordered ballots,
where there
are many reasonable choices of definitions of $\fS$.
Our goal  is to investigate some of these choices of $\fS$,
and to calculate $\piS\MMM\ls$ for various election methods $\MMMx$.
(In a few cases we give only some bounds.)
Note that lower bounds always are found by giving examples, where the
election method $\MMMx$ gives (or may give, in case of ties) a bad
outcome for the chosen $\fS$.

There are many election methods, including many  not mentioned in the present
paper,
and we define thresholds $\piS$ for several
different scenarios (and further may be constructed), so
there are many potential combinations and we will only give complete results
for some of them. There are lots of cases remaining; these should be
regarded as open problems (some are stated explicitly for emphasis), 
and we invite other researchers to continue this work.

The scenarios and thresholds that we consider are defined in
Definitions 
\ref{Dparty} and  \ref{Dparty2} ($\pi\party$),
\ref{Dsame} ($\pi\same$),
\ref{Dtactic} ($\pi\tactic$),
\ref{DPJR} ($\pi\PJR$),
\ref{DEJR} ($\pi\EJR$),
\ref{DPSC} ($\pi\PSC$),
\ref{DwPSC} ($\pi\wPSC$).

The election methods that are studied in the paper are the following;
each is listed after the abbreviation that we use.
Brief definitions, and some alternative names, 
are given in \refApp{Amethods}.

\begin{romenumerate}
\item Party ballots
  \begin{enumerate}
  \item $\DHx$: D'Hondt's method.
  \item $\StLx$: Sainte-Lagu\"e's method.
  \item $\Adamsx$: Adams's method. 
 \item $\Divx\gam$: Divisor method with divisors $d(n)=n-1+\gam$, $\gam\in\oi$. 
  \item $\vkx$: \Vkn.  
  \item $\Droopx$: Droop's method. 
  \item $\Qx\gd$:
  Quota method with quota $V/(S+\gd)$, $\gd\in\oi$. 
  \end{enumerate}

\item Unordered ballots
  \begin{enumerate}
  \item $\BVx$: \BVn. %
  \item \AVx: \AVn. 
  \item \SNTVx: \SNTVn. 
  \item \LVx{L}: \LVn{}, with $L$ votes per voter. 
  \item \CVx: \CVn. 
  \item \CVqx: Equal and Even \CVn. 
  \item $\Phrux$: \phragmen's unordered method. 
  \item $\Thoptx$, $\Thoptwx{\ww}$: Thiele's optimization method. 
  \item $\Thax$, $\Thawx{\ww}$: Thiele's addition method. 
  \item $\Thex$: \Then. 
  \end{enumerate}

\item Ordered ballots
  \begin{enumerate}
  \item \STVx: Single Transferable Vote.  
        (This is really a family of methods, see \refApp{ASTV}.)
  \item $\Phrox$: \phragmen's ordered method. 
  \item $\Thox$: Thiele's ordered method.
  \item $\Bordax{\ww}$: Borda method with weights $\ww$.
  \end{enumerate}
\end{romenumerate}

Some numerical values for small $S$ are given as illustrations  and for
comparisons in \refApp{Anum}.

For methods with ordered or unordered ballots, there are no formal parties
in the definition of the methods.
Nevertheless, we will sometimes, in particular in examples, talk about
parties also when discussing such methods; we then mean a (formal or
informal) group of voters with a common interest, 
typically voting with
identical ballots, or possibly according to an organized strategy.

\section{General properties}\label{Sgen}

We assume always $1\le \ell\le S$.
By the definitions \eqref{pi} and \eqref{pix},
\begin{equation}
  \label{pipix}
 0\le \pix\ls\le\pi\ls\le 1. 
\end{equation}

We state some further simple general results. These hold for 
any reasonable scenario $\fS$ and
any reasonable election method, but since we do not formally  define what we
mean by 'reasonable', and we have not made any formal restrictions on the
scenarios $\fS$ and what a bad outcome might be, we call these results
\Theoremx. 
They are real Theorems for the scenarios and the election methods
considered in the present paper.
\xfootnote{A scenario with stricter assumptions on the votes from $\cVW$
than on the votes from $\cW$
might fail some of these \Theoremsx, but we do not consider any such case.}

\begin{theoremx} \label{Tx1}
  \begin{equation}\label{tx1}
\pi(1,S)
\ge\frac{1}{S+1}.
  \end{equation}
\end{theoremx}
\begin{proof}
  Suppose that there are $S+1$ parties with one candidate each,
and that each voter votes for  only one of the candidates, 
with equally many voters ($V/(S+1)$) for each.
By symmetry, none of the candidates is guaranteed a seat. (A typical election
method would select $S$ of the $S+1$ candidates by lot.) 
Hence, if $\cW$ is the set of voters voting for candidate $A$, 
so $|\cW|/|\cV|=1/(S+1)$,
then a bad event can occur.
\end{proof}

\begin{theoremx} \label{Tx2}
If\/ $(S+1)/\ell$ is an integer, then
  \begin{equation}\label{tx2}
\pi\ls
\ge\frac{\ell}{S+1}.
  \end{equation}
\end{theoremx}
\begin{proof}
  By the same proof as \refTx{Tx1}, now taking $(S+1)/\ell$ parties with exactly
  $\ell$ candidates each (the last requirement disappears for party ballots
  methods), 
and equally many voters each.
\end{proof}

\begin{remark}\label{Rlower}
Note that \eqref{tx2} is not valid in general for arbitrary $\ell$ and $S$, at
least not in the case $\ell>(S+1)/2$. 
  As is well-known, with the Block Vote
method, a group of voters with a majority of the votes 
will get all seats if they vote for the same candidates.
Hence $\piS\BV\ls=1/2$ for every $\ell\le S$ and typical $\fS$, see \refT{TBV}.
\end{remark}

\begin{theoremx} \label{Tx4}
If\/ $1\le \ell\le S$, then
  \begin{equation}\label{tx4}
\pi\ls + \pi(S+1-\ell,S)\ge1.
  \end{equation}
 More generally,  for any $\ell_1,\dots,\ell_m$
with $\ell_1+\dots+\ell_m>S$, 
\begin{equation}
  \label{tx4b}
  \pi(\ell_1,S)+\dots+\pi(\ell_m,S)\ge1.
\end{equation}
\end{theoremx}
\begin{proof}
  Suppose that \eqref{tx4} fails. 
Then we can find a (large) $V=|\cV|$ and a set of voters
  $\cW\subset\cV$ such that
$|\cW|/V>\pi\ls$ and $|\cVW|/V>\pi(S+\ell-1,S)$. 
Thus $\cW$ can get some candidates
$A_1,\dots,A_\ell$ elected, and $\cVW$ some $S+1-\ell$ others
$B_1,\dots,B_{S+1-\ell}$.
But then at least $S+1$ candidates will be elected, a contradiction.

The same argument shows \eqref{tx4b}.
\end{proof}

Note that \eqref{tx4b} implies both \eqref{tx1} and \eqref{tx2}, by taking 
all $\ell_i:=\ell$ and $m:=(S+1)/\ell$.

\begin{remark}
The \Theoremsx{} above hold for $\pix$ too.
\end{remark}

\begin{remark}\label{Rbest}
Recall that the threshold $\pi$ is the infimum that guarantees a good
outcome. Thus, a small $\pi$ is better (from the point of view of
proportionality and representability), and 
proportionality criteria are always of the form $\pi\ls\le p$ 
for some $p$ (possibly with strict inequality, see \refR{Rrefined}).
Conversely, a large $\pi$ is bad, and the extreme $\pi\ls=1$ means that we
need (almost) all votes in order to be sure to avoid a bad outcome.

Hence, the lower bounds in \refTxs{Tx2} and \ref{Tx4} give a bound on what
can be achieved by any election method, and they support the idea that 
$\ell/(S+1)$ is the optimum, and therefore perhaps desirable; 
\cf{} (in related terms) \refR{Rquota}.
(As shown in \refR{Rlower},  $\pi\ls<\ell/(S+1)$ is possible
for some $\ell$, but \refTx{Tx4} implies that then $\pi(\ell',S)>\ell'/(S+1)$
for $\ell'=S+1-\ell$, so it is impossible to have $\pi\ls<\ell/(S+1)$ for
all $\ell$.)
\end{remark}

\begin{problem}
  What is $\piinf\ls:=\inf\piS\MMM\ls$ for any given
$\ell$ and $S$ (with $1\le\ell\le S$),
  taking the infimum over all reasonable scenarios $\fS$ and election
  methods $\MMMx$?
What is the infimum over $\MMMx$ for a given $\fS$?
\end{problem}
The results above, together with simple examples,  
answer this problem in some cases.
Note that, for all $\ell$ and $S$, using \eqref{party-DH} below,
\begin{equation}
  \label{piinf1}
  \piinf\ls\le\pi\party\DH\ls =\frac{\ell}{S+1}.
\end{equation}
Hence, \refTx{Tx2} shows that equality holds in \eqref{piinf1} when
 $(S+1)/\ell$ is an integer.
Furthermore, if  $\ell\ge(S+1)/2$, then \refTx{Tx4} implies
$\pi\ls\ge1/2$, which is attained by $\pi\BV\party$ (\refT{TBV}), and thus
\begin{equation}
  \label{piinf2}
  \piinf\ls=\frac12, 
\qquad \ell\ge(S+1)/2.
\end{equation}

\begin{remark}\label{Rnot<}
Let $\ell<m$.
It might be argued that
if $\cW$ gets less than $\ell$ seats, then $\cW$ gets less than $m$, and 
therefore $\pi\ls\le\pi(m,S)$ ought to hold. However, this is not always
true, since our scenarios often include restrictions on the votes
that depend on $\ell$.
In fact,  \refT{TEJRBV} below shows
that $\pi\BV\EJR$ is a counterexample, see \refR{REJRBV} and 
\refE{EEJRBV}.
\end{remark}

Some proposed proportionality criteria state (for a specified scenario $\fS$)
that a set of voters that
form a fraction $p$ of all voters should get at least the same proportion of
seats, rounded down, i.e., at least $\floor{pS}$ seats. 
This simply means that $\piS(\ell,S)<\ell/S$; we state this formally for
easy reference.

\begin{theoremx}\label{Tx5}
  The property that a set of voters $\cW$ with a fraction $p=|\cW|/V$ of
  the votes always gets at least $\floor{pS}$ seats is equivalent to
  \begin{equation}\label{tx5-}
    \piS(\ell,S) \le \frac{\ell}{S}-, \qquad 1\le\ell\le S,
  \end{equation}
and to
  \begin{equation}\label{tx5<}
    \piS(\ell,S) < \frac{\ell}{S}, \qquad 1\le\ell< S.
  \end{equation}
\end{theoremx}

\begin{proof}
The property is equivalent to saying that for each integer $\ell\le S$, if
$|\cW|/V\ge \ell/S$, then $\cW$ will get at least $\ell$ seats.
This is trivial for $\ell=0$, and otherwise equivalent to \eqref{tx5-}.

Furthermore, \eqref{tx5-} is, in reasonable cases, trivial for $\ell=S$, and
otherwise equivalent to \eqref{tx5<}, see \refR{Rrefined}.
\end{proof}

\section{Party ballots}\label{Sparty}
Proportional election methods are used for political elections
in many countries,
in most (but not all) cases using a list method with
party lists. 
Thus, each ballot contains the name of one party, i.e., each voter votes for
a party, and each party is given a number of seats decided by the numbers of
votes for the parties.

\begin{definition}[Party ballots, $\pi\party$]\label{Dparty}
Suppose that all voters in $\cW$ vote for the same party $A$.
A good outcome is when at least $\ellx$ candidates from $A$ are elected.  
\end{definition}

$\pi\party$ is the same as the
\emph{threshold of exclusion} defined by \citet{RaeEtal} ($\ell=1$)
and \cite{Rae} ($\ell>1$), see also \citet{Lijphart}.
\smallskip

There are two main classes of proportional list methods: divisor methods 
and 
quota methods, 
see \refApp{Alist} for definitions.

For divisor methods, the threshold $\pi\party$
was found for \DHn's method by 
\citeauthor{RaeEtal} \cite{RaeEtal,Rae}, 
and for \StLn's method by \citet{Lijphart}.
General divisor methods
were treated by
\citet{Palomares}, with further results by \citet{JonesW}.
In particular, 
\cite{Palomares} showed the following explicit result
for linear divisor methods with divisors $d(n)=n-1+\gam$, $\gam\in\oi$; 
recall that these include the important D'Hondt ($\DHx$)
and \StLn{} ($\StLx$) methods, 
with $\gam=1$ and $\gam=1/2$, respectively, as well
as several less important methods.
This result is also implicit in \citet[Satz 6.2.3]{Kopfermann}.

\begin{theorem}[\citet{Palomares}; \citet{Kopfermann}]\label{Tdiv}
Let\/ $\Divx{\gam}$ be the divisor method with divisors $d(n)=n-1+\gam$ for some
$\gam\in\oi$. Then, for $1\le \ell\le S$,
\begin{equation}
  \label{party-div}
  \pi\party^{\Divx{\gam}}\ls=\frac{\ell-1+\gam}{\ell-1+\gam\xpar{S+2-\ell}}.
\end{equation}
In particular, for any $\gam\in\oi$,
\begin{equation}
  \label{party-div1}
  \pi\party^{\Divx{\gam}}(1,S)=\frac{1}{S+1}.
\end{equation}

As special cases \cite{RaeEtal,Rae,Lijphart}, we have, for $1\le \ell\le S$,
\begin{align}
  \label{party-DH}
  \pi\party\DH\ls&=\frac{\ell}{S+1},
\\
  \label{party-StL}
  \pi\party\StL\ls&=\frac{2\ell-1}{S+\ell}.
\end{align}
\end{theorem}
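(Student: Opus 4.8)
The plan is to evaluate the supremum in \refD{Dpi} directly, by determining exactly how large a share of the vote the party $A=\cW$ can hold while still being denied its $\ell$-th seat. Write $a:=|\cW|$ for the votes cast for $A$. First I would recall the mechanism of the divisor method (\refApp{Alist}): the $S$ seats go to the $S$ largest \emph{quotients} $v_B/d(k)$ over parties $B$ and $k\ge1$, and since $d(n)=n-1+\gam$ is increasing, a party $B$ receives at least $j$ seats iff its $j$-th quotient $v_B/d(j)$ is among the top $S$. Hence a bad outcome — $A$ getting fewer than $\ell$ seats — occurs exactly when $A$'s $\ell$-th quotient $a/d(\ell)=a/(\ell-1+\gam)$ fails to reach the top $S$, with ties at the cutoff resolved against $A$ (\cf{} \refR{Rties}). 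The whole problem is then to find how large $a/V$ can be while $a/d(\ell)$ is crowded out.

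For the upper bound $\pi\party\Div{\gam}\ls\le\frac{\ell-1+\gam}{\ell-1+\gam(S+2-\ell)}$, I would suppose $A$ receives only $k\le\ell-1$ seats and let $c$ be the cutoff (the value of the $S$-th largest quotient). Then at least $S-k\ge S-\ell+1$ of the winning quotients belong to parties $B\ne A$, each $\ge c$; and since $A$'s losing $(k+1)$-th quotient satisfies $a/d(k+1)\le c$, we get $c\ge a/d(k+1)\ge a/d(\ell)$ (as $k+1\le\ell$). If a party $B\ne A$ supplies $j_B$ of these winning quotients, they are its top $j_B$, so $v_B/d(j_B)\ge a/d(\ell)$, i.e.\ $v_B\ge a\,d(j_B)/d(\ell)$. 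Summing over the $r$ contributing parties, with $\sum_B j_B=S-k$ and $d(j)=j-1+\gam$,
\[
  V-a=\sum_B v_B\ge \frac{a}{\ell-1+\gam}\sum_B (j_B-1+\gam)
  =\frac{a}{\ell-1+\gam}\Bigpar{(S-k)-r(1-\gam)}.
\]
Since $\gam\le1$ and each contributing party supplies at least one seat (so $r\le S-k$), the bracket is $\ge(S-k)\gam\ge(S-\ell+1)\gam$, whence $V-a\ge\frac{a}{\ell-1+\gam}(S-\ell+1)\gam$; rearranging gives $a/V\le\frac{\ell-1+\gam}{\ell-1+\gam(S+2-\ell)}$.

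For the matching lower bound I would exhibit a bad election attaining this ratio. Let $A$ have $a$ votes and introduce $S-\ell+1$ further parties, each with exactly $a\gam/(\ell-1+\gam)=a\,d(1)/d(\ell)$ votes (allowed by homogeneity, \refR{Rhomo}), so each has top quotient exactly $a/d(\ell)$ and, because $d(1)=\gam<d(2)$, no second quotient reaching that level. Then $A$'s first $\ell-1$ quotients strictly exceed $a/d(\ell)$ and win, while $A$'s $\ell$-th quotient ties with the $S-\ell+1$ new top quotients: there are $S-\ell+2$ quotients equal to the cutoff $a/d(\ell)$ competing for the $S-\ell+1$ remaining seats, so if the tie is broken against $A$ then $A$ is left with only $\ell-1$ seats. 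Here $V=a+(S-\ell+1)\,a\gam/(\ell-1+\gam)$, so $|\cW|/V$ equals the claimed value exactly; this both confirms $\pi\party\Div{\gam}\ls\ge\frac{\ell-1+\gam}{\ell-1+\gam(S+2-\ell)}$ and shows the supremum is attained (the ``$+$'' case of \refR{Rrefined}).

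The stated special cases then follow by substitution: $\gam=1$ gives $\ell/(S+1)$, $\gam=\tfrac12$ gives $(2\ell-1)/(S+\ell)$, and $\ell=1$ gives $1/(S+1)$ for every $\gam$. The one genuinely delicate point, which I would present most carefully, is the optimisation step: because the divisor sequence is affine with $d(1)=\gam\le1$, the cheapest way to manufacture blocking quotients is one per party rather than several from a single large party. This is precisely where the factor $\gam$ — and hence the method-dependence of the threshold — enters, controlled by the inequality $(S-k)-r(1-\gam)\ge(S-k)\gam$; the remaining bookkeeping (tie resolution, verifying that exactly $S$ seats are filled, and the integer/rational reduction) is routine.
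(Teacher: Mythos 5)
Your proof is correct, and it is genuinely different in character from what the paper does: the paper's entire ``proof'' of \refT{Tdiv} is the one-line identification of $\pi\party\ls$ with the quantity $S_{\ell-1}$ computed in \cite{Palomares}, so it supplies no argument at all, whereas you give a complete self-contained derivation. Both directions of your argument check out: the upper bound correctly reduces to bounding $\sum_B d(j_B)=(S-k)-r(1-\gam)$ from below by $(S-k)\gam$ via $r\le S-k$ and $\gam\le1$, and the extremal configuration (one blocking party per remaining seat, each with $a\,d(1)/d(\ell)$ votes, tying $A$'s $\ell$-th quotient) attains the bound and confirms the supremum is of the ``$+$'' type. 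The only caveat is the endpoint $\gam=0$ (Adams), where your construction assigns zero votes to the blocking parties and the formula itself degenerates to $1$ for $\ell\ge2$ and to $0/0$ for $\ell=1$; the paper quarantines this case in the example immediately following the theorem, and your proof as written should likewise be read as covering $\gam\in(0,1]$, with $\gam=0$ handled by the separate observation that Adams's method gives every party a seat. Your closing remark correctly isolates where the method-dependence enters, namely that an affine divisor sequence with $d(1)=\gam\le1$ makes one blocking quotient per party the cheapest obstruction.
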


Some numerical values of $\pi\party\DH\ls$ and $\pi\party\StL\ls$
are given in \refTabs{tab:optimal} and \ref{tab:StL} in \refApp{Anum}.

\begin{proof}
  $\pi\party\ls$ equals $S_{\ell-1}$ in \cite{Palomares}. 
\end{proof}

\begin{example}
Another special case is Adams's method  $\Adamsx$, which is the case $\gam=0$,
\ie, $d(n)=n-1$.
In this case, for $\ell\ge2$,
\eqref{party-div} simply yields
\begin{equation}
  \label{party-adams}
  \pi\party\Adams\ls=1,
\qquad 2\le \ell\le S.
\end{equation}
For $\ell=1$, however, \eqref{party-div} yields the indeterminate $0/0$.
The reason is that Adams's method gives every party at least one seat.
Hence, if the number of parties is at most the number of seats, then the
threshold $\pi\is=0$ (and \eqref{party-adams} is obvious with $S$ parties),
but if there are more parties, then the method is ill-defined; for example,
if the seats are distributed by lot then $\pi\party\Adams(1,S)=1$, but if
they are distributed to the $S$ largest parties, then 
$\pi\party\Adams(1,S)=1/(S+1)$ (in accordance with \eqref{party-div1}).
\xfootnote{
The same applies to any method guaranteeing every party a seat, for example
\emph{Huntington--Hill's method} used for allocation in the US House of
Representatives.
}
\end{example}

\begin{remark}\label{Rjamkad}
\citet{Lijphart}
consider also the modified \StLn{} method, with 
$d(1)=0.7$ and $d(n)=n-0.5$ for $n\ge1$. (Or equivalently, and
traditionally,
$d(1)=1.4$ and $d(n)=2n-1$ for $n\ge1$.) Then they find
\begin{equation}
  \pi\party\ls=
  \begin{cases}
    \frac{1}{S+1}, & \ell=1,
\\
\frac{2\ell-1}{1.4S+0.6\ell+0.4}, & 2\le \ell\le S.
  \end{cases}
\end{equation}
For the modification with $d(1)=0.6$ instead
(or, equivalently, the sequence of divisors $1.2, 3, 5, \dots$), 
used in Sweden since 2018, 
similar calculations yield
\begin{equation}
  \pi\party\ls=
  \begin{cases}
    \frac{1}{S+1}, & \ell=1,
\\
\frac{2\ell-1}{1.2S+0.8\ell+0.2}, & 2\le \ell\le S.
  \end{cases}
\end{equation}
\end{remark}

\begin{remark}
\cite{Lijphart},  \cite{Palomares} and \cite{Kopfermann}
more generally calculate the thresholds when the number
  of parties is given. 
For simplicity, we do not consider that restriction in the present
  paper; we let the number of parties be arbitrary. (A fixed number of
  parties may be covered by our formalism too, if desired,
by including it in the  definition of the scenario $\fS$.)
\end{remark}

We next compute $\pi\party$ for quota methods, 
including \vkn{} and Droop's method,
at least assuming that
the quota is defined without rounding.
For \vkn, this was done by \citet{Lijphart}. 
Presumably, the general case too has been done previously, and the result is 
implicit in \citet[Satz 6.2.11]{Kopfermann}, 
see also \cite[Sats 8.8]{SJV6},
but we do not know any reference to an explicit formula. 
The general formula is a little involved, and for convenience we give three
equivalent versions of it.

\begin{theorem}[\citet{Kopfermann}]\label{Tquot}
  Let $\Qx\gd$ be the quota method with quota $V/(S+\gd)$, where $\gd\in\oi$.
Then, for $1\le \ell\le S$,
\begin{align}\notag
  \pi\party\Q\gd\ls &
= \frac{\ell(S+2-\ell)-1+\gd}{(S+\gd)(S+2-\ell)}
=\frac{\ell}{S+\gd}-\frac{1-\gd}{(S+\gd)(S+2-\ell)}
\\&
=\frac{\ell}{S+1}+\frac{(1-\gd)(\ell-1)(S+1-\ell)}{(S+\gd)(S+1)(S+2-\ell)}
\label{tquot}.
\end{align}
In particular, 
\begin{align}\label{tquot1}
  \pi\party\Q\gd(1,S)
=\frac{1}{S+1}.
\end{align}
As special cases, for $\gd=0$ \cite{Lijphart} and $\gd=1$, respectively,
\begin{align}\label{vk}
  \pi\party\vk\ls
&= \frac{\ell(S+2-\ell)-1}{S(S+2-\ell)}
=\frac{\ell}{S}-\frac{1}{S(S+2-\ell)},
\\
  \pi\party\Droop\ls
&
=\frac{\ell}{S+1}.
\end{align}
\end{theorem}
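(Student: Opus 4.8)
The plan is to read off $\pi\party\Q\gd\ls$ straight from \refD{Dpi}: by \refD{Dparty} a \emph{bad} election is one in which all of $\cW$ vote for a single party $A$ that nevertheless ends up with at most $\ell-1$ seats, so I must maximize $v_A/V$ over such elections. Recall (\refApp{Alist}) that $\Qx\gd$ uses the quota $Q=V/(S+\gd)$, awards each party $B$ its automatic seats $\floor{v_B/Q}$, and then hands the $r:=S-\sum_B\floor{v_B/Q}$ leftover seats one apiece to the parties with the largest remainders $\frax{v_B/Q}$; crucially each party collects \emph{at most one} remainder seat. Writing $x:=v_A/Q=(S+\gd)v_A/V$ and $f:=\frac{S+1-\ell+\gd}{S+2-\ell}$, a short computation shows the target value \eqref{tquot} is exactly $\bigpar{\ell-1+f}/(S+\gd)$, so the whole statement reduces to proving that $f$ is the critical remainder fraction deciding $A$'s $\ell$th seat.

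For the lower bound I would exhibit a bad election attaining this value. Let $A$ hold $v_A=(\ell-1+f)Q$ votes and introduce $S+1-\ell$ dummy parties each holding $fQ$; the identity $(S+2-\ell)f=S+1-\ell+\gd$ makes the total votes equal $(S+\gd)Q=V$. Then $A$ has $\ell-1$ automatic seats and exactly $S+2-\ell$ parties (namely $A$ and the dummies) tie, all with remainder $f$, for the $S+1-\ell$ leftover seats; breaking the tie against $A$ leaves it with only $\ell-1$ seats. This produces a bad outcome with $v_A/V$ equal to the claimed threshold, giving both $\pi\party\Q\gd\ls\ge$ that value and the refined ``$+$'' of \refR{Rrefined}. (For $\gd=1$ one has $f=1$, the dummies become full-quota parties that fill all $S$ seats automatically, and the same configuration works without a tie.)

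For the upper bound I would show that $v_A/V$ strictly exceeding the threshold forces $\ell$ seats for $A$. Since then $x>\ell-1+f\ge\ell-1$, either $\floor{x}\ge\ell$ and $A$ already has $\ell$ automatic seats, or $\floor{x}=\ell-1$ and $g:=\frax{x}>f$, in which case it suffices to show $A$ wins a remainder seat. Suppose not; then all $r$ leftover seats go to other parties, each with remainder $\ge g$, and by the one-remainder-seat rule there are exactly $r$ of them. Now count votes: the automatic parts sum to exactly $S-r$ quotas, while the remainders contribute $\ge g$ from $A$ and $\ge g$ from each of the $r$ winners, so $S+\gd=V/Q=\sum_B\bigpar{\floor{v_B/Q}+\frax{v_B/Q}}\ge(S-r)+g(1+r)$. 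This gives $g\le\frac{r+\gd}{r+1}$, and since $\frac{r+\gd}{r+1}$ is non-decreasing in $r$ while $r\le S-(\ell-1)=S+1-\ell$, we conclude $g\le f$, contradicting $g>f$.

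The main obstacle is the bookkeeping in the upper bound: one must use simultaneously that each automatic seat costs a full quota, that a party can win only one remainder seat and only with remainder $\ge g$, and that the leftover count $r$ is largest (namely $r=S+1-\ell$) exactly when no party other than $A$ takes an automatic seat. The monotonicity of $\frac{r+\gd}{r+1}$ is what singles out this extremal configuration and matches the lower-bound construction; with the threshold identified, the three equivalent forms in \eqref{tquot} and the special cases \eqref{vk} and $\pi\party\Droop\ls=\ell/(S+1)$ follow by routine algebra.
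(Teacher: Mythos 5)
Your proof is correct and follows essentially the same route as the paper: both arguments identify the same critical fraction $f=\frac{S+1-\ell+\gd}{S+2-\ell}$ by a direct vote count (total votes $\ge$ quota contributions of the seats already awarded plus the competing remainders) and exhibit the same extremal tie configuration with $S+1-\ell$ dummy parties at remainder $f$. The only difference is bookkeeping: you work directly with floors, remainders and the leftover-seat count $r$, whereas the paper encodes the method via the single rounding threshold $t$ in \eqref{qm} and sums $x_i\ge ts_i$ over parties — but the resulting inequality and its case of equality are the same.
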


Some numerical values of $\pi\party\Droop\ls$ and $\pi\party\vk\ls$
are given in \refTabs{tab:optimal} and \ref{tab:LR} in \refApp{Anum}.

\begin{proof}
  Suppose that party $i$ gets $v_i$ votes and $s_i$ seats, with $\sum_i
  v_i=V$ and $\sum_i s_i=S$.
Let $x_i:=v_i/Q=\frac{v_i}{V}(S+\gd)$.
Thus
\begin{align}
  \label{axa}
\sum_i x_i 
=\frac{\sum_i v_i}{V}(S+\gd)
= S+\gd.
\end{align}
The quota method means that for some $t\in\oi$, see \eqref{qm}, 
\begin{align}
  \label{bxa}
s_i+t-1 \le x_i \le s_i+t.
\end{align}
In particular, for a party $i$ with $s_i>0$, 
\begin{align}
x_i\ge s_i+t-1\ge ts_i,  
\end{align}
and trivially $x_i\ge ts_i$ also if $s_i=0$.
Hence, using also \eqref{axa} and $x_1-s_1\le t$ from \eqref{bxa},
\begin{align}
  S+\gd 
&= x_1+\sum_{i\neq1} x_i
\ge x_1+\sum_{i\neq1}t s_i
=x_1+t(S-s_1)
\\&
\ge x_1+(x_1-s_1)(S-s_1)
=  x_1(S+1-s_1)-s_1(S-s_1).
\end{align}
Consequently,
\begin{align}\label{cac}
  x_1 \le \frac{S+\gd+s_1(S-s_1)}{S-s_1+1}
=s_1+1-\frac{1-\gd}{S-s_1+1}.
\end{align}
Furthermore, we may have equality in \eqref{cac}, for any given $S$ and
$s_1\le S$,
by taking 
\begin{align}
t=x_1-s_1 = \frac{S-s_1+\gd}{S-s_1+1}\in\oi  
\end{align}
and $S-s_1$ other parties with $x_i=t$ and $s_i=1$;
note that this satisfies \eqref{axa} and \eqref{bxa}.
We may then take $v_i:=Nx_i$ for a suitable integer $N$
in order to get integer numbers of votes. (Provided $\gd$ is rational;
otherwise we take $N$ large and approximate by rounding $v_i$ suitably.)

Taking $s_1=\ell-1$, and noting that the bound in \eqref{cac} is increasing in
$s_1$, we see that if \eqref{cac} does not hold, then party 1 has to
get at least $\ell$ seats, and that this is best possible.
Since the proportion of votes is $v_1/V=x_1/(S+\gd)$, cf.\ \eqref{axa}, 
it thus follows from \eqref{cac} that
\begin{align}\label{cacc}
\pi\Q\gd\party\ls
=\frac{1}{S+\gd}\Bigpar{\ell-\frac{1-\gd}{S-\ell+2}}.
\end{align}
This is equivalent to the three different forms in 
\eqref{tquot} by elementary algebraic manipulations.
\end{proof}

\begin{remark}\label{RQrounding}
  If the quota is defined by rounding $V/(S+\gd)$ up or
down to an integer, the result still holds for $\pix\party$, see \eqref{pix}.
\end{remark}

\begin{remark}\label{Rquota+}
  The assumption $\gd\in\oi$ means that the standard definition of the quota
  method works. The results extends (with the same proof) to any real
  $\gd\in(-S,1)$, \ie{} quota $Q\ge V/(S+1)$,
with the interpretation in \refApp{Aquota}; note that then
  still $t\le1$ in the proof.
We have not investigated the case $\gd>1$, \eg{} the Imperiali quota $V/(S+2)$.
\end{remark}


\begin{remark}
  Note that for $\ell=1$,
all linear divisor methods and all quotient methods in
  \refTs{Tdiv} and \ref{Tquot} yield equality in \eqref{tx1}.
\end{remark}

\begin{remark}
  Comparing \eqref{party-DH} and \eqref{party-StL}, we see that
  \begin{equation}
\pi\DH\party\ls\le \pi\StL\party\ls,     
\qquad\lele,
  \end{equation}
with strict inequality for $\ell>1$.
This can be interpreted as meaning that \DHn's method has better proportionality 
than \StLn, in a certain sense.
On the other hand, it is well-known that \StLn's method is (asymptotically)
unbiased, in an average sense, while \DHn's method has a bias in favour of
larger 
parties, see \eg{} \cite{SJ262}, so in this sense
\StLn's method is more proportional than \DHn.
(The same results hold when comparing \vkn{} and Droop's method,
see again \eg{} \cite{SJ262}.)

This just stresses that there is no single criterion that defines
``proportionality''. 
\end{remark}

\subsection{Unordered and ordered ballots}\label{SSpartylist}
We have here considered election methods with party ballots.
Also in an election using unordered or ordered ballots, it may happen that
the voters are so polarised that they do not use their freedom to mix
candidates; instead they all belong to different parties (organised or not),
with all voters in each party voting for the same party list. 
(These naturally being disjoint.)
The election then really becomes an election
with party ballots. 
We formally define, for use in later sections, 
a scenario for this special (but not unrealistic) case, 
using the same notation as in \refD{Dparty}.

\begin{definition}[Party lists for unordered or ordered ballots, $\pi\party$]
\label{Dparty2}
Suppose that all voters vote for disjoint party lists, and
that all voters in $\cW$ vote for the same list $\gs$, containing at least
$\ellx$ candidates.
A good outcome is when at least $\ellx$ candidates from $\gs$ are elected.
\end{definition}

\section{Unordered ballots: Block Vote, SNTV, Limited Vote, \dots}
\label{SBV}

In this section we begin the study of election methods using unordered ballots
by studying  some simple, classical election methods, \viz{} 
\BVn, \AVn, \SNTVn{} (SNTV), \LVn, \CVn.
The results below for them are not new; on the contrary, they were known
already in the 19th century.

We begin by defining scenarios for elections with unordered ballots.
For later use, we note that these definitions apply also to elections with
ordered ballots.
One interesting scenario that we consider
is the party list case in \refD{Dparty2}.
Another interesting scenario is the less restrictive assumption that the
voters in $\cW$ vote on a common list of candidates, while the other voters
may be less organized and we do not assume anyting about their votes.

\begin{definition}[Ordered or unordered ballots, $\pi\same$]\label{Dsame}
Suppose that all voters in $\cW$ vote for the same list of candidates $\cA$.
(In the ordered case, all vote on the same ordered list.)
The other voters may vote arbitrarily.
We assume $|\cA|\ge\ellx$.
A good outcome is when at least $\ellx$ candidates from $\cA$ are elected.
\end{definition}

It is obvious that for some election methods (\eg{} SNTV), it is sometimes a
bad strategy for a group of voters to vote on the same list, and that a
perty in order to be successful is forced to use more elaborate strategies.
We therefore consider also a scenario where we assume 
that the set of voters $\cW$ is organized in some sense, so that the voters in
$\cW$ vote as directed by a ``leadership'' or ``party organization''; 
thus making  tactical voting feasible.
We also assume that the leadership is intelligent  and uses the 
mathematically optimal strategy.
\xfootnote{
To find the optimal strategy might be a more or less difficult and perhaps
interesting problem in game theory. 
We assume that the size $|\cW|$ of $\cW$
as well as the total number of voters $V$ are known; otherwise further game
theoretical complications arise.
Note also that since the definition \eqref{pi} consider the worst cases, we may
assume that the strategy is known to an omniscient adversary.
}

\begin{definition}[Ordered or unordered ballots, $\pi\tactic$]\label{Dtactic}
Let a set $\cA$ of candidates be given, with $|\cA|\ge\ellx$,
and suppose that all voters in $\cW$ vote as instructed (by some intelligent
leader). 
A good outcome is when at least $\ellx$ candidates from $\cA$ are elected.
\end{definition}

Before considering specific election methods, we note two simple relations.

\begin{theorem}
  \label{Tps}
For any election method with unordered or ordered  ballots, and $1\le\ell\le S$,
\begin{align}
  \pi\party\ls &\le \pi\same\ls, \label{party-same}
\\
  \pi\tactic\ls &\le \pi\same\ls. \label{tactic-same}
\end{align}
\end{theorem}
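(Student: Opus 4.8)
The plan is to derive both inequalities from a single monotonicity principle that is immediate from the definition \eqref{pi}: if every election that satisfies a scenario $\fS_1$ with a bad outcome can also be viewed as an election satisfying a scenario $\fS_2$ with a bad outcome, for the \emph{same} set $\cW$ (and hence the same proportion $|\cW|/|\cV|$), then the set of proportions entering the supremum for $\fS_1$ is contained in the set entering the supremum for $\fS_2$, so $\pi_{\fS_1}\ls\le\pi_{\fS_2}\ls$. It therefore suffices, in each case, to embed the relevant scenario into the $\samex$ scenario of \refD{Dsame}.

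For \eqref{party-same} this embedding is essentially a tautology: the party-list scenario of \refD{Dparty2} is literally a special case of \refD{Dsame}. Indeed, if all voters vote for disjoint party lists and the voters in $\cW$ all vote for one common list $\gs$ with $|\gs|\ge\ell$, then in particular the voters in $\cW$ all vote for the single common list $\cA:=\gs$, while the remaining voters vote in some (here rather special) permitted way; this is exactly an election satisfying \refD{Dsame}. The two notions of good outcome coincide, both demanding that at least $\ell$ of the candidates on $\gs=\cA$ be elected. Hence every party-list election with a bad outcome is a $\samex$-election with a bad outcome at the same proportion, and the principle gives \eqref{party-same}.

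For \eqref{tactic-same} the embedding is slightly less direct, since \refD{Dtactic} permits the voters in $\cW$ to follow an arbitrary optimal strategy rather than simply voting for $\cA$. The key observation is that ``all of $\cW$ vote for the list $\cA$'' is itself one of the strategies available to the leader. Consider any election satisfying the $\tacticx$ scenario with a bad outcome: for the given votes of the other voters $\cVW$, no strategy for $\cW$ secures a good outcome (a bad outcome remains possible, \cf{} \refR{Rties}). In particular, the specific strategy of having all of $\cW$ vote for $\cA$ also fails to secure a good outcome, so the election with the same votes from $\cVW$ and with $\cW$ all voting for $\cA$ satisfies the $\samex$ scenario and admits a bad outcome, at the same proportion $|\cW|/|\cV|$. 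The monotonicity principle then yields \eqref{tactic-same}.

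The only point requiring care is the logical step in the last paragraph: one must verify that if the \emph{optimal} strategy for $\cW$ cannot guarantee a good outcome, then \emph{a fortiori} the particular ``vote for $\cA$'' strategy cannot either. This is precisely what ``optimal'' encodes, and under the reading of ``bad outcome'' as ``a bad outcome is possible'' (allowing for ties and randomization, as in \refR{Rties}), the implication is immediate and needs no computation. The first inequality is purely a matter of recognizing one scenario as a restriction of the other.
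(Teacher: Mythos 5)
Your proposal is correct and follows essentially the same route as the paper: the first inequality is obtained by observing that every instance of the party-list scenario is an instance of $\samex$ with the same notion of good outcome (so the supremum in \eqref{pi} is over a larger set for $\pi\same$), and the second by noting that a bad outcome for $\tacticx$ means no strategy for $\cW$ succeeds, in particular not the strategy of all voting for $\cA$, which produces a bad $\samex$ instance at the same proportion. No gaps; the care you take with the "optimal strategy fails $\Rightarrow$ the specific strategy fails" step matches the paper's argument exactly.
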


\begin{proof}
  Each instance of the scenario $\partyx$ 
(\refD{Dparty2})
is also an instance of the  scenario  $\samex$ (\refD{Dsame}), 
and the definition of good outcome is the same.
Hence, a bad outcome for $\partyx$ is also a bad outcome for $\samex$,
and thus \eqref{pi} implies \eqref{party-same}, since the supremum is taken
over a larger set for $\pi\same$.

Next,
since $\pi\tactic$ assumes that $\cW$ uses an optimal strategy, a bad
outcome for $\pi\tactic$ with a given set $\cW$ means that $\cW$ has no
strategy that guarantees a good outcome. In particular, the strategy that
all voters in $\cW$ vote for the same list $\cA$ may fail, 
so there is a bad outcome for
$\pi\same$ too for this $\cW$. Hence, \eqref{pi} yields \eqref{tactic-same}.
\end{proof}

Consider first the simple and important 
\emph{\BVn} method $\BVx$ 
(\emph{multi-member plurality}; \emph{plurality-at-large});
recall that here every voter votes for (at most) $S$ candidates.
In this case, the simple strategy to vote on the same list is the best strategy.
As is well-known, this method gives a majority all seats, so the next result
is rather obvious.

\begin{theorem}\label{TBV} 
For \BVn\kol
  \begin{align}
\pi\party\BV\ls= \pi\same\BV\ls=\pi\tactic\BV\ls=\frac12,
\qquad\lele.
  \end{align}
\end{theorem}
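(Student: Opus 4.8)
The plan is to establish the three equalities by proving that each of the three thresholds equals $\tfrac12$, exploiting the relations already available in \refT{Tps} to reduce the work. By \eqref{party-same} and \eqref{tactic-same} we have $\pi\party\BV\ls\le\pi\same\BV\ls$ and $\pi\tactic\BV\ls\le\pi\same\BV\ls$, so it suffices to prove the single upper bound $\pi\same\BV\ls\le\tfrac12$ together with the single lower bound $\pi\party\BV\ls\ge\tfrac12$ and $\pi\tactic\BV\ls\ge\tfrac12$; but since the tactical threshold is sandwiched below $\pi\same$ and we will show $\pi\party$ already reaches $\tfrac12$ from below, the cleanest route is to prove the chain $\tfrac12\le\pi\party\BV\ls$ and $\pi\same\BV\ls\le\tfrac12$, which pinches all three values to $\tfrac12$ once combined with the two inequalities from \refT{Tps}.

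First I would prove the upper bound $\pi\same\BV\ls\le\tfrac12$. The key fact about \BVn{} is that each candidate's score is simply the number of ballots listing that candidate, and the $S$ candidates with the highest scores are elected. Suppose $|\cW|/V>\tfrac12$, so $\cW$ is a strict majority, and all voters in $\cW$ vote for the same list $\cA$ with $|\cA|\ge\ell$ (in fact they can fill all $S$ slots, choosing $\cA$ of size $S$ if $\ell\le S$). Then every candidate on $\cA$ receives more than $V/2$ votes from $\cW$ alone, whereas any candidate not on $\cA$ can receive at most the votes from $\cVW$, which number fewer than $V/2$. Hence every one of the (at least $\ell$) candidates in $\cA$ strictly outscores every candidate outside $\cA$, so all candidates in $\cA$ are elected and the outcome is good. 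This shows a majority always guarantees a good outcome, giving $\pi\same\BV\ls\le\tfrac12$.

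Next I would prove the lower bound $\pi\party\BV\ls\ge\tfrac12$ by exhibiting a bad outcome with $|\cW|/V$ arbitrarily close to (indeed equal to) $\tfrac12$. Take two disjoint party lists, the list $\gs$ of $\cW$ and a rival list of the same size, and let exactly half the voters vote for each (take $V$ even). Each candidate on $\gs$ then ties with a corresponding candidate on the rival list, and a typical implementation breaks the tie by lot; there is therefore a positive-probability outcome in which fewer than $\ell$ candidates from $\gs$ are elected (for instance the rival list wins every contested seat). By \refR{Rties} such a tie counts as a bad outcome in \eqref{pi}, so $\pi\party\BV\ls\ge\tfrac12$. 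Combining with $\pi\party\BV\ls\le\pi\same\BV\ls\le\tfrac12$ from \eqref{party-same} forces $\pi\party\BV\ls=\pi\same\BV\ls=\tfrac12$, and then $\pi\tactic\BV\ls\le\pi\same\BV\ls=\tfrac12$ from \eqref{tactic-same} together with the observation that the same split election is also a bad tactical instance (no strategy helps $\cW$ when the opposing majority mirrors it) gives $\pi\tactic\BV\ls=\tfrac12$ as well.

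The main obstacle is the tie handling at exactly $|\cW|/V=\tfrac12$: one must be careful that the lower-bound example genuinely produces a bad outcome under the supremum convention \eqref{pi} rather than merely approaching $\tfrac12$, and I expect to lean on \refR{Rties} to justify that a tie at the threshold counts. For the tactical threshold the subtle point is arguing that an intelligent leadership controlling a set $\cW$ of size exactly $V/2$ still cannot force $\ell$ seats against a mirrored opposing half; any allocation of $\cW$'s votes across candidates is matched by the opposition, so every relevant candidate either ties or loses, and no strategy removes the positive probability of a bad outcome. This symmetry argument is the crux and deserves an explicit sentence, but otherwise the computation is the elementary score comparison above.
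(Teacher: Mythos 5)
Your proof is correct and follows essentially the same route as the paper: a strict majority voting on a common list outscores everyone else and takes all seats (upper bound), while a group with at most half the votes can be shut out entirely by a united opposition voting a disjoint list of $S$ names (lower bound), with \refT{Tps} tying the three thresholds together. The only cosmetic difference is that you realize the lower bound via a tie at exactly $|\cW|/V=\tfrac12$ (which in fact also shows the refined value is $\tfrac12+$), whereas the paper simply notes that any $\cW$ with $|\cW|<V/2$ gets no seats regardless of strategy, which already handles the tactical case without your mirroring argument.
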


\begin{proof}
  If $|\cW|>\frac12V$, so $|\cW|>|\cVW|$, then $\cW$ will get all $S$ seats by
  voting on the same list. Conversely, if $|\cW|<\frac12V$, then $\cW$ will
  not get any seat, regardless of how they vote, 
if all other voters vote on a common list (with at least $S$ names) 
disjoint from  $\cA$. 
This argument shows all three cases.
\end{proof}

\begin{remark}\label{Rdummy}
  For our purposes, it does not matter whether a voter has to vote for
  exactly $S$ candidates, or whether it only has to be at most $S$. This is
  seen from the proof above, but also because we (implicitly) assume that
  there is an unlimited number of potential candidates, so that a voter
  always has the option of throwing away some votes on dummy candidates with
  no hope of being elected. The same applies to Limited Vote below.
\end{remark}

\emph{\AVn} (\AVx) differs mathematically from \BVn{} only in that
each ballot may contain an arbitrary number of candidates. The proof above
applies to this method to, giving the following result.

\begin{theorem}\label{TAV} 
For \AVn\kol 
  \begin{align}
\pi\party\AV\ls=     \pi\same\AV\ls=\pi\tactic\AV\ls=\frac12,
\qquad\lele.
  \end{align}
\end{theorem}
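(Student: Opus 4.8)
The plan is to imitate the proof of \refT{TBV}, since \AVn{} differs from Block Vote only in permitting ballots that list more than $S$ candidates, and to organize the three equalities through the inequalities \eqref{party-same} and \eqref{tactic-same} of \refT{Tps}. Because $\pi\party\AV\ls\le\pi\same\AV\ls$ and $\pi\tactic\AV\ls\le\pi\same\AV\ls$, it suffices to establish the single upper bound $\pi\same\AV\ls\le\frac12$ together with the two lower bounds $\pi\party\AV\ls\ge\frac12$ and $\pi\tactic\AV\ls\ge\frac12$; these four facts pin all three thresholds to $\frac12$.

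For the upper bound I would take any $\cW$ with $|\cW|>\frac12 V$, so $|\cW|>|\cVW|$, and let every voter of $\cW$ approve the common list $\cA$ (where $|\cA|\ge\ell$). Each candidate in $\cA$ then receives at least $|\cW|$ approvals, while any candidate outside $\cA$ receives at most $|\cVW|<|\cW|$, since its approvals can come only from $\cVW$. Thus every member of $\cA$ strictly outscores every non-member, the $S$ elected candidates include $\min(|\cA|,S)\ge\ell$ elements of $\cA$, and the strictness of the inequality means no tie can interfere; hence a proportion exceeding $\frac12$ guarantees a good outcome in the $\samex$ scenario, giving $\pi\same\AV\ls\le\frac12$.

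For the two lower bounds I would exhibit a single bad configuration at proportion exactly $\frac12$. Take $|\cW|=|\cVW|=V/2$, let the list $\cA$ consist of $\ell$ candidates disjoint from $\cVW$'s votes, and let all of $\cVW$ approve one common list of $S$ candidates disjoint from $\cA$, so that each of those $S$ candidates gets $V/2$ approvals. In the $\partyx$ scenario $\cW$ approves $\cA$ and each candidate of $\cA$ likewise gets $V/2$; in the $\tacticx$ scenario the most any strategy can give a candidate of $\cA$ is again $V/2$, because each voter contributes at most one approval to a fixed candidate and $\cVW$ contributes none to $\cA$. In either case all relevant candidates are tied at $V/2$, so an adversarial resolution of the tie, which \eqref{pi} counts, fills all $S$ seats from $\cVW$'s list and elects none of $\cA$: a bad outcome at proportion $\frac12$. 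This yields $\pi\party\AV\ls\ge\frac12$ and $\pi\tactic\AV\ls\ge\frac12$.

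The only genuinely new point compared with \refT{TBV} is the tactical lower bound: one must check that the freedom of \AVn{} to approve arbitrarily many candidates cannot lift any member of $\cA$ above the $V/2$ barrier. This is immediate once one observes that a single voter still counts at most once per candidate, so $\cW$'s total support for any fixed candidate is capped by $|\cW|=V/2$. I expect this to be the main obstacle, and it is a minor one.
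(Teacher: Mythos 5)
Your proposal is correct and follows essentially the same route as the paper, which simply observes that the Block Vote argument of Theorem~\ref{TBV} carries over verbatim: a set with more than half the votes voting for a common list outscores everyone else, while at (or below) half the votes the complement can tie (or beat) every candidate of $\cA$ with a disjoint list of $S$ names, and no approval strategy can push any single candidate above $|\cW|$ approvals. The only cosmetic difference is that you realize the lower bound by an exact tie at $|\cW|=V/2$ rather than by the supremum over $|\cW|<V/2$; both are standard and both give $\pi=\frac12$.
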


Another related simple election method is \emph{\SNTVn} ($\SNTVx$), 
where each voter only
votes for one candidate. In this case,
if the objective is to get more than one candidate elected,  
it is obviously a stupid strategy to let all voters in $\cW$ vote in the same
way, so more elaborate strategies are required,
and $\pi\party$ and $\pi\same$ are not relevant.
The following result was essentially shown by \citet[p.~174]{Droop} (and may
have been known earlier);
it is a special case of \refT{TLV} below, shown by \citet{Dodgson}.
\begin{theorem}[\citet{Droop}]\label{TSNTV} 
For \SNTVn\kol 
  \begin{align}\label{tsntv}
      \pix\tactic\SNTV\ls=\frac{\ell}{S+1},
\qquad\lele.
  \end{align}
\end{theorem}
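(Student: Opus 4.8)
The plan is to prove \eqref{tsntv} by establishing matching upper and lower bounds on $\pix\tactic\SNTV\ls$. For SNTV, each voter casts exactly one vote, so an election is described by the vote counts received by the candidates. Under the $\tacticx$ scenario, the $|\cW|$ voters coordinate and distribute their votes among the candidates in $\cA$ however they like, while the outcome elects the $S$ candidates with the most votes (ties broken unfavourably, per \refR{Rties}).

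For the \textbf{upper bound} $\pix\tactic\SNTV\ls\le\frac{\ell}{S+1}$, I would argue that if $|\cW|/V>\ell/(S+1)$, the coordinated voters can guarantee $\ell$ elected candidates by spreading their votes as evenly as possible over $\ell$ chosen candidates from $\cA$. Each of these $\ell$ candidates then receives more than $\frac{1}{S+1}$ of the total vote (at least in the limit $V\to\infty$, which is why the statement is phrased for $\pix$ rather than $\pi$). The key combinatorial fact is that there cannot be $S+1$ candidates each holding strictly more than $V/(S+1)$ votes, since their votes would sum to more than $V$. Hence at most $S-\ell$ of the remaining candidates can out-poll our $\ell$ targets, so all $\ell$ targets land among the top $S$ and are elected. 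I would spell out the even-splitting argument: with $|\cW|$ votes split over $\ell$ candidates, each gets at least $\lfloor|\cW|/\ell\rfloor$, and as $V\to\infty$ this exceeds $V/(S+1)$ whenever $|\cW|/V>\ell/(S+1)$.

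For the \textbf{lower bound} $\pix\tactic\SNTV\ls\ge\frac{\ell}{S+1}$, I would exhibit a family of bad elections with $|\cW|/V$ approaching $\ell/(S+1)$ from below. Take $\cW$ with just under $\ell V/(S+1)$ votes and let the remaining just-over $(S+1-\ell)V/(S+1)$ votes be cast by the adversary, who splits them evenly over $S+1-\ell$ candidates outside $\cA$, giving each of those slightly more than $V/(S+1)$ votes. No matter how $\cW$ distributes its votes over $\cA$, the best it can do is make $\ell-1$ candidates clearly safe, but the $\ell$-th target cannot simultaneously beat all $S+1-\ell$ adversary candidates: by a pigeonhole/averaging argument, $\cW$'s total is too small to push $\ell$ of its candidates each above the $V/(S+1)$ level held by the adversary's candidates. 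One checks that at most $\ell-1$ of $\cA$'s candidates can exceed that level, so at least one adversary candidate displaces the $\ell$-th, yielding a bad outcome. Letting $V\to\infty$ along this family gives the bound.

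The main obstacle is making the lower-bound counting fully rigorous, including the tie-breaking and the exact arithmetic near the threshold. Since $\ell V/(S+1)$ need not be an integer and the even split of $\cW$'s votes over $\ell$ candidates leaves rounding slack, I expect the cleanest route is to invoke \refR{Rpix} and \refR{Rreal} to work with real vote numbers, then exhibit the extremal configuration exactly at proportions $\ell/(S+1)$ and $1/(S+1)$ and argue that the supremum over bad outcomes is attained in the limit. I would also note that the theorem is stated as an equality for $\pix$ precisely because the integrality obstruction (an evenly-split strategy requires $|\cW|$ divisible by $\ell$) makes $\pi$ itself potentially larger for small $V$; this is exactly the phenomenon flagged in the discussion of \eqref{pix} and \refE{EpixSNTV}, so I would be careful to phrase everything asymptotically. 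Finally, since this is the $L=1$ special case of \refT{TLV} (attributed to \citet{Dodgson}), I would keep the argument self-contained but structured so it transparently generalizes.
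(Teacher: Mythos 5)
Your proposal is correct and follows essentially the same route as the paper: the equal-split strategy over $\ell$ candidates plus the observation that at most $S$ candidates can each hold more than $V/(S+1)$ of the votes gives the upper bound, and the adversary splitting evenly over $S+1-\ell$ candidates gives the matching lower bound, with the integrality/rounding issues absorbed into $\pix$ exactly as the paper does. The only cosmetic difference is that the paper realizes the extremal configuration exactly at the ratio $\ell/(S+1)$ via a tie, whereas you perturb slightly to avoid ties and take a limit; both are valid.
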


\begin{proof}
  Let $\cW$ use the strategy to divide its votes equally between $\ell$
  candidates $A_1,\dots,A_\ell$. Of course, an exactly equal split is possible
  only if $|\cW|$ is divisible by $\ell$, but $\cW$ can always assure that each
  candidate gets al least $\floor{\WWW/\ell}>\WWW/\ell-1$ votes. 

If the outcome is bad, then not all these $\ell$ candidates get elected; say
that $A_\ell$ does not. Then, at
least $S+1-\ell$ other candidates are elected. Each of these must have at least
as many votes as the unsuccessful $A_\ell$, and thus $>\WWW/\ell-1$ votes.
Hence, there are (at least) $S+1$ candidates with $>\WWW/\ell-1$ votes, and
consequently,
\begin{align}
  V > (S+1)\Bigpar{\frac{\WWW}{\ell}-1}
\end{align}
or
\begin{align}
\WWW <  \frac{\ell}{S+1}V + \ell
\end{align}
and
\begin{align}
\frac{\WWW}{V} <  \frac{\ell}{S+1} + \frac{\ell}{V}.
\end{align}
Hence, the definition \eqref{pix} yields $\pix\tactic\SNTV\ls\le \ell/(S+1)$.

Conversely, let $V$ be divisible by $S+1$ and write $w:=V/(S+1)$.
If $\WWW=\frac{\ell}{S+1}V=\ell w$, then $|\cVW|=V-\WWW=V-\ell w=(S+1-\ell)w$.
Hence, $\cVW$ may split their votes on $S+1-\ell$ candidates
$B_1,\dots,B_{S+1-\ell}$ with $w$ votes each. On the other hand,
for any strategy used by $\cW$, at least one of $A_1,\dots,A_\ell$ gets at most
$w$ votes. Hence a bad outcome is possible, 
and \eqref{pix} yields $\pix\tactic\SNTV\ls\ge \WWW/V=\ell/(S+1)$.
\end{proof}

  In \refT{TSNTV}, we use $\pix$ and not $\pi$, \ie, we consider the limit
  as $V\to\infty$. The reason is that the strategy used in the proof above, of
  splitting the votes equally between some candidates, in general can be
  followed only approximately due to divisibility issues. 
The following example shows that this really matters when $V$ is small.

\begin{example}\label{EpixSNTV}
  Let $\ell=2$, $S=3$, $\WWW=3$ and $V=5$.
Suppose that $\cW$ wants $A_1$ and $A_2$ to be elected.
Regardless of the strategy used by $\cW$, either $A_1$ or $A_2$ will get only
one vote, say $A_2$. Hence, if the two voters in $\cVW$ vote with one vote
each for $B_1$ and $B_2$, then there is a tie and it is possible that $B_1$
and $B_2$ are elected together with $A_1$, a bad outcome.
Consequently,
\begin{align}
  \pi\tactic\SNTV(2,3) \ge \frac{3}{5} > \frac12 = \pix\tactic\SNTV(2,3).
\end{align}
\end{example}

\begin{remark}
  \refT{TSNTV} shows that SNTV under ideal conditions can be regarded as
  a proportional method. However, 
as is well-known,
there are obvious practical problems with
  using  this strategy in, say, a general election. Moreover, the strategy
  assumes that the size of $\cW$ and of its opponents are known;
  misjudgement can lead to disastrous results.
The same applies to Limited Vote and \CVn{} below. 
(These and other problems were discussed already by \citet{Droop}.)
\end{remark}

More generally, \emph{Limited Vote} is a version of Block Vote, where each
ballot contains only (at most) $\lv$ candidates, for some fixed $\lv\le
S$; we use the notation $\LVx\lv$.
Thus, \BVn{} is the special case $\lv=S$, and \SNTVxn{} is the special case
$\lv=1$. 

\refTs{TBV} and \ref{TSNTV} generalize to Limited Vote as follows. Again we
consider $\pix$ for the same reason as for \SNTVxn.
This result was proved by Charles Dodgson
\xfootnote{Better known by his pseudonym Lewis Carroll used when writing
  fiction.}
\cite{Dodgson}, who also gave a table of
numerical values for all cases with $1\le S\le 6$.
(The result may have been known earlier; \citet{Droop} mentions a few
values in his discussions, so it is perhaps likely that he knew the general
formula.)

\begin{theorem}[\citet{Dodgson}]\label{TLV}
For \LVn:
Let\/ $1\le\lv\le S$. Then, for $1\le \ell\le S$,
\begin{align}
  \pix\tactic\LV{\lv}\ls 
&= \frac{\min\bigset{1,\xqfrac\lv{S+1-\ellx}}}
{\min\bigset{1,\xqfrac\lv{S+1-\ellx}}+\min\bigset{1,\xfrac\lv\ellx}}
\label{tlv1}\\&
= \frac{\ell\min\bigset{\lv,{S+1-\ellx}}}
{\ell\min\bigset{\lv,{S+1-\ellx}}+(S+1-\ell)\min\bigset{\lv,\ellx}}.
\label{tlv2}
\end{align}
  Hence, if\/ $1\le\lv\le (S+1)/2$, then
  \begin{align}\label{tlv3}
      \pix\tactic\LV{\lv}\ls 
=
\begin{cases}
\frac{\lv}{S+1+\lv-\ell},
  & 1\le \ell\le \lv,\\
\frac{\ell}{S+1},
  & \lv\le \ell\le S+1-\lv,\\
\frac{\ell}{\ell+\lv},
  & S+1-\lv\le \ell\le S,
\end{cases}
  \end{align}
and  if\/ $(S+1)/2\le\lv\le S$, then
  \begin{align}\label{tlv4}
      \pix\tactic\LV{\lv}\ls 
=
\begin{cases}
\frac{\lv}{S+1+\lv-\ell},
  & 1\le \ell\le S+1-\lv,\\
\frac12,
  &  S+1-\lv\le \ell\le \lv,\\
\frac{\ell}{L+\ell},
  & \lv\le \ell\le S,
\end{cases}
  \end{align}
\end{theorem}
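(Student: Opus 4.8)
The plan is to generalize the proof of \refT{TSNTV} (the case $\lv=1$), the governing quantities being, for the set $\cW$ and for the opposing voters $\cVW$, the largest number of votes that each side can guarantee to \emph{every} one of a chosen block of candidates. Since in Limited Vote each ballot names at most $\lv$ distinct candidates, any group of $k$ voters contributes at most $\min\set{\ell,\lv}$ votes per ballot to a fixed set of $\ell$ candidates; hence by balancing its votes over $\ell$ targets in $\cA$ (and, when $\lv>\ell$, padding each ballot with \emph{distinct} dummy candidates so that no dummy accumulates votes, \cf{} \refR{Rdummy}) the voters in $\cW$ can give each of $\ell$ targets essentially $a:=\WWW\min\set{1,\lv/\ell}$ votes, and no strategy does better for the weakest of the $\ell$ targets. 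Symmetrically, the most $\cVW$ can place on each of a block of $S+1-\ell$ candidates is $b:=|\cVW|\min\set{1,\lv/(S+1-\ell)}$. I expect the threshold to be precisely the balance point $a=b$.

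For the upper bound I would argue by contraposition. Suppose $\cW$ uses the balanced strategy and the outcome is bad; then at most $\ell-1$ targets are elected, so at least $S+1-\ell$ elected candidates lie outside $\cA$, and each of them beats or ties some unelected target $A_j$, which has at least $a-O(1)$ votes. Up to at most one stray dummy vote per candidate, all the votes on these $S+1-\ell$ outside candidates come from $\cVW$, which by the same per-ballot bound can supply at most $|\cVW|\min\set{S+1-\ell,\lv}=(S+1-\ell)b$ of them in total; dividing, the weakest of these candidates has at most $b$ votes, forcing $a\le b+O(1)$. Letting $\Vtoo$ (which is exactly why \refT{TSNTV} and the present theorem are phrased for $\pix$: the $O(1)$ divisibility and dummy slack is immaterial in the limit), we conclude that whenever $a>b$, i.e.\ $p:=\WWW/V$ exceeds the balance point, the outcome is good; hence $\pix\tactic\LV{\lv}\ls$ is at most that balance point.

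For the matching lower bound I would exhibit forced bad outcomes just below the balance point. Fix $p$ slightly less than the balance point and $V$ large and suitably divisible, and let $\cVW$ place exactly $b$ votes on each of $S+1-\ell$ candidates $B_1,\dots,B_{S+1-\ell}$. For \emph{any} strategy of $\cW$ the per-ballot bound pins the weakest of its $\ell$ targets at most $a<b$ votes; this target then ranks below all $S+1-\ell$ of the $B_i$ and below the other $\ell-1$ targets, so at most $\ell-1$ targets are elected, a bad outcome with no appeal to tie-breaking. Thus $\pix\tactic\LV{\lv}\ls$ equals the balance point, and solving $p\min\set{1,\lv/\ell}=(1-p)\min\set{1,\lv/(S+1-\ell)}$ for $p$ gives \eqref{tlv1}; the identity $\min\set{1,\lv/m}=\min\set{m,\lv}/m$ clears denominators to yield \eqref{tlv2}, and resolving the two minima according to whether $\lv\le\ell$ and whether $\lv\le S+1-\ell$ produces the piecewise formulas \eqref{tlv3}--\eqref{tlv4}. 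The main obstacle is the upper bound's double count: one must verify that balancing is genuinely optimal for $\cW$ while any blocking is forced to spread over at least $S+1-\ell$ candidates, and keep honest track of the $O(1)$ rounding and stray-dummy terms so that they provably vanish in the $\pix$ limit.
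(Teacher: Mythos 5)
Your proposal is correct and follows essentially the same route as the paper's proof: the same balanced-splitting strategy for $\cW$ (with distinct dummies when $\lv>\ell$), the same per-ballot counting bounds $v_A=\min\{1,\lv/\ell\}\WWW+O(1)$ and $v_B=\min\{1,\lv/(S+1-\ell)\}|\cVW|+O(1)$, and the same balance equation $\pix\min\{1,\lv/\ell\}=(1-\pix)\min\{1,\lv/(S+1-\ell)\}$ resolved into the piecewise formulas. Your averaging argument for the upper bound and the explicit handling of the $O(1)$ slack in the $\pix$ limit only spell out steps the paper leaves implicit.
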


\begin{proof}
  As for \SNTVxn, the strategy of $\cW$ is to divide its votes equally
  between its $\ell$ favoured candidates $A_1,\dots,A_\ell$.
If $\ell\le\lv$, then all voters in $\cW$ vote for the same list 
$\set{A_1,\dots,A_\ell}$. (If required to vote for exactly $\lv>\ell$ candidates,
they further vote for $\lv-\ell$ dummy candidates each; these have to come
from at least two different sets for different voters. See \refR{Rdummy}.)

On the other hand, if $\ell>\lv$, then the votes are split so that each
of the $\ell$ candidates gets $\cW \lv/\ell+O(1)$ votes. (E.g.{} by splitting the
votes as equally as possible between all $\lv$-subsets of
\set{A_1,\dots,A_\ell}, or, more practically, between 
the $\ell$ sets $\set{A_i,\dots,A_{i+\lv-1}}$, $ 1\le i\le \ell$,
with indices taken modulo $\ell$.)

This strategy thus gives each candidate $A_i$ at least
\begin{align}\label{lva}
v_A:=  \min\Bigpar{1,\frac{\lv}{\ell}}\WWW + O(1)
\end{align}
votes. Conversely, for any strategy, at least one of $A_1,\dots,A_\ell$ gets at
most $v_A$ votes. (Assuming that they get no votes from $\cVW$.)

Similarly, $\cVW$ can give each of $S-\ell+1$ candidates $B_1,\dots,B_{S-\ell+1}$
at least
\begin{align}\label{lvb}
v_B:=  \min\Bigpar{1,\frac{\lv}{S+1-\ell}}(V-\WWW) + O(1)
\end{align}
votes, but they cannot give all $S-\ell+1$ candidates more than $v_B$ votes.

It follows that the outcome is good if $v_A>v_B$, but bad, for any strategy of
$\cW$, if $v_A<v_B$. It follows that $\pix$ satisfies the equation
\begin{align}
\pix  \min\Bigpar{1,\frac{\lv}{\ell}}
=\bigpar{1-\pix}\min\Bigpar{1,\frac{\lv}{S+1-\ell}}.
\end{align}
This gives \eqref{tlv1}, and \eqref{tlv2}--\eqref{tlv3} follow.
\end{proof}

For $\pi\same\LV{L}$, only $\ell\le L$ are relevant.
In this case, the strategy in the proof  of \refT{TLV}
is to vote on the same list, and the proof (without $O(1)$)
shows also the following.

\begin{theorem}
\label{TRLV}
For \LVn:
  If\/ $1\le\ell\le L\le S$, then
  \begin{equation}
\pi\LV{\lv}\same\ls
=\pi\LV{\lv}\tactic\ls
=\pix\LV{\lv}\tactic\ls,
  \end{equation}
  given in \eqref{tlv1}--\eqref{tlv4}.
\qed
\end{theorem}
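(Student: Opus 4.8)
The plan is to sandwich the three quantities and reduce everything to a single upper bound. By the general inequality \eqref{pixpi} and by \eqref{tactic-same} we already have
\[
  \pix\LV{\lv}\tactic\ls \le \pi\LV{\lv}\tactic\ls \le \pi\LV{\lv}\same\ls ,
\]
so it suffices to prove the reverse inequality $\pi\LV{\lv}\same\ls \le \pix\LV{\lv}\tactic\ls =: p$, where $p$ is the value computed in \refT{TLV}. Since $\ell\le\lv$ we have $\min\bigset{1,\lv/\ell}=1$, so from \eqref{tlv1}
\[
  p=\frac{\min\bigset{1,\lv/(S+1-\ell)}}{1+\min\bigset{1,\lv/(S+1-\ell)}}
   =\frac{\min\bigset{\lv,S+1-\ell}}{(S+1-\ell)+\min\bigset{\lv,S+1-\ell}} .
\]
Once $\pi\LV{\lv}\same\ls\le p$ is shown, all three quantities collapse to $p$, which is the value displayed in \eqref{tlv1}--\eqref{tlv4}.

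For that upper bound I would argue directly that a bad $\samex$-outcome forces $\WWW/V\le p$. Fix an election satisfying $\samex$ with a bad outcome, so all voters of $\cW$ vote for one common list $\cA$ with $|\cA|\ge\ell$. The crucial point, and the place where the hypothesis $\ell\le\lv$ enters, is that \emph{no splitting of $\cW$'s votes is needed}: every candidate on $\cA$ receives at least $\WWW$ votes exactly, without any of the $O(1)$ slack that appeared in \refT{TLV} for the regime $\ell>\lv$. Since the outcome is bad, at most $\ell-1$ candidates of $\cA$ are elected; hence at least one candidate $A^*\in\cA$ with $\ge\WWW$ votes is \emph{not} elected, while at least $S+1-\ell$ of the elected candidates lie outside $\cA$. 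Every elected candidate has at least as many votes as $A^*$ (allowing equality, to include tie configurations, \cf{} \refR{Rties}), hence at least $\WWW$ votes; and because $\cW$ votes only for $\cA$, the votes of these $S+1-\ell$ outside winners come entirely from $\cVW$.

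The counting step then compares two bounds on the total number of votes cast by $\cVW$ on those $S+1-\ell$ winners: it is at least $(S+1-\ell)\WWW$, while each voter of $\cVW$ lists at most $\lv$ candidates and so contributes to at most $\min\bigset{\lv,S+1-\ell}$ of them, giving the upper bound $\min\bigset{\lv,S+1-\ell}\,|\cVW|$. Therefore
\[
  (S+1-\ell)\WWW \le \min\bigset{\lv,S+1-\ell}\,(V-\WWW),
\]
which rearranges to exactly $\WWW/V\le p$. This proves $\pi\LV{\lv}\same\ls\le p$ and closes the chain.

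The main obstacle is not the algebra but getting the bookkeeping exactly right, so that the bound lands on the $\pix$-value $p$ rather than something off by $O(1/V)$. Two points need care: the comparison with $A^*$ must be non-strict, so that the tie cases (which make the supremum attained and account for the ``$+$'' in the threshold, \cf{} \refR{Rrefined}) are counted as bad; and one must confirm that the $O(1)$ correction genuinely disappears on $\cW$'s side, which is precisely what $\ell\le\lv$ guarantees. The adversary $\cVW$ may still face divisibility problems at finite $V$ when $S+1-\ell>\lv$, but this is irrelevant here: the displayed inequality is a clean count valid for every $V$, so it bounds $\pi\LV{\lv}\same\ls$ from above directly, while the matching lower bound is already delivered by the general chain. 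This is also why the common value coincides with $\pix$, with no passage to the limit $V\to\infty$ required.
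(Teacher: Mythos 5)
Your proof is correct and takes essentially the same route as the paper: the paper's own argument is just the remark that when $\ell\le L$ the optimal strategy in the proof of \refT{TLV} is to vote on a common list, so that the counting there, with the $O(1)$ terms absent, already bounds $\pi\same\LV{L}\ls$ from above by the value of $\pix\tactic\LV{L}\ls$, after which the chain $\pix\tactic\le\pi\tactic\le\pi\same$ closes everything. You have simply written out explicitly the counting step (at least $S+1-\ell$ winners outside $\cA$, each with at least $\WWW$ votes, all supplied by $\cVW$, each of whom reaches at most $\min\{L,S+1-\ell\}$ of them) that the paper leaves implicit.
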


\begin{example} 
  \label{ELV1}
For $\ell=1$, \refTs{TLV} and \ref{TRLV} yield
  \begin{equation}
\pi\LV{\lv}\same\is=\pi\LV{\lv}\tactic\is
=\frac{L}{S+L}. 
  \end{equation}
Furthermore, taking $L=1$, the proof above applies also to
$\pi\party\is$;  thus,
  \begin{equation}\label{sntv1}
    \pi\SNTV\party\is=\pi\SNTV\same\is=\pi\SNTV\tactic\is
=\frac{1}{S+1}. 
  \end{equation}
\end{example}

It is easy to see that the result in \refT{TSNTV}
above for \SNTVxn{} applies also to \CVn{}
($\CVx$), see \refApp{ACV},
with the same proof. However, \CVn{} has the practical
advantage that the strategy of
splitting the vote equally between $\ell$ candidates may be easier to organize;
in particular, if the version of \CVn{} used allows each voter to
split the vote equally between $\ell$ candidates, then all voters in $\cW$ can
vote the same way. Hence we obtain the following result, again essentially
shown by \citet[p.~174]{Droop}. 

\begin{theorem}[\citet{Droop}]\label{TCV} 
For \CVn\kol
  \begin{align}
\pix\tactic\CV\ls = \pix\tactic\SNTV\ls=\frac{\ell}{S+1},
\qquad \lele.
  \end{align}
Furthermore,
in an ideal case where a voter may split the vote equally between $\ell$
candidates, 
  \begin{align}
\pi\tactic\CV\ls=\frac{\ell}{S+1}.
  \end{align}
\end{theorem}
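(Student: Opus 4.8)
The plan is to piggy-back on the proof of \refT{TSNTV}, observing that \CVn{} (see \refApp{ACV}) is nothing but the generalisation of \SNTVxn{} in which each voter, instead of assigning its single unit of voting weight to one candidate, may split that unit among several candidates; crucially, whatever a voter does, the total weight it contributes is fixed, so after normalising every voter's budget to $1$ the total number of votes cast is exactly $V$, independently of all strategies. This conservation of total vote mass is the feature that drives the whole argument, exactly as for \SNTVxn, and the winners are again the $S$ candidates with the most votes.

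First I would prove the general identity $\pix\tactic\CV\ls=\ell/(S+1)$ by transcribing the proof of \refT{TSNTV}. For the upper bound, $\cW$ spreads its total weight $\WWW$ as evenly as it can over its $\ell$ favoured candidates $A_1,\dots,A_\ell$; in the standard integer-valued version of \CVn{} this is only possible up to an $O(1)$ error, so each $A_i$ gets at least $\WWW/\ell-O(1)$ votes. A bad outcome forces some $A_i$ to be unelected, and hence at least $S+1-\ell$ elected candidates outside $\set{A_1,\dots,A_\ell}$, each with at least as many votes as the dropped $A_i$; counting these together with $A_1,\dots,A_\ell$ exhibits $S+1$ candidates each holding $\ge\WWW/\ell-O(1)$ votes, and since the total is $V$ this forces $\WWW/V\le\ell/(S+1)+O(1/V)$. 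Letting \Vtoo{} in \eqref{pix} yields $\pix\le\ell/(S+1)$, and the matching lower bound is furnished by the very construction used for \SNTVxn{} (with $\cVW$ splitting over $S+1-\ell$ rival candidates), which is a legitimate \CVn{} election. This simultaneously gives equality with $\pix\tactic\SNTV\ls$.

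Then I would treat the ideal version separately to get the sharper statement about $\pi$ rather than merely $\pix$. Here each voter may split its vote exactly equally among $\ell$ candidates, so all $\WWW$ voters of $\cW$ may cast the identical ballot assigning weight $1/\ell$ to each $A_i$; this makes every $A_i$ receive exactly $\WWW/\ell$ votes with no error term, and the counting argument above now runs without the $O(1)$, showing that any proportion strictly above $\ell/(S+1)$ is good. To see that equality is genuinely attained I would exhibit a small tie example — e.g.\ $V=S+1$, $\WWW=\ell$, each $A_i$ receiving one vote from $\cW$'s equal splits and each of $S+1-\ell$ further candidates receiving one vote from $\cVW$ — producing an $(S+1)$-way tie at proportion exactly $\ell/(S+1)$, so that the lot may drop an $A_i$ and give a bad outcome.

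The main obstacle I anticipate is conceptual rather than computational: making the reduction to \refT{TSNTV} airtight by (i) justifying that the total vote mass is conserved and equal to $V$ for \emph{every} strategy, which is what legitimises the $S+1$-candidate counting bound, and (ii) pinpointing exactly where integrality enters — it produces the $O(1)$ term that necessitates passing to $\pix$ in the general case but vanishes in the ideal case, which is precisely why the ideal version yields the exact threshold $\pi\tactic\CV\ls=\ell/(S+1)$. Everything else is a direct copy of the \SNTVxn{} computation.
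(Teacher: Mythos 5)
Your proposal is correct and follows essentially the same route as the paper, which simply observes that the proof of \refT{TSNTV} carries over verbatim to \CVn{} (total vote mass per voter being conserved), and that in the ideal equal-split version all of $\cW$ can cast identical ballots so the divisibility error disappears and $\pi$, not just $\pix$, equals $\ell/(S+1)$. Your extra tie example confirming that the supremum is attained matches the spirit of the paper's lower-bound construction for SNTV.
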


The scenarios $\partyx$ and $\samex$ are not very interesting
for \CVn, since
they allow the voters in $\cW$ to spread their votes too thinly over too
many candidates. This is seen formally in the following theorem, 
where we consider the ideal Equal and Even version $\CVqx$, allowing a vote
to be split (equally) on an arbitrary number of candidates.
\xfootnote{
Also called \emph{Satisfaction Approval Voting (SAV)} \cite{BramsKilgour}.
}

\begin{theorem}\label{TCVqsame}
  For (ideal) Equal and Even \CVn\kol
  \begin{align}
    \pi\CVq\party\ls=\pi\CVq\same\ls=1,
\qquad \lele.
  \end{align}
\end{theorem}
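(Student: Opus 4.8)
The plan is to exploit the defining feature of Equal and Even \CVn, namely that each voter's single vote is shared \emph{equally} among all approved candidates: if the common list $\gs$ voted by $\cW$ contains $m$ candidates, then each of them receives total score only $\WWW/m$. By choosing $m$ enormous one can dilute the support of $\cW$ below that of arbitrarily small opposing groups, and this will produce bad outcomes with $\WWW/V$ as close to $1$ as we like. Since the $\partyx$ scenario is a special case of the $\samex$ scenario, \eqref{party-same} already gives $\pi\CVq\party\ls\le\pi\CVq\same\ls\le1$; hence it suffices to prove the lower bound $\pi\CVq\party\ls=1$, and the equality for $\samex$ then follows by squeezing.

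For that lower bound I would argue by an explicit construction in the scenario $\partyx$ (\refD{Dparty2}). Fix an integer $w\ge\ell$ and take $\WWW=w$. Let all voters of $\cW$ vote for a single party list $\gs=\set{A_1,\dots,A_m}$ with $m:=w+1$ candidates; then $m\ge\ell$, so the list is admissible. Let the remaining voters be split into $S$ further parties, the $j$-th consisting of a single voter whose list is the singleton $\set{B_j}$, where $B_1,\dots,B_S,A_1,\dots,A_m$ are all distinct, so that the lists are pairwise disjoint as the scenario demands. Under $\CVqx$ each $A_i$ then receives score $\WWW/m=w/(w+1)<1$, while each $B_j$ receives score $1$. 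Consequently the $S$ largest scores are exactly those of $B_1,\dots,B_S$, the elected set is $\set{B_1,\dots,B_S}$, and no candidate of $\gs$ is elected. As $\ell\ge1$, this is a bad outcome, and here $\WWW/V=w/(w+S)$.

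Letting $w\to\infty$ produces bad elections for the $\partyx$ scenario with $\WWW/V=w/(w+S)\to1$, so the supremum in \eqref{pi} equals $1$ and $\pi\CVq\party\ls=1$; combined with \eqref{party-same} and the trivial bound $\pi\CVq\same\ls\le1$ this yields $\pi\CVq\same\ls=1$ as well. I do not expect any genuine obstacle: the construction is so loose that the only point needing care is to keep the scores strictly separated (guaranteed by $m>\WWW$), so that ties and the tie-breaking rule play no role and the bad outcome is certain. As a bonus, since the examples already have $V=w+S\to\infty$, the very same computation gives $\pix\CVq\party\ls=\pix\CVq\same\ls=1$. The whole content of the argument is the observation that the $\partyx$ and $\samex$ scenarios force $\cW$ to pile all of its weight on one list, which Equal and Even \CVn{} then spreads arbitrarily thin.
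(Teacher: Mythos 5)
Your proof is correct and follows essentially the same route as the paper: dilute $\cW$'s support by having them all vote for one long list while $S$ opposing singleton voters each score a full vote. The only (harmless) difference is that the paper uses a list of exactly $\WWW$ candidates, producing an all-way tie with a possible bad outcome, whereas you use $\WWW+1$ candidates to make the scores strictly separated and the bad outcome certain.
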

\begin{proof}
Let $\cV:=\cW\cup\cU$, where $W:=\WWW\ge\ell$ and $\cU$ is disjoint from $\cW$
with $|\cU|=S$. Futhermore, let each voter be a candidate, let each voter in
$\cW$ vote for $\cW$, and let each voter in $\cVW=\cU$ vote only for
herself.
This is an instance of both $\partyx$ and $\samex$.
Furthermore, there is a tie between all candidates, and it is possible that the
outcome is
$\cE=\cU$, which is a bad outcome for both $\partyx$ and $\samex$.
Hence,
$\pi\party\CVq\ls\ge (W+S)/W$, for any $W\ge\ell$, and thus
$\pi\party\CVq\ls=1$, and similarly,  or by \eqref{party-same}, 
$\pi\same\CVq\ls=1$.
\end{proof}

\begin{remark}
  We might define a scenario $\sameqx$ to be as $\samex$, but with the further
  restriction that $|\cA|=\ell$. Then we would have
  $\pi\sameq\CVq\ls=l/(S+1)$.
We leave this to the reader to explore further.
\end{remark}

\begin{problem}
  Another possibility is to consider the version of \CVn{} where each voter
  may split her vote equally on at most $S$ candidates. 
What is   $\pi\same\CV\ls$ for this version? 
\end{problem}

Finally we note that the strategy for SNTV in \refT{TSNTV}, \ie, to split
$\cW$ into subsets voting for one candidate each, can be used (more or less
successfully) for any election method,
This yields the following general results, which will be used later.
(Note that the results include a rigorous version of \refTx{Tx4} for
$\pi\tactic$.) 
\begin{theorem}
  \label{Tsplit}
Let $\MMMx$ be any election method for ordered or unordered ballots.
\begin{romenumerate}
\item \label{Tsplit1}
For $\lele$,
  \begin{equation}\label{tsplit1}
    \pix\MMM\tactic\ls\le \ell\pix\MMM\tactic\is \le \ell\pi\MMM\tactic\is.
  \end{equation}
\item\label{Tsplit2} 
If\/ $\ell+m\le S$, then
  \begin{equation}\label{tsplit2}
    \pix\MMM\tactic(\ell+m,S)\le \pix\MMM\tactic\ls+\pix\MMM\tactic(m,S).
  \end{equation}
\item\label{Tsplitc} 
If\/ $\lele$, then
  \begin{equation}\label{tsplitc}
\pi\MMM\tactic\ls+\pi\MMM\tactic(S+1-\ell,S)
\ge
\pix\MMM\tactic\ls+\pix\MMM\tactic(S+1-\ell,S)\ge1.
  \end{equation}
\item\label{Tsplit=x}
If\/ $\pix\MMM\tactic\is\le 1/(S+1)$,
then
  \begin{equation}\label{tsplit=x}
    \pix\MMM\tactic\ls=\frac{\ell}{S+1},
\qquad\lele.
  \end{equation}
In particular, this holds 
if\/ $\pi\MMM\tactic\is\le 1/(S+1)$.
\item\label{Tsplit=}
If\/ $\pi\MMM\tactic\ls\le \ell/(S+1)$ for $\lele$,
then this is an equality:
  \begin{equation}\label{tsplit=}
    \pi\MMM\tactic\ls=\frac{\ell}{S+1},
\qquad\lele.
  \end{equation}
\end{romenumerate}
\end{theorem}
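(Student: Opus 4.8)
The plan is to reuse the vote-splitting idea from \refT{TSNTV}: partition $\cW$ into disjoint subgroups, each of which independently forces one or several of its candidates, exploiting that a $\pix\MMM\tactic$-threshold gives a guarantee valid against \emph{every} choice of the remaining ballots — in particular against the other subgroups of $\cW$. The one fact used repeatedly is that, by the $\limsup$ in \eqref{pix}, a group whose vote fraction exceeds a given $\pix\MMM\tactic$-threshold by a fixed positive amount can, for all large $V$, guarantee its quota of elected candidates no matter how the other voters behave.

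For \ref{Tsplit1}, fix $\epsilon>0$, a target set $\cA$ with $|\cA|\ge\ell$, and distinct $A_1,\dots,A_\ell\in\cA$. Split $\cW$ into $\ell$ groups of nearly equal size, so that for large $V$ each has fraction $>\pix\MMM\tactic\is+\epsilon$, and let group $j$ use its optimal strategy for the singleton target $\set{A_j}$. By the definition of $\pix\MMM\tactic\is$, each group forces $A_j$ elected (from group $j$'s viewpoint all other ballots form an adversary), so all $\ell$ members of $\cA$ are elected once $|\cW|/V>\ell\bigpar{\pix\MMM\tactic\is+\epsilon}$; letting $\epsilon\to0$ yields the first inequality, and the second is just $\pix\le\pi$ from \eqref{pixpi}. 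Part \ref{Tsplit2} is identical, splitting $\cW$ into two groups and $\cA$ into disjoint pieces of sizes $\ge\ell$ and $\ge m$, the first forcing $\ell$ and the second $m$ candidates, with $\ell+m\le S$ ensuring these seats are available.

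For \ref{Tsplitc} the first inequality is again \eqref{pixpi}. For the second, set $a:=\pix\MMM\tactic\ls$ and $b:=\pix\MMM\tactic(S+1-\ell,S)$ and suppose $a+b<1$. Choose $\epsilon>0$ with $a+b+\epsilon<1$ and, for large $V$, a partition $\cV=\cW\cup\cVW$ with $|\cW|/V>a+\epsilon$, hence $|\cVW|/V>b$, and give $\cW$ and $\cVW$ disjoint targets. Then $\cW$ has a strategy forcing $\ell$ of its candidates and $\cVW$ a strategy forcing $S+1-\ell$ of its disjoint candidates; playing both strategies at once, each guarantee holds against the other, so at least $S+1$ distinct candidates are elected, contradicting $|\cE|=S$. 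This is the rigorous $\pix\MMM\tactic$-version of \refTx{Tx4}.

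Finally, \ref{Tsplit=x} and \ref{Tsplit=} are bookkeeping. In \ref{Tsplit=x}, \ref{Tsplit1} gives $\pix\MMM\tactic\ls\le\ell\,\pix\MMM\tactic\is\le\ell/(S+1)$ under the hypothesis, and for the reverse bound I apply \ref{Tsplit1} to the index $S+1-\ell$ and then \ref{Tsplitc}:
\begin{equation*}
\pix\MMM\tactic\ls\ge 1-\pix\MMM\tactic(S+1-\ell,S)\ge 1-(S+1-\ell)\pix\MMM\tactic\is\ge 1-\frac{S+1-\ell}{S+1}=\frac{\ell}{S+1},
\end{equation*}
so equality holds; the ``in particular'' clause follows from $\pix\le\pi$. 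Part \ref{Tsplit=} is the same computation with $\pi$ in place of $\pix$, using \ref{Tsplitc} together with the hypothesis $\pi\MMM\tactic(S+1-\ell,S)\le(S+1-\ell)/(S+1)$. I expect the only delicate point to be making precise, from the $\limsup$ definition \eqref{pix}, the step ``for all large $V$, a fraction above threshold $+\epsilon$ forces a good outcome,'' together with the fact that equal splits of $\cW$ and exact partitions of $\cV$ are only approximately realizable at finite $V$ — which is exactly why $\pix$ rather than $\pi$ appears in \ref{Tsplit1}--\ref{Tsplitc}.
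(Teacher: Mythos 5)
Your proof is correct and follows essentially the same route as the paper's: split $\cW$ into subgroups each of whose vote fraction exceeds its own $\pix\tactic$-threshold, combine the guaranteed strategies (each remains valid against the others, since a guarantee holds against arbitrary behaviour of all remaining voters), and obtain \ref{Tsplitc} by the contradiction $|\cE|\ge S+1$ when $\ell+m=S+1$, with \ref{Tsplit=x} and \ref{Tsplit=} following by the same bookkeeping. The only quibbles are trivial: in \ref{Tsplitc} you should pick $|\cW|/V$ in the interval $(a+\epsilon,\,1-b)$ so that the complement's fraction genuinely exceeds $b$, and in \ref{Tsplit=} you invoke the $\pi$-form of \ref{Tsplitc} directly instead of passing through \ref{Tsplit=x} as the paper does — an equally valid variant.
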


\begin{proof}
\pfitemref{Tsplit1}
The second inequality is \eqref{pixpi}.
The first inequality follows from \ref{Tsplit2} and induction, but we prefer
to give a direct proof.

  Fix $p>\pix\MMM\tactic\is$.
If $\WWW>p\ell V+\ell$, split $\cW$, as equally as possible, 
into $\ell$ subsets $\cW_i$ of sizes
\begin{equation}\label{jullan}
  |\cW_i|\ge\floor{\cW/\ell}>\cW/\ell-1>pV.
\end{equation}
Let $A_1,\dots,A_\ell\in\cA$ be $\ell$ desired candidates, and let
$\cW_i$
vote with the aim of electing $A_i$.
By \eqref{jullan} and our choice of $p$, if $V$ is
large enough, there is a strategy for $\cW_i$ that always will succeed
to get $A_i$ elected. Hence, if each $\cW_i$ uses such a strategy, then
$A_1,\dots,A_\ell$ are all elected, a good outcome for $\cW$.
Consequently,
\begin{equation}
  \pix\MMM\tactic\ls \le \limsup_{\Vtoo}\frac{p\ell V+\ell}{V}=p\ell,
\end{equation}
and \eqref{tsplit1} follows because $p$ is arbitrary with
$p>\pix\tactic\MMM\is$.

\pfitemx{\ref{Tsplit2},\ref{Tsplitc}}
For \ref{Tsplitc}, let $m:=S+1-\ell$. 
Note that the first inequality in \eqref{tsplitc} follows from \eqref{pixpi}.

Let 
$p:=\pix\MMM\tactic\ls+\pix\MMM\tactic(m,S)+2\eps$ with $\eps>0$.
If $\WWW>pV$, split $\cW$ into two sets $\cW_1$ and $\cW_2$ with
$|\cW_1|>\bigpar{\pix\MMM\tactic\ls+\eps}V-1$ 
and
$|\cW_2|>\bigpar{\pix\MMM\tactic(m,S)+\eps}V-1$. 
Then, if $V$ is large enough, given $\ell+m$ candidates
$A_1,\dots,A_{\ell+m}$,
there exists a strategy for $\cW_1$ to get $A_1,\dots,A_\ell$ elected,
and a strategy for $\cW_2$ to get $A_{\ell+1},\dots,A_{\ell+m}$ elected.
Combining these, we have a strategy that guarantees that
$A_1,\dots,A_{\ell+m}$ are elected.

In \ref{Tsplitc}, with $\ell+m=S+1$, this is impossible. Thus no  set
$\cW\subseteq\cV$ 
with $\WWW>pV$ can exist, and thus $p\ge1$, which yields \eqref{tsplitc}
since $\eps$ is arbitrary.

In \ref{Tsplit2}, we have shown that, for large $V$, we have a good outcome
(with $\ell+m$ elected) whenever $\WWW>pV$, and thus 
$\pix\MMM\tactic(\ell+m,S)\le p$. This yields \eqref{tsplit2}, since $\eps$
is arbitrary.

\pfitemref{Tsplit=x}
Part \ref{Tsplit1} yields the inequality
\begin{align}\label{kuku}
\pix\MMM\tactic\ls\le\frac{\ell}{S+1}.
\end{align}
Conversely, 
by \eqref{tsplitc} and \eqref{kuku} with $\ell$ replaced by $S+1-\ell$,
\begin{align}\label{koko}
\pix\MMM\tactic\ls\ge
1-\pix\MMM\tactic(S+1-\ell,S)
\ge
1-\frac{S+1-\ell}{S+1}
=
\frac{\ell}{S+1}.
\end{align}
The final sentence follows by \eqref{pixpi}.

\pfitemref{Tsplit=}
By the assumption, $\pi\MMM\tactic\is\le1/(S+1)$, and thus
\ref{Tsplit=x} applies and yields \eqref{tsplit=x}.
Hence, using the assumption again and \eqref{pixpi},
\begin{align}\label{kvkv}
\frac{\ell}{S+1}
=
\pix\MMM\tactic\ls
\le
\pi\MMM\tactic\ls
\le
\frac{\ell}{S+1},
\end{align}
showing \eqref{tsplit=}.
\end{proof}

\section{JR, PJR, EJR}\label{SJR}

We turn to properties and thresholds intended for situations without
organised parties, where a group of voters have similar opinions but do not
necessarily vote identically.

For election methods with unordered ballots, 
\citet{EJR} defined two properties \emph{\qJR{} (justified representation)}
and (stronger) \emph{\qEJR{} (extended justified representation)};
\citet{PJR2016,PJR2017} then defined a related property
\emph{\qPJR{} (proportional justified representation)}
such that \qEJR$\implies$\qPJR$\implies$\qJR.
Inspired by their definitions, we define more generally 
the corresponding thresholds.
Recall that $\cE$ denotes the set of elected candidates.

\begin{definition}[unordered ballots, $\pi\PJR$]\label{DPJR}
  Let $\cA$ be a set of at least $\ell$ candidates, and assume that every
  voter in $\cW$ votes for a set $\gs\supseteq\cA$, i.e.,
for all candidates in $\cA$ and possibly also for some others.
A good outcome is when at least $\ell$ candidates are elected that someone in
$\cW$ has voted for, \ie{} $\Bigabs{\cE\cap\bigcup_{\gs\in\cW}\gs}\ge\ellx$.
\end{definition}

\begin{definition}[unordered ballots, $\pi\EJR$]\label{DEJR}
  Let $\cA$ be a set of at least $\ell$ candidates, and assume that every
  voter in $\cW$ votes for a set $\gs\supseteq\cA$, i.e.,
for all candidates in $\cA$ and possibly also for some others.
A good outcome is when there exists a voter in $\cW$ that has voted for at
least $\ell$ candidates that are elected,
\ie{} $\bigabs{\cE\cap\gs}\ge\ellx$ for some $\gs\in\cW$.
\end{definition}

Note that the difference between \qPJR{} and \qEJR{}
disappears for $\ell=1$: 
\begin{align}\label{pjr-ejr1}
\pi\EJR(1,S)=\pi\PJR(1,S).
\end{align}

For larger $\ell$, 
the scenarios $\EJRx$ and $\PJRx$ have the same instances; the only
difference is the definition of a good outcome. 
This leads to the following inequality.
\begin{theorem} \label{Tpjr-ejr}
For any election method with unordered ballots\kol
  \begin{align}
\pi\PJR\ls\le \pi\EJR\ls, \qquad 1\le \ell\le S,
\label{pjr-ejr}
\end{align}
with equality when $\ell=1$.
\end{theorem}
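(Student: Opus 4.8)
The plan is to use the observation, just recorded above, that the scenarios $\EJRx$ and $\PJRx$ have exactly the same instances and differ only in the definition of a good outcome; the inequality \eqref{pjr-ejr} will then drop out of the monotonicity of the supremum in \eqref{pi}.

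First I would check that every EJR-good outcome is automatically PJR-good. Suppose $\bigabs{\cE\cap\gs}\ge\ellx$ for some $\gs\in\cW$, as in \refD{DEJR}. Since $\gs\subseteq\bigcup_{\gs'\in\cW}\gs'$, we have $\cE\cap\gs\subseteq\cE\cap\bigcup_{\gs'\in\cW}\gs'$, and therefore $\Bigabs{\cE\cap\bigcup_{\gs'\in\cW}\gs'}\ge\bigabs{\cE\cap\gs}\ge\ellx$, which is exactly the PJR-good condition of \refD{DPJR}. Contrapositively, every PJR-bad outcome is EJR-bad.

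Because the two scenarios have identical instances, this means that each (election, outcome) pair counted as bad for $\PJRx$ is also counted as bad for $\EJRx$. Hence the set over which the supremum in \eqref{pi} is formed for $\pi\PJR\ls$ is contained in the set used for $\pi\EJR\ls$, and a supremum over a smaller set cannot exceed one over a larger set. This yields $\pi\PJR\ls\le\pi\EJR\ls$, which is \eqref{pjr-ejr}.

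For the equality at $\ell=1$ I would simply note that the two good-outcome conditions then coincide: for $\ellx=1$ the PJR condition $\Bigabs{\cE\cap\bigcup_{\gs\in\cW}\gs}\ge1$ asserts that some elected candidate is voted for by some voter in $\cW$, which is precisely the EJR condition that $\bigabs{\cE\cap\gs}\ge1$ for some $\gs\in\cW$. The bad instances therefore agree, giving $\pi\PJR(1,S)=\pi\EJR(1,S)$ as in \eqref{pjr-ejr1}. There is no genuine obstacle here; the only point demanding care is that $\pi$ is defined through bad outcomes, so the containment of good outcomes (EJR-good inside PJR-good) must be reversed to a containment of bad outcomes before the two suprema are compared.
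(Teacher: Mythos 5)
Your proposal is correct and follows essentially the same route as the paper: both observe that an EJR-good outcome is PJR-good (you spell out the containment $\cE\cap\gs\subseteq\cE\cap\bigcup_{\gs'\in\cW}\gs'$, which the paper leaves implicit), reverse this to a containment of bad outcomes, and compare the suprema in \eqref{pi}, with the $\ell=1$ case reducing to \eqref{pjr-ejr1}. No issues.
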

\begin{proof}
A good outcome for the scenario $\EJRx$ is a good outcome for $\PJRx$ too.
Hence, an instance  with a bad outcome for $\PJRx$
is a bad outcome for $\EJRx$ too, 
and thus the supremum in \eqref{pi}
is taken over a larger set of instances for $\pi\EJR$; hence
\eqref{pjr-ejr} follows. The case $\ell=1$ is \eqref{pjr-ejr1}.
\end{proof}

We also have simple relations with the thresholds defined in earlier for
scenarios with more organized voters.

\begin{theorem} \label{Tsame-pjr}
For any election method with unordered ballots\kol
  \begin{align}
\pi\party\ls
\le\pi\same\ls
\le\pi\PJR\ls\le \pi\EJR\ls, 
\qquad 1\le \ell\le S.
\label{same-pjr}
\end{align}
\end{theorem}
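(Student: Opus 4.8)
The final statement, Theorem~\ref{Tsame-pjr}, asserts the four-term chain
\[
\pi\party\ls
\le\pi\same\ls
\le\pi\PJR\ls\le \pi\EJR\ls.
\]
The rightmost inequality $\pi\PJR\ls\le\pi\EJR\ls$ is exactly \refT{Tpjr-ejr}, so I can cite it directly. The leftmost inequality $\pi\party\ls\le\pi\same\ls$ is \eqref{party-same} from \refT{Tps}. Thus the only genuinely new content is the middle inequality $\pi\same\ls\le\pi\PJR\ls$, and the whole proof reduces to establishing that one link and then assembling the chain.

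\emph{First} I would prove $\pi\same\ls\le\pi\PJR\ls$ by the same ``instance containment plus good-outcome implication'' argument used repeatedly above (in \refTs{Tps} and \ref{Tpjr-ejr}). The key observation is that every instance of the scenario $\samex$ (\refD{Dsame}) is also an instance of $\PJRx$ (\refD{DPJR}): in $\samex$ all voters in $\cW$ vote for the \emph{same} list $\cA$, which is the special case of $\PJRx$ where the common superset $\gs\supseteq\cA$ is exactly $\cA$ for every voter (so $\bigcup_{\gs\in\cW}\gs=\cA$). I must then check the good-outcome directions are compatible. A good outcome for $\samex$ means at least $\ell$ candidates \emph{from $\cA$} are elected, i.e.\ $|\cE\cap\cA|\ge\ell$. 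A good outcome for $\PJRx$ means $\bigabs{\cE\cap\bigcup_{\gs\in\cW}\gs}\ge\ell$. In the shared instance $\bigcup_{\gs\in\cW}\gs=\cA$, so the two good-outcome conditions coincide. Hence a bad outcome for $\PJRx$ is a bad outcome for $\samex$ on the same instance; since the supremum in \eqref{pi} for $\pi\PJR$ is taken over at least all $\samex$-instances that are bad for $\samex$, it follows that $\pi\same\ls\le\pi\PJR\ls$.

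\emph{Then} I would simply concatenate: $\pi\party\ls\le\pi\same\ls$ by \eqref{party-same}, the middle inequality just shown, and $\pi\PJR\ls\le\pi\EJR\ls$ by \eqref{pjr-ejr}, giving the full chain \eqref{same-pjr}.

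The main obstacle—really the only subtlety worth care—is verifying that the good-outcome conditions genuinely match when one specializes $\samex$ inside $\PJRx$. One must confirm that the ``someone in $\cW$ has voted for'' union in \refD{DPJR} collapses to $\cA$ precisely because all $\cW$-voters share the identical ballot $\cA$; if instead one tried to embed the \emph{more permissive} $\PJRx$ instances (where the $\gs$ strictly contain $\cA$) into $\samex$, the good-outcome conditions would differ and the inequality could break. The correct direction is to view $\samex$ as a \emph{restriction} of $\PJRx$, not the reverse, so that the worst-case supremum for $\pi\PJR$ ranges over a superset of bad $\samex$-instances. Once this containment direction is pinned down, the rest is immediate from \eqref{pi} and the two previously proved inequalities.
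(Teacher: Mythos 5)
Your proposal is correct and follows essentially the same route as the paper: the outer inequalities are quoted from \refTs{Tps} and \ref{Tpjr-ejr}, and the middle inequality $\pi\same\ls\le\pi\PJR\ls$ is obtained by observing that every $\samex$-instance is a $\PJRx$-instance with $\bigcup_{\gs\in\cW}\gs=\cA$, so the good-outcome conditions coincide and the supremum in \eqref{pi} for $\pi\PJR$ runs over a superset of the bad instances. Your extra care in checking the direction of the instance containment is exactly the point the paper's shorter proof relies on.
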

\begin{proof}
  An instance of the scenario $\samex$ is also an instance of the
  scenario  $\PJRx$, and the outcome is good for one scenario if it is
  for the other.
Hence, an instance with a bad outcome for $\pi\same$ is also a bad outcome
for $\pi\PJR$, and thus \eqref{pi} implies $\pi\same\ls\le\pi\PJR\ls$.
The other inequalities in \eqref{same-pjr} are repeated from \refTs{Tps} and
\ref{Tpjr-ejr}. 
\end{proof}

\begin{remark}
The conditions \qEJR{} and \qPJR{} defined in \cite{EJR} and
  \cite{PJR2016,PJR2017} require a good outcome, 
in the sense of our definitions above,
  for any $\cW$ with $|\cW|\ge \ell V/S$, for any $\ell\le S$.
The condition \qJR{} \cite{EJR} is the special case  $\ell=1$ of both.
Consequently, for any reasonable election method, 
see \refR{Rrefined} and \refTx{Tx5},
  \begin{align}
\text{PJR} &\iff    \pi\PJR\ls< \frac{\ell}{S}, 
\qquad 1\le\ell< S,\label{PJR}
\\
\text{EJR} &\iff    \pi\EJR\ls< \frac{\ell}{S}, 
\qquad 1\le\ell< S,\label{EJR}
\\
\text{JR} &\iff    \pi\PJR\is< \frac{1}{S},
\qquad S>1.
\label{JR}
  \end{align}
\end{remark}

Let us first consider the classical election method discussed in \refS{SBV}.
It is obvious that the thresholds above are relevant for SNTV only when
$\ell=1$. In this case we have the following. 
(In particular, SNTV satisfies JR by \eqref{JR}.)
\begin{theorem}\label{TJRSNTV}
For \SNTVn\kol
    \begin{equation}
    \pi\SNTV\PJR\is=
    \pi\SNTV\EJR\is=
\pi\SNTV\same\is
=\frac{1}{S+1},
\qquad S\ge1. 
  \end{equation}
\end{theorem}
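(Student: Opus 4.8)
The plan is to show that, for SNTV, the three scenarios $\samex$, $\PJRx$ and $\EJRx$ literally coincide when $\ell=1$, so that the three thresholds are equal by \refD{Dpi}, and then to read off the common value from \eqref{sntv1}. The key observation is that in SNTV every ballot is a single candidate, i.e.\ a singleton $\gs=\set{A}$.

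Concretely, I would take any instance of the $\PJRx$ scenario (\refD{DPJR}) with $\ell=1$: there is a set $\cA$ with $|\cA|\ge1$ and every voter in $\cW$ casts a ballot $\gs\supseteq\cA$. Since each such $\gs$ is a singleton while $\cA$ is nonempty, $1\le|\cA|\le|\gs|=1$ forces $\cA=\gs=\set{A}$ for one fixed candidate $A$; in particular all voters in $\cW$ vote for the same candidate $A$. Hence every $\PJRx$ (and likewise every $\EJRx$) instance for SNTV is exactly an instance of the $\samex$ scenario (\refD{Dsame}) with list $\cA=\set{A}$, and conversely. Moreover the good-outcome conditions agree: for $\PJRx$ a good outcome is $\bigabs{\cE\cap\bigcup_{\gs\in\cW}\gs}\ge1$, i.e.\ $A\in\cE$; for $\EJRx$ (\refD{DEJR}) it is $\bigabs{\cE\cap\gs}\ge1$ for some $\gs\in\cW$, again $A\in\cE$; and for $\samex$ it is that a candidate of $\cA=\set{A}$ is elected, once more $A\in\cE$. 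Thus the three scenarios have identical instances labelled with identical bad outcomes, so the suprema in \eqref{pi} defining $\pi\SNTV\same\is$, $\pi\SNTV\PJR\is$ and $\pi\SNTV\EJR\is$ are taken over the same set and the three thresholds are equal. (Equivalently, one assembles the chain $\pi\SNTV\same\is\le\pi\SNTV\PJR\is\le\pi\SNTV\EJR\is$ from \refT{Tsame-pjr} with $\pi\SNTV\EJR\is=\pi\SNTV\PJR\is$ from \eqref{pjr-ejr1}, and supplies the reverse inequality $\pi\SNTV\EJR\is\le\pi\SNTV\same\is$ from the collapse just described.) Finally, \eqref{sntv1} in \refE{ELV1} gives $\pi\SNTV\same\is=\frac{1}{S+1}$, and the theorem follows.

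There is no genuine obstacle here beyond recognizing this degeneration: the whole content is that singleton ballots trivialize the distinction between voting for a common list ($\samex$), someone getting a co-voted candidate elected ($\PJRx$), and some individual voter getting enough of her own candidates ($\EJRx$) once $\ell=1$. The only step requiring any care is verifying that all three good-outcome definitions reduce to the single condition $A\in\cE$, which is immediate above; the arithmetic value $\frac{1}{S+1}$ is entirely inherited from \eqref{sntv1} (ultimately from the equal-split argument underlying \refT{TSNTV}).
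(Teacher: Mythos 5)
Your proof is correct and follows essentially the same route as the paper: the paper's own proof likewise observes that for SNTV with $\ell=1$ the scenarios $\PJRx$ and $\EJRx$ force all voters in $\cW$ to vote for the same single candidate $A$, with the good outcome in each case being $A\in\cE$, and then quotes \eqref{sntv1}. Your write-up merely makes the collapse of the three scenarios more explicit.
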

\begin{proof}
  For SNTV and $\ell=1$, 
the scenarios $\PJRx$ and $\EJRx$ require that all voters in $\cW$ vote
  on the same candidate $A$, and a good outcome is when $A$ is elected. 
Thus, $\pi\SNTV\PJR\is=\pi\EJR\SNTV\is=\pi\SNTV\same\is$, and the result
follows from \eqref{sntv1}.
\end{proof}

For BV{} and AV,
the results in \refTs{TBV} and \ref{TAV} extend to $\pi\PJR$,
but not to $\pi\EJR$.
The same holds for LV($L$), but in this case, similarly to SNTV above,  
only the case $\ell\le\lv$ is relevant.
We consider $\pi\PJR$ first, and begin with a separate treatment of BV and AV, 
although this result also can be obtained as a corollary of the more
complicated result for LV in \refT{TPJRLV}.

\begin{theorem}
  \label{TPJRBV}
For \BVn{} and \AVn\kol 
  \begin{align}\label{tpjrbv}
    \pi\PJR\BV\ls 
=     \pi\PJR\AV\ls 
&= \frac12,
\qquad 1\le \ell\le S.
  \end{align}
\end{theorem}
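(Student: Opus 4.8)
The plan is to prove the two inequalities $\pi\PJR\BV\ls\le\tfrac12$ and $\pi\PJR\BV\ls\ge\tfrac12$, and to note that the identical argument covers \AVx{} (the only difference being whether ballots are capped at $S$ names, which plays no role below). The lower bound requires no new construction: by the chain \eqref{same-pjr} in \refT{Tsame-pjr} we have $\pi\same\BV\ls\le\pi\PJR\BV\ls$, and \refT{TBV} already established $\pi\same\BV\ls=\tfrac12$, so $\pi\PJR\BV\ls\ge\tfrac12$; likewise $\pi\PJR\AV\ls\ge\tfrac12$ via \refT{TAV}.

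For the upper bound the key step is to show that any instance of the $\PJRx$ scenario with $\WWW>\tfrac12V$ yields a good outcome. First I would record the two vote-count bounds that drive everything: since every voter in $\cW$ votes for a set $\gs\supseteq\cA$, each candidate in $\cA$ receives at least $\WWW$ votes, whereas any candidate outside $\bigcup_{\gs\in\cW}\gs$ receives votes only from $\cVW$ and hence at most $V-\WWW<\WWW$. Then, assuming toward a contradiction a bad outcome $\bigabs{\cE\cap\bigcup_{\gs\in\cW}\gs}\le\ell-1$, I would argue by counting: because $\cA\subseteq\bigcup_{\gs\in\cW}\gs$ and $|\cA|\ge\ell$, at least one $A^*\in\cA$ is unelected, and because $|\cE|=S$, at least $S-(\ell-1)\ge1$ elected candidates lie outside $\bigcup_{\gs\in\cW}\gs$, so I can choose an elected $X$ there. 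But $X$ then has strictly fewer votes than the unelected $A^*$, which is impossible for a plurality-at-large method that elects the $S$ candidates with most votes. This contradiction forces a good outcome, hence $\pi\PJR\BV\ls\le\tfrac12$, and combining with the lower bound yields \eqref{tpjrbv}.

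The only delicate point — and the step I expect to need the most care — is ties, which matter because \refD{Dpi} counts any instance admitting a bad outcome. Here the saving feature is that the inequality between the totals of $X$ and $A^*$ is strict, while under \BVn/\AVn{} ties arise only among candidates with equal totals; hence no admissible tie-break can seat the strictly smaller $X$ while excluding $A^*$, so the contradiction, and thus the good outcome, persists even in the event of a tie. (Consequently the value is of the ``+'' type, in agreement with $\pi\same\BV\ls=\tfrac12$.) Finally, since the whole argument uses only the two vote-count bounds and never the cap on ballot size, it applies verbatim to \AVx, giving the stated equalities for both methods.
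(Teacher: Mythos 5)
Your proof is correct and follows essentially the same route as the paper: the lower bound via $\pi\same\le\pi\PJR$ together with Theorems \ref{TBV} and \ref{TAV}, and the upper bound from the two vote-count observations that candidates in $\cA$ have at least $\WWW$ votes while candidates outside $\bigcup_{\gs\in\cW}\gs$ have strictly fewer than $\WWW$ when $\WWW>\tfrac12V$. The paper phrases the conclusion as a direct dichotomy (either no outside candidate is elected, or all of $\cA$ is) where you argue by contradiction, but the content, including the tie-handling via strict inequality, is the same.
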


\begin{proof}
  The lower bound follows by \refTs{TBV}, \ref{TAV} and \ref{Tsame-pjr}.

To bound $\pi\PJR\BV\ls$ from above,
suppose that $|\cW|>\frac12V$.
Then the candidates in $\cA$ have at least $|\cW|$ votes each, while the
candidates not in $\bigcup_{\gs\in\cW}\gs$ have at most $|\cVW|=V-|\cW|<\WWW$
votes each. Hence either no candidate outside $\bigcup_{\gs\in\cW}\gs$ is elected,
or all candidates in $\cA$ are; in both cases the outcome is good.
Consequently, $\pi\PJR\BV\ls\le \xfrac12$.
The same argument applies to AV.
\end{proof}

\begin{remark}\label{RAVtie}
\citet{EJR} showed that
\AVn{} does not satisfy JR when $S\ge3$, and that for $S=2$, the answer
depends on the tie-breaking rule. 
By \eqref{JR}, the negative result for $S\ge3$ is a consequence of
\eqref{tpjrbv}. 
For $S=2$, we interpret their result using our refined notation in
\refR{Rrefined}: it is easy to see that 
$\pi\PJR\AV(1,2)=\frac12+$ (JR does not hold) with standard random tie-breaking,
but $\pi\PJR\AV(1,2)=\frac12-$ (JR holds) if we use a tie-breaking rule that 
in the case of several candidates with exactly $V/2$ votes each,
gives preference to a pair of candidates with disjoint
voter support before a pair of candidates supported by the same voters.
\end{remark}

\begin{theorem}
  \label{TPJRLV}
For Limited Vote:
If\/ $1\le\ell\le L$, then
  \begin{align}
    \pi\PJR\LV{L}\ls
=     \pi\same\LV{L}\ls
=\pi\tactic\LV{L}\ls
=\pix\tactic\LV{L}\ls,
  \end{align}
given in \refT{TLV}.
\end{theorem}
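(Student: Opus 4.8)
The plan is to reduce the claim to a single new computation by leaning on the equalities already established in \refT{TRLV} together with the monotonicity chain in \refT{Tsame-pjr}. By \refT{TRLV}, for $1\le\ell\le L$ the three thresholds $\pi\same\LV{L}\ls$, $\pi\tactic\LV{L}\ls$ and $\pix\tactic\LV{L}\ls$ already coincide and equal the explicit value in \eqref{tlv1}--\eqref{tlv4}. Writing $m:=S+1-\ell$ and using $\ell\le L$ (so $\min\set{L,\ell}=\ell$), that value simplifies from \eqref{tlv2} to $\min\set{L,m}/\bigpar{\min\set{L,m}+m}$. Hence it suffices to show that the fourth quantity, $\pi\PJR\LV{L}\ls$, equals this same number.

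The lower bound $\pi\same\LV{L}\ls\le\pi\PJR\LV{L}\ls$ is immediate from \refT{Tsame-pjr}, so only the matching upper bound is new, and I would prove it directly from \refD{DPJR}. Fix any instance of the scenario $\PJRx$ giving a bad outcome, and set $U:=\bigcup_{\gs\in\cW}\gs$. Every voter in $\cW$ votes for a set $\gs\supseteq\cA$, so each candidate in $\cA$ receives a vote from every voter in $\cW$ and thus has at least $\WWW$ votes. A bad outcome means fewer than $\ell$ elected candidates lie in $U$; since $\cA\subseteq U$ and $|\cA|\ge\ell$, at least one candidate $a^*\in\cA$ is not elected, and it carries $\ge\WWW$ votes. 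In a plurality-type method such as Limited Vote the elected set is the $S$ candidates with the most votes, so in any admissible (possibly tie-broken) outcome every elected candidate has at least as many votes as the unelected $a^*$, hence $\ge\WWW$ votes.

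At least $S+1-\ell=m$ elected candidates lie outside $U$; choose $m$ of them, $B_1,\dots,B_m$. Each $B_j$ has $\ge\WWW$ votes, and all of these come from $\cVW$, since no voter in $\cW$ votes for a candidate outside $U$. I would now count $(\text{candidate},\text{voter})$ incidences for $B_1,\dots,B_m$ in two ways: the total is at least $m\WWW$, while each voter in $\cVW$ votes for at most $L$ candidates overall and for at most $m$ of the $B_j$, so contributes at most $\min\set{L,m}$ such incidences, giving a total at most $\min\set{L,m}\,(V-\WWW)$. Therefore $m\WWW\le\min\set{L,m}\,(V-\WWW)$, which rearranges to $\WWW/V\le\min\set{L,m}/\bigpar{\min\set{L,m}+m}$. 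Since this holds for every bad instance, \eqref{pi} yields the desired upper bound and the four thresholds agree.

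The main obstacle is the bookkeeping in the incidence count: one must restrict to exactly $m$ of the outside-$U$ elected candidates so that the per-voter contribution is bounded by $\min\set{L,m}$ rather than the cruder $L$, as the naive bound would only give $\WWW/V\le L/(L+m)$ and miss the correct value $\tfrac12$ in the regime $L>m$. The secondary subtlety is the tie-handling needed to justify, via \refR{Rties}, that every elected candidate really carries $\ge\WWW$ votes. Everything else is the routine algebraic verification that $\min\set{L,m}/\bigpar{\min\set{L,m}+m}$ agrees with the case analysis in \eqref{tlv1}--\eqref{tlv4}.
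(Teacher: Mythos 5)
Your proof is correct, but it takes a genuinely different route from the paper's. The paper proves $\pi\PJR\LV{L}\ls\le\pi\same\LV{L}\ls$ by a \emph{reduction}: starting from a bad $\PJRx$ instance it deletes all votes cast for candidates in $\cAx\setminus\cA$ (where $\cAx:=\bigcup_{\gs\in\cW}\gs$), checks that the $S+1-\ell$ elected candidates outside $\cAx$ still have at least as many votes as the at most $\ell-1$ candidates of $\cA$ that could beat them, and concludes that the modified election is a bad $\samex$ instance with the same $\WWW$ and $V$; the numerical value then comes for free from \refT{TRLV}. You instead bound $\WWW/V$ \emph{directly} for any bad $\PJRx$ instance by double-counting the incidences between voters in $\cVW$ and $m:=S+1-\ell$ elected candidates outside $\cAx$, each of which must carry at least $\WWW$ votes because some unelected member of $\cA$ does; this gives $m\WWW\le\min\set{L,m}(V-\WWW)$ and hence $\WWW/V\le\min\set{L,m}/(\min\set{L,m}+m)$, which indeed coincides with the value of \eqref{tlv2} when $\ell\le L$. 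Your key observation --- that one must count incidences against exactly $m$ of the outside candidates so that each opposing voter contributes only $\min\set{L,m}$ rather than $L$ --- is exactly what is needed to recover the value $\tfrac12$ in the regime $L>m$, and your tie-handling (every elected candidate has at least as many votes as the unelected $a^*\in\cA$) is sound for a plurality-type rule. The trade-off: the paper's reduction avoids recomputing the threshold and transparently explains \emph{why} PJR is no harder than $\samex$ here, while your argument is self-contained and incidentally re-proves the upper-bound half of \refT{TLV} for this range of $\ell$, at the cost of the final algebraic check that your closed form matches the case analysis in \eqref{tlv1}--\eqref{tlv4}.
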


\begin{proof}
Consider a bad outcome for a set of voters $\cW$ as in \refD{DPJR}.
Let $\cAx:=\bigcup_{\gs\in\cW}\gs\supseteq\cA$, the set of candidates voted
for by some voter in $\cW$. Since the outcome is bad,
$|\cAx\cap\cE|\le\ell-1$, and thus there are at least $S-\ell+1$ candidates
$B_1,\dots,B_{S+1-\ell}$ not in $\cAx$ that are elected.
For a candidate $C$, let $v(C)$ be her number of votes, and let
$\vx:=\min_j v(B_j)$.
Thus $v(B_j)\ge\vx$ for $1\le j\le S+1-\ell$.
Furthermore, any candidate $C$ with $v(C)>\vx$ is elected; in particular,
there are at most $\ell-1$ such candidates in $\cAx$.

If we modify the election by eliminating all votes on any candidate in
$\cAx\setminus\cA$, then each voter in $\cW$ votes for $\cA$, so the new
election is an instance of $\samex$.
Furthermore, we still have $v(B_j)\ge\vx$ for $S+1-\ell$ candidates
$B_j\notin\cA$, and $v(A_i)>\vx$ for at most $\ell-1$ candidates
$A_i\in\cA$.
Hence, even if there are ties, a possible outcome is that
$B_1,\dots,B_{S+1-\ell}$ are elected, and thus at most $\ell-1$ from $\cA$,
a bad outcome for $\samex$.

We have shown that for every bad outcome for $\PJRx$, there is an election
with the same $|\cW|$ and $V$ and a bad outcome for $\samex$.
Hence, \eqref{pi} implies $\pi\PJR\LV{L}\ls\le\pi\same\LV{L}\ls$.
\refT{Tsame-pjr} provides the opposite inequality, and thus equality holds.
The proof is completed by \refT{TRLV}.
\end{proof}

The results for EJR are more complicated. We state the results as three
separate theorems, but prove them together.

\begin{theorem}\label{TEJRBV}
For \BVn\kol
  \begin{align}\label{tejrbv}
\pi\EJR\BV\ls
& = 
\begin{cases}
\frac{S}{2S+1-\ell},  
& 1\le\ell\le(S+1)/2,
\\[3pt]
 \frac{2S+1-2\ell}{3S+2-3\ell},
& (S+1)/2\le\ell\le S.
\end{cases}
  \end{align}
\end{theorem}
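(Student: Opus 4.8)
The plan is to pin down $\pi\EJR\BV\ls$ by matching a counting upper bound against an explicit construction, both organized around a single integer parameter $k:=\bigabs{\cE\cap\cA}$, the number of candidates of the common set $\cA$ that get elected in a putative bad outcome. The structural facts I would record first are these. In a bad outcome we have $\bigabs{\cE\cap\gs}\le\ell-1$ for every $\gs\in\cW$, so in particular $k\le\ell-1$; since $\bigabs\cA\ge\ell>k$, at least one candidate of $\cA$ is excluded, and as every voter in $\cW$ votes for it, that candidate has $\ge\WWW$ votes. Hence all $S$ elected candidates have $\ge\WWW$ votes. Moreover, because $\cA\subseteq\gs$ for each $\gs\in\cW$, every such voter \emph{already} has $k$ elected candidates, so to stay bad it may vote for at most $\ell-1-k$ elected candidates outside $\cA$; combined with the ballot size limit $S$ of \BVn{} (using $\bigabs\cA\ge\ell$), the help one $\cW$-voter can give to outside candidates is at most $\min(S-\ell,\ell-1-k)$ votes, and larger $\cA$ only shrinks this, so the worst case for the bound is $\bigabs\cA=\ell$.

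For the upper bound I would let $B_1,\dots,B_{S-k}$ be the elected candidates outside $\cA$; each needs $\ge\WWW$ votes, so their total vote-mass is $\ge(S-k)\WWW$. Bounding contributions, $\cVW$ supplies at most $(S-k)\bigabs\cVW$ and $\cW$ at most $\min(S-\ell,\ell-1-k)\WWW$, giving $(S-k)\WWW\le(S-k)\bigabs\cVW+\min(S-\ell,\ell-1-k)\WWW$ and hence an explicit bound $\WWW/V\le g(k)$. Writing it out, in the ``slot-limited'' regime $k\le 2\ell-1-S$ one gets $g(k)=\frac{S-k}{S+\ell-2k}$, which is nondecreasing in $k$ (its derivative has sign $S-\ell\ge0$), while in the ``affordability-limited'' regime $k\ge 2\ell-1-S$ one gets $g(k)=\frac{S-k}{2S+1-\ell-k}$, which is nonincreasing (derivative sign $\ell-1-S<0$). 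Thus the maximum over $k\in\{0,\dots,\ell-1\}$ is attained at $k^*=\max(0,\,2\ell-1-S)$, yielding $\frac{S}{2S+1-\ell}$ when $\ell\le(S+1)/2$ and $\frac{2S+1-2\ell}{3S+2-3\ell}$ when $\ell\ge(S+1)/2$, with the two branches agreeing at $\ell=(S+1)/2$.

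For the lower bound I would exhibit the matching instance at $k=k^*$. Take $\cA=\{A_1,\dots,A_\ell\}$ and outside candidates $B_1,\dots,B_{S-k^*}$; let each voter in $\cW$ vote for all of $\cA$ together with $\min(S-\ell,\ell-1-k^*)$ of the $B_j$, and each voter in $\cVW$ for all the $B_j$, distributing the $\cW$-votes by a regular (Hall-type) assignment so that every $A_i$ and every $B_j$ receives exactly $\WWW$ votes. Then there is a tie at level $\WWW$, which by \refR{Rties} may be resolved so that exactly $k^*$ of the $A_i$ and all $S-k^*$ of the $B_j$ are elected; each $\cW$-voter then has exactly $k^*+\min(S-\ell,\ell-1-k^*)=\ell-1$ elected candidates, a bad outcome, at proportion $g(k^*)$. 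Divisibility is handled by replicating ballots and invoking $\pix\EJR\BV=\pi\EJR\BV$ from \refR{Rpix} (or by choosing $V$ suitably divisible), and dummy candidates cover any requirement that ballots be full, as in \refR{Rdummy}.

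The main obstacle I expect is the bookkeeping at the two boundary regimes rather than any deep idea: one must check that the per-candidate vote counts in the construction are simultaneously realizable within the size-$S$ ballots, which reduces to the two exact inequalities $2\ell-1\le S$ in Case~1 (so a $\cW$-voter fits all of $\cA$ plus $\ell-1$ outside candidates) and $2S+1-2\ell\le S$ in Case~2 (so a $\cVW$-voter fits all $2S+1-2\ell$ outside candidates), both holding precisely on the stated range, and that at $k^*$ the supply to the $B$-pool equals the demand exactly so that equality in the counting is achieved. It is worth noting for a sanity check that the resulting value is always $\ge\tfrac12=\pi\PJR\BV\ls$ (\refT{Tsame-pjr}), with equality exactly at $\ell=1$ and $\ell=S$.
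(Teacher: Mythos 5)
Your proposal is correct and follows essentially the same route as the paper: the paper proves Theorems \ref{TEJRBV}--\ref{TEJRLV} in one unified argument parametrized by a ballot limit $L$ (with $L=S$ for BV), using exactly your decomposition by $k=|\cE\cap\cA|$, the counting inequality with the caps $m(k)=(\ell-k-1)\land(S-\ell)$ and $S-k$, the optimization at $k_1=(2\ell-1-S)\lor 0$, and the same all-tied extremal construction. The only difference is that you specialize to $L=S$ from the start, which changes nothing of substance.
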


\begin{theorem}\label{TEJRAV}
For \AVn\kol
  \begin{align}
\pi\EJR\AV\ls
& = \frac{S}{2S+1-\ell},
\qquad 1\le \ell\le S.
  \end{align}
\end{theorem}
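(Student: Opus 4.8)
The plan is to establish the two inequalities $\pi\EJR\AV\ls\le\frac{S}{2S+1-\ell}$ and $\pi\EJR\AV\ls\ge\frac{S}{2S+1-\ell}$ separately: the upper bound by a double-counting argument on the total number of approvals received by the elected candidates, and the lower bound by exhibiting a single family of elections that makes every estimate in the upper-bound argument an equality (with a tie at the threshold). Throughout I write $\WWW=|\cW|$, $V=|\cV|$, and $v(C)$ for the number of ballots approving a candidate $C$, recalling that \AVn{} elects the $S$ candidates with the most approvals.

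For the upper bound, suppose an election satisfying the $\EJRx$ scenario has a bad outcome, so $|\cE\cap\gs|\le\ell-1$ for every $\gs\in\cW$. Since $\cA\subseteq\gs$ for each such $\gs$, we get $|\cA\cap\cE|\le\ell-1<\ell\le|\cA|$, so there is a candidate $A^*\in\cA\setminus\cE$; as all of $\cW$ approves it, $v(A^*)\ge\WWW$. Every elected candidate beats or ties $A^*$, hence has at least $\WWW$ approvals, giving $\sum_{C\in\cE}v(C)\ge S\WWW$. On the other hand $\sum_{C\in\cE}v(C)=\sum_{\gs\in\cV}|\gs\cap\cE|$, which I would bound above by $(\ell-1)\WWW$ from the ballots in $\cW$ (bad outcome) plus $S(V-\WWW)$ from the ballots in $\cVW$ (since $|\cE|=S$). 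Comparing the two estimates yields $(2S+1-\ell)\WWW\le SV$, i.e.\ $\WWW/V\le\frac{S}{2S+1-\ell}$; as this holds for every bad outcome, the supremum in \eqref{pi} is at most $\frac{S}{2S+1-\ell}$.

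For the lower bound I would build an election attaining $\WWW/V=\frac{S}{2S+1-\ell}$ in which all the above estimates are equalities. The equality conditions dictate the construction: every elected candidate should have exactly $\WWW$ approvals, every voter in $\cVW$ should approve all of $\cE$, and every voter in $\cW$ should approve exactly $\ell-1$ elected candidates. Concretely, scale so that $V=(2S+1-\ell)N$ and $\WWW=SN$; take $S$ ``elected'' candidates $C_1,\dots,C_S$ and a disjoint set $\cA$ of $\ell$ candidates; let each voter in $\cVW$ approve exactly $\{C_1,\dots,C_S\}$, and let each voter in $\cW$ approve all of $\cA$ together with $\ell-1$ of the $C_i$, distributed so that each $C_i$ is approved by exactly $2\WWW-V=(\ell-1)N$ voters of $\cW$. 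Then each $C_i$ receives $(2\WWW-V)+(V-\WWW)=\WWW$ approvals and each candidate of $\cA$ receives $\WWW$ approvals, so all these candidates are tied; a tie-break electing $C_1,\dots,C_S$ (hence none of $\cA$) leaves every voter of $\cW$ with exactly $\ell-1$ elected candidates, a bad outcome. This occurs at $\WWW/V=\frac{S}{2S+1-\ell}$, so the threshold is at least this value, and combined with the upper bound the formula follows.

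The one genuinely delicate point is the lower-bound construction: one must verify that the required $\cW$-to-$\{C_i\}$ incidence pattern actually exists, namely $SN$ voters each approving $\ell-1$ of the $C_i$ with each $C_i$ approved $(\ell-1)N$ times. This is a biregular bipartite degree sequence whose edge totals match, $SN(\ell-1)=S(\ell-1)N$, so it is realizable (for instance by a cyclic round-robin assignment), and the factor $N$ clears all divisibility issues. Everything else is routine bookkeeping; the conceptual content is simply that forcing the double-counting bound to be tight determines the whole construction. I would also remark that the ballots of $\cW$ here carry $2\ell-1$ names, which is exactly why the analogous \BVn{} bound (ballots of size $\le S$) must split into cases at $\ell=(S+1)/2$, whereas \AVn{} imposes no size restriction and the single formula holds for all $\ell$.
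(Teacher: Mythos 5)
Your proof is correct and follows essentially the same route as the paper's: an upper bound by double-counting the approvals of the elected candidates against the lower bound $v(C)\ge\WWW$ forced by the unelected $A^*\in\cA$, and a matching tied construction in which all of $\cA\cup\{C_1,\dots,C_S\}$ receive exactly $\WWW$ approvals. The paper proves Theorems \ref{TEJRBV}--\ref{TEJRLV} simultaneously by introducing $k=|\cE\cap\cA|$, counting only over $\cE\setminus\cA$, and optimizing over $k$; your argument is the AV specialization in which summing over all of $\cE$ makes the parameter $k$ unnecessary, and your closing remark correctly identifies why the ballot-size restriction is what forces the case split for BV and LV.
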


\begin{theorem}
  \label{TEJRLV}
For \LVn:
Let $1\le L\le S$. Then, for $1\le\ell\le L$,
  \begin{align}
\pi\EJR\LV\lv\ls
= 
\begin{cases}
\frac{L}{S+L+1-\ell},  
& 1\le\ell\le(S+1)/2,
\\[3pt]
 \frac{S+L+1-2\ell}{2S+L+2-3\ell},
& (S+1)/2\le\ell\le L.
\end{cases}
  \end{align}
\end{theorem}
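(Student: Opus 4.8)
The plan is to prove \refT{TEJRBV}, \refT{TEJRAV} and \refT{TEJRLV} together, viewing \BVn{} as the case $L=S$ and \AVn{} as the case of $L$ as large as we like (formally $L=|\cC|$), and doing the real work for $\LVx$. Since $\LVx$ is homogeneous and the scenario $\EJRx$ does not require $\cW$ to split its votes according to any fixed strategy, \refR{Rpix} gives $\pix=\pi$, so I may argue with proportions and ignore integrality throughout.

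First I would reformulate a bad outcome. As every $\gs\in\cW$ contains $\cA$, if $a:=|\cE\cap\cA|\ge\ell$ then every voter already has $\ell$ elected candidates; hence a bad outcome forces $a\le\ell-1$, and moreover each $\gs\in\cW$ has at most $\ell-1-a$ of its \emph{extra} (non-$\cA$) candidates elected, since $|\cE\cap\gs|=a+|\cE\cap(\gs\setminus\cA)|\le\ell-1$. Because $a\le\ell-1$, some candidate of $\cA$ is unelected, and it carries all $w:=|\cW|$ votes of $\cW$; as $\LVx$ elects the $S$ largest, \emph{every} elected candidate then has at least $w$ votes. This threshold level $w$ drives the whole estimate.

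For the upper bound I would count votes on the $S-a$ elected candidates lying outside $\cA$, each of which needs $\ge w$ votes. Their votes coming from $\cW$ total at most $wc$, where $c:=\min(\ell-1-a,\,L-\ell)$ (the first term from the bad-outcome constraint just noted, the second because a $\cW$-ballot holds $L$ names of which $\ell$ already lie in $\cA$); the remaining votes come from $\cVW$, where a ballot contributes at most $1$ to any one candidate and at most $L$ in all. Turning this into a lower bound on $|\cVW|$ splits into two regimes: if $S-a\le L$ one outside ballot can support all these candidates at once, giving $w/V\le\frac{S-a}{2(S-a)-c}$; if $S-a\ge L$ the outside ballots are capped at $L$ names each, giving $w/V\le\frac{L}{L+S-a-c}$. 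It then remains to maximise over $a\in\set{0,\dots,\ell-1}$; a short monotonicity check locates the optimum at $a=2\ell-1-L$ when $\ell\ge(S+1)/2$ (here the two terms of $c$ agree and $S-a\le L$), giving $\frac{S+L+1-2\ell}{2S+L+2-3\ell}$, and at any $a$ with $2\ell-1-L\le a\le S-L$ when $\ell\le(S+1)/2$ (here $S-a\ge L$ and $c=\ell-1-a$), giving $\frac{L}{S+L+1-\ell}$. For $\AVx$ the second regime never arises, the optimum is always $a=0$, and the uniform value $\frac{S}{2S+1-\ell}$ drops out; for $\BVx$ one sets $L=S$.

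For the matching lower bound I would realise the extremal $a$ by an explicit election with $|\cA|=\ell$: elect $A_1,\dots,A_a$ together with $S-a$ fresh candidates $B_1,\dots,B_{S-a}$, and let each voter of $\cW$ vote for $\cA$ plus a balanced selection of $c$ of the $B_j$. At both optima one has $c=\ell-1-a$, so every voter sees exactly $a+c=\ell-1$ elected candidates, a bad outcome; spreading the $\cW$-extras evenly over the $B_j$ and topping each $B_j$ up to exactly $w$ votes with suitable $\cVW$-ballots makes the relevant candidates tie at $w$, so the displayed outcome is an admissible result of the method, and counting returns exactly the claimed ratio (the two cases differ only in whether the $\cVW$-ballots are taken identical, covering all $B_j$, or spread at $\le L$ names each). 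The main obstacle is the two-regime optimisation over $a$ together with checking realisability of the constructions — in particular that the prescribed $\cVW$-vote counts can be presented as a $0/1$ incidence pattern with row sums $\le L$ and the required column sums (a Gale--Ryser condition), and that the ties are arranged so the intended bad set really can be returned.
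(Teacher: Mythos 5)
Your proposal is correct and follows essentially the same route as the paper: the same unified treatment of BV/AV/LV, the same reduction via $k=|\cE\cap\cA|$ (your $a$) and the observation that every elected candidate must carry at least $|\cW|$ votes, the same caps $\min(\ell-1-k,L-\ell)$ and $\min(S-k,L)$ on contributions to $\cE\setminus\cA$ from inside and outside $\cW$, the same optimisation over $k$ with optimum at $(2\ell-L-1)\lor 0$, and the same tie-based extremal construction with evenly spread votes. The Gale--Ryser realisability point you flag is handled in the paper simply by spreading votes cyclically, as in the proof of Theorem~\ref{TLV}.
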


Some numerical values of $\pi\EJR\BV\ls$ and $\pi\EJR\AV\ls$
are given in \refTabs{tab:BVEJR} and \ref{tab:AVEJR} in \refApp{Anum}.

\begin{proof}[Proof of \refTs{TEJRBV}--\ref{TEJRLV}]
 Take $L:=S$ for BV, and $L:=\infty$ for AV.
Thus we can treat BV, AV and LV($L$) 
together.

Consider a bad outcome, and let 
$a:=\WWW/V$ and
$k:=|\cE\cap\cA|$, the number of elected
  from $\cA$.
Note that $k<\ell$, since otherwise the outcome is good, as witnessed by any
$\gs\in\cW$.
Similarly,
no voter in $\cW$ votes for more than $\ell-k-1$ candidates in
$\cE\setminus\cA$.
Furthermore, no voter votes for more than $L$ candidates, so a voter in
$\cW$ votes for at most $L-\ell$ candidates outside $\cA$.
Thus, let
\begin{equation}\label{mk}
  m(k):=(\ell-k-1)\land (L-\ell).
\end{equation}
Then no voter in $\cW$ votes for more than $m(k)$ 
candidates in $\cE\setminus\cA$.
Furthermore, no voter in $\cVW$ votes for more than
\begin{equation}\label{m'k}
  m'(k):=|\cE\setminus\cA|\land\lv = (S-k)\land \lv
\end{equation}
candidates in $\cE\setminus\cA$.

Hence, if $v$ is the total number of votes for the $S-k$ candidates in
$\cE\setminus\cA$, then
\begin{align}\label{alseda}
v\le
  |\cW|m(k) + |\cVW|m'(k)
=m(k)aV+m'(k)(1-a)V.
\end{align}
On the other hand, there is at least one non-elected candidate in $\cA$;
she has at least $\WWW=aV$ votes, and thus each elected candidate has at
least $aV$ votes, so $v \ge (S-k)aV$.
Consequently, using \eqref{alseda},
\begin{align}\label{alg0}
 ( S-k)a \le v/V 
\le
m(k)a+m'(k)(1-a)
\end{align}
and thus
\begin{equation}\label{alga}
\frac{1}{a}-1=\frac{1-a}{a} \ge \frac{S-k-m(k)}{m'(k)}  .
\end{equation}
Let $k_1:=(2\ell-L-1)\lor0$, and $k_2:=S-L$.
Then \eqref{mk}--\eqref{m'k} yield
\begin{align}
      m(k)&=
  \begin{cases}
L-\ell, & 0\le k< k_1,
\\
\ell-k-1, & k\ge k_1,
  \end{cases}
\label{algb}
\intertext{and}
    m'(k)&=
  \begin{cases}
L, & k\le k_2,
\\
S-k, & k\ge k_2.
  \end{cases}
\label{algc}
\end{align}
Using \eqref{algb}--\eqref{algc}, 
it is easily verified that, regardless of the value of $k_2$, the \rhs{} of
\eqref{alga} is, as a function of $k$, (weakly) decreasing on $[0,k_1]$ and
(weakly) increasing on $[k_1,\infty)$; hence it has a minimum at $k=k_1$ and
\eqref{alga} implies, using $m(k_1)=\ell-k_1-1$ from \eqref{algb},
\begin{align}\label{algd}
  \frac{1}{a}\ge 1 + \frac{S-k_1-m(k_1)}{m'(k_1)}
=\frac{S+1-\ell+m'(k_1)}{m'(k_1)}
\end{align}
and thus
\begin{align}\label{abv}
a\le
\frac{m'(k_1)}{S+1-\ell+m'(k_1)}.
\end{align}

Conversely, we construct an example with equality in \eqref{abv}.
Suppose that $a$ is such that equality holds in \eqref{abv}, and thus in
\eqref{algd}. 
 Let $\cW$ be a set of voters with $\WWW=aV$, for a suitable $V$ 
(so that $aV$ and other numbers in the construction are integers),
let $k:=k_1$ and let $\cC_1,\cC_2,\cC_3$ be disjoint sets of candidates with
$|\cC_1|=k$, $|\cC_2|=\ell-k$, $|\cC_3|=S-k$.
Let each voter in $\cW$ vote on $\cA:=\cC_1\cup\cC_2$, and on $m(k)$
candidates from $\cC_3$, in an organised way such that each candidate in
$\cC_3$ gets equally many votes from $\cW$, \viz{} $\WWW m(k)/(S-k)$;
furthermore, let each voter in $\cVW$ vote for $m'(k)$ candidates from
$\cC_3$, again with evenly spead votes.
Then there is equality in \eqref{alseda} and
\eqref{alga},
 and thus in \eqref{alg0}, which implies that each candidate in $\cC_3$ gets
 exactly $aV$ votes. Thus all candidates in $\cC_1\cup\cC_2\cup\cC_3$ tie
 with $aV=\WWW$ votes each, and a possible outcome is $\cE=\cC_1\cup\cC_3$,
which is a bad outcome for $\pi\EJR$.
Consequently, \eqref{abv} is best possible, \ie,
\begin{align}\label{abw}
\pi\EJR\LV{\lv}\ls=
\frac{m'(k_1)}{S+1-\ell+m'(k_1)}.
\end{align}
The result follows from \eqref{abw} and \eqref{algb}--\eqref{algc}.
\end{proof}

\begin{remark}\label{REJRBV}
  Note that $\pi\EJR\BV\ls$ is not monotone in $\ell$, \cf{} \refR{Rnot<}.
In fact, \eqref{tejrbv} shows that $\pi\BV\EJR\ls$, as a function of $\ell$,
increases from $\pi\EJR\BV\is=\frac12$ to a maximum at
$\ell=\ceil{(S+1)/2}$, and then decreases back to 
$\pi\EJR\BV(S,S)=\frac12$. The maximum value is $\approx \frac{2}3$ for
large $S$.

The behaviour of $\pi\LV{L}\PJR\ls$ is similar when $L>\ceil{(S+1)/2}$.
\end{remark}

\begin{example}\label{EEJRBV}
  For $S=3$,
$\pi\EJR\BV(1,3)=\frac12$,
$\pi\EJR\BV(2,3)=\frac{3}{5}$,
$\pi\EJR\BV(3,3)=\frac12$.
\end{example}

\begin{remark}
  As noted above, $\pi\BV\EJR(S,S)=\frac12=\pi\same\BV(S,S)$.
More generally, 
\begin{align}
   \pi\LV{\lv}\PJR\Ls=\pi\EJR\LV\lv\Ls=\pi\LV\lv\same\Ls,
\end{align}
which follows directly from \refDs{DPJR} and \ref{DEJR}, 
since when $\ell=L$, the
assumptions require every voter in $\cW$ to vote for $\cA$.
\end{remark}

For \CVn, the scenarios above are not very interesting, for the same reason
as $\samex$;
they allow the voters in $\cW$ to spread their votes over too
many candidates. 
We state this for the ideal Equal and Even version $\CVqx$, 
as in \refT{TCVqsame}.

\begin{theorem}\label{TCVqJR}
  For (ideal) Equal and Even \CVn\kol
  \begin{align}
    \pi\CVq\PJR\ls=\pi\CVq\EJR\ls=1,
\qquad \lele.
  \end{align}
\end{theorem}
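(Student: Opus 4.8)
The plan is to obtain both equalities essentially for free from results already established, exploiting the fact that the scenarios $\PJRx$ and $\EJRx$ are \emph{less} restrictive on the votes from $\cW$ than $\samex$ while retaining the same (or a weaker) notion of a good outcome. I would begin from the chain of inequalities in \refT{Tsame-pjr}, which holds for every election method with unordered ballots and in particular for $\CVqx$:
\[
  \pi\CVq\same\ls \le \pi\CVq\PJR\ls \le \pi\CVq\EJR\ls .
\]

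Next I would invoke \refT{TCVqsame}, which already supplies $\pi\CVq\same\ls = 1$, together with the universal upper bound $\pi\ls \le 1$ from \eqref{pipix}. These two facts sandwich the entire chain between $1$ and $1$, forcing $\pi\CVq\PJR\ls = \pi\CVq\EJR\ls = 1$ for all $\lele$. Thus the whole theorem reduces to the monotonicity already proved and the trivial bound $\pi\le1$; no new computation is required.

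For a reader who prefers a self-contained argument, I would add the observation that the very election constructed in the proof of \refT{TCVqsame} is itself an instance of the scenario $\PJRx$ (hence of $\EJRx$) with a bad outcome. There each voter in $\cW$ votes for the full set $\cW$, which contains any chosen $\ell$-subset $\cA$, while the disjoint set $\cU$ with $|\cU|=S$ votes only for itself; the resulting ties permit the outcome $\cE=\cU$, so $\bigabs{\cE\cap\bigcup_{\gs\in\cW}\gs}=0<\ell$. Since $\WWW=|\cW|$ may be taken arbitrarily large while $|\cU|=S$ is fixed, this produces bad outcomes with $|\cW|/|\cV|=\WWW/(\WWW+S)$ arbitrarily close to $1$, giving $\pi\CVq\PJR\ls=1$ directly; the same example is bad for $\EJRx$ by \refT{Tpjr-ejr}.

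There is essentially no obstacle here. The only point needing a moment's care is verifying that a bad outcome for $\samex$ remains bad under the weaker scenarios $\PJRx$ and $\EJRx$, and this nesting is exactly what \refT{Tsame-pjr} encodes. I would therefore present the one-line sandwich as the proof and relegate the explicit construction to an optional remark.
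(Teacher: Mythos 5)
Your proposal is correct and coincides with the paper's own proof, which likewise derives the result as an immediate consequence of \refT{TCVqsame} and the monotonicity chain in \refT{Tsame-pjr}, together with the trivial bound $\pi\le1$. The optional self-contained construction you sketch is also valid but not needed.
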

\begin{proof}
An immediate consequence of \refTs{TCVqsame} and \ref{Tsame-pjr}.
\end{proof}

\begin{remark}\label{RSJR}
In \refD{DEJR}, it is not required that $\ell$ candidates
from the common set $\bigcap_{\gs\in\cW}\gs\supseteq\cA$ are elected. 
That requirement for $\ell=1$ is used in the definition of
\emph{Strong Justified Representation} in \cite{EJR};
similarly, the  weaker version that $|\gs\cap\cE|\ge\ell$ for every
$\gs\in\cW$
is used in the definition of
\emph{Semi-Strong Justified Representation}.
However, as remarked in \cite{EJR}, these conditions are too strong,
and they cannot be required in general.
In our setting,
we might define  scenarios $\SJRx$ and $\SSJRx$ as in \refD{DEJR} but
using these requirements;
however, then
$\pi\SJR\MMM\ls=\pi\SSJR\MMM\ls=1$ for every election method $\MMMx$ and
$\lele$. 
To see this, let
$n\ge1$ and consider
an election with $S+1$ candidates $C_{1},\dots,C_{S+1}$
and
$V=n+S+1$
votes: $n$ votes on all candidates and 1 vote on 
$\cA_i:=\set{C_{i},\dots,C_{i+\ell-1}}$ (with indices mod $S+1$)
for each $i\in[S+1]$.
Let $\cW_i:=\set{\gs:\gs\supseteq\cA_i}$.
If, for $\SJRx$ or $\SSJRx$, 
the outcome is good for $\cW_i$, then $\cE\supseteq\cA_i$.
This cannot hold for all $i$, and thus 
 the outcome is bad for at least one set of voters $\cW_i$.
Hence, 
\begin{equation}
\pi\SJR\MMM\ls, \pi\SSJR\MMM\ls\ge\frac{|\cW_i|}{V}=\frac{n+1}{n+S+1},
\end{equation}
and the result follows since $n$ is arbitrary.
We therefore do not consider these scenarios.
\end{remark}

\section{\phragmen's and Thiele's  unordered methods}
\label{SPhrThu}
In this section, we continue the study of 
the thresholds 
$ \pi\party,
\pi\tactic,\allowbreak
\pi\same,\allowbreak
 \pi\PJR,
 \pi\EJR
$
for unordered ballots; we now consider
\phragmen's and Thiele's election methods, defined in
\refApps{APhru}--\ref{AThe}.

\begin{problem}
  Further election methods for unordered ballots are described and studied in
  \eg{} \cite{Kilgour,EJR,PJR2017,SJ321,FSST}.
Study   
$ \pi\party,
\pi\tactic,
\pi\same,
 \pi\PJR,
 \pi\EJR
$
for them!
\xfootnote{
Some inequalities for $\pi\PJR$ and $\pi\EJR$
follow by \eqref{PJR}--\eqref{JR} from results in 
\cite{EJR,PJR2017,SJ321} showing whether or not 
certain methods satisfy JR, PJR or EJR.
}
\end{problem}

\subsection{The party list case}
We begin with a simple result, 
presumably known already to \phragmen{} and Thiele.

\begin{theorem}
  \label{TPhruparty}
For \phragmen's and Thiele's 
unordered methods\kol
\begin{align}\label{tPhruparty}
\pi\party\Phru\ls
&=\pi\party\Thopt\ls
=\pi\party\Tha\ls
=\pi\party\The\ls
\notag\\&
=\pi\party\DH\ls
=\frac{\ell}{S+1},
\qquad\lele.
\end{align}
\end{theorem}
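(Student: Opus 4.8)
The plan is to show that in the party list scenario of \refD{Dparty2} each of the four methods $\Phrux$, $\Thoptx$, $\Thax$, $\Thex$ distributes the seats among the (disjoint) party lists in exactly the same way as \DHn's method, and then to invoke \eqref{party-DH} of \refT{Tdiv}, which already gives $\pi\party\DH\ls=\ell/(S+1)$. Since the scenario only concerns the number of seats awarded to the list $\gs$ that $\cW$ supports, and a good outcome is simply that this list receives at least $\ell$ seats, identical seat distributions force identical thresholds, including in the presence of ties (so that the set of possible outcomes, and hence the supremum in \eqref{pi}, coincides with that of \DHn). Throughout I assume, as is tacit elsewhere, that the lists are long enough that any party could in principle be awarded up to $S$ seats.

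For Thiele's three methods the reduction rests on a single observation. If a party $A$ with $v_A$ votes is assigned $s_A$ seats from its list, then each of its $v_A$ voters has exactly $s_A$ approved candidates elected, so the contribution of $A$ to Thiele's total satisfaction is $v_A H_{s_A}$, and the total satisfaction is $\sum_A v_A H_{s_A}$ subject to $\sum_A s_A=S$. The marginal gain of raising $A$ from $s_A$ to $s_A+1$ seats is $v_A\bigpar{H_{s_A+1}-H_{s_A}}=v_A/(s_A+1)$, which is precisely the \DHn{} quotient. Hence the greedy step of Thiele's addition method $\Thax$ always selects a seat exactly as \DHn{} does. Because $s\mapsto H_s$ is strictly concave (its increments $1/(s+1)$ strictly decrease), maximizing $\sum_A v_A H_{s_A}$ is a separable concave problem whose optimum is reached by the same greedy assignment; thus $\Thoptx$ also agrees with \DHn. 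The elimination method $\Thex$ removes, at each step, the candidate whose deletion costs the least satisfaction, \ie{} it removes a seat from the party minimizing $v_A/s_A$; since removing down to $S$ seats discards exactly the $N-S$ smallest quotients $v_A/j$ and leaves the $S$ largest, it terminates at the same seat vector as the forward (addition) process, again by strict concavity of $H_s$.

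For \phragmen's method $\Phrux$ the relevant quantity is again $v_A/(s_A+1)$, now read off from the load analysis: when party $A$ has $s_A$ seats, the unit loads of its $s_A$ elected candidates are spread over its $v_A$ voters, giving each a load $s_A/v_A$, and electing one more candidate from $A$ raises this to $(s_A+1)/v_A$. Minimizing the resulting maximum load therefore means choosing the party maximizing $v_A/(s_A+1)$—once more the \DHn{} rule. Combining the four reductions with \eqref{party-DH} yields \eqref{tPhruparty}.

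The main obstacle, and the place where care is needed, is not the computation of the marginal quotient (which is immediate) but the verification that the \emph{global} methods—the optimization method $\Thoptx$ and the elimination method $\Thex$—really collapse to the sequential \DHn{} allocation. For $\Thoptx$ this is the standard fact that a separable concave maximization under a cardinality constraint is solved greedily; for $\Thex$ one must check that the backward greedy process and the forward one land on the same seat vector, which follows from strict concavity of $H_s$ but should be stated explicitly. A secondary point is the treatment of ties in the quotients $v_A/(s_A+1)$: since all four methods resolve these through the same comparison of \DHn{} quotients, the collection of possible outcomes is identical to \DHn's, so the refined ($+/-$) value of the threshold is inherited as well.
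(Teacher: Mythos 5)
Your proposal is correct and takes essentially the same route as the paper: the paper's proof is a one-liner asserting that in the party-list case all four methods reduce to \DHn's method (citing \cite{SJV9}, \cite{Phragmen1895}, \cite{Thiele}) and then invoking \eqref{party-DH}. You simply supply the details of that reduction — the marginal satisfaction $v_A/(s_A+1)$ for the addition method, separable concavity for the optimization and elimination variants, and the equal-load computation for \phragmen{} — and these arguments are sound.
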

\begin{proof}
  It is easy to see that in the party list case, all
four methods reduce to D'Hondt's method 
\cite[Theorem 11.1]{SJV9}, \cite{Phragmen1895}, \cite{Thiele}. 
Hence, the result follows from \eqref{party-DH}.
\end{proof}

\subsection{\phragmen's method}\label{SPhru}

For \phragmen's unordered  method, we have a simple result.
The result for $\pi\same\Phru$ is shown in \cite{SJV9}, and its
extension to $\pi\PJR\Phru$ in \cite{SJ321}.
\begin{theorem}[\cite{SJ321}]  \label{TsamePhru}
For \phragmen's unordered methods\kol
\begin{equation}
\pi\tactic\Phru\ls
=\pi\same\Phru\ls
=\pi\PJR\Phru\ls
=\frac{\ell}{S+1},
\qquad\lele.
\end{equation}
\end{theorem}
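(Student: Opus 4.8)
The plan is to prove the single nontrivial inequality $\pi\PJR\Phru\ls\le\frac{\ell}{S+1}$ and then let the general machinery of \refS{Sgen} and \refS{SBV} supply everything else. By \refT{Tps} and \refT{Tsame-pjr} we already have the chain $\pi\tactic\Phru\ls\le\pi\same\Phru\ls\le\pi\PJR\Phru\ls$, so the $\PJRx$ upper bound bounds all three thresholds from above by $\frac{\ell}{S+1}$. In particular $\pi\tactic\Phru\ls\le\frac{\ell}{S+1}$ for every $\lele$, so \refT{Tsplit}\ref{Tsplit=} upgrades this at once to the equality $\pi\tactic\Phru\ls=\frac{\ell}{S+1}$. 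Feeding this back into the chain, $\frac{\ell}{S+1}=\pi\tactic\Phru\ls\le\pi\same\Phru\ls\le\pi\PJR\Phru\ls\le\frac{\ell}{S+1}$, so all three coincide with $\frac{\ell}{S+1}$. Thus no separate lower bound need be produced by hand: \refT{Tsplit}\ref{Tsplit=} provides it for free once the $\PJRx$ upper bound is in hand.

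For that upper bound I would argue directly from the load/time description of \phragmen's unordered method (\refApp{APhru}) in its continuous ``budget'' form: each voter accumulates budget at rate $1$, each candidate costs $1$, and a candidate is bought (the cost being levelled over its supporters) at the first instant the voters approving it hold total unspent balance $1$; let $t^*$ be the time the $S$-th seat is filled. Consider a set $\cW$ with $\WWW=w$ and a bad $\PJRx$-outcome, so $\bigabs{\cE\cap\cAx}\le\ell-1$ where $\cAx=\bigcup_{\gs\in\cW}\gs$. Since the voters in $\cW$ only ever pay for candidates in $\cAx$, and at most $\ell-1$ of these are elected (and one pays only for elected candidates), the total spending of $\cW$ over the whole process is at most $\ell-1$; hence at every time $t$ the balance held by $\cW$ is at least $wt-(\ell-1)$. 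Because $\bigabs{\cA}\ge\ell$ while $\bigabs{\cE\cap\cA}\le\ell-1$, some $a\in\cA$ is never elected, and every voter of $\cW$ approves $a$, so the balance of $a$'s supporters is also at least $wt-(\ell-1)$ throughout; yet at time $t^*$ it is below $1$, as otherwise $a$ would have been bought. This forces $wt^*-(\ell-1)<1$, i.e. $t^*<\ell/w$. On the other hand the $S-\ell+1$ (or more) elected candidates outside $\cAx$ are paid entirely by $\cVW$, whence $(V-w)t^*\ge S-\ell+1$. Combining the two bounds gives $w(S-\ell+1)<\ell(V-w)$, that is $w(S+1)<\ell V$, i.e. $w/V<\frac{\ell}{S+1}$; since this holds for every bad outcome, $\pi\PJR\Phru\ls\le\frac{\ell}{S+1}$.

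The delicate point, and the main obstacle, is the step ``at time $t^*$ the supporters of $a$ hold less than $1$''. Strictly before a purchase this is precisely the defining property of the greedy rule (an affordable candidate is bought at once), but at the terminal time a tie can leave an unelected candidate whose supporters hold exactly $1$ — which is exactly the source of the ``$+$'' in the refined value $\frac{\ell}{S+1}+$ (cf. \refR{Rrefined}). Handling ties and simultaneous purchases cleanly, and verifying that the configurations realising the supremum are the symmetric tie examples, is where the real care lies; I would also want to confirm the equivalence of the sequential load-minimisation and continuous-budget formulations of $\Phru$, and the additivity of $\cW$'s spending across its purchased candidates. Since $\pi\PJR\Phru$ is already established in \cite{SJ321} (and $\pi\same\Phru$ in \cite{SJV9}), one may alternatively just invoke that result for the upper bound and then run the two-line squeeze via \refT{Tsplit}\ref{Tsplit=} described above.
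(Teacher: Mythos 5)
Your proposal is correct and follows essentially the same route as the paper: the continuous-budget argument is the paper's load formulation in dual language (spending = load, unspent balance = free voting power), and the three inequalities you derive --- $\cW$'s total spending at most $\ell-1$, $\cW$'s remaining balance at most $1$ at the end, and $(V-|\cW|)t\ge S+1-\ell$ --- are exactly \eqref{aga}, \eqref{agb} and \eqref{agd}, combined in the same way to give $\pi\PJR\Phru\ls\le\ell/(S+1)$. The paper likewise closes via \refT{Tsame-pjr} and \refT{Tsplit}\ref{Tsplit=}, and it resolves the tie issue you flag by stating the balance bound as $\le1$ (free voting power may equal $1$ at a tie for the last seat), which still yields the claimed supremum.
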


\begin{proof}
  Consider first $\PJRx$.
Let $\cW$ be a set of 
voters and $\cA$ a set of candidates as in \refD{DPJR},
and suppose that the outcome is bad.
Let $\cAx:=\bigcup_{\gs\in\cW}\gs\supseteq\cA$.
Thus each voter in $\cW$ votes for a set $\gs$ with
$\cA\subseteq\gs\subseteq\cAx$, where $|\cA|\ge\ell$,
but $k:=|\cAx\cap\cE|<\ell$.
In particular, at least one candidate in $\cA$ is not elected.

We use the formulation with loads
in  \refApp{APhru},
and let $t=t\xx \MM$ be the final
maximum load of a ballot.
Say that a ballot with load $u$ has \emph{free voting power} $t-u$; this is the
additional load that the ballot may accept without raising the maximum load $t$.

The $k$ elected candidates in $\cAx\cap\cE$
together give load $k$, 
of which some part may fall on
voters not in $\cW$. 
Thus, if ballot $i$ has final load $x_i$,
\begin{align}\label{aga}
  \sum_{i\in\cW}x_i \le k\le \ell-1.
\end{align}
Furthermore, the total free voting power of the ballots in
$\cW$
is at most 1, since otherwise another candidate from $\cA\setminus\cE$ 
would have been elected to the last place, with a smaller $t\xx\MM$. 
(This total free voting power may equal 1 if there is a
tie for the last seat.)
Thus, 
\begin{align}\label{agb}
  \sum_{i\in\cW}(t-x_i) \le 1.
\end{align}
Consequently, combining \eqref{aga} and \eqref{agb},
\begin{align}\label{agc}
\WWW t=\sum_{i\in\cW}x_i+  \sum_{i\in\cW}(t-x_i) \le \ell.
\end{align}

On the other hand, $\MM-k$ candidates not in $\cAx$ have been
elected, incurring a total load $\MM-k\ge \MM+1-\ell$.
Since the ballots in $\cW$ do not get any of this load,
all of it falls on the $V-|\cW|$ other ballots, and thus
\begin{equation}\label{agd}
  \MM+1-\ell \le (V-|\cW|)t.
\end{equation}
Combining \eqref{agc} and \eqref{agd} we find
\begin{align}
  \frac{\WWW}{V-\WWW} =   \frac{\WWW t}{(V-\WWW)t}
\le \frac{\ell}{\MM+1-\ell},
\end{align}
which is equivalent to 
\begin{align}\label{agh}
  \frac{\WWW}{V}\le\frac{\ell}{\MM+1}.
\end{align}
Thus,
\begin{align}\label{age}
  \pi\PJR\Phru\ls\le\frac{\ell}{S+1}.
\end{align}

\refTs{Tps} and \ref{Tsame-pjr} together with \eqref{age} yield
\begin{align}\label{agf}
  \pi\tactic\Phru\ls
\le  \pi\same\Phru\ls
\le  \pi\PJR\Phru\ls
\le\frac{\ell}{S+1}.
\end{align}
Hence,
\refT{Tsplit}\ref{Tsplit=} applies and yields
\begin{align}\label{agge}
\pi\tactic\Phru\ls
= \frac{\ell}{S+1}.
\end{align}
The result follows by \eqref{agf} and \eqref{agge}.
\end{proof}

However, it was shown in \cite[Example 5]{SJ321} that \phragmen's unordered
method does 
not satisfy EJR, \ie,
$\pi\Phru\EJR\ge \ell/S>\ell/(S+1)$ is possible.
We may tweak that example a little to the following (found by computer
experiment rather than an analysis).

\begin{example}[based on \cite{SJ321}]
  \label{EBrill5XX}
Consider an election by \phragmen's unordered method with 
$\MM=12$ seats, 
candidates $\cC=\set{A,B,\allowbreak C_1,\dots,C_{12}}$, and
2409 voters voting
\begin{val}
\item [200] \setx{A,B,C_1}
\item [209] \setx{A,B,C_2}
\item [600] \setx{C_1,C_2,C_3,\dots,C_{12}}
\item [500] \setx{C_2,C_3,\dots,C_{12}}
\item [900] \setx{C_3,\dots,C_{12}}
\end{val}
Then, a computer assisted calculation shows that the elected are, in order,
$C_3, C_4, C_5, C_6, C_2, C_7, C_8, C_1, C_9, C_{10}, C_{11}, C_{12}$
(with $C_3,\dots,C_{12}$ tying throughout; here one possibility is chosen).
Hence, if $\cW$ is the set of the 409 voters on the first two lines, 
the conditions of $\EJRx$ are satisfied for $\ell=2$ with $\cA=\set{A,B}$,
but the outcome is bad. Consequently,
\begin{align}
  \pi\EJR\Phru(2,12)\ge \frac{409}{2409}>\frac{2}{12}
=\frac{\ell}{S}>\frac{\ell}{S+1}.
\end{align}
(This is not sharp, and can be improved at least a little.)
\end{example}

\begin{problem}
  What is $\pi\Phru\EJR(\ell,S)$ in general?
Even the case $\ell=2$ and a given (small) $S$ seems far from trivial.
\end{problem}

\subsection{Thiele's \opt{} method}\label{SThopt}

\citet{EJR} showed that Thiele's \opt{} method satisfies EJR;
\ie, $\pi\EJR\Thopt\ls<\ell/S$ by \eqref{EJR}.
In fact, the proof can be improved to show the optimal bound $\ell/(S+1)$.

\begin{theorem}[\cite{EJR}, improved]
  \label{TEJRTh}
For \Thoptn\kol
For $\lele$,
\begin{equation}
\pi\tactic\Thopt\ls
=\pi\same\Thopt\ls
=\pi\PJR\Thopt\ls
=\pi\EJR\Thopt\ls
=\frac{\ell}{S+1}
.
\end{equation}
\end{theorem}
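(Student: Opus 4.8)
The plan is to prove only the upper bound
\[
\pi\EJR\Thopt\ls \le \frac{\ell}{S+1}
\]
for all $\lele$, and then to let the general machinery do the rest. By \refT{Tsame-pjr} we have the chain $\pi\tactic\Thopt\ls \le \pi\same\Thopt\ls \le \pi\PJR\Thopt\ls \le \pi\EJR\Thopt\ls$, so this single bound forces all four thresholds to be $\le \ell/(S+1)$; in particular $\pi\tactic\Thopt\is \le 1/(S+1)$, and then \refT{Tsplit}\ref{Tsplit=} upgrades the $\tacticx$ bound to the equality $\pi\tactic\Thopt\ls = \ell/(S+1)$. Squeezing the chain between $\pi\tactic\Thopt\ls = \ell/(S+1)$ and $\pi\EJR\Thopt\ls \le \ell/(S+1)$ then yields equality throughout. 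So everything reduces to the $\EJRx$ upper bound.

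For that bound I would argue from optimality via a swap. Recall that Thiele's optimization method selects $\cE$, with $|\cE|=S$, so as to maximize $\sum_{\gs\in\cV} H_{|\gs\cap\cE|}$, so the marginal value of a voter's $r$-th approved elected candidate is $H_r-H_{r-1}=1/r$. Consider any bad outcome for $\EJRx$: a set $\cW$ and a set $\cA$ with $|\cA|\ge\ell$, $\cA\subseteq\gs$ for every $\gs\in\cW$, and $r_\gs:=|\gs\cap\cE|\le\ell-1$ for all $\gs\in\cW$. Since $\cA\subseteq\gs$ gives $|\cA\cap\cE|\le r_\gs\le\ell-1<|\cA|$, some candidate $a\in\cA\setminus\cE$ exists. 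Because no swap can improve the maximum, for every $c\in\cE$
\[
\sum_{\gs\ni a,\ \gs\not\ni c}\frac{1}{r_\gs+1}\;-\;\sum_{\gs\ni c,\ \gs\not\ni a}\frac{1}{r_\gs}\;\le\;0.
\]

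The key step I would then carry out is to sum this inequality over all $c\in\cE$. The first double sum collapses to $\sum_{\gs\ni a}(S-r_\gs)/(r_\gs+1)$, since each $\gs\ni a$ is counted once for each of the $S-r_\gs$ elected candidates it does not contain; the second collapses to $|\set{\gs:\ a\notin\gs,\ r_\gs\ge1}|\le V-v(a)$, where $v(a)\ge\WWW$ is the number of voters approving $a$. Here the crucial telescoping $\sum_{c\in\gs\cap\cE}1/r_\gs=r_\gs\cdot(1/r_\gs)=1$ is precisely what the harmonic weights supply, and it is what produces the denominator $S+1$ rather than $S$. Using that $r\mapsto(S-r)/(r+1)$ is decreasing, so $(S-r_\gs)/(r_\gs+1)\ge(S+1-\ell)/\ell$ for every $\gs\in\cW$, and keeping only the $\cW$-terms on the left, the summed inequality becomes
\[
\WWW\,\frac{S+1-\ell}{\ell}\;\le\;V-\WWW,
\]
which rearranges to $\WWW\,(S+1)/\ell\le V$, i.e.\ $\WWW/V\le\ell/(S+1)$, exactly the desired bound.

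The main obstacle is the sharp constant: a naive single swap gives only $\ell/S$ (the $\EJRx$ result of \cite{EJR}), and the improvement to the optimal $\ell/(S+1)$ rests entirely on summing the swap inequalities over the whole of $\cE$ and exploiting the exact identity $r\,(H_r-H_{r-1})=1$ for harmonic weights, instead of discarding the overlap terms. Once that averaging is set up correctly, the remaining pieces---monotonicity of $(S-r)/(r+1)$, the count $v(a)\ge\WWW$, and the passage to the $\tacticx$ threshold through \refT{Tsplit}---are routine.
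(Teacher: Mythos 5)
Your proof is correct and takes essentially the same route as the paper: the paper also fixes an unelected $A\in\cA$, forms the swap quantities $\gd(\gs,C)$, sums them over all $C\in\cE$, and uses the harmonic telescoping (its bound $\gD(\gs)\ge-1$ for $\gs\in\cVW$ is exactly your collapse of the second double sum) together with $(S-k)/(k+1)\ge(S+1-\ell)/\ell$ on $\cW$ to reach $\WWW/V\le\ell/(S+1)$. The wrap-up differs only cosmetically — the paper squeezes via the party-list lower bound $\pi\party\Thopt\ls=\ell/(S+1)$ and then handles $\tacticx$ by \refT{Tsplit}\ref{Tsplit=}, whereas you run \refT{Tsplit} first — but both are valid and use the same general machinery.
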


\begin{proof}
Consider an election with a bad outcome $\cE$ for the scenario $\EJRx$ in
\refD{DEJR}. 
Then not every candidate in $\cA$ is elected, since otherwise the outcome
would be good, witnessed by any $\gs\in\cW$.
Fix some $A\in\cA\setminus\cE$.

For every ballot $\gs$ and every elected candidate $C\in\cE$, let
$\gd(\gs,C)$ be the change in the satisfaction of $\gs$ 
(see \refApp{AThopt})
if $A$ is elected instead of $C$,
i.e., 
$\cE$ is replaced by $\cE\cup\set{A}\setminus\set{C}$.
Let  further $\gD(\gs):=\sumCE\gd(\gs,C)$.

Since $\cE$ maximizes the total satisfaction,
\begin{equation}
  \sum_{\gs\in\cV}\gd(\gs,C)\le0
\end{equation}
for every $C\in\cE$, and thus
\begin{equation}\label{ua}
\sum_{\gs\in\cV}\gD(\gs)=\sumCE  \sum_{\gs\in\cV}\gd(\gs,C)\le0.
\end{equation}

Consider a ballot $\gs$, and let $k:=|\gs\cap\cE|$.
Then $\gd(\gs,C)\ge -1/k$ if $C\in\gs\cap\cE$, and
$\gd(\gs,C)\ge 0$ if $C\notin\gs\cap\cE$. (In both cases with equality if
$A\notin\gs$.)
Hence,
\begin{equation}\label{ub}
  \gD(\gs)\ge k\cdot\Bigpar{-\frac{1}k}=-1.
\end{equation}
Moreover, if $\gs\in\cW$, then $A\in\gs$.
Hence, $\gd(\gs,C)=0$ if $C\in\gs\cap\cE$, and
$\gd(\gs,C)=1/(k+1)$ if $C\notin\gs\cap\cE$.
Furthermore, $k\le\ell-1$, since otherwise
the outcome would be good.
Hence,
\begin{equation}\label{uc}
  \gD(\gs)=\frac{S-k}{k+1}\ge \frac{S+1-\ell}{\ell},
\qquad \gs\in\cW.
\end{equation}
Consequently, using \eqref{uc} for $\gs\in\cW$ and \eqref{ub} for
$\gs\in\cVW$,
\begin{equation}\label{ud}
  \sum_{\gs\in\cV}\gD(\gs)
\ge |\cW|\frac{S+1-\ell}{\ell}-|\cVW|
= \WWW\frac{S+1}{\ell}-V.
\end{equation}
Combining \eqref{ua} and \eqref{ud} we find
\begin{equation}
0\ge \WWW\frac{S+1}{\ell}-V
\end{equation}
and thus
\begin{equation}
  \frac{\WWW}{V}\le\frac{\ell}{S+1}.
\end{equation}
This holds for every bad outcome, and thus
\begin{equation}\label{ue}
\pi\EJR\Thopt\ls\le\frac{\ell}{S+1}.
\end{equation}
Combining \eqref{ue} with \refTs{Tsame-pjr} and \ref{TPhruparty}
yields
\begin{equation}\label{uf}
\pi\party\Thopt\ls
=\pi\same\Thopt\ls
=\pi\PJR\Thopt\ls
=\pi\EJR\Thopt\ls
=\frac{\ell}{S+1}
.
\end{equation}

Finally, \eqref{uf} and \eqref{tactic-same} yield
\begin{equation}\label{uff}
\pi\tactic\Thopt\ls
\le\pi\same\Thopt\ls
=\frac{\ell}{S+1}
.
\end{equation}
Hence,
\refT{Tsplit}\ref{Tsplit=} applies and shows that
the inequality \eqref{uff} is an equality.
\end{proof}

\subsection{Thiele's addition method}\label{SSTha}

The results in \refTs{TsamePhru} and \ref{TEJRTh} do not hold
for Thiele's addition method. We first give a concrete example 
by \citet{Tenow1912}.

\begin{example}\label{ETenow1912}
Consider an election by Thiele's addition method with $S=3$ seats
and 50 voters, divided into two parties.
Suppose first that all vote along party lines:
\begin{val}
\item [37] $ABC$
\item [13]$KLM$
\end{val}
Then Thiele's method reduces to D'Hondt's method
and the larger party gets 2 seats and the smaller 1.

However,
the larger party may cunningly split their votes on five different
lists as follows:
\begin{val}
\item [1]$A$
\item [9]$AB$
\item [9]$AC$
\item [9]$B$
\item [9]$C$
\item [13]$KLM$
\end{val}
Then $A$ gets the first seat (19 votes), 
and the next two go to $B$ and $C$ (in some order)
with 13.5 votes each, beating $KLM$ with 13. Thus the large party gets all
seats.
(See \cite[Example 13.13]{SJV9} for a further discussion.)

For $\samex$ or $\PJRx$ with $\ell=1$, 
this is a bad outcome for the $KLM$ party, and thus
\begin{equation}
\pi\PJR\Tha(1,3) \ge
\pi\same\Tha(1,3) 
\ge \frac{13}{50}=0.26 > \frac{1}{4}=\frac{\ell}{S+1}. 
\end{equation}
\end{example}

Moreover,
it was shown in \cite{EJR} that \Than{} does not satisfy JR, \ie, by
\eqref{JR}, that $\pi\Tha\PJR\ge 1/S$ for some $S$.
In \cite{EJR}, this was shown for $S\ge10$, by an example;
\cite{PJR2016,PJR2017} then found the sharp range to be $S\ge6$, by solving
a linear programming problem.
The analysis in \cite{PJR2016,PJR2017} 
is easily modified to give a method for calculating $\pi\Tha\PJR\is$ for any
given $S$.

For convenience, we let in the remainder of this subsection 
the ``number of votes'' be arbitrary positive real (or at least rational)
numbers; see \refRs{Rreal} and \ref{Rhomo},
and note that this does not affect the results since 
\Than{} is homogeneous. 
This really means that we allow ballots with weights; 
in such cases all counts and sums over ballots should be interpreted
accordingly; for convenience we omit this from the notation.

First, consider the problem of electing $n$ candidates $C_1,\dots,C_n$ with
a score of at least 1 each, and as few votes as possible, assuming that
there are no other candidates. 
Denote the minimum total number of votes by $\ga_n$.

We may compute $\ga_n$ as follows.
We may assume that $C_1,\dots,C_n$ are elected in this order.
For notational convenience, 
we identify $\set{C_1,\dots,C_n}$ with $\nn$; we thus regard the ballots as 
subsets of $\nn$, with $\gs\subseteq\nn$ interpreted as a vote
on $\set{C_i:i\in\gs}$. 
Let $x_\gs$ be the number of votes $\gs$.
The condition that the $n$ elected are $C_1,\dots,C_n$ in this order, with
scores at least 1, can  then be written as a number of linear inequalities
in the $2^n-1$ variables $x_\gs$, $\gs\neq\emptyset$.
(We may ignore $\gs=\emptyset$, which counts blank votes, since these can be
deleted without affecting the outcome.)
For example, for $n=3$, we obtain the system
\begin{align}
x_1,x_2,x_3,x_{12},x_{13},x_{23},x_{123}&\ge0  \label{X0} \\
x_1+x_{12}+x_{13}+x_{123}&\ge x_2+x_{12}+x_{23}+x_{123} \label{X12}\\
x_1+x_{12}+x_{13}+x_{123}&\ge x_3+x_{13}+x_{23}+x_{123} \label{X13}\\
x_2+\tfrac12x_{12}+x_{23}+\tfrac12x_{123}&\ge
  x_3+\tfrac12x_{13}+x_{23}+\tfrac12x_{123} \label{X23}\\
x_1+x_{12}+x_{13}+x_{123}&\ge1 \label{X1}\\
x_2+\tfrac12x_{12}+x_{23}+\tfrac12x_{123}&\ge1 \label{X2}\\
x_3+\tfrac12x_{13}+\tfrac12x_{23}+\tfrac13x_{123}&\ge1 \label{X3}
\end{align}
where \eqref{X12}--\eqref{X13} say that $C_1$ wins over $C_2$ and $C_3$ and
thus is elected first. (At least, in case of a tie, this is possible.)
Similarly, \eqref{X23} says that $C_2$ is elected before $C_3$.
Finally, \eqref{X1}--\eqref{X3} say that $C_1, C_2, C_3$ all are elected
with scores $\ge1$.
\xfootnote{\label{fredundant}%
In fact, it is easily seen that \eqref{X1} and \eqref{X2} are redundant,
since the scores of the elected always are non-increasing.
}

This leads to the linear programming problem
\begin{align}\label{LP}
    &\text{Minimize } x:=\sum_{\gs}x_\gs
\notag\\&
\text{subject to \eqref{X0}--\eqref{X3}, generalized to $n$ candidates}.
\end{align}
Thus, $\ga_n$ equals the minimum in this linear programming problem.
\xfootnote{
\citet{PJR2016,PJR2017} give an equivalent linear programming problem, with
  $\sum_\gs x_\gs=1$ and maximizing the score of $C_n$ when elected. Their
  problem has the maximum $\ga_n\qw$.
}

\begin{theorem}\label{TThaJR}
  For \Than{} and $\ell=1$\kol
  \begin{align}\label{tthajr}
    \pi\Tha\tactic\is
=   \pi\Tha\same\is
= \pi\Tha\PJR\is
=  \pi\Tha\EJR\is
=\frac1{1+\ga_S},
  \end{align}
where $\ga_S$ is given by \eqref{LP}.
\end{theorem}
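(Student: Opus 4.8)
The plan is to show all four quantities coincide by squeezing the chain of inequalities already available. \refTs{Tps} and~\ref{Tsame-pjr} give
\[
\pi\Tha\tactic\is\le\pi\Tha\same\is\le\pi\Tha\PJR\is\le\pi\Tha\EJR\is,
\]
while \eqref{pjr-ejr1} gives $\pi\Tha\PJR\is=\pi\Tha\EJR\is$. Hence it suffices to prove the upper bound $\pi\Tha\EJR\is\le 1/(1+\ga_S)$ on the largest member and the matching lower bound $\pi\Tha\tactic\is\ge 1/(1+\ga_S)$ on the smallest; the squeeze then forces equality throughout.

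For the upper bound I would take any bad outcome for the scenario $\EJRx$ (\refD{DEJR}) with $\ell=1$: a set $\cW$ with $W:=\WWW$, all voting for supersets of some $\cA$, but with no candidate of $\bigcup_{\gs\in\cW}\gs$ elected. Fix $A\in\cA$ and let $C_1,\dots,C_S$ be the elected candidates, in the order the addition method selects them. Since every $C_j$ lies outside every ballot of $\cW$, each $\gs\in\cW$ has $\gs\cap\cE=\emptyset$ at every stage, so the marginal score of $A$, namely $\sum_{\gs\ni A}1/(\abs{\gs\cap\cE}+1)$, stays $\ge W$ throughout (each of the $W$ ballots of $\cW$ contributes $1$). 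As $A$ is never elected, at step $j$ the chosen maximizer $C_j$ beats $A$ and so is elected with marginal score $\ge W$. The key step is then to project every ballot of $\cVW$ onto the elected slate $\set{C_1,\dots,C_S}$ (components outside it are unelected and do not affect the scores of the $C_j$): the resulting counts $x_\tau$ satisfy exactly the feasibility constraints of \eqref{LP} scaled by $W$ — the ``elected in this order'' inequalities because $C_j$ beats $C_{j+1},\dots,C_S$ at step $j$, and the ``score $\ge1$'' inequalities in the form ``score $\ge W$''. Minimality of $\ga_S$ then yields $\abs{\cVW}\ge\sum_\tau x_\tau\ge\ga_S W$, i.e.\ $V-W\ge\ga_S W$, so $\WWW/V\le 1/(1+\ga_S)$; taking the supremum gives $\pi\Tha\EJR\is\le 1/(1+\ga_S)$.

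For the lower bound I would exhibit an instance of the $\tactic$ scenario (\refD{Dtactic}). Take an optimal solution $(x^*_\tau)$ of \eqref{LP}, so $\sum_\tau x^*_\tau=\ga_S$ and $C_1,\dots,C_S$ can be elected in this order each with score $\ge1$ (the last with score exactly $1$). Allowing real/weighted vote numbers by homogeneity (\refR{Rhomo}), let $\cVW$ consist of these ballots scaled by $W$, and let $\cW$ be $W$ voters all voting for a single fresh candidate $A$, with $\cA:=\set A$. Then $A$ has constant score $W$, while each $C_j$ has score $\ge W$ at its step, so with ties broken in the adversary's favour (\refR{Rties}) the method may elect $C_1,\dots,C_S$ and leave $A$ out — a bad outcome with $\WWW/V=W/(W+\ga_S W)=1/(1+\ga_S)$. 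Because $\cA$ is a singleton, the optimal strategy of $\cW$ is just to pile all votes on $A$ (any other ballot only lowers $A$'s score once an opponent is elected), so this bad outcome survives $\cW$'s best play, giving $\pi\Tha\tactic\is\ge 1/(1+\ga_S)$.

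The main obstacle is the upper bound, and inside it the precise dictionary between the running of the addition method and the linear program \eqref{LP}: I must verify that restricting to the slate $\set{C_1,\dots,C_S}$, and discarding all other candidates together with all of $\cW$'s votes, turns the actual election into a feasible point of \eqref{LP}, so that minimality of $\ga_S$ is applicable. The fact that the scores of successively elected candidates are non-increasing (\cf{} the remark accompanying \eqref{LP}) is what makes ``the last elected has score $\ge W$'' the binding condition exploited by both directions; the only remaining bookkeeping is the tie-breaking, which is handled by \refR{Rties}.
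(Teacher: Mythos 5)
Your proof is correct and takes essentially the same route as the paper's: the upper bound by noting that in a bad $\EJRx$ outcome every elected candidate has score $\ge\WWW$, discarding $\cW$'s votes and the unelected candidates, and recognizing the remaining $\cVW$-votes (scaled by $1/\WWW$) as a feasible point of \eqref{LP}; and the lower bound by realizing an optimal LP solution against a block of votes on a fresh candidate $A$. The only cosmetic difference is in the $\tacticx$ lower bound, where the paper lets the adversary choose its slate disjoint from $\bigcup_{\gs\in\cW}\gs$ for an arbitrary strategy of $\cW$ (so that every candidate $\cW$ supports has score $\le\WWW$), rather than asserting that piling all votes on $A$ is optimal — the paper's phrasing avoids having to justify an optimality claim for a non-monotone method, but both arguments reach the same conclusion.
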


\begin{proof}
  This time we show the lower bound first.
Let $(x_\gs)_\gs$ be a vector yielding the minimum $\ga_S$
in \eqref{LP}, with $n=S$.
Consider an election with candidates $A,B_1,\dots,B_S$, 
1 vote on $A$ (this is $\cW$), and 
for each $\gs\subseteq[S]$,
$x_\gs$ votes on the corresponding set of candidates $\set{B_i:i\in\gs}$.
Thus the total number of votes is $V=1+\sum_\gs x_\gs=1+\ga_S$.
A possible outcome is \set{B_1,\dots,B_S}, which is a bad outcome for
$\cW$ for any of the scenarios $\samex$, $\PJRx$, $\EJRx$.
Thus, using \eqref{same-pjr},
\begin{align}\label{kula}
  \pi\Tha\EJR\is
\ge   \pi\Tha\PJR\is
\ge  \pi\Tha\same\is
\ge\frac{\WWW}{V}=\frac{1}{1+\ga_S}.
\end{align}

To see that this is also a lower bound for $\pi\Tha\tactic$, regard $\cW$
as a set of voters with total weight 1. They may vote in any way, but if we
add $\ga_S$ further votes as above, choosing $\cB:=\set{B_1,\dots,B_S}$
disjoint from $\cAx:=\bigcup_{\gs\in\cW}\gs$, then the candidates in $\cAx$
will have scores $\le1$, and again the bad outcome $\set{B_1,\dots,B_S}$
  is possible. Hence,
\begin{align}\label{kulb}
  \pi\Tha\tactic\is
\ge\frac{\WWW}{V}=\frac{1}{1+\ga_S}.
\end{align}

Conversely, consider any election with bad outcome for $\EJRx$ with $\ell=1$
for a set $\cW$ of voters. (Recall that $\EJRx$ and $\PJRx$ are the same for
$\ell=1$.)
This means that there exists some candidate $A$ that everyone in $\cW$ has
voted for.
Furthermore, 
if $\cAx:=\bigcup_{\gs\in\cW}\gs$, then 
$\cAx\cap\cE=\emptyset$, \ie, none from $\cAx$ is elected.
Hence, if $\gs\in\cW$, then no candidate in $\gs$ is ever elected, so $\gs$
is counted with full value throughout the counting.
Thus $A$ has score at least $\WWW$ throughout the counting.

Let the elected, in order, be $B_1,\dots,B_S$. Since  $B_i\in\cE$, we have
$B_i\notin\cAx$ for every $i$, \ie, no voter in $\cW$ votes for any $B_i$.
Furthermore, we may delete all votes from voters in $\cVW$ on unelected
candidates; this will not affect the scores of $B_1,\dots,B_S$, and not
increase the score of anyone else, so the outcome 
$\cB:=\set{B_1,\dots,B_S}$ is still
possible. Hence we may assume that if $\gs\in\cVW$, then $\gs\subseteq\cB$.
Moreover, each $B_i$ is elected with score at least the current score of
$A$, and thus at least $\WWW$.

Consequently, considering only the votes from $\cVW$, we have after scaling
all votes by $1/\WWW$ an election of the type described by \eqref{LP},
and thus
\begin{align}\label{kulc}
  \frac{|\cVW|}{\WWW}\ge\ga_S.
\end{align}
This yields 
\begin{align}
  \frac{\WWW}{V} \le \frac{1}{1+\ga_S}
\end{align}
for any bad outcome, and thus $\pi\EJR\Tha\is\le 1/(1+\ga_S)$, which
together with \eqref{kula}, \eqref{kulb} and \eqref{tactic-same} 
completes the proof.
\end{proof}

Before considering $\ell>1$, let us study the sequence $\ga_n$. We begin
with the first values.

\begin{example}
  \label{Ega1}
For $n=1$, the system \eqref{X0}--\eqref{X3} becomes the trivial single
equation $x_1\ge1$, and thus
\begin{equation}
  \label{ga1}
  \ga_1=1.
\end{equation}
\end{example}

\begin{example}
  \label{Ega2}
For $n=2$, the system \eqref{X0}--\eqref{X3} becomes 
\begin{align}
x_1,x_2,x_{12}&\ge0  \label{2X0} \\
x_1+x_{12}&\ge x_2+x_{12} \label{2X12}\\
x_1+x_{12}&\ge1 \label{2X1}\\
x_2+\tfrac12x_{12}&\ge1 \label{2X2}.
\end{align}
It follows from \eqref{2X12} and \eqref{2X2} that
$x_1+x_2+x_{12}\ge 2x_2+x_{12}\ge2$, and this is attained by, for example,
$x_1=x_2=1$, $x_{12}=0$ or $x_1=x_2=0$, $x_{12}=2$.
(Consequently, there is no way to split votes tactically when $n=2$.)
Hence,
\begin{equation}
  \label{ga2}
  \ga_2=2.
\end{equation}
\end{example}

\begin{example}
  \label{Ega3}
For $n=3$, the linear programming problem \eqref{LP} has (by Maple)
the minimum
\begin{equation}
  \label{ga3}
  \ga_3=8/3\doteq2.667,
\end{equation}
with a solution $x_{12}=x_{13}=x_2=x_3=2/3$ and all other $x_\gs=0$.
\xfootnote{In fact, this is the unique solution. This is easily seen by
  solving the dual problem, which leads to considering the linear
  combination
$\frac{1}{3}\cdot\eqref{X12}+
\frac{1}{3}\cdot\eqref{X13}+
\frac{4}{3}\cdot\eqref{X23}+
\frac{8}{3}\cdot\eqref{X3}$; this also verifies \eqref{ga3}. 
We omit the details. 
}
Consequently,
\begin{equation}
  \pi\Tha\same(1,3)=\pi\Tha\PJR(1,3)
=\pi\Tha\EJR(1,3)=\frac{1}{1+8/3}=\frac{3}{11}.
\end{equation}
Note that the solution to the linear programming problem corresponds to the
following election. (\Than, 3 seats.)
\begin{val}
\item [3] $A$
\item [2]$B_1B_2$
\item [2]$B_1B_3$
\item [2]$B_2$
\item [2]$B_3$
\end{val}
$B_1$, $B_2$, $B_3$ tie for the first seat, and if it goes to $B_1$, then 
$A$, $B_2$ and $B_3$ tie for the remaining two seats, and it is possible
that $A$ is not elected.
(Note that the second election in \refE{ETenow1912} is an approximation of this
example, with votes multiplied by $4.5$, 
modified to avoid ties.)
\end{example}

\begin{example}\label{Ega4}
For $n=4$, the linear programming problem \eqref{LP} has (by Maple)
the minimum
\begin{equation}
  \label{ga4}
  \ga_4=24/7\doteq3.4286,
\end{equation}
with one solution $x_{123}=x_{124}=x_{13}=x_{14}=x_{23}=x_{24}=x_3=x_4=3/7$
and all other $x_\gs=0$. 
\xfootnote{This solution is not unique. 
For example, another is given by
$x_{124}=x_{13}=6/7$,
$x_{23}=x_{4}=4/7$,
$x_{24}=x_{3}=2/7$.
}
\end{example}

For $n\ge5$, we do not know the values of $\ga_n$, and thus not of
$\pi\Tha\same(1,n)$; it should be easy to obtain more values by solving
the linear programming problem \eqref{LP} by computer, but we have not done
so.
\xfootnote{\label{fga6}%
\citet{PJR2016} report, in our notation, $\ga_5\qw\doteq0.2389$ and
  $\ga_6\qw\doteq0.204$, 
and thus $\ga_5\doteq4.186$ and $\ga_6\doteq4.90$,
but they do not give exact rational values.
See also the example in Table 1 in the full (arXiv) version of \cite{PJR2017},
which  shows that $\ga_6\le
4992/(6103/6)=29952/6103\doteq4.908$.
(This bound is not sharp and can be improved further.)
} 
We have the following general result.
\begin{theorem}
  \label{Tga}
The sequence $\ga_n$ is weakly increasing and subadditive.
In particular,
\begin{equation}\label{tga}
  \ga_n\le\ga_{n+1}\le\ga_n+1, \qquad n\ge1.
\end{equation}
Furthermore,
\begin{equation}\label{tga2}
  \frac{n}{H_n}\le\ga_n\le n,
\qquad n\ge1,
\end{equation}
and if $n\ge3$, then $\ga_n<n$.
\end{theorem}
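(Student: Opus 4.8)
The plan is to derive all five assertions from three elementary manipulations of feasible configurations for the optimization problem \eqref{LP}: \emph{deleting} the last-elected candidate, \emph{juxtaposing} two disjoint elections, and a bookkeeping identity for the total satisfaction. Throughout, recall that the score of a candidate when elected is the marginal increase in total satisfaction at that step, and (footnote \ref{fredundant}) that the successive scores of the elected candidates are non-increasing.

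First I would establish monotonicity $\ga_n\le\ga_{n+1}$. Fix an optimal configuration for $n+1$ candidates, with $C_1,\dots,C_{n+1}$ elected in this order and total weight $\ga_{n+1}$, and delete $C_{n+1}$ from the candidate set and from every ballot (discarding ballots that become empty). Since $C_{n+1}$ is elected last, at each of the first $n$ steps the already-elected set is $\set{C_1,\dots,C_{i-1}}$ in both elections, and $C_{n+1}\notin\set{C_1,\dots,C_{i-1}}$; hence every ballot carries the same number of elected candidates in both, so the marginal score of each $C_i$ is unchanged and still $\ge1$, and $C_i$ remains a (possibly tied) maximizer among the now-smaller set of remaining candidates. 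Thus $C_1,\dots,C_n$ can be elected in order, each with score $\ge1$, using total weight $\le\ga_{n+1}$.

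Next, subadditivity $\ga_{m+n}\le\ga_m+\ga_n$. Given optimal configurations for $m$ and for $n$ candidates, place them on disjoint candidate sets and on ballots no one of which mixes the two blocks, with combined weight $\ga_m+\ga_n$; this is a feasible instance for $m+n$ candidates. Because the blocks share neither candidates nor common ballots, the marginal score of any candidate depends only on how many candidates of its own block are already elected. Running Thiele's addition method on the union always elects the current global maximizer, and since the within-block scores are non-increasing, a prefix of each block is always elected first, so the method elects the two blocks in their original orders, interleaved, each candidate keeping its original score $\ge1$. Hence all $m+n$ candidates are elected with score $\ge1$. Taking $n=1$ and using $\ga_1=1$ from \eqref{ga1} gives $\ga_{n+1}\le\ga_n+1$, which with monotonicity yields \eqref{tga}. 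The upper bound $\ga_n\le n$ in \eqref{tga2} then follows from $\ga_n\le n\ga_1=n$ (or directly from $n$ disjoint weight-one single-candidate ballots). For the lower bound, note that when all $n$ candidates are elected, every ballot $\gs\subseteq\nn$ has all $|\gs|$ of its candidates elected, so its satisfaction is $H_{|\gs|}\le H_n$; writing $x=\sum_\gs x_\gs$ for the total weight, the total satisfaction is $\sum_\gs x_\gs H_{|\gs|}\le H_n\, x$. But the total satisfaction equals the sum of the $n$ marginal scores, each $\ge1$, so it is $\ge n$; combining, $n\le H_n\,x$, i.e.\ $x\ge n/H_n$, whence $\ga_n\ge n/H_n$. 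Finally, for $n\ge3$, subadditivity and $\ga_3=8/3$ from \eqref{ga3} give $\ga_n\le\ga_3+\ga_{n-3}\le\tfrac83+(n-3)=n-\tfrac13<n$ (interpreting $\ga_0:=0$ when $n=3$).

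The delicate point is not the arithmetic but the bookkeeping for Thiele's \emph{addition} method, which is a forced greedy: in both the deletion and the juxtaposition steps one must verify that the prescribed election order is actually realizable by the greedy rule (up to favorable tie-breaking). This is precisely where the monotonicity of $k\mapsto1/(k+1)$ and the non-increasing property of the successive elected scores are needed — to argue that a prefix of one block together with a prefix of the other is always a valid greedy state. I expect writing out this verification carefully to be the main obstacle; once it is in place, all five statements follow immediately.
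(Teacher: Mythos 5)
Your proposal is correct and follows essentially the same route as the paper: monotonicity by deleting the last-elected candidate, subadditivity by juxtaposing disjoint elections (the paper leaves the greedy-realizability check implicit, which you rightly spell out), and the lower bound $\ga_n\ge n/H_n$ by the same double count — your telescoping of marginal scores to the total satisfaction $\sum_\gs x_\gs H_{|\gs|}$ is exactly the paper's exchange of summation over $w(\gs,i)$. The only cosmetic difference is that the paper gets $\ga_n<n$ for $n\ge3$ by induction from $\ga_3<3$ via $\ga_{n+1}\le\ga_n+1$, while you use subadditivity against $\ga_3=8/3$; both are fine.
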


\begin{proof}
  Let $\fE_n$ be the set of all elections of the type used to define $\ga_n$
  above. 
Given an election in $\fE_{n+1}$, with $C_1,\dots,C_{n+1}$ elected in this
order,
remove $C_{n+1}$ from all ballots. This gives an election in $\fE_n$, and
it follows that $\ga_n\le\ga_{n+1}$.

Similarly, given two elections in $\fE_{m}$ and $\fE_{n}$, we may relabel
the candidates and assume that the two elections have disjoint sets of
candidates; then their union is an election in $\fE_{m+n}$, and 
the subadditivity 
$\ga_{m+n}\le\ga_m+\ga_n$ follows.

In particular, since $\ga_1=1$, we have $\ga_{n+1}\le\ga_n+1$.
Induction yields $\ga_n\le n$ and, since $\ga_3<3$ by \refE{Ega3}, $\ga_n<n$
for $n\ge3$.

Finally, let $w(\gs,i)\ge0$ be the contribution of ballot $\gs$
to the score of $C_i$ when elected.
Thus $\sum_{\gs}w(\gs,i)\ge1$. On the other hand, if $\gs$ is a ballot with
$k$ names, then $\sumin w(\gs,i)=H_k\le H_n$.
Consequently,
\begin{align}
  n\le \sumin \sum_{\gs}w(\gs,i) 
= \sum_{\gs} \sumin w(\gs,i) 
\le \sum_\gs H_n = H_n|\cV|.
\end{align}
Hence $|\cV|\ge n/H_n$, and thus $\ga_n\ge n/H_n$.
\end{proof}

\begin{problem}
  Find a general formula for $\ga_n$, and thus for $\pi\Tha\same\is$.
\end{problem}

\begin{problem}
  Find an asymptotic formula for $\ga_n$ as \ntoo, and thus for 
$\pi\Tha\same\is$ as \Stoo.
\end{problem}

The limit $\lim_\ntoo \ga_n/n$ exists since $\ga_n$ is subadditive.
We conjecture that the limit is $0$, i.e., $\ga_n=o(n)$;
by \refT{TThaJR}, this is equivalent to $\pi\Tha\same\is\gg1/S$ as \Stoo.

\begin{problem}
  Prove (or disprove) the conjecture $\ga_n/n\to0$ as \ntoo.
\end{problem}

\begin{remark}\label{RThaJR}
It follows from 
\eqref{tthajr} and \eqref{JR} that \Than{} satisfies JR for a given $S$ if
and only if $\ga_S>S-1$. It was shown by \citet{PJR2016,PJR2017} that this
holds if 
and only if $S\le 5$, see \refFn{fga6} and \eqref{tga}.
\end{remark}

We return to $\pi\Tha\ls$ for the scenarios above.
For $\ell>1$, we can only prove an inequality, which we conjecture is sharp.

\begin{theorem}\label{TThaPJR}
  For \Than{} and $\lele$,
  \begin{align}\label{tthapjr}
 \pi\Tha\EJR\ls
\ge \pi\Tha\PJR\ls
\ge    \pi\Tha\same\ls
\ge\frac\ell{\ell+\ga_{S+1-\ell}},
  \end{align}
where $\ga_n$ is given by \eqref{LP}.
\end{theorem}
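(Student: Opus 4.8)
The plan is to get the two left inequalities $\pi\Tha\EJR\ls\ge\pi\Tha\PJR\ls\ge\pi\Tha\same\ls$ for free from \refT{Tsame-pjr}, so that all the work goes into the rightmost bound $\pi\Tha\same\ls\ge\ell/(\ell+\ga_{S+1-\ell})$. Since lower bounds on $\pi$ are obtained by exhibiting an election satisfying the scenario with a possible bad outcome, I would construct one instance of the scenario $\samex$ (\refD{Dsame}) with $\WWW/V=\ell/(\ell+\ga_{S+1-\ell})$, generalizing the $\ell=1$ construction from the proof of \refT{TThaJR}. Throughout I use that \Than{} is homogeneous (\refR{Rhomo}), so I may work with real weights and rescale to integers at the end.

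\emph{The construction.} Take $\cA=\set{A_1,\dots,A_\ell}$ and let $\cW$ consist of voters of total weight $\ell$, all voting for exactly $\cA$. For the remaining voters $\cVW$, place a minimizing configuration of the linear program \eqref{LP} defining $\ga_n$, with $n:=S+1-\ell$, on a disjoint set of candidates $B_1,\dots,B_{S+1-\ell}$; then $|\cVW|=\ga_{S+1-\ell}$, so $V=\ell+\ga_{S+1-\ell}$ and $\WWW/V=\ell/(\ell+\ga_{S+1-\ell})$. The target outcome is $\cE=\set{A_1,\dots,A_{\ell-1}}\cup\set{B_1,\dots,B_{S+1-\ell}}$, which contains only $\ell-1<\ell$ members of $\cA$ and hence is bad.

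\emph{The crux is verifying that $\cE$ is a possible outcome of the sequential method.} The key structural point is that, since every voter in $\cW$ votes for \emph{all} of $\cA$ while the $\cVW$-ballots lie among the $B_j$, the $\cA$-track and the $B$-track are independent: electing a $B_j$ never changes any $\cA$-score and vice versa. On the $\cA$-track, after $i$ of the $A$'s are elected, every remaining candidate of $\cA$ (including $A_\ell$) has the \emph{same} score $\ell/(i+1)$; so I can always tie-break to elect a ``good'' $A_j$ with $j\le\ell-1$ and defer $A_\ell$. The good ones are elected at scores $\ell/1,\dots,\ell/(\ell-1)$, all $>1$, whereas $A_\ell$ would only reach $\ell/\ell=1$. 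On the $B$-track, the defining property of \eqref{LP} is precisely that $B_1,\dots,B_{S+1-\ell}$ can all be elected with score $\ge1$, and by \refFn{fredundant} these scores are non-increasing. I then merge the two tracks by always electing the current global maximum, breaking ties in favour of a good $A$ or a $B$ over $A_\ell$. The main obstacle, and the only place the hypotheses really bite, is checking that $A_\ell$ is never \emph{forced} in: while a good $A$ remains it merely ties with $A_\ell$, so one picks the good one; and once all $\ell-1$ good $A$'s are gone, $A_\ell$ sits at score $1$, which is $\le$ every remaining $B$-score (all $\ge1$), so the method may keep electing $B$'s. Thus all $S$ seats are filled by $A_1,\dots,A_{\ell-1},B_1,\dots,B_{S+1-\ell}$, leaving $A_\ell$ out.

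\emph{Conclusion.} This is a legitimate instance of $\samex$ with $|\cA|=\ell$ and a possible bad outcome, so \eqref{pi} gives $\pi\Tha\same\ls\ge\WWW/V=\ell/(\ell+\ga_{S+1-\ell})$; together with \refT{Tsame-pjr} this yields \eqref{tthapjr}. As sanity checks, $\ell=1$ recovers the construction behind \refT{TThaJR} (no good $A$'s, $A_1$ stranded at score $1$), and $\ell=S$ uses $\ga_1=1$, giving the final tie between $A_S$ and the single $B_1$ at score $1$.
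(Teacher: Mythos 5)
Your proposal is correct and takes essentially the same route as the paper: the two left inequalities come from \refT{Tsame-pjr}, and the rightmost bound comes from exactly the same construction ($\ell$ votes on $\cA$ plus an LP-minimizing configuration for $\ga_{S+1-\ell}$ on disjoint candidates $B_1,\dots,B_{S+1-\ell}$, with bad outcome $\set{A_1,\dots,A_{\ell-1},B_1,\dots,B_{S+1-\ell}}$). The only difference is that you spell out the tie-breaking verification that this outcome is actually attainable, which the paper leaves to the reader; your verification is sound.
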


\begin{conjecture}\label{ConjTh}
  The inequalities in \eqref{tthapjr} are equalities for $\PJRx$ and
  $\samex$,
\ie, $\pi\Tha\PJR\ls=\pi\Tha\same\ls=\ell/(\ell+\ga_{S+1-\ell})$.
\end{conjecture}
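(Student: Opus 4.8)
The two leftmost inequalities in \eqref{tthapjr} are nothing but \refT{Tsame-pjr} (the chain \eqref{same-pjr}), so I would immediately reduce the statement to its only substantive part, the lower bound $\pi\Tha\same\ls\ge\ell/(\ell+\ga_{S+1-\ell})$. Since lower bounds are always produced by examples, the plan is to construct a single instance of the scenario $\samex$, with $\WWW/V=\ell/(\ell+\ga_{S+1-\ell})$, for which \Than{} admits a bad outcome; the case $\ell=1$ in \refT{TThaJR} is exactly this construction, and I would generalize it.

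For the construction I would take disjoint candidate sets $\cA=\set{A_1,\dots,A_\ell}$ and $\set{B_1,\dots,B_{S+1-\ell}}$, giving $S+1$ candidates for $S$ seats. Let every voter in $\cW$ vote for the common list $\cA$, with total weight $\WWW=\ell$ (using real weights, as permitted in this subsection), and let $\cVW$ vote only on subsets of $\set{B_1,\dots,B_{S+1-\ell}}$ according to an optimal solution $(x_\gs)$ of the linear program \eqref{LP} with $n=S+1-\ell$. Then $|\cVW|=\ga_{S+1-\ell}$, so $V=\ell+\ga_{S+1-\ell}$ and $\WWW/V$ is the claimed value; moreover, run on their own, the $\cVW$-votes would elect $B_1,\dots,B_{S+1-\ell}$, each at the moment of its election having score $\ge1$.

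The key point to exploit is that the ballots in $\cW$ mention only $A$'s and those in $\cVW$ only $B$'s, so throughout \Than{} the scores of the two groups change independently: while $j$ of the $A$'s are elected, each unelected $A_i$ has score $\ell/(j+1)$, so by symmetry the $A$'s are elected at the successive scores $\ell,\ell/2,\dots,\ell/\ell$, while the $B$'s are elected at their own scores, all $\ge1$. Hence the greedy rule reduces to merging two decreasing sequences and electing the top $S$ of the resulting $S+1$ ``election scores'', leaving out the smallest. I would then note that one $A$, say $A_\ell$, is reached only at score $\ell/\ell=1$, which is $\le$ every $B$-score and strictly below every other $A$-score $\ell/j$ with $j\le\ell-1$; thus $A_\ell$ is (weakly) minimal and can be the candidate left out. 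Resolving the tie this way elects only $A_1,\dots,A_{\ell-1}$ from $\cA$, a bad outcome, which suffices by the convention in \refR{Rties}; therefore $\pi\Tha\same\ls\ge\WWW/V=\ell/(\ell+\ga_{S+1-\ell})$.

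The step I expect to need the most care is justifying the ``merge'' picture: one has to argue that electing greedily in each group in decreasing score order is consistent with the single global greedy rule (it is, precisely because a group's scores only drop when that group wins a seat), and that a minimizing solution of \eqref{LP} really can serve as the $\cVW$-part with each $B_i$ at score $\ge1$ when elected. Once these are secured, excluding the bottom-tied $A_\ell$ is immediate, and it is worth stressing that no rescaling of the $B$-votes is needed, since the choice $\WWW=\ell$ already pins the final $A$-score to the threshold value $1$.
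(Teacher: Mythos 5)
You have misidentified which direction of the inequality the conjecture actually asserts. The chain \eqref{tthapjr} already gives $\pi\Tha\PJR\ls\ge\pi\Tha\same\ls\ge\ell/(\ell+\ga_{S+1-\ell})$; that is precisely \refT{TThaPJR}, and the instance you construct ($\ell$ votes on the common list $\cA$ plus an optimal solution of \eqref{LP} on $S+1-\ell$ disjoint candidates $B_j$, with the bottom-tied $A_\ell$ left out) is exactly the paper's proof of that theorem. What \refConj{ConjTh} asks for is the \emph{reverse} inequality, $\pi\Tha\PJR\ls\le\ell/(\ell+\ga_{S+1-\ell})$: one must show that \emph{every} instance of $\samex$ (indeed of $\PJRx$) with a bad outcome satisfies $\WWW/V\le\ell/(\ell+\ga_{S+1-\ell})$. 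A single example can only push a threshold up, never bound it from above, so your argument --- however carefully the ``merge of two decreasing score sequences'' is justified --- establishes nothing beyond what \refT{TThaPJR} already states, and the conjecture remains untouched.

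The genuinely hard step, which your proposal does not address, is the one the paper isolates in its discussion following \refT{TThaPJR}: in an arbitrary bad instance the voters in $\cVW$ need not confine their ballots to candidates outside $\cAx:=\bigcup_{\gs\in\cW}\gs$; they may also list candidates of $\cAx$ (some of which are elected when $\ell\ge2$), and such cross-votes can change the order in which the candidates of $\cE\setminus\cAx$ are elected and hence their scores in a way not controlled by the linear program \eqref{LP}. Your clean picture of two non-interacting voter groups is exactly the special case in which this complication is absent. For $\ell=1$ the upper bound is saved in \refT{TThaJR} because there $\cAx\cap\cE=\emptyset$, so all $\cVW$-votes on $\cAx$ are votes on unelected candidates and can be deleted without affecting the outcome; that reduction visibly fails for $\ell\ge2$, which is why the statement is a conjecture and not a theorem.
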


\begin{proof}[Proof and discussion]
We prove \eqref{tthapjr} by a simple extension of the example in the proof
of \refT{TThaJR}.
Let $(x_\gs)_\gs$ be a vector yielding the minimum $\ga_n$
in \eqref{LP}, with $n=S+1-\ell$.
Consider an election with candidates $A_1,\dots,A_\ell,B_1,\dots,B_n$;
let $\cW$ be a set of $\ell$ votes 
on $\cA:=\set{A_1,\dots,A_\ell}$,
and let there be for each $\gs\subseteq[n]$,
$x_\gs$ votes on the corresponding set of candidates $\set{B_i:i\in\gs}$.
Thus the total number of votes is $V=\ell+\sum_\gs x_\gs=\ell+\ga_{S+1-\ell}$.
A possible outcome is \set{A_1,\dots,A_{\ell-1},B_1,\dots,B_{S+1-\ell}}, 
which is a bad outcome for $\cW$ for $\samex$ (and for $\PJRx$, $\EJRx$).
Thus \eqref{tthapjr} follows, using \eqref{same-pjr}.

For the converse, for $\PJRx$ (or $\samex$) we may try to argue in the same
way as for \refT{TThaJR}. Thus, consider an election with a bad outcome.
let again $\cAx:=\bigcup_{\gs\in\cW}\gs$, so $|\cE\cap\cAx|\le\ell-1$.
Let $\cB:=\cE\setminus\cAx$, and 
$n:=|\cB|\ge S+1-\ell$.

Throughout the counting, there is at least one candidate in $\cA$ that is
not elected, and this candidate has score $\ge\WWW/\ell$; hence every
elected candidate is elected with a score $\ge\WWW/\ell$;
in particular, this holds for the $n$ candidates in $\cB$.
We may now delete all candidates not in $\cE$.
By construction, the voters in $\cW$ vote only for candidates in $\cAx$.
The problem is that it is possible that voters in $\cVW$ 
vote not only for candidates in $\cB$ but
also for
one or several candidates in $\cAx$.
If this does not happen, then the votes from $\cVW$ yield, after scaling by
$\ell/\WWW$, an election of the type in \eqref{LP}, and thus
\begin{equation}
  |\cVW|\ge \ga_n\frac{\cW}{\ell} \ge \ga_{S+1-\ell}\frac{\cW}{\ell},
\end{equation}
which yields the desired estimate 
\begin{equation}
\frac{\WWW}{|\cV|}\le\frac{\ell}{\ell+\ga_{S+1-\ell}}.
\end{equation}

However, the voters in $\cVW$ are free to also vote for candidates in
$\cAx$.
This may decrease the score for some of the candidates in $\cB$, which
seems like a bad idea,
and we conjecture that the optimal strategy for $\cVW$ does not use votes 
outside $\cB$.
However, such votes may change the order in which
the candidates from $\cB$ are elected, which may change their scores in a
complicated way, so we have not been able to show this conjecture
rigorously, and thus not \refConj{ConjTh}.
Note that voting on candidates from
another party may be a good strategy in some situations,
see \refE{Ecounter} below, 
so one should be careful, even if that example is of a
different type.
\end{proof}

\begin{problem}
  Prove (or disprove) \refConj{ConjTh}.
\end{problem}

\begin{example}\label{Ega3-5}
  \refT{TThaPJR} and \refE{Ega3} show that
  \begin{equation}
    \pi\Tha\same(3,5)
\ge\frac{3}{3+\ga_3}
= \frac{3}{3+8/3}=\frac{9}{17}>\frac12.
  \end{equation}
Thus, with $S=5$, there are bad outcomes where $\cA$ has a majority of the
votes but does not get a majority of the seats.
We may construct a concrete example from the solution to the linear
programming problem in \refE{Ega3} as follows; \cf{} the election in
\refE{Ega3}.
(\Than, 5 seats.)
\begin{val}
\item [9] $A_1A_2A_3$
\item [2]$B_1B_2$
\item [2]$B_1B_3$
\item [2]$B_2$
\item [2]$B_3$
\end{val}
A possible outcome is that the elected are, in order, $A_1,A_2,B_1,B_2,B_3$.
\end{example}

\begin{example}
If $\ell\le S-2$, the $S+1-\ell\ge3$ and \refT{Tga} yields
$\ga_{S+1-\ell}<S+1-\ell$. Hence,
\refT{TThaPJR} yields
\begin{equation}
  \pi\Tha\same\ls>\frac{\ell}{S+1},
\qquad 1\le\ell\le S-2.
\end{equation}
Similarly, since $\ga_6<5$ as shown by \citet{PJR2016,PJR2017}, see
\refR{RThaJR} and \refFn{fga6},  
it follows that if $S-\ell\ge5$, then $\ga_{S+1-\ell}<S-\ell$, 
and thus \refT{TThaPJR} yields
\begin{equation}
  \pi\Tha\same\ls>\frac{\ell}{S},
\qquad 1\le\ell\le S-5.
\end{equation}
\end{example}

To find $\pi\Tha\EJR$ seems even more complicated than to find
$\pi\Tha\PJR$ or $\pi\Tha\same$, and we have no non-trivial result.
\begin{problem}
Find $\pi\Tha\EJR\ls$ for $\ell\ge2$.
\end{problem}

Finally, consider $\pi\tactic\Tha$.
    The values $x_\gs$ in the solution to \eqref{LP} give an optimal
    strategy for how a well-organized party should distribute its votes in
    order to get $n$ candidates elected, provided the other voters vote for
different candidates.
However, \Than{} is non-monotone, and the strategy is risky and 
may be bad if the other voters vote in some other way, and in
particular  if another party knows the strategy and may counteract as
in the following example.

\begin{example}[\cite{SJV9}]
  \label{Ecounter}
Consider again the second election in \refE{ETenow1912}, and suppose that
two voters in the $KLM$ party also vote for $B$, by chance or by clever
design because they know the plan of the other party:
\begin{val}
\item [1]$A$
\item [9]$AB$
\item [9]$AC$
\item [9]$B$
\item [9]$C$
\item [11]$KLM$
\item [2]$BKLM$
\end{val}
Then $B$ is elected first, followed by $C$ and finally $K$ (or $L$ or $M$).
Hence the $ABC$ party gets only 2 seats.
\end{example}

\refE{Ecounter} suggests that the strategy in the linear programming problem
\eqref{LP} is not the best strategy for the scenario $\tacticx$, where we
consider the worst-case behaviour of the voters not in $\cW$, and thus need
a fool-proof strategy. We do not know the best such strategy, and leave it
an open problem to find it, and the resulting threshold.

\begin{problem}
  Find $\pi\Tha\tactic\ls$.
\end{problem}

\subsection{\Then}\label{SThe}

We consider briefly also \Then.

\begin{theorem}
  \label{TThe}
For \Then:
\begin{align}\label{tthe}
  \pi\tactic\The\ls=\pi\same\The\ls=\frac{\ell}{S+1},
\qquad\lele.
\end{align}
\end{theorem}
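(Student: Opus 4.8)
The plan is to follow exactly the template used for \refT{TsamePhru} and \refT{TEJRTh}: get the lower bound for free, do the real work on the upper bound for $\pi\same$, and then deduce the $\tacticx$ case from \refT{Tsplit}. For the lower bound, \refT{TPhruparty} gives $\pi\party\The\ls=\frac{\ell}{S+1}$, and \eqref{party-same} in \refT{Tps} yields $\pi\same\The\ls\ge\pi\party\The\ls=\frac{\ell}{S+1}$. Thus the whole theorem reduces to proving the single upper bound $\pi\same\The\ls\le\frac{\ell}{S+1}$; once this is in hand, \eqref{tactic-same} gives $\pi\tactic\The\ls\le\pi\same\The\ls\le\frac{\ell}{S+1}$ for all $\ell$, and then \refT{Tsplit}\ref{Tsplit=} promotes the inequality to the equality $\pi\tactic\The\ls=\frac{\ell}{S+1}$, simultaneously pinning $\pi\same$.

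For the upper bound I would argue directly on the elimination process. Fix an election satisfying $\samex$ with a \emph{bad} outcome, so the voters in $\cW$ all vote for the same set $\cA$ with $|\cA|\ge\ell$, yet at most $\ell-1$ candidates of $\cA$ survive. Recall that \Then{} starts from the full candidate set and repeatedly deletes the candidate whose removal costs the least total satisfaction, where deleting $C$ costs $\sum_{\gs\ni C}1/k_\gs$, $k_\gs$ being the number of still-present candidates voted for by $\gs$. Since the number of surviving $\cA$-candidates decreases by $0$ or $1$ per step from $|\cA|\ge\ell$ down to a final value $\le\ell-1$, there is a step that deletes some $A^*\in\cA$ while exactly $\ell$ candidates of $\cA$ are still present; note that at this step the number of candidates still present is $m+1\ge S+1$ (deletions occur only while more than $S$ remain). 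At this step every ballot $\gs\in\cW$ has $k_\gs=\ell$ (they vote for exactly $\cA$, of which $\ell$ survive), so each contributes $1/\ell$ to the cost of $A^*$, giving $\mathrm{cost}(A^*)\ge\WWW/\ell$. On the other hand, $A^*$ was chosen as a minimum-cost candidate, and averaging the cost over the $m+1\ge S+1$ present candidates,
\[
\mathrm{cost}(A^*)=\min_C\mathrm{cost}(C)\le\frac{1}{m+1}\sum_{C}\sum_{\gs\ni C}\frac1{k_\gs}=\frac{1}{m+1}\bigl|\set{\gs:k_\gs\ge1}\bigr|\le\frac{V}{S+1},
\]
because each ballot with $k_\gs\ge1$ contributes $k_\gs\cdot(1/k_\gs)=1$ to the double sum. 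Combining the two bounds gives $\WWW/\ell\le V/(S+1)$, i.e.\ $\WWW/V\le\ell/(S+1)$ for every bad outcome, which is precisely $\pi\same\The\ls\le\frac{\ell}{S+1}$.

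The step I expect to be the crux is the averaging identity $\sum_C\mathrm{cost}(C)=|\set{\gs:k_\gs\ge1}|\le V$ together with the observation that at the chosen deletion step at least $S+1$ candidates are still present; this is what converts "$A^*$ is cheapest" into the clean bound $V/(S+1)$, and it is essential that the decisive step be located where exactly $\ell$ of $\cA$ survive so that $\mathrm{cost}(A^*)\ge\WWW/\ell$ holds with the correct constant. Everything else is bookkeeping: ties cause no trouble since \eqref{pi} already ranges over all possible (tie-induced) bad outcomes, and the passage to $\pi\tactic$ is purely formal via \eqref{tactic-same} and \refT{Tsplit}\ref{Tsplit=}, exactly as in the proofs of \refTs{TsamePhru} and \ref{TEJRTh}.
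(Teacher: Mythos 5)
Your proposal is correct and follows essentially the same route as the paper: locate the elimination round in which the number of surviving $\cA$-candidates drops from $\ell$ to $\ell-1$, note that the eliminated candidate then has score at least $\WWW/\ell$, bound that score from above via minimality together with the identity that each ballot with a surviving candidate contributes total score exactly $1$, and finish with \eqref{tactic-same} and \refT{Tsplit}\ref{Tsplit=}. The only (immaterial) difference is in the last bound: the paper compares the eliminated score with the $\ge S+1-\ell$ surviving candidates outside $\cA$ and bounds their total score by $|\cVW|$, whereas you average over all $\ge S+1$ surviving candidates and bound the total by $V$; both yield $\WWW/V\le\ell/(S+1)$.
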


\begin{proof}
We first give an upper bound for $\pi\same\The\ls$.
Thus suppose that $\cW\subseteq\cV$, and that every voter in $\cW$ votes
for the same list $\cA$ with $|\cA|\ge\ell$. Suppose that the outcome is
bad for $\samex$. This 
means that less than $\ell$ candidates in $\cA$ remain at the end. Hence, in
some round $\cA$ has $\ell$ remaining candidates and one of them, say $A$, is
eliminated. 

Consider this round.
The eliminated candidate $A$ has score $\ge\WWW/\ell$, and thus every other
remaining candidate has at least this score.
Let $m$ be the number of remaining candidates in $\ccA$.
Since the total number of remaining candidates is $>S$
(otherwise there would be no more elimination), 
$m\ge  S+1-\ell$. 
Hence, the total score of the candidates in $\ccA$, $T$ say, 
is $\ge (S+1-\ell)\WWW/\ell$.

 A ballot with $k$ remaining candidates
contributes $1/k$ to the score of each of them, so the total contribution of
the ballot is 1, for any $k\ge1$. (And $0$ if no candidate on the ballot
remains.) Since no voter in $\cW$ votes for any candidate in $\ccA$, it
follows that
\begin{align}
V-\WWW=  |\cVW| \ge T\ge \frac{S+1-\ell}{\ell}\WWW,
\end{align}
which implies 
\begin{align}
\frac{\WWW}{V} \le \frac{\ell}{S+1}.
\end{align}
Consequently,
\begin{align}\label{thesame}
\pi\same\The\ls\le\frac{\ell}{S+1}.  
\end{align}

Finally, \eqref{thesame} and \eqref{tactic-same} yield
$\pi\tactic\The\ls\le\pi\same\The\ls\le\frac{\ell}{S+1}$, and
thus \refT{Tsplit}\ref{Tsplit=} yields the equalities \eqref{tthe}.
\end{proof}

For $\PJRx$ we give only an example, giving a lower bound for $\pi\PJR\is$ (and
thus for $\pi\EJR\is$).
We do not believe that this example is sharp. In fact, we do not know
whether $\pi\PJR\The\ls<1$ or not, even in the case $\pi\PJR\The(1,1)$.

\begin{example}
  \label{EThe}
Let $S,m,n\ge1$. Consider an election with $1+mn+S$ candidates
$A$, $C_{ij}$ and $B_k$, for $i\in[m]$, $j\in[n]$, $k\in[S]$, and the
foillowing votes:
\begin{val}
\item [${n+1}$] $\set{A}\cup\set{C_{ij}:j\in[n]}$ for every $i\in[m]$
\item [${m-1}$] $\set{C_{ij}}$ for every $i\in[m]$, $j\in[n]$
\item [$m+n$] \set{B_k} for every $k\in[S]$.
\end{val}
The total number of votes is thus
\begin{align}
  V=m\xpar{n+1}+mn\xpar{m-1}+S(m+n)
=m\xpar{mn+1}+S(m+n).
\end{align}

Initially, $A$ has score $\frac{m(n+1)}{n+1}=m$, 
each $C_{ij}$ has score 
$\frac{n+1}{n+1}+{m-1}={m}$
and each $B_k$ has score $m+n$. Thus either $A$ or some $C_{ij}$ is
eliminated; suppose that $A$ is.
In the sequel, every remaining $C_{ij}$ has score $\le
{n+1}+{m-1}=m+n$, and each remaining $B_k$ still has score $m+n$.
We may thus assume that in each round (after the first) some $C_{ij}$ is
eliminated, until only $B_1,\dots,B_S$ remain and are elected.

Consider the scenario $\PJRx$ with $\ell=1$,
let $\cA=\set{A}$, and let the voters $\cW$ be the $m\xpar{n+1}$ voters
in the first line above. Then the election above is a bad outcome.
Hence, for any $S,m,n\ge1$,
\begin{align}\label{elim1}
  \pi\PJR\The(1,S)\ge\frac{\WWW}{V}
=\frac{m\xpar{n+1}}{m\xpar{mn+1}+\xpar{m+n}S}.
\end{align}
Letting $n\to\infty$, we obtain
\begin{align}\label{elim2}
  \pi\PJR\The(1,S)
\ge
\frac{m}{m^2+S}, 
\qquad m\ge1.
\end{align}
For a numerical example, with $S=3$ and $m=2$, and comparing with \eqref{tthe},
\begin{align}
\pi\PJR\The(1,3)\ge \frac{2}{7}>\pi\same\The(1,3)=\frac{1}{4}.  
\end{align}

Given $S\ge1$, we may thus maximize the \rhs{} of \eqref{elim2} over
$m\ge1$; the maximum is obtained for $m=\floor{\sqrt{S}}$ or
$m=\ceil{\sqrt{S}}$.
Taking $m=\floor{\sqrt S}$ yields the lower bound
\begin{align}\label{elimSoo}
  \pi\PJR\The(1,S)
\ge
\frac{\floor{\sqrt{S}}}{2S}\sim \frac{1}{2\sqrt S},
\qquad\text{as \Stoo}.
\end{align}
In particular, by \eqref{JR}, \Then{} does not satisfy JR.
\end{example}

\begin{problem}
  What is $\pi\The\PJR\ls$?
\end{problem}

\subsection{\Thoptn{} with general weights}

We have in \refSs{SThopt}--\ref{SSTha} studied Thiele's optimization and
addition methods with the standard weights $w_k=1/k$.
In this and the next subsection, we extend  
some of the results to general weights $\ww=(w_k)_0^\infty$.
Our results are inspired by, and extend considerably, results 
by \citet{EJR} 
on the criteria JR and EJR, see \refRs{RThoptww} and \ref{RThawJR}.

We assume throughout that $w_1=1$ and that $w_1\ge w_2\ge \dots \ge0$.

\begin{theorem}
  \label{Toptw}
For \Thoptn{} with general weights $\ww=(w_k)_0^\infty$\kol
For $S\ge1$, 
\begin{equation}\label{toptw}
  \pi\Thoptw{\ww}\same\is
=
  \pi\Thoptw{\ww}\PJR\is
=
  \pi\Thoptw{\ww}\EJR\is
=\frac{1}{1+\xfrac{S}{\max_{k\le S} (kw_k)}}
\end{equation}
\end{theorem}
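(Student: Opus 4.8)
The plan is to prove the common value equals $M/(M+S)$, where I write $M:=\max_{k\le S}(kw_k)$ (note $M\ge1$ since $w_1=1$, and $1/(1+S/M)=M/(M+S)$), by matching an upper and a lower bound. Two facts let me reduce the three equalities to two inequalities: for $\ell=1$ the scenarios $\PJRx$ and $\EJRx$ coincide (cf.\ \eqref{pjr-ejr1}), and \refT{Tsame-pjr} gives $\pi\Thoptw{\ww}\same\is\le\pi\Thoptw{\ww}\PJR\is=\pi\Thoptw{\ww}\EJR\is$. Hence it suffices to prove the upper bound $\pi\Thoptw{\ww}\EJR\is\le M/(M+S)$ and the lower bound $\pi\Thoptw{\ww}\same\is\ge M/(M+S)$; the four quantities are then squeezed to $M/(M+S)$.

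For the upper bound I would adapt the swapping argument in the proof of \refT{TEJRTh} to general weights. Recall (see \refApp{AThopt}) that a ballot $\gs$ contributes $\sum_{i=1}^{|\gs\cap\cE|}w_i$ to the total satisfaction, so the marginal gain of its $k$-th elected candidate is $w_k$. Consider a bad outcome $\cE$; since $\ell=1$, badness means $\gs\cap\cE=\varnothing$ for every $\gs\in\cW$, so fixing any $A\in\cA$ gives $A\notin\cE$ while $A\in\gs$ for all $\gs\in\cW$. For each $C\in\cE$ let $\gd(\gs,C)$ be the change in the satisfaction of $\gs$ when $\cE$ is replaced by $\cE\cup\set{A}\setminus\set{C}$, and put $\gD(\gs):=\sumCE\gd(\gs,C)$; optimality of $\cE$ forces $\sum_{\gs\in\cV}\gD(\gs)\le0$. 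The estimate then splits by ballot: each $\gs\in\cW$ gives $\gD(\gs)=S$ (every swap installs $A$ for a gain $w_1=1$), while for $\gs\in\cVW$ with $k:=|\gs\cap\cE|$ one has $\gD(\gs)=-kw_k\ge-M$ when $A\notin\gs$ and $\gD(\gs)=(S-k)w_{k+1}\ge0$ when $A\in\gs$. Here the definition of $M$ is exactly what bounds $kw_k$ uniformly. Summing yields $0\ge\WWW\,S-M\,|\cVW|$, i.e.\ $\WWW/V\le M/(M+S)$, valid for every bad outcome.

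For the lower bound I would exhibit a tie realising this ratio. Let $k^*\le S$ attain $M=k^*w_{k^*}$, take a single target candidate $A$ (so $\cA=\set{A}$) voted by the $\WWW$ ballots of $\cW$, and let the remaining voters give equal weight to every $k^*$-subset of $\set{B_1,\dots,B_S}$, scaled so that $w_{k^*}$ times the total opposition weight on ballots containing any fixed $B_j$ equals $\WWW$. A short count gives total opposition weight $\WWW S/M$, hence $\WWW/V=M/(M+S)$. Since only the $S+1$ candidates $A,B_1,\dots,B_S$ are voted for and $w_{k^*}>0$, an optimal set of size $S$ is either $\cE_0:=\set{B_1,\dots,B_S}$ or is obtained from $\cE_0$ by installing $A$ in place of some $B_j$; for each such swap the gain $\WWW$ from $\cW$ exactly cancels the opposition loss, which equals $w_{k^*}$ times the weight on ballots containing $B_j$, again $\WWW$. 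Thus $\cE_0$ is a maximiser and electing it is a bad outcome for $\samex$, giving $\pi\Thoptw{\ww}\same\is\ge M/(M+S)$.

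The main obstacle is the lower-bound construction when $k^*\nmid S$: a na\"ive partition of $B_1,\dots,B_S$ into disjoint blocks of size $k^*$ leaves an undersized block whose candidates are not ``critical'' (removing one costs less than $\WWW$), so installing $A$ would strictly improve the objective and the outcome would fail to be bad. The fully symmetric design over \emph{all} $k^*$-subsets repairs this, since every $B_j$ then carries identical total opposition weight and the cancellation holds simultaneously for all $j$; this is legitimate because \Thoptn{} is homogeneous and we may use rational vote weights, cf.\ \refRs{Rreal} and~\ref{Rhomo}. A secondary point is that the optimiser need not be unique; I would invoke \refR{Rties}, so that a single tied bad outcome already forces $\pi\ge M/(M+S)$.
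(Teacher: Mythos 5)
Your proof is correct and takes essentially the same route as the paper's: the upper bound by summing the satisfaction changes over all $S$ swaps of $A$ for an elected candidate and invoking optimality of $\cE$ (the paper first prunes the election to $A,B_1,\dots,B_S$ and strips $A$ from ballots outside $\cW$, whereas you bound $\gD(\gs)$ ballot by ballot, but the computation is identical), and the lower bound by a balanced family of $k^*$-element opposition ballots producing an exact tie (the paper uses the $S$ cyclic blocks $\set{B_i,\dots,B_{i+k-1}}$ mod $S$ instead of all $k^*$-subsets, which serves the same purpose).
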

\begin{proof}
  Consider an election with a bad outcome for $\EJRx$.
Let $\cAx:=\bigcup_{\gs\in\cW}\gs$.
Then $\cE\cap\cAx=\emptyset$; furthermore, there exists a candidate
$A\in\cAx$ such that $A\in\gs$ for every $\gs\in\cW$.

Let the elected be $B_1,\dots,B_S\notin\cAx$.
We may eliminate all candidates except $A$ and $B_1,\dots,B_S$; this yields
an election with the same bad outcome $\cE=\cB:=\set{B_1,\dots,B_S}$.
Furthermore, we may delete $A$ from any ballot not in $\cW$; this will not
change the total satisfaction
$\Psi(\cB)$,
while the total satisfaction $\Psi(\cC')$ for every other set $\cC'\subseteq
\set{A}\cup\cB$ is decreased or remains the same.
Hence, $\cB$ still maximizes the total satisfaction, and is thus a possible
outcome which is bad.
Note that each voter in $\cW$ now votes for \set{A}, and each voter in
$\cVW$ for some subset of $\cB$.

Suppose that $A$ is elected instead of $B_j$. This increases the
satisfaction of every $\gs\in\cW$ from 0 to $w_1=1$.
If $\gs\in\cVW$ and $B_j\in\gs$, then the satisfaction of
$\gs$ decreases from $\psi(|\gs|)$ to $\psi(|\gs|-1)$, \ie, by $w_{|\gs|}$.
For all other $\gs\in\cVW$, the satisfaction remains the same.
Hence, the change in total satisfaction \eqref{sat} is
\begin{equation}
  \Psi\bigpar{\cB\setminus\set{B_j}\cup\set{A}}
-\Psi(\cB)
=
\WWW-\sum_{\gs\ni B_j} w_{|\gs|}.
\end{equation}
This change is $\le0$, since $\cB$ is elected and thus maximizes the total
satisfaction.
Summing over $j\in[S]$ we thus obtain
\begin{equation}
0
\ge
\sumjS\Bigpar{\WWW-\sum_{\gs\ni B_j} w_{|\gs|}}
=S\WWW - \sum_{\gs\in\cVW}|\gs|w_{|\gs|}.
\end{equation}
Thus,
\begin{equation}
  \WWW \le\frac{1}{S} \sum_{\gs\in\cVW}|\gs|w_{|\gs|}
\le\frac{|\cVW|}{S}\max_{1\le k\le S}\bigpar{|k|w_{k}}.
\end{equation}
which yields
\begin{equation}
\frac{\WWW}{V}\le\frac{1}{1+{S}/\max_{1\le k\le S}\bigpar{|k|w_{k}}}
\end{equation}
and thus an upper bound ``$\le$'' in \eqref{toptw}.

Conversely, for any $k\in[S]$,
consider an election with $S+1$ candidates 
$A,B_1,\dots,B_S$. Let $W\ge2$ 
and let there be $V=W+S$ votes:
$W$ votes on $\set{A}$ (this is the set $\cW$) and 1 vote on each set
$\set{B_i,\dots,B_{i+k-1}}$ (indices mod $S$) for each $i\in[S]$.

The sets of $S$ candidates are $\cB:=\set{B_1,\dots,B_S}$ and
$\cA_i:=\set{A}\cup\cB\setminus\set{B_i}$.
The total satisfaction if these sets are elected are
\begin{align}
  \Psi(\cB)&=S\psi(k),
\\
\Psi(\cA_i)&=W+(S-k)\psi(k)+k\psi(k-1)
=\Psi(\cB)+W-k w_k.
\end{align}
Hence, if $W<kw_k$, then $\cB$ is elected, a bad outcome for $\samex$.
As usual, we may here let $W$ be a rational number, since we may multiply
the votes above by the denominator of $W$. Hence, we obtain
\begin{align}
  \pi\same\Thoptw\ww\is\ge\frac{kw_k}{kw_k+S}=\frac{1}{1+S/(kw_k)}.
\end{align}
We may now take the maximum over $k\in[S]$ to get the inequality ``$\ge$''
for $\pi\same\Thoptw\ww$ in \eqref{toptw}, which by \eqref{same-pjr}
completes the proof. 
\end{proof}

\begin{theorem}\label{Tw1}
    For \Thoptn{} with
weights $(w_k)\xoo$ satifying $w_k\le 1/k$, $k\ge1$\kol
For $S\ge1$, 
\begin{equation}\label{tw1}
  \pi\Thoptw{\ww}\same\is
=
  \pi\Thoptw{\ww}\PJR\is
=
  \pi\Thoptw{\ww}\EJR\is
=\frac{1}{S+1}.
\end{equation}
Furthermore,
\begin{equation}\label{tw1l}
  \pix\Thoptw{\ww}\tactic\ls
=\frac{\ell}{S+1},
\qquad\lele.
\end{equation}
In particular, this applies to \Thoptweakn{}. 
\end{theorem}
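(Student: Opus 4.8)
The plan is to read both displayed identities straight off \refT{Toptw} and \refT{Tsplit}, so that essentially no new computation is needed; the sole function of the extra hypothesis $w_k\le1/k$ is to evaluate one maximum.

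First I would prove \eqref{tw1}. The hypothesis $w_k\le1/k$ gives $kw_k\le1$ for every $k\ge1$, while the standing normalization $w_1=1$ gives $1\cdot w_1=1$; hence $\max_{1\le k\le S}(kw_k)=1$, attained at $k=1$. Inserting this value into formula \eqref{toptw} of \refT{Toptw} collapses the \rhs{} to $\frac{1}{1+S/1}=\frac{1}{S+1}$, which is exactly \eqref{tw1}, simultaneously for the three scenarios $\samex$, $\PJRx$ and $\EJRx$.

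Next I would derive \eqref{tw1l} from the splitting machinery of \refT{Tsplit}. By \eqref{tactic-same} together with the case $\ell=1$ just established, $\pi\Thoptw{\ww}\tactic\is\le\pi\Thoptw{\ww}\same\is=\frac{1}{S+1}$. This is precisely the hypothesis of \refT{Tsplit}\ref{Tsplit=x} (in its ``in particular'' form), whose conclusion is $\pix\Thoptw{\ww}\tactic\ls=\frac{\ell}{S+1}$ for \lele; that is \eqref{tw1l}. Here one works with $\pix$ rather than $\pi$ for the same divisibility reason as in \refT{TSNTV}; \cf{} \refR{Rpix}.

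For the final sentence I would only need to confirm that \Thoptweakn{} has weights satisfying $w_k\le1/k$, which is immediate from its definition in \refApp{AThopt}; the two identities then apply to it verbatim. I do not expect a real obstacle: the whole argument is a short assembly of \refT{Toptw} and \refT{Tsplit}. The one point meriting a moment's care is that $\max_{k\le S}(kw_k)$ must be shown \emph{equal} to $1$ rather than merely $\le1$, but this is forced by $w_1=1$, so even that is automatic.
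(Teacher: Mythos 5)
Your proposal is correct and follows the paper's own proof essentially verbatim: the identity $\max_{k\le S}(kw_k)=1$ (forced by $w_k\le1/k$ together with $w_1=1$) plugged into \refT{Toptw} gives \eqref{tw1}, and then \eqref{tactic-same} plus \refT{Tsplit}\ref{Tsplit=x} gives \eqref{tw1l}. No gaps.
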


\begin{proof}
By assumption, recalling also the standing assumption $w_1=1$, we have
$\max_k(kw_k)=1$. Hence, \eqref{tw1} follows from \refT{Toptw}.
  
This implies, by \eqref{tactic-same},
$\pi\Thoptw\ww\tactic\is\le
\pi\Thoptw\ww\same\is=\xqfrac{1}{S+1}$,
and thus \eqref{tw1l} follows from
\refT{Tsplit}\ref{Tsplit=x}.
\end{proof}

With the weak method, it is for $\ell\ge2$ obviously a bad strategy to vote
on the same list. 
One aspect of this is the following trivial result.
(Recall that the proof of \eqref{tw1l} by \refT{Tsplit}
uses the same strategy as for SNTV in
\refT{TSNTV}, with the votes split on different candidates separately.)
\begin{theorem}\label{Tthoptw2}
    For \Thoptweakn{}\kol
If\/ $2\le\ell\le S$, then 
\begin{equation}\label{toptw2}
  \pi\Thoptw{\weakw}\same\ls
=
  \pi\Thoptw{\weakw}\PJR\ls
=
  \pi\Thoptw{\weakw}\EJR\ls
=1.
\end{equation}
\end{theorem}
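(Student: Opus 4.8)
The plan is to reduce all three quantities to the scenario $\samex$ and then produce a family of bad instances whose vote share tends to $1$. By the chain \eqref{same-pjr} of \refT{Tsame-pjr} we have $\pi\Thoptw{\weakw}\same\ls\le\pi\Thoptw{\weakw}\PJR\ls\le\pi\Thoptw{\weakw}\EJR\ls$, and all three are trivially $\le1$ by \eqref{pipix}; hence it suffices to prove the single lower bound $\pi\Thoptw{\weakw}\same\ls\ge1$. Recall that for the weak method the weights are $w_1=1$ and $w_k=0$ for $k\ge2$, so a ballot's satisfaction is $\psi(k)=\ett{k\ge1}$; thus $\Psi(\cE)$ equals the number of voters having at least one approved candidate elected, and the method chooses an $S$-set $\cE$ maximizing this ``coverage''.

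For the construction, fix $\ell$ and $S$ with $2\le\ell\le S$, take candidates $A_1,\dots,A_\ell,B_1,\dots,B_{S-1}$, and put $\cA:=\set{A_1,\dots,A_\ell}$. Let $W:=\WWW\ge1$ and let $\cW$ consist of $W$ voters each voting for $\cA$, while $\cVW$ consists of one voter voting for $\set{B_i}$ for each $i\in[S-1]$; thus $V=W+(S-1)$ and this is an instance of $\samex$ (\refD{Dsame}). I would then read off the coverages: electing one $A_i$ together with all of $B_1,\dots,B_{S-1}$ covers every voter, so $\Psi=W+(S-1)=V$, whereas any $S$-set containing two or more candidates from $\cA$ can contain at most $S-2$ of the $B_i$ and hence covers at most $W+(S-2)<V$ voters. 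Since $\ell\ge2$ there are $\ell+(S-1)\ge S+1$ candidates, so the $S$ seats can be filled, and the maximum coverage $V$ is attained precisely by the $S$-sets consisting of all $B_i$ together with exactly one $A_i$. Therefore every set elected by the weak method contains exactly one candidate from $\cA$, which is a bad outcome for $\samex$ because $\ell\ge2$.

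This gives, for every $W\ge1$, a bad $\samex$-instance with $\WWW/V=W/(W+S-1)$; letting $W\to\infty$ shows $\pi\Thoptw{\weakw}\same\ls\ge\sup_{W\ge1}W/(W+S-1)=1$, which together with the reduction above proves \eqref{toptw2}. The only step needing any care is the coverage comparison: replacing some $B_i$ by a second candidate from $\cA$ strictly lowers $\Psi$, since $\cW$ is already covered by the first elected $A$-candidate while the lone voter for $B_i$ becomes uncovered. This is exactly the ``wasted vote'' effect that makes voting on a common list useless for gaining a second seat under the weak method, and it is entirely elementary, so I anticipate no genuine obstacle.
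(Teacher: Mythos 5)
Your proposal is correct and follows essentially the same route as the paper: the identical construction ($W$ ballots on $\cA=\set{A_1,\dots,A_\ell}$ plus one singleton ballot on each of $B_1,\dots,B_{S-1}$), the same coverage computation showing the elected set contains only one candidate from $\cA$, and the same reduction of $\pi\PJR$ and $\pi\EJR$ to $\pi\same$ via \eqref{same-pjr}. The only difference is cosmetic: you verify that \emph{every} maximizing set is bad, while the paper only notes that a bad outcome is possible, which already suffices under \refD{Dpi}.
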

\begin{proof}
Let $W\ge1$.
  Consider an election with $S+\ell-1$ candidates
  $A_1,\dots,A_\ell,\allowbreak B_1,\dots,B_{S-1}$ and $V=W+S-1$ votes:
$W$ votes on \set{A_1,\dots,A_\ell} (this is the set $\cW$)
and 1 vote on each $B_i$.
With \Thoptweakn{} \eg{} \set{A_1,B_1,\dots,B_{S-1}} is elected, a bad outcome
for $\samex$.
Consequently, 
$\pi\same\Thoptw\weakw\ls\ge W/(W+S-1)$, and the result follows, using 
\eqref{same-pjr},
since $W$ is arbitrary.
\end{proof}

We can extend \refT{Tthoptw2} to other weights in the form of an inequality.

\begin{theorem}
  \label{Tw3}
    For \Thoptn{} with 
weights $(w_k)\xoo$\kol
\begin{align}\label{tw3}
  \pi\Thoptw{\ww}\EJR\ls
&\ge
  \pi\Thoptw{\ww}\PJR\ls
\ge
  \pi\Thoptw{\ww}\same\ls
\notag\\&
\ge\frac{w_\ell\qw}{w_\ell\qw+S+1-\ell}
= \frac{1}{w_\ell(S+1-\ell)+1}.
\end{align}
\end{theorem}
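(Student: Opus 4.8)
The plan is to observe first that the two inequalities $\pi\Thoptw{\ww}\EJR\ls\ge\pi\Thoptw{\ww}\PJR\ls\ge\pi\Thoptw{\ww}\same\ls$ are immediate from \eqref{same-pjr} in \refT{Tsame-pjr}, so the whole statement reduces to the single lower bound $\pi\Thoptw{\ww}\same\ls\ge 1/\bigpar{w_\ell(S+1-\ell)+1}$. As for every lower bound in the paper, I would establish this by exhibiting elections in the scenario $\samex$ that admit a bad outcome and have $\WWW/V$ arbitrarily close to (indeed equal to) the claimed value.

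I would use exactly $S+1$ candidates: the common list $\cA:=\set{A_1,\dots,A_\ell}$, voted for by all of $\cW$, together with $S+1-\ell$ further candidates $B_1,\dots,B_{S+1-\ell}$. Let $\cW$ have total weight $W$ (all casting the ballot $\cA$), and let $\cVW$ consist of $S+1-\ell$ voters, one voting for each singleton $\set{B_j}$; thus $V=W+(S+1-\ell)$ and $\WWW/V=W/(W+S+1-\ell)$. Since an outcome elects $S$ of the $S+1$ candidates, it simply drops one candidate, and by the symmetry among the $A_i$ and among the $B_j$ there are only two cases to compare. Writing $\Psi$ and $\psi$ as in the proof of \refT{Toptw} (so $\psi(k)-\psi(k-1)=w_k$ and $\psi(1)=w_1=1$), dropping some $A_i$ gives the bad outcome $\cE_0:=\set{A_1,\dots,A_{\ell-1},B_1,\dots,B_{S+1-\ell}}$ with total satisfaction $W\psi(\ell-1)+(S+1-\ell)$, while dropping some $B_j$ gives a good outcome with total satisfaction $W\psi(\ell)+(S-\ell)$.

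Hence $\cE_0$ maximizes the total satisfaction exactly when $W\psi(\ell-1)+(S+1-\ell)\ge W\psi(\ell)+(S-\ell)$, \ie{} when $Ww_\ell\le 1$. Whenever this holds, $\cE_0$ is a maximizer and therefore a possible — and bad — outcome for $\samex$ (\cf{} \refR{Rties}), so $\pi\Thoptw{\ww}\same\ls\ge W/(W+S+1-\ell)$. Since $W/(W+S+1-\ell)$ increases in $W$, I would let $W$ run up to the bound $Ww_\ell\le1$, taking $W=w_\ell\qw$ when $w_\ell>0$ and $W\to\infty$ when $w_\ell=0$ (in each case choosing $W$ rational and rescaling to integer vote counts by homogeneity, \cf{} \refRs{Rreal} and \ref{Rhomo}); the supremum of the right-hand side is then exactly $1/\bigpar{w_\ell(S+1-\ell)+1}$, which is the claimed bound.

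I do not anticipate a serious obstacle here. The only point deserving real care is that $\cE_0$ must maximize $\Psi$ over \emph{all} $S$-subsets, not merely beat a single rival, but the restriction to $S+1$ candidates collapses this to the single two-way comparison above. A minor bookkeeping issue is the possible irrationality of $w_\ell\qw$, which is absorbed by the homogeneity of the method together with the supremum form of \refD{Dpi}.
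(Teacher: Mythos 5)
Your proposal is correct and follows essentially the same route as the paper: the same election with $S+1$ candidates ($W$ weight on $\cA$ plus one singleton vote per $B_j$), the same two-way satisfaction comparison yielding the difference $Ww_\ell-1$, and the reduction of the first two inequalities to \eqref{same-pjr}. The only cosmetic difference is that you take $W=w_\ell\qw$ and invoke the tie, while the paper takes rational $W<w_\ell\qw$ and passes to the supremum; both give the stated bound.
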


\begin{proof}
Let $W\ge1$.
  Consider an election with $S+1$ candidates $\cA=\set{A_1,\dots,A_\ell}$
and $\cB=\set{B_1,\dots,B_{S+1-\ell}}$, and $V=W+S+1-\ell$ votes:
$W$ votes on $\cA$ (this is the set $\cW$)
and 1 vote on $\set{B_j}$ for each $j\in[S+1-\ell]$.

Let $\Ex$ be a set of $S$ candidates. 
Then, the total satisfaction  \eqref{sat} if $\Ex$ is elected is
\begin{align}
  \Psi(\Ex)=
  \begin{cases}
    W\psi(\ell)+S-\ell, & \Ex\supseteq\cA,
\\
  W\psi(\ell-1)+S+1-\ell, & \Ex\not\supseteq\cA
  \end{cases}
\end{align}
The difference between the two values is
\begin{align}
\bigpar{W\psi(\ell)+S-\ell}-\bigpar{  W\psi(\ell-1)+S+1-\ell}
  =Ww_\ell-1.
\end{align}
Hence, if $Ww_\ell<1$, then the elected set $\cE\not\supseteq\cA$; this is a bad
outcome for  $\samex$, and thus
\begin{align}\label{kaja}
  \pi\Thoptw\ww\same\ge\frac{\WWW}{V}=\frac{W}{W+S+1-\ell}.
\end{align}
We may here (as usual) 
let $W$ be any rational number $<w_\ell\qw$, since we may
multiply all numbers of votes by its denominator to obtain integer values;
hence, \eqref{kaja} holds also if we replace $W$ by $w_\ell\qw$, which
yields \eqref{tw3}, using also \eqref{same-pjr}.
\end{proof}

\begin{remark}\label{RThoptww}
It follows from \eqref{toptw} that, as \Stoo,
  \begin{align}\label{duva}
    S\pi\Thoptw\ww\PJR\is=
    S\pi\Thoptw\ww\same\is \to\max_{k\ge1}(kw_k).
  \end{align}
In particular, if $w_k>1/k$ for some $k$, then \eqref{toptw} or \eqref{duva}
implies that
$\pi\Thoptw\ww\PJR\is \ge 1/S$ for large $S$; thus, by \eqref{JR}, JR does
not hold.

On the other hand, 
\refT{Tw1} shows that if $w_k\le1/k$ for every $k$, then
$\pi\Thoptw\ww\PJR\is < 1/S$; thus, by \eqref{JR}, JR holds.

Hence, as shown by \citet{EJR}, JR holds for $\Thoptwx\ww$ if and only
if $w_k\le1/k$ for every $k$.

Furthermore, if $w_k<1/k$ for some $k$, then \eqref{tw3} implies 
$\pi\Thoptw\ww\EJR(k,S)\ge\pi\Thoptw\ww\PJR(k,S)>k/S$ for large $S$, and
thus $\Thoptwx\ww$ does not satisfy PJR or EJR.

Consequently, recalling \refT{TEJRTh},
the standard weights $w_k=1/k$ are the only weights that yield
a method satisfying PJR or EJR.
For EJR, this too  was shown in \cite{EJR}; for PJR it was then shown in
\cite{PJR2016,PJR2017}.
\end{remark}

\subsection{\Than{} with general weights}
We next study \Than{} with
general weights $\ww=(w_k)_0^\infty$.
We assume as above that $w_1=1$ and that $w_1\ge w_2\ge \dots \ge0$.

Some of the analysis in \refSS{SSTha} extends to this case.
Define $\ga_n(\ww)$ as in \refSS{SSTha}, \ie, as the minimum number of
(real-valued) votes in an election with (exactly) $n$ candidates such that all
are elected with scores $\ge1$.
Thus, $\ga_n(\ww)$ can be computed by a linear programming problem of the
type in
\eqref{X0}--\eqref{X3}, but with the fractions $1/k$ replaced by the weights
$w_k$ as coefficients. Clearly, for any $\ww$,
\begin{align}
  \ga_1(\ww)=1.
\end{align}

First, we note that \refT{TThaJR} extends, with the same proof.

\begin{theorem}\label{TThaJRw}
  For \Thanww{} $\ww$ and $\ell=1$\kol
  \begin{align}\label{tthajrw}
    \pi\Thaw\ww\tactic\is
&=   \pi\Thaw\ww\same\is
= \pi\Thaw\ww\PJR\is
=  \pi\Thaw\ww\EJR\is
\notag\\&
=\frac1{1+\ga_S(\ww)},
\qquad S\ge1,
  \end{align}
where $\ga_S(\ww)$ is given by the version of \eqref{LP} with weights $\ww$.
\end{theorem}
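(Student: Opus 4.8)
The plan is to follow the proof of \refT{TThaJR} essentially verbatim, replacing the coefficients $1/k$ by the weights $w_k$ throughout and $\ga_S$ by $\ga_S(\ww)$; the only structural property of the weights that the argument actually uses is the standing assumption $w_1=1$.

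First I would establish the lower bound for $\samex$, $\PJRx$ and $\EJRx$ (which coincide for $\ell=1$). Let $(x_\gs)_\gs$ be an optimal vector for the weighted version of \eqref{LP} with $n=S$, and build the election with candidates $A,B_1,\dots,B_S$, a single weight-$1$ vote on $\set A$ (this is $\cW$), and $x_\gs$ votes on $\set{B_i:i\in\gs}$ for each $\gs\subseteq[S]$. By the definition of $\ga_S(\ww)$ the candidates $B_1,\dots,B_S$ can all be elected, each with score $\ge1$, while $A$ receives score exactly $w_1=1$ from the vote in $\cW$; hence a possible outcome is $\cE=\set{B_1,\dots,B_S}$, which is bad for all three scenarios. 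Since $V=1+\ga_S(\ww)$, this yields the lower bound $\ge 1/(1+\ga_S(\ww))$, and the same example (now choosing $\cB$ disjoint from $\cAx:=\bigcup_{\gs\in\cW}\gs$) gives the identical lower bound for $\pi\Thaw\ww\tactic\is$, exactly as in \refT{TThaJR}.

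The substance is the matching upper bound for $\EJRx$, which by \eqref{same-pjr} and \eqref{tactic-same} then forces all four quantities to equal $1/(1+\ga_S(\ww))$. Consider any election with a bad $\EJRx$-outcome for $\ell=1$. Then some candidate $A$ is voted for by every $\gs\in\cW$, and $\cAx\cap\cE=\emptyset$. The decisive point, and the only place beyond the definition of $\ga_S(\ww)$ where the weights enter, is that no candidate on any $\cW$-ballot is ever elected, so the marginal contribution of each such ballot to the score of $A$ is $w_1=1$ at every stage; hence $A$ carries score $\ge\WWW$ throughout the count, and consequently each elected $B_i$ is elected with score $\ge\WWW$. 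As in the unweighted proof, I would then remove every never-elected candidate from all ballots: since such a candidate never enters $\cE$ during the count, the stagewise score of each $B_i$ is unchanged while all other scores only drop, so the bad outcome $\cE=\cB$ with $\cB:=\set{B_1,\dots,B_S}$ persists and every surviving $\cVW$-ballot is a subset of $\cB$. Because $B_i\notin\cAx$, the scores of the $B_i$ come entirely from $\cVW$; restricting to the $\cVW$-votes and rescaling by $1/\WWW$ produces exactly an instance of the weighted \eqref{LP} electing $B_1,\dots,B_S$ with scores $\ge1$, so $|\cVW|\ge\WWW\,\ga_S(\ww)$ and thus $\WWW/V\le 1/(1+\ga_S(\ww))$.

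I do not expect a genuine obstacle here: the argument never uses the specific values $1/k$, only $w_1=1$ (to pin the per-ballot contribution to $A$ at $1$) and the fact that deleting ballot entries cannot raise any score (valid for any nonnegative weights). The single point meriting a line of care is the deletion step — one must note that removing the never-elected candidates leaves the score of each $B_i$ unchanged at every stage, since the removed names never contribute to the relevant multiplicities, so both the order of election and the bad outcome are preserved. Monotonicity of $(w_k)$ plays no role in this particular statement, so the proof transfers without modification.
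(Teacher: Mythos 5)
Your proposal is correct and is exactly the paper's approach: the paper's entire proof of this theorem is the remark that Theorem~\ref{TThaJR} ``extends, with the same proof,'' and you have simply carried out that extension explicitly, correctly identifying that the only properties of the weights used are $w_1=1$ (to pin the contribution of each $\cW$-ballot to $A$ at $1$ throughout the count) and nonnegativity (so that deleting names cannot raise scores). Nothing is missing.
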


Also the proof of \refT{Tga} extends, and yields the following.

\begin{theorem}
  \label{Tgaw}
The sequence $\ga_n(\ww)$ is weakly increasing and subadditive.
In particular,
\begin{equation}\label{tgaw}
  \ga_n(\ww)\le\ga_{n+1}(\ww)\le\ga_n(\ww)+1, \qquad n\ge1.
\end{equation}
Furthermore,
\begin{equation}\label{tga2w}
  \frac{n}{\psi(n)}\le\ga_n(\ww)\le n,
\qquad n\ge1,
\end{equation}
where $\psi(n)=\sumkn w_k$.
\nopf
\end{theorem}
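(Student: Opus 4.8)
The plan is to mirror the proof of \refT{Tga} essentially verbatim, tracking how the general weights $\ww$ enter each step; the harmonic number $H_k=\sum_{j\le k}1/j$ there is exactly the per-ballot total score $\psi(k)=\sum_{j\le k}w_j$ specialized to $w_j=1/j$, so replacing $H$ by $\psi$ throughout should do the work. Let $\fE_n(\ww)$ denote the set of elections with exactly $n$ candidates, all of which are elected (in some order) with \Thanww{} scores $\ge1$, used to define $\ga_n(\ww)$.

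First I would redo the two combinatorial constructions. Given an election in $\fE_{n+1}(\ww)$ with $C_1,\dots,C_{n+1}$ elected in this order, deleting $C_{n+1}$ from every ballot gives an election in $\fE_n(\ww)$ with the same number of votes: since $C_{n+1}$ is elected last, removing it changes neither the set of already-elected candidates at the moments $C_1,\dots,C_n$ are elected nor their scores, and it only removes a losing option in rounds $1,\dots,n$. Hence $\ga_n(\ww)\le\ga_{n+1}(\ww)$. For subadditivity, relabel two elections in $\fE_m(\ww)$ and $\fE_n(\ww)$ to use disjoint candidate sets; their disjoint union lies in $\fE_{m+n}(\ww)$, giving $\ga_{m+n}(\ww)\le\ga_m(\ww)+\ga_n(\ww)$. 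Since a single vote on a lone candidate gives score $w_1=1$, we have $\ga_1(\ww)=1$, so this specializes to $\ga_{n+1}(\ww)\le\ga_n(\ww)+1$, and induction yields the upper bound $\ga_n(\ww)\le n$.

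The only step that genuinely needs adaptation is the lower bound. Let $w(\gs,i)\ge0$ be the contribution of ballot $\gs$ to the score of $C_i$ at the moment $C_i$ is elected; since each $C_i$ is elected with score $\ge1$, we get $\sum_\gs w(\gs,i)\ge1$. The key point is that the total contribution of a single ballot $\gs$ over the whole count is $\sum_{i=1}^n w(\gs,i)=\psi(|\gs|)$, because as the candidates on $\gs$ are successively elected it contributes $w_1,w_2,\dots,w_{|\gs|}$ in turn. As each ballot is a subset of the $n$ candidates we have $|\gs|\le n$, and since $w_k\ge0$ the function $\psi$ is weakly increasing, so $\psi(|\gs|)\le\psi(n)$. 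Summing over $i$ and exchanging the order of summation then gives
\begin{equation*}
n\le\sum_{i=1}^n\sum_\gs w(\gs,i)=\sum_\gs\sum_{i=1}^n w(\gs,i)\le\psi(n)\,|\cV|,
\end{equation*}
whence $|\cV|\ge n/\psi(n)$ and therefore $\ga_n(\ww)\ge n/\psi(n)$.

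I expect no real obstacle here: the argument is structurally identical to \refT{Tga}, and the single place requiring attention is confirming that $\psi$ is weakly increasing (which uses only the standing assumption $w_k\ge0$) so that the bound $\psi(|\gs|)\le\psi(n)$ holds for ballots of every size. Everything else is a direct transcription with $H_n$ replaced by $\psi(n)$.
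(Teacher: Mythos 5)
Your proposal is correct and follows exactly the route the paper intends: the paper gives no separate proof for this theorem, stating only that the proof of Theorem~\ref{Tga} extends, and your transcription (deletion of the last-elected candidate for monotonicity, disjoint union for subadditivity, and the weight-exchange argument with $H_n$ replaced by $\psi(n)$) is precisely that extension. The one point you flag as needing care — that $\psi$ is weakly increasing so $\psi(|\gs|)\le\psi(n)$ — is indeed the only place the general weights enter, and it holds by the standing assumption $w_k\ge0$.
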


\begin{corollary}\label{CThaw}
  For \Thanww{} $\ww$ and $\ell=1$\kol
For any scenario $\fS=\tacticx, \samex, \PJRx, \EJRx$,
  \begin{align}\label{cthaw}
\frac{1}{S+1} \le 
    \piS\Thaw\ww\is
\le\frac1{1+ S/\psi(S)},
\qquad S\ge1.
  \end{align}
\end{corollary}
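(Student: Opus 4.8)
The plan is to derive the corollary directly by combining the two immediately preceding results, so no genuinely new argument is required. First I would invoke \refT{TThaJRw}, which establishes that for each of the four scenarios $\fS=\tacticx,\samex,\PJRx,\EJRx$ the threshold takes the \emph{common} value $\piS\Thaw\ww\is=1/(1+\ga_S(\ww))$. This is the crucial reduction: it collapses all four scenarios to a single quantity, so that bounding $\ga_S(\ww)$ settles the corollary for every $\fS$ at once.

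Next I would apply \refT{Tgaw}, specifically the sandwich \eqref{tga2w} with $n=S$, which gives $S/\psi(S)\le\ga_S(\ww)\le S$, where $\psi(S)=\sumkn w_k$ (evaluated at $n=S$). Since the map $x\mapsto 1/(1+x)$ is strictly decreasing on $\ooo$, the upper estimate $\ga_S(\ww)\le S$ yields $\piS\Thaw\ww\is\ge 1/(S+1)$, while the lower estimate $\ga_S(\ww)\ge S/\psi(S)$ yields $\piS\Thaw\ww\is\le 1/(1+S/\psi(S))$. Taken together, these two inequalities are precisely the double bound \eqref{cthaw}.

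I expect no real obstacle here, as the statement is an immediate consequence of \refTs{TThaJRw} and \ref{Tgaw} combined with the monotonicity of $1/(1+x)$. The only point worth flagging explicitly is the uniformity across scenarios supplied by \refT{TThaJRw}: because all four thresholds already equal $1/(1+\ga_S(\ww))$, a single pair of bounds on $\ga_S(\ww)$ suffices, and one does not need to revisit the definitions of the individual scenarios $\tacticx,\samex,\PJRx,\EJRx$.
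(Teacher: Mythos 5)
Your proposal is correct and is essentially identical to the paper's proof, which likewise cites \refT{TThaJRw} for the common value $1/(1+\ga_S(\ww))$ and then applies the bounds \eqref{tga2w} from \refT{Tgaw}. The only addition is your explicit remark about the monotonicity of $x\mapsto 1/(1+x)$, which the paper leaves implicit.
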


\begin{proof}
  By \refT{TThaJRw} and the inequalities \eqref{tga2w}.
\end{proof}

For the weak method, we have a simple result.

\begin{theorem}
  \label{TThaweak1}
  For \Thaweakn{} and $\ell=1$\kol
For any scenario $\fS=\tacticx, \samex, \PJRx, \EJRx$,
  \begin{align}\label{tthajrw1}
    \piS\Thaw\weakw\is
=\frac1{S+1},
\qquad S\ge1.
  \end{align}
\end{theorem}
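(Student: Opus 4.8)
The plan is to read the result off directly from \refC{CThaw}, whose two bounds happen to coincide for the weak weights. The one input needed is the value of $\psi(S)$ for \Thaweakn{}. Recall that the weak method is the case $w_1=1$ and $w_k=0$ for $k\ge2$ (this is the weight vector for which a voter counts as ``satisfied'' as soon as a single one of its candidates is elected, and for which \refT{Tw1} and \refT{Tthoptw2} were stated). Consequently
\begin{equation*}
  \psi(S)=\sumkn[=]\,\text{\raisebox{0pt}{$\textstyle\sum_{k=1}^{S}w_k$}}=w_1=1,
\end{equation*}
so that $S/\psi(S)=S$.

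With this value in hand the proof is immediate: \refC{CThaw} gives, for each of the scenarios $\fS=\tacticx,\samex,\PJRx,\EJRx$,
\begin{equation*}
  \frac{1}{S+1}\le \piS\Thaw\weakw\is\le\frac{1}{1+S/\psi(S)}=\frac{1}{1+S}=\frac{1}{S+1},
\end{equation*}
so both bounds equal $\frac{1}{S+1}$ and \eqref{tthajrw1} follows.

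Equivalently, via \refT{TThaJRw} the claim is exactly that $\ga_S(\weakw)=S$. The upper bound $\ga_S(\weakw)\le S$ is the general estimate in \refT{Tgaw}, while the matching lower bound $\ga_S(\weakw)\ge S/\psi(S)=S$ is the other half of \eqref{tga2w}; intuitively, with the weak weights any single ballot contributes total score at most $\psi(n)=1$ spread over all elected candidates, so electing $S$ candidates each with score $\ge1$ forces at least $S$ votes. I expect no real obstacle here: all the substance is already contained in \refC{CThaw} (hence ultimately in \refTs{TThaJRw} and \ref{Tgaw}), and the only point requiring a word is the observation $\psi(S)=1$, which pins the general interval $[\tfrac1{S+1},\tfrac1{1+S/\psi(S)}]$ down to the single point $\tfrac1{S+1}$.
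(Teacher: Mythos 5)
Your proposal is correct and is essentially the paper's own proof: the paper also observes that $\psi(n)=1$ for the weak weights forces the two bounds $n/\psi(n)\le\ga_n(\weakw)\le n$ of \refT{Tgaw} to coincide, giving $\ga_n(\weakw)=n$, and then invokes \refT{TThaJRw} (explicitly noting \refC{CThaw} as an equivalent route). No gap; the only substantive input is the observation $\psi(S)=1$, exactly as you say.
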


\begin{proof}
  For the weak method, $\psi(n)=1$ for $n\ge1$, and thus \eqref{tga2w}
  yields
  \begin{align}
    \ga_n(\weakw)=n,
\qquad n\ge1.
  \end{align}
Thus the result follows by \refT{TThaJRw} (or \refC{CThaw}).
\end{proof}

\begin{example}
  \label{Ega2w}
For $n=2$, the system \eqref{X0}--\eqref{X3} with weights becomes,  
after deleting a redundant inequality, \cf{} \refFn{fredundant},
\begin{align}
x_1,x_2,x_{12}&\ge0  \label{2X0w} \\
x_1+x_{12}&\ge x_2+x_{12} \label{2X12w}\\
x_2+w_2 x_{12}&\ge1 \label{2X2w}.
\end{align}
Here \eqref{2X12w} simplifies to $x_1\ge x_2$.
It is clear that the minimum $\ga_2(\ww)$ of $x_1+x_2+x_{12}$ is obtained
with $0\le w_2x_{12}\le1$ (otherwise we could decrease $x_{12}$), and that
we have equalities in \eqref{2X12w} and \eqref{2X2w} (otherwise we could
decrease $x_1$ or $x_2$, respectively).
Hence, the optimum is given by $x_1=x_2=1-w_2x_{12}$, and we obtain
\begin{align}\label{ga2w}
  \ga_2(\ww) = \min_{0\le x\le w_2\qw} \bigpar{2(1-w_2x)+x}
= \min\bigpar{2,w_2\qw}
=
\begin{cases}
  2, & 0\le w_2\le\frac12,
\\
w_2\qw,& \frac12\le w_2\le 1.
\end{cases}
\end{align}

\refT{TThaJRw} and \eqref{ga2w} yield,
for $\fS=\tacticx, \samex, \PJRx, \EJRx$,
\begin{align}\label{pi12w}
  \piS\Thaw\ww(1,2)
=
\begin{cases}
  \frac{1}3, &0\le w_2\le\frac12,
\\
\frac{w_2}{1+w_2},& \frac12\le w_2\le 1.
\end{cases}
\end{align}
\end{example}

\begin{remark}
  \label{RThawJR}
\refT{TThaweak1} shows in particular, by \eqref{JR}, that JR holds for
\Thaweakn. This was shown by \citet{EJR}, who also proved that
JR does not hold for any other weights. We show this in \refE{Eonlyweak}
below. 
\end{remark}

\begin{example}\label{Eonlyweak}
Suppose $\ww\neq\weakw$; in other words (by our assumptions on $\ww$) that
$w_2>0$. 

  Consider an election with $n\ge2$ candidates $C_0,\dots,C_{n-1}$, 
$n$  seats, 
and  $(n-1)^2$ votes:
1 vote for \set{C_0,C_i} and $n-2$ votes for \set{C_i} for each $i\in[n-1]$.
All candidates tie with $n-1$ votes each. If $C_0$ is elected first, then
the others are all elected with scores $n-2+w_2$. 
Hence,
\begin{align}
  \ga_n(\ww) \le \frac{(n-1)^2}{n-2+w_2}=n-\frac{nw_2-1}{n-2+w_2}.
\end{align}
In particular, 
$\ga_n(w)<n$ if $n>1/w_2$. (Cf.\ the precise \refE{Ega3} and the final claim
in \refT{Tga} for the standard weights.)

Furthermore, by subadditivity (\refT{Tgaw}), for any $m\ge1$,
\begin{align}
  \ga_{mn}(\ww) 
\le m \ga_{n}(\ww) 
\le mn-m\frac{nw_2-1}{n-2+w_2}.
\end{align}

In particular, for any $\ww\neq\weakw$, we can choose first $n$ such that
$\ga_n(\ww)<n$, and then $m$ such that $\ga_{mn}(\ww)<mn-1$. Then, with
$S:=mn$, \refT{TThaJRw} shows that
\begin{align}
  \pi\PJR\Thaw\ww\is >\frac{1}{S}.
\end{align}
Consequently, by \eqref{JR}, JR does not hold for 
\Thanww{} $\ww\neq\weakw$.
\end{example}

For $\ell>1$, we have even for the standard weights in \refSS{SSTha}
only partial results.
\refT{TThaPJR} extends to general weights.

\begin{theorem}\label{TThaPJRw}
  For \Thanww{} $\ww$ and $\lele$,
  \begin{align}\label{tthapjrw}
 \pi\Thaw\ww\EJR\ls
\ge \pi\Thaw\ww\PJR\ls
\ge    \pi\Thaw\ww\same\ls
\ge\frac{w\qw_\ell}{w\qw_\ell+\ga_{S+1-\ell}(\ww)}.
  \end{align}
If $w_\ell=0$, this is interpreted as
  \begin{align}\label{kajsa}
 \pi\Thaw\ww\EJR\ls
= \pi\Thaw\ww\PJR\ls
=    \pi\Thaw\ww\same\ls
=1.
  \end{align}
\nopf
\end{theorem}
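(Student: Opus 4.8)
The plan is to generalize the example built in the proof of \refT{TThaPJR}, the only change being that the $\ell$ votes placed on $\cA$ there are replaced by $w_\ell\qw$ (real-valued) votes; when $w_k=1/k$ one has $w_\ell\qw=\ell$, so the construction simultaneously recovers \refT{TThaJRw} (case $\ell=1$) and \refT{TThaPJR}. Since \Thanww{} $\ww$ is homogeneous (\refR{Rhomo}) we may use rational, and via \refR{Rreal} even real, numbers of votes. First I would handle the generic case $w_\ell>0$. Put $n:=S+1-\ell$ and let $(x_\gs)_\gs$ realize the minimum $\ga_n(\ww)$ in the weighted version of \eqref{LP}, an election on candidates $B_1,\dots,B_n$. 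I would then take an election with these together with $\ell$ further candidates $\cA=\set{A_1,\dots,A_\ell}$, letting $\cW$ consist of $W:=w_\ell\qw$ votes all cast for $\cA$ and $\cVW$ cast the $\ga_n(\ww)$ votes on the $B_i$ as in the minimizing configuration. Then $V=W+\ga_n(\ww)=w_\ell\qw+\ga_{S+1-\ell}(\ww)$ and $\WWW=w_\ell\qw$, so that $\WWW/V$ is exactly the right-hand side of \eqref{tthapjrw}.

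The crux is to check that $\set{A_1,\dots,A_{\ell-1},B_1,\dots,B_n}$ is a possible outcome, i.e.\ that some valid run elects precisely these $S=(\ell-1)+n$ candidates and omits one $A_i$. The key point I would stress is that the two groups decouple: the ballots of $\cW$ name only $A$-candidates and those of $\cVW$ only $B$-candidates, so the score of any $B_i$ depends solely on which $B$'s have already been elected (hence equals its value $\ge1$ in the isolated $\ga_n(\ww)$-election), while an unelected $A_i$, after $j$ of the $A$'s have been chosen, has score $W w_{j+1}$. In particular, after $\ell-1$ of the $A_i$ are elected the remaining one has score $W w_\ell=1$. Since the $\ell-1$ successive $A$-scores $W w_1\ge\dots\ge W w_{\ell-1}$ are all $\ge1$ (because $w_j\ge w_\ell$) and every $B$-score is $\ge1$, the greedy procedure — electing at each step a maximal-score candidate, and breaking the final tie at score $1$ in favour of a $B_i$ — elects exactly $A_1,\dots,A_{\ell-1}$ and $B_1,\dots,B_n$ and never the last $A_i$; indeed, whenever that last $A_i$ attains the maximum value $1$ there is still an unelected $B_i$ of score $\ge1$ to take instead. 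This is a bad outcome for $\samex$ with the set $\cA$, and \eqref{same-pjr} upgrades the resulting bound on $\pi\Thaw\ww\same\ls$ to \eqref{tthapjrw}. I expect this decoupling-plus-tie verification to be the only real obstacle, precisely the step left implicit in \refT{TThaPJR}.

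Finally I would treat the degenerate case $w_\ell=0$. Monotonicity of $\ww$ then forces $w_k=0$ for all $k\ge\ell$, so a $\cW$-ballot contributes $0$ to the score of every candidate beyond its $(\ell-1)$-th elected name; thus $\cW$ can confer positive score on at most $\ell-1$ of its candidates, no matter how large $\WWW$ is. Concretely, for arbitrary $W\ge1$ I would take $\cW$ to be $W$ votes on $\cA=\set{A_1,\dots,A_\ell}$ and adjoin $S$ ``filler'' candidates, each receiving one vote on a singleton ballot. For $W$ large the method first elects the (at most $\ell-1$) candidates of $\cA$ that have positive score; thereafter every remaining $A_i$ sits at score $0$ while each unused filler still has score $1$, so the remaining seats go to fillers and at most $\ell-1<\ell$ members of $\cA$ are elected — a bad outcome with $\WWW/V=W/(W+S)\to1$ as $W\to\infty$. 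Hence $\pi\Thaw\ww\same\ls=1$, and \eqref{same-pjr} gives the whole chain \eqref{kajsa}.
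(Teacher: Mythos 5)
Your construction is exactly the paper's: the proof there reads, in full, ``The same as for \refT{TThaPJR}, taking $w_\ell^{-1}$ votes on $\cA$ if $w_\ell>0$, and otherwise an arbitrary number,'' which is precisely your two cases. Your verification that $\set{A_1,\dots,A_{\ell-1},B_1,\dots,B_{S+1-\ell}}$ is a possible outcome (decoupling of the $A$- and $B$-scores, the tie of the last $A_i$ at score $Ww_\ell=1$, and the existence of an unelected $B_j$ with score $\ge1$ at every stage) correctly fills in the step the paper leaves implicit in \refT{TThaPJR}, and your filler-candidate variant for $w_\ell=0$ is an equally valid instance of ``an arbitrary number'' of votes on $\cA$.
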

\begin{proof}
The same as for \refT{TThaPJR},
taking $w\qw_\ell$ votes on $\cA$ if $w_\ell>0$, and otherwise an arbitrary
number. 
\end{proof}

In particular, \eqref{kajsa} holds for \Thaweakn.
For the weak method, we have also a simple result for $\pix\tactic$.

\begin{theorem}\label{Tthaweak}
    For \Thaweakn{}\kol
\begin{equation}\label{taw1l}
  \pix\Thaw{\weakw}\tactic\ls
=\frac{\ell}{S+1},
\qquad\lele.
\end{equation}
Furthermore, if\/ $2\le\ell\le S$, then 
\begin{equation}\label{taddw2}
  \pi\Thaw{\weakw}\same\ls
=
  \pi\Thaw{\weakw}\PJR\ls
=
  \pi\Thaw{\weakw}\EJR\ls
=1.
\end{equation}
\end{theorem}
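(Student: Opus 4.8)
The plan is to prove the two displayed identities separately, since each follows from a different earlier result and requires a quite different argument.

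For \eqref{taw1l} I would simply combine \refTs{TThaweak1} with \ref{Tsplit}. Indeed, \refT{TThaweak1} (which concerns precisely $\ell=1$) gives $\pi\Thaw\weakw\tactic\is=\frac{1}{S+1}$, so in particular $\pi\Thaw\weakw\tactic\is\le\frac{1}{S+1}$. This is exactly the hypothesis of the ``in particular'' clause of \refT{Tsplit}\ref{Tsplit=x}, applied with $\MMMx=\Thawx\weakw$, which therefore yields $\pix\Thaw\weakw\tactic\ls=\frac{\ell}{S+1}$ for all $\lele$. No further computation is needed; the whole content is carried by the vote-splitting strategy already packaged in \refT{Tsplit}.

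For \eqref{taddw2}, valid for $2\le\ell\le S$, I would exhibit an explicit bad outcome, exactly parallel to the proof of \refT{Tthoptw2} for the optimization variant. Recall that for the weak weights $w_1=1$, $w_k=0$ ($k\ge2$) a ballot's satisfaction becomes $1$ as soon as one of its candidates is elected and does not grow thereafter; equivalently, at each step \Thaweakn{} elects a candidate lying on a maximal number of ballots that are still unsatisfied. Fix a large integer $W$ and consider candidates $A_1,\dots,A_\ell,B_1,\dots,B_{S-1}$ with $V=W+S-1$ votes: the set $\cW$ consists of $W$ voters all casting the ballot $\set{A_1,\dots,A_\ell}$, and there is one voter for each $\set{B_i}$. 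Initially each $A_i$ lies on $W$ unsatisfied ballots and each $B_i$ on one, so for $W>1$ some $A$-candidate, say $A_1$, is elected first; once $A_1$ is elected every ballot of $\cW$ is satisfied, so each remaining $A_i$ contributes $0$ while each $B_i$ still contributes $1$, and hence $B_1,\dots,B_{S-1}$ take the remaining $S-1$ seats. The outcome $\set{A_1,B_1,\dots,B_{S-1}}$ contains only one member of $\cA=\set{A_1,\dots,A_\ell}$, which (since $\ell\ge2$) is a bad outcome for $\samex$; thus $\pi\Thaw\weakw\same\ls\ge W/(W+S-1)$. Letting $W\to\infty$ gives $\pi\Thaw\weakw\same\ls=1$, and then the chain \eqref{same-pjr} forces $\pi\Thaw\weakw\PJR\ls$ and $\pi\Thaw\weakw\EJR\ls$ to equal $1$ as well.

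Neither half presents a genuine obstacle: \eqref{taw1l} is a one-line citation of \refTs{TThaweak1} and \ref{Tsplit}, while \eqref{taddw2} is the same short worst-case construction used for \refT{Tthoptw2}. The only point deserving a moment's care is the dynamics of the example in \eqref{taddw2} — verifying that after the first $A$-candidate is chosen the ballots of $\cW$ go inert (because $w_2=0$), forcing every remaining seat onto the $B$-candidates — but this is immediate from the definition of the weak weights and mirrors the behaviour already analysed for the weak optimization method.
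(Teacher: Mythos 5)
Your proof is correct. For \eqref{taw1l} you do exactly what the paper does: cite \refT{TThaweak1} together with \refT{Tsplit}\ref{Tsplit=x}. For \eqref{taddw2} the paper simply observes that it is the special case $w_\ell=0$ of \eqref{kajsa} in \refT{TThaPJRw}, whereas you re-prove it directly by transplanting the example from the proof of \refT{Tthoptw2} ($W$ ballots on $\set{A_1,\dots,A_\ell}$ plus one singleton ballot on each of $B_1,\dots,B_{S-1}$) and checking the dynamics of the weak addition method. Both routes are valid; yours is self-contained and has the small advantage that the bad outcome $\set{A_1,B_1,\dots,B_{S-1}}$ is essentially forced (up to which $A_i$ wins the first seat), while the example behind \eqref{kajsa} uses $S+1-\ell$ candidates $B_j$ and relies on the remaining zero-score candidates and tie-breaking to fill the last seats. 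The cost is only a little redundancy, since \eqref{kajsa} was already available.
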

\begin{proof}
First, \eqref{taddw2} is a special case of \eqref{kajsa}.

Furthermore, \eqref{taw1l} follows from \refTs{TThaweak1} and
\ref{Tsplit}\ref{Tsplit=x}.  
\end{proof}

\section{Ordered ballots: PSC and STV}\label{SPSC}

In this section we consider election methods with ordered  ballots.
First, recall that $\pi\party$, $\pi\same$ and $\pi\tactic$ in
\refDs{Dparty2}, \ref{Dsame} and \ref{Dtactic} apply to ordered ballots too,
and that the inequalities in \refT{Tps} hold.

Furthermore, let us
return to DPC and PSC, used as an example in \refS{S:intro}, and define the
corresponding threshold. 
(We use PSC to denote it, since PSC is used in
recent literature, \eg{} \cite{ElkindEtal2017} and \cite{AzizLee}.)

\begin{definition}[PSC, ordered ballots: $\pi\PSC$]\label{DPSC}
Suppose that all voters in $\cW$ put the same $m\ge\ellx$ candidates
(not necessarily in the same order) 
as the top $m$ candidates in their preference listings.
A good outcome is when at least $\ellx$ of these candidates are elected.  
\end{definition}

As said in \refS{S:intro}, the Droop Proportionality Criterion (DPC) 
formulated by
\citet{Woodall:properties} is equivalent to $\pi\DPC\ls\le \ell/(S+1)$.

Furthermore, \citet{Tideman} formulated what he called 
``proportionality for solid coalitions'' (PSC), which easily is seen to be
equivalent to $\pi\DPC\ls\le \ell/S-$, and thus in practice to
$\pi\DPC\ls< \ell/S$ for $\ell<S$, \cf{} \refTx{Tx5}.
\xfootnote{\label{fnDummett}%
\citet{Tideman} refers to \citet[p.~282]{Dummett}, and attributes
both the property and the name ``proportionality for solid coalitions'' to
him. This seems inaccurate.  
\citet[pp.~282--283]{Dummett} discusses this type of property, but he does
not give it a name 
(and does not use the term ``solid coalition'' used by
\citet{Tideman}). Moreover, \citet{Dummett} talks about (in our
notation) a set of $\ell Q$ voters, where $Q$ is the quota, so ignoring
effects of rounding the quota, Dummett's property is the same as
DPC in \citet{Woodall:properties}, and is thus essentially 
equivalent to $\pi\DPC\ls\le \ell/(S+1)$.
See also \refT{TSTV}.
}

DPC and PSC are properties satisfied by STV, see \refT{TSTV} below, but
more or less only by STV; the properties depend on the possibility of
eliminations in the election method, so that, in \refD{DPSC} 
(for sufficiently large $\WWW$), 
even if the votes of $\cW$ are split on
a large number $m$ of candidates, they will by eliminations be concentrated
on (at least) $\ell$ candidates that become elected. 
For election methods without eliminations, for example \phragmen's and
Thiele's, this does not happen, and even a large set of voters $\cW$ may
fail to be represented by splitting their votes on too many candidates, even
in the scenario $\PSCx$ in \refD{DPSC} above, see the trivial result in 
\refT{TPhrPSCbad} below.
Therefore, also a weak version of PSC requiring $m=\ell$ 
has been studied, \eg{} by
\citet{AzizLee}.
This leads to the following definition.

\begin{definition}[weak PSC, ordered ballots: $\pi\wPSC$]\label{DwPSC}
Suppose that all voters in $\cW$ put the same $\ellx$ candidates
(not necessarily in the same order) 
as the top $\ellx$ candidates in their preference listings.
A good outcome is when all these candidates are elected.  
\end{definition}

\begin{remark}  \label{RPSC}
The property called ``weak $q$-PSC'' in \citet{AzizLee} is, with $q=\gk V$,
equivalent to 
$\pi\wPSC\ls<\ell\gk$, or, more precicely, see \refR{Rrefined}, 
$\pi\wPSC\ls\le (\ell\gk)-$.
Furthermore,
the property called ``Proportionality for Solid Coalitions'' in
\cite{ElkindEtal2017} 
uses this (weak) version (and is the same as weak $(V/S)$-PSC in 
\cite{AzizLee});
hence it is in our notation
$\pi\wPSC\ls\le (\ell/S)-$
(or, in practice, $\pi\wPSC\ls<\ell/S$, $\ell<S$).
\xfootnote{
As remarked by \cite{AzizLee},
\cite{ElkindEtal2017} 
inaccurately attributes this weak version of PSC
to \citet{Dummett}; furthermore,  
\cite{ElkindEtal2017} 
says that Dummett's original proposal
is, in our notation, 
$\pi\wPSC\ls\le (\ell/S)-$.
This seems to be doubly inaccurate, since \citet[p.~283]{Dummett} uses the
stronger 
PSC and the threshold $\ell Q/V\approx \ell/(S+1)$, see
\refFn{fnDummett}.
}
Moreover, three further definitions in \cite{ElkindEtal2017} 
also fit the weak version:
``Solid coalitions'' is the special case $\ell=1$:
$\pi\wPSC(1,S)<1/S$;
``Unanimity'' is the case $\ell=S$: $\pi\wPSC(S,S)\le 1-$;
``Fixed Majority'' is $\pi\wPSC(S,S)\le 1/2$.
\end{remark}

\begin{theorem}
  \label{Tsame-psc}
For any election method with ordered  ballots, and $1\le\ell\le S$,
\begin{align}
  \pi\party\ls &\le \pi\same\ls 
\le \pi\wPSC\ls 
\le \pi\PSC\ls 
.\label{same-psc}
\end{align}
\end{theorem}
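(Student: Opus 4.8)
The plan is to establish the three inequalities in \eqref{same-psc} one at a time, in each case by the same instance-comparison device already used for \refTs{Tps}, \ref{Tpjr-ejr} and \ref{Tsame-pjr}. To prove $\pi_{\fS_1}\ls\le\pi_{\fS_2}\ls$ it suffices, by the definition \eqref{pi}, to show that every election satisfying the scenario $\fS_1$ with a bad outcome is (or yields) an election satisfying $\fS_2$ with a bad outcome and the same proportion $\WWW/\abs\cV$; then the supremum defining $\pi_{\fS_2}\ls$ is taken over at least as large a set of instances. The first inequality $\pi\party\ls\le\pi\same\ls$ requires nothing new, since it is exactly \eqref{party-same} in \refT{Tps}.

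For the middle inequality $\pi\same\ls\le\pi\wPSC\ls$ I would start from an arbitrary election satisfying $\samex$ (\refD{Dsame}) with a bad outcome: all voters in $\cW$ cast the same ordered ballot $\cA$ with $\abs\cA\ge\ell$, and fewer than $\ell$ candidates of $\cA$ are elected. Let $\cA'$ be the set of the top $\ell$ candidates of this common ordered list. Since all voters in $\cW$ share the same ordered ballot, they share the same top $\ell$ candidates $\cA'$, so the very same election is an instance of $\wPSC$ (\refD{DwPSC}). To see that its outcome is bad for $\wPSC$, note that if all of $\cA'$ were elected, then, because $\cA'\subseteq\cA$ and $\abs{\cA'}=\ell$, at least $\ell$ candidates of $\cA$ would be elected, contradicting badness for $\samex$. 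Hence not all of $\cA'$ is elected, a bad $\wPSC$ outcome with the same $\WWW/\abs\cV$, and \eqref{pi} gives the inequality.

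For the last inequality $\pi\wPSC\ls\le\pi\PSC\ls$ I would simply observe that $\wPSC$ is the special case $m=\ell$ of $\PSC$: in \refD{DPSC} with $m=\ell$ the voters in $\cW$ all have the same $\ell$ top candidates, and the good outcome ``at least $\ell$ of these $m=\ell$ candidates elected'' coincides with the good outcome ``all $\ell$ elected'' of \refD{DwPSC}. Thus every instance of $\wPSC$, together with its notion of good and bad outcome, is an instance of $\PSC$, so each bad $\wPSC$ outcome is a bad $\PSC$ outcome with the same proportion, and \eqref{pi} yields the inequality.

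The only step carrying genuine content is the middle one, and even there the single point to watch is the passage from the possibly longer common list $\cA$ to its top-$\ell$ subset $\cA'$. I do not expect a real obstacle: the inclusion $\cA'\subseteq\cA$ makes the implication ``bad for $\samex$ $\Rightarrow$ bad for $\wPSC$'' immediate, and no reshuffling of ballots or construction of auxiliary elections is needed, since the same election serves as the witness in both scenarios.
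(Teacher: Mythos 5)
Your proposal is correct and follows essentially the same route as the paper: the first inequality is quoted from Theorem~\ref{Tps}, and the other two are obtained by observing that each bad instance of the smaller scenario is a bad instance of the larger one, so the supremum in \eqref{pi} is taken over a larger set. The paper states the two scenario inclusions in one line each without spelling out the passage from the common list $\cA$ to its top-$\ell$ prefix; your more explicit treatment of that point is the only difference and is entirely consistent with the intended argument.
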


\begin{proof}
The first inequality is \eqref{party-same} in \refT{Tps}.

The second inequality follows since an instance of the scenario $\samex$ also
is an instance of $\wPSCx$, and if it is bad for $\samex$ then it is bad for
$\wPSCx$. 

Similarly, the third inequality follows because an instance of $\wPSCx$ with
bad outcome is an instance of $\PSCx$ with bad outcome.
\end{proof}

\citet{Dummett} and \citet{Woodall:properties} 
both stress that STV satisfies the DPC type
of property. In our notation, we can state their results as
\begin{equation}\label{dpc}
  \pi\STV\PSC\ls\le\frac{\ell}{S+1},
\end{equation}
if the Droop quota is used and we ignore rounding.
More precisely,
\cite{Woodall:properties} assumes that the unrounded Droop quota $V/(S+1)$
is used, and then \eqref{dpc} holds.
On the other hand, \cite{Dummett} assumes that the quota $Q$ is larger
(as it usually is in practice), and states the result that
$\ell Q$ voters 
are enough to guarantee $\ell$ seats.
This, in fact, holds for any quota $Q>V/(S+1)$, and can be stated as follows.
\begin{theorem}[\citet{Dummett}]\label{TSTVD}
  Any (reasonable) version of STV with quota $Q=r V$, where $r> 1/(S+1)$, 
satisfies
  \begin{align}\label{tstvD}
\pi\STV\PSC\ls<r\ell,
\qquad\lele.
  \end{align}
Thus, a set $\cW$ of at least $\ell Q$ voters forming a ``solid coalition''
in the sense of $\PSCx$ will get at least $\ell$ of its candidates elected. 
\end{theorem}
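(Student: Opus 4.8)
The plan is to prove the theorem in the contrapositive, sharp form: I will show directly that any solid coalition $\cW$ with $\WWW\ge \ell Q$ always secures at least $\ell$ seats. This gives $\pi\STV\PSC\ls\le(r\ell)-$, equivalently the stated strict inequality $\pi\STV\PSC\ls<r\ell$ (recall from \refR{Rrefined} that a bound with the $-$ sign is what ``$<$'' records; and since a reasonable $\STVx$ elects every candidate that \emph{reaches} the quota $Q$, even the boundary case $\WWW=\ell Q$ forces a good outcome). Write $\cA$ for the common set of $m\ge\ell$ top candidates, call a candidate of $\cA$ \emph{active} while it is neither elected nor eliminated, and let $e$ be the number of candidates of $\cA$ already elected and $a$ the number still active.

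The heart of the argument is one invariant on the $\cW$-votes. Since every voter in $\cW$ ranks all of $\cA$ above every other candidate, a $\cW$-vote can reach a candidate outside $\cA$ only after \emph{all} of $\cA$ has been resolved; hence, as long as some candidate of $\cA$ is active, each $\cW$-vote either sits on an active candidate of $\cA$ or has been absorbed into the retained quota of an elected candidate of $\cA$. Because an elected candidate retains value exactly $Q$ and transfers the surplus, at most $eQ$ of the $\cW$-votes are absorbed. Writing $U$ for the total $\cW$-weight supporting active candidates of $\cA$, this yields the invariant
\begin{equation*}
 U \ge \WWW - eQ \ge (\ell-e)Q, \qquad\text{while } a\ge1.
\end{equation*}
I would stress that this uses only that transfers respect the preference order and that an elected candidate keeps exactly $Q$; it therefore survives every reasonable variant (fractional, Gregory, or random surplus transfers), and if the final candidates are seated without reaching $Q$, even less is absorbed and the bound only improves.

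Next I would run the count tracking the deficit $d:=\ell-e$. At any elimination step no active candidate reaches the quota, so each active candidate of $\cA$ carries fewer than $Q$ votes; together with $U\ge(\ell-e)Q$ spread over the $a$ active ones this forces $aQ>U\ge(\ell-e)Q$, i.e.\ $a\ge\ell-e+1=d+1$. Thus an elimination leaves $a\ge d$, while electing a candidate of $\cA$ lowers $d$ by one and preserves the invariant automatically. Consequently the candidates of $\cA$ can never all be removed while $d>0$: once only $d$ of them remain active they can no longer be eliminated (that would need $a\ge d+1$), they collectively hold $U\ge dQ$, so at least one reaches the quota and is elected, lowering $d$ and repeating until $d=0$. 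Hence at least $\ell$ candidates of $\cA$ are elected.

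I expect the termination bookkeeping, rather than the invariant, to be the main obstacle. Two points need care. First, one must confirm that the surviving candidates of $\cA$ are actually seated before the $S$ seats run out: this is exactly where $r>1/(S+1)$ enters, since $(S+1)Q>V$ guarantees that at most $S$ candidates can ever stand at or above the quota, so the process fills precisely $S$ seats and a candidate of $\cA$ carrying $\ge Q$ votes is genuinely elected rather than passed over. Second, one must treat the stopping rule ``remaining candidates $=$ remaining seats'': in that event all still-active candidates — including the $a\ge d$ survivors from $\cA$ — are elected outright, again giving $e+a\ge\ell$. Checking these two points uniformly across the various ``reasonable'' STV implementations is the only genuinely delicate part and completes the proof.
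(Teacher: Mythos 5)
Your proof is correct, and it takes a genuinely different route from the paper's. The paper deduces this theorem in one line from the sharper \refT{TSTV}, whose proof is a snapshot argument: it examines the single critical moment at which the number of elected-plus-hopeful candidates of $\cA$ would drop below $\ell$, and combines the resulting bound on $\WWW$ with a count of $|\cVW|$ at that same moment to extract the exact threshold \eqref{tstv}, of which \eqref{tstvD} is an immediate consequence. You instead maintain a running invariant ($U\ge(\ell-e)Q$, hence $a\ge\ell-e$, as long as some candidate of $\cA$ is a hopeful) and conclude qualitatively that $\WWW\ge\ell Q$ forces a good outcome. The key inequality is the same in both proofs --- $\cW$-ballots cannot leave $\cA$ while a member of $\cA$ remains hopeful, and each elected member of $\cA$ absorbs at most $Q$ of them, which is precisely \eqref{ap}--\eqref{apa} --- but the paper's organization buys the exact value of $\pi\PSC\STV\ls$ (and thus the strict inequality as an identity of real numbers, rather than your $\le(r\ell)-$, though \refR{Rrefined} makes these interchangeable), while yours buys a self-contained, more elementary argument that isolates exactly where $r>1/(S+1)$ enters. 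One caution on your first ``delicate point'': the scenario to exclude is not only an at-quota candidate of $\cA$ being passed over, but also the seats filling up while every active candidate of $\cA$ is \emph{individually} below $Q$ (possible when $a>d$), where your pigeonhole gives nothing. The clean way to kill this --- and it is exactly the paper's second case --- is to note that the at least $S+1-\ell$ winners outside $\cA$ must each retain $Q$ votes drawn from $\cVW$ alone, so $|\cVW|\ge(S+1-\ell)Q$ and hence $\WWW\le V-(S+1-\ell)Q<\ell Q$, a contradiction; with that substitution your termination step, and the whole argument, is complete.
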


This is easy to see directly, but we instead prove the following
more precise result, yielding \refT{TSTVD} as a corollary.

\begin{theorem}\label{TSTV}
Consider any (reasonable) version of STV  with quota $Q=V/(S+\gd)$, where
$\gd\in\oi$,
Then, for $1\le \ell\le S$,
\begin{align}
  \pi\party\STV\ls
&
=  \pi\same\STV\ls
=  \pi\wPSC\STV\ls
=  \pi\PSC\STV\ls
\notag\\&
= \frac{\ell(S+2-\ell)-1+\gd}{(S+\gd)(S+2-\ell)}
=\frac{\ell}{S+\gd}-\frac{1-\gd}{(S+\gd)(S+2-\ell)}
\notag\\&
=\frac{\ell}{S+1}+\frac{(1-\gd)(\ell-1)(S+1-\ell)}{(S+\gd)(S+1)(S+2-\ell)},
\label{tstv}
\\\label{tstvtactic}
  \pix\tactic\STV\ls
&= \frac{\ell}{S+1}.
\end{align}
As a special case,
\begin{align}\label{tstv1}
  \pi\party\STV\is
&
=  \pi\tactic\STV\is
=  \pi\same\STV\is
=  \pi\wPSC\STV\is
=  \pi\PSC\STV\is
=\frac{1}{S+1}.
\end{align}
In particular, with the (unrounded) Droop quota $Q=V/(S+1)$,
\begin{align}\label{tstv2}
  \pi\party\STV\ls
&
=  \pi\tactic\STV\ls
=  \pi\same\STV\ls
=  \pi\wPSC\STV\ls
=  \pi\PSC\STV\ls
=\frac{\ell}{S+1}.
\end{align}
Furthermore, also with a rounded Droop quota $Q=V/(S+1)+O(1)$ with 
$Q\ge V/(S+1)$, 
\begin{align}\label{tstvx}
  \pix\party\STV\ls
&
=  \pix\tactic\STV\ls
=  \pix\same\STV\ls
=  \pix\wPSC\STV\ls
=  \pix\PSC\STV\ls
=\frac{\ell}{S+1}.
\end{align}
\end{theorem}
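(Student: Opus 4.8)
The plan is to prove the fourfold equality \eqref{tstv} by sandwiching between a lower bound for the smallest threshold $\pi\party\STV\ls$ and an upper bound for the largest, $\pi\PSC\STV\ls$, using the chain $\pi\party\STV\ls\le\pi\same\STV\ls\le\pi\wPSC\STV\ls\le\pi\PSC\STV\ls$ of \refT{Tsame-psc}. For the lower bound I would exhibit the extremal election that already makes \refT{Tquot} sharp: let $\cW$ be a party voting a list of at least $\ell$ candidates and carrying $(\ell-1+t)Q$ votes, where $Q=V/(S+\gd)$ and $t=\frac{S-\ell+1+\gd}{S-\ell+2}\in\oi$, together with $S-\ell+1$ further disjoint parties carrying $tQ$ votes each (the totals add to $(S+\gd)Q=V$). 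Running STV, the surpluses of $\cW$ elect exactly its first $\ell-1$ candidates, leaving its $\ell$-th candidate with $tQ$ votes; at that point $S-\ell+2$ candidates (this one and the $S-\ell+1$ rival leaders) are tied at $tQ$ votes for $S-\ell+1$ seats, so a tie-break may exclude $\cW$'s candidate. This is a bad outcome for the scenario $\partyx$ with $\WWW/V=(\ell-1+t)/(S+\gd)$ equal to the value in \eqref{tstv}; hence $\pi\party\STV\ls$ is at least that value. (Equivalently, in the party-list case STV reduces to the quota method $\Qx\gd$, so $\pi\party\STV\ls=\pi\party\Q\gd\ls$ by \refT{Tquot}.)

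The heart of the matter is the matching upper bound $\pi\PSC\STV\ls\le$ the value in \eqref{tstv}, which I expect to be the main obstacle. Take any election satisfying $\PSCx$ with a bad outcome: all voters in $\cW$ are solid for a set $\cA$ of $m\ge\ell$ candidates, yet only $k\le\ell-1$ of $\cA$ are elected. Because $\cW$ is solid, its $\WWW$ votes stay on candidates of $\cA$ as long as any candidate of $\cA$ is still active, so they never reach a candidate outside $\cA$ during that phase. Since $k<m$, at least one candidate of $\cA$ is eliminated; let $A^*$ be the last such, and freeze the count at the step just before $A^*$ is eliminated (this precedes any final fill-up, since the count continues afterwards). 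At that step every candidate elected so far retains exactly $Q$; write $j$ for the number of them outside $\cA$, so $k+j$ are elected, and as an elimination is occurring the number of active candidates exceeds the number $S-k-j$ of remaining seats, whence $n:=S-k-j\ge1$ and there are at least $n$ active candidates besides $A^*$. Let $\rho\le kQ$ be the part of $\WWW$ retained in the $k$ elected candidates of $\cA$; then $A^*$ carries at least its $\cW$-votes $\WWW-\rho$, and, being the eliminated (minimum) candidate, each of the $\ge n$ other active candidates (all outside $\cA$) carries at least as many votes as $A^*$.

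This yields the accounting inequality: the total vote mass on active candidates is at least $(n+1)(\WWW-\rho)$, while it is at most $V-(k+j)Q$, the non-retained, non-exhausted mass. With $V=(S+\gd)Q$ and $k+j=S-n$ this simplifies to $(n+1)(\WWW-\rho)\le(\gd+n)Q$, and using $\rho\le kQ$ I would conclude $\WWW/Q\le k+1-\frac{1-\gd}{n+1}$. The right-hand side is increasing in both $k\le\ell-1$ and $n\le S-k$, so its maximum $\ell-\frac{1-\gd}{S-\ell+2}$ is attained at $k=\ell-1$, $n=S-\ell+1$; dividing by $S+\gd$ gives exactly $\WWW/V\le$ the value in \eqref{tstv}. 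As this holds for every bad outcome, $\pi\PSC\STV\ls$ is at most that value, closing the sandwich and proving \eqref{tstv}.

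Finally, the remaining displays follow formally. For $\ell=1$ the value is $\frac1{S+1}$, and combined with $\pi\tactic\STV\is\le\pi\same\STV\is=\frac1{S+1}$ (\refT{Tps}) and the universal lower bound $\pi(1,S)\ge\frac1{S+1}$ (\refTx{Tx1}) this gives \eqref{tstv1}. Then \eqref{tstvtactic} is \refT{Tsplit}\ref{Tsplit=x} applied to $\pix\tactic\STV\is\le\frac1{S+1}$; and when $\gd=1$ the value equals $\frac\ell{S+1}$, so \refT{Tsplit}\ref{Tsplit=} upgrades $\pi\tactic\STV\ls\le\pi\same\STV\ls=\frac\ell{S+1}$ to the equality \eqref{tstv2}. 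For a rounded Droop quota $Q=V/(S+1)+O(1)$ homogeneity fails, but as \Vtoo{} the effective $\gd\to1$ and both the extremal example and the accounting bound tend to $\frac\ell{S+1}$, so the $\pix$-versions give \eqref{tstvx}; the routine limiting details I would leave to the reader.
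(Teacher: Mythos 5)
Your overall strategy — lower bound from the party-list/quota-method extremal configuration, upper bound for $\pi\PSC\STV$ by freezing the count at an elimination of a candidate of $\cA$ and comparing vote masses, then the formal deductions for $\tacticx$ and the special cases via \refTx{Tx1}, \refT{Tps} and \refT{Tsplit} — is essentially the paper's, and the concluding steps are fine. But the central accounting argument has two genuine gaps, both located at your choice of freeze point. First, a bad outcome does not force any candidate of $\cA$ to be eliminated: the count can end with all $S$ seats filled by candidates reaching the quota while one or more candidates of $\cA$ remain active but unelected, so your $A^*$ need not exist and "$k<m$ implies an elimination from $\cA$" is false. The paper treats this as a separate case: all of $\cW$'s votes then stay inside $\cA$ throughout, so at least $S+1-\ell$ outsiders are elected with $Q$ votes each drawn entirely from $\cVW$, which gives $\WWW\le(\ell-1+\gd)Q$ and hence the bound.

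Second, and more seriously, at the step just before the \emph{last} candidate $A^*$ of $\cA$ is eliminated, other candidates of $\cA$ may still be active (namely those of the $k$ winners from $\cA$ that are elected only later), and they carry part of $\WWW$; likewise only $k'\le k$ of them are yet elected, so the retained mass is $(k'+j)Q$, not $(k+j)Q$. Hence the claim that $A^*$ carries at least $\WWW-\rho$ is false: with $\cA=\set{A_1,A_2}$, $\ell=2$, and $\cW$ split evenly between the orders $A_1A_2$ and $A_2A_1$, at the moment $A_1$ is eliminated it holds only $\WWW/2$, after which $A_2$ receives the transfer and is elected. Since your lower bound $(n+1)(\WWW-\rho)$ on the active vote mass rests entirely on this claim, the derivation of $\WWW/Q\le k+1-\frac{1-\gd}{n+1}$ does not go through as written. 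The repair is either to also charge at most $Q$ to each still-active candidate of $\cA$ other than $A^*$ (they are unelected, hence below quota, giving $x\ge\WWW-kQ$ for the votes $x$ of $A^*$), or, as the paper does, to freeze instead at the elimination that reduces the number of surviving (elected or active) candidates of $\cA$ from $\ell$ to $\ell-1$: there the $\cW$-votes sit on at most $\ell$ candidates of $\cA$, of which $\ell-1$ hold at most $Q$ each, so the eliminated one holds $x\ge\WWW-(\ell-1)Q$, and the comparison with the $S+1-\ell$ surviving or elected outsiders yields the stated threshold.
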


\begin{proof}
  In the party list case, it is easy to see that
STV becomes equivalent to the quota method with
  quota $Q$,  which
  we denote by $\Qx\gd$. Hence
$\pi\party\STV\ls=\pi\party\Q\gd\ls$, which is given by \eqref{tquot}.
Consequently, recalling also the general inequalities \eqref{same-psc}, to
prove \eqref{tstv}, it remains only to show that
$\pi\PSC\STV\ls\le\pi\STV\party\ls$.
In other words, somewhat informally, we want to show that the worst case 
in the party list case (or, equivalently, for the quota method), also is the
worst case for $\PSCx$.

Thus, consider an instance of $\PSCx$ with a bad outcome, 
and let $\cA$ be the set of $m=|\cA|\ge\ell$ common top candidates for $\cW$;
thus, at most
$\ell-1$ of the candidates in $\cA$ are elected. Let $\cD:=\cE\setminus\cA$,
so $|\cD|\ge S+1-\ell$.

Suppose that during some stage of the counting process, there are exactly
$\ell$ candidates in $\cA$ that have not yet been eliminated, 
and that the next event is that one of these, say $A_1$, is eliminated.
Let $k\le\ell-1$ be the number of candidates from $\cA$ elected so far; 
thus there are $\ell-k>0$  candidates from $\cA$ still remaining (including
$A_1$). 
Let further $m$ be the number of candidates in $\ccA$ that have been elected
so far.
There are also at least $S+1-\ell-m$ remaining candidates in $\ccA$, 
since otherwise all remaining candidates would be
elected, and thus $|\cA\cap\cE|=\ell$,
contrary to our assumption.

Let $x$ be the current number of votes counted for $A_1$. 
Thus $x<Q$, since $A_1$ is not elected.
Since there is still at least one remaining candidate from $\cA$, 
the (remaining) ballots in $\cW$ are all counted for some candidate in $\cA$
(by the assumption in $\PSCx$). Each elected candidate accounts for $Q$
votes, and each of the $\ell-k-1$
remaining candidate besides $A_1$ has currently less than $Q$ votes; hence
\begin{align}\label{ap}
  \WWW \le kQ + (\ell-k-1)Q+x = (\ell-1)Q+x
\end{align}
and  thus
\begin{align}\label{apa}
x\ge \WWW- (\ell-1)Q.
\end{align}
Similarly, 
the $m$ elected candidates in $\ccA$ account for $mQ$ votes, and
the at least $S+1-\ell-m$ 
remaining candidates in $\ccA$ have at least $x$ votes each
(otherwise $A_1$ would not be the next to be eliminated). Thus,
since so far only votes from $\cVW$ are counted for candidates in $\ccA$,
\begin{align}\label{apb}
  |\cVW| &\ge mQ + (S+1-\ell-m)x \ge (S+1-\ell)x
\notag\\&
\ge(S+1-\ell)\bigpar{\WWW-(\ell-1)Q}.
\end{align}
Consequently,
\begin{align}\label{apc}
  V=\WWW+|\cVW|
\ge(S+2-\ell)\WWW - (S+1-\ell)(\ell-1)Q.
\end{align}
and thus, recalling $Q=V/(S+\gd)$,
\begin{align}\label{anna}
  \frac{\WWW}{V} 
\le \frac{1}{S+2-\ell} +\frac{(\ell-1)(S+1-\ell)}{(S+\gd)(S+2-\ell)}
\end{align}
Note that the \rhs{} of \eqref{anna} equals, by simple algebra, 
$\pi\STV\party\ls=\pi\Q\gd\party\ls$ given by \eqref{tquot}.

We have shown that if the number of elected + remaining candidates from
$\cA$ drops below $\ell$, then \eqref{anna} must hold. 
Suppose now that there is a bad outcome such that this does not happen.
Then there is at least one remaining candidate from $\cA$ throughout the
counting, so votes from $\cW$ are only counted for candidates in $\cA$.
Furthermore, all $S$ elected have to reach the quota, since otherwise at the
end all remaining candidates would have been elected. In particular, at
least $S+1-\ell$ candidates from $\ccA$ are elected with $Q$ votes each, all
coming from $\cVW$, and thus
\begin{align}
  \label{vw}
|\cVW|\ge(S+1-\ell)Q.
\end{align}
Hence,
\begin{align}\label{vx}
  \WWW&=V-|\cVW|\le V-(S+1-\ell)Q=(S+\gd)Q-(S+1-\ell)Q
\notag\\&
=(\ell+\gd-1)Q \le\ell Q.
\end{align}
It follows from \eqref{vx} and \eqref{vw} that \eqref{apa} and \eqref{apb}
hold with $x=Q$; thus \eqref{apc} and \eqref{anna} hold in this case too.

We have shown that \eqref{anna} holds for every bad outcome. 
Consequently,
$\pi\STV\PSC\ls\le\pi\party\STV\ls$, which completes the proof of
\eqref{tstv}.

For $\pix\STV\tactic$,
consider first the case $\ell=1$, where by \eqref{tactic-same}
and \eqref{tstv},
\begin{align}\label{elf}
  \pi\STV\tactic\is \le \pi\STV\same\is=\frac{1}{S+1}.
\end{align}
Hence, \eqref{tstvtactic} follows by \refT{Tsplit}\ref{Tsplit=x}.

We immediately obtain as special cases, taking $\ell=1$ or $\gd=1$, 
\eqref{tstv1} and \eqref{tstv2}
with $\pi\tactic$ replaced by $\pix\tactic$. The equalities for 
$\pi\tactic$ in \eqref{tstv1} and \eqref{tstv2} then follow from
\eqref{pixpi} and \eqref{tactic-same}, \cf{} \eqref{elf}.

Finally, if $Q=V/(S+1)+O(1)$, define $\gd:=V/Q-S$, so that $Q=V/(S+\gd)$,
and note that $\gd\le1$ and $\gd=1+o(1)$ as \Vtoo.
Hence, \eqref{tstv} implies, for any of the scenarios there, that for a bad
outcome, $\WWW/V\le \ell/(S+1)+o(1)$ as \Vtoo, and thus
$\pix\STV\ls\le\ell/(S+1)$. A corresponding lower bound follows easily by the
party case, and thus \eqref{tstvx}  follows.
\end{proof}

\begin{proof}[Proof of \refT{TSTVD}]
  Let $\gd:=1/r-S<1$, so $r=1/(S+\gd)$.
Then, \eqref{tstv} holds and implies \eqref{tstvD}.
\end{proof}

\section{\phragmen's and Thiele's  ordered methods}
\label{SPhrTho}

We continue the study of the thresholds
$  \pi\party,
\pi\tactic,
\pi\same,
 \pi\wPSC,
 \pi\PSC
$
for election methods with ordered ballots.
In this section we consider \phragmen's and Thiele's election methods,
see \refApps{APhro} and \ref{ATho};
in \refS{SBorda} we study Borda methods.

\begin{problem}
  Further election methods for ordered ballots are described and studied in
  \eg{} \cite{ElkindEtal2017,FSST,AzizLee}.
Study   
$  \pi\party,
\pi\tactic,
\pi\same,
 \pi\wPSC,
 \pi\PSC
$
for them!
\xfootnote{
Some inequalities for $\pi\PSC$ and $\pi\wPSC$
follow from results in 
\cite{ElkindEtal2017,AzizLee}.
showing whether or not 
certain methods satisfy (weak) PSC and some related criteria, see \refR{RPSC}.
}
\end{problem}

\subsection{Two simple cases}
We note first that
in the party list case, we have the same result as for \phragmen's and
Thiele's methods for unordered ballots, \refT{TPhruparty}.

\begin{theorem}
  \label{TPhroparty}
For \phragmen's and Thiele's ordered methods\kol
\begin{align}\label{tPhroparty}
  \pi\party\Phro\ls=
  \pi\party\Tho\ls=
\pi\party\DH\ls=\frac{\ell}{S+1}.
\qquad\lele.\end{align}
\end{theorem}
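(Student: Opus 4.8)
The plan is to mirror the proof of \refT{TPhruparty}: I would show that, in the party list scenario of \refD{Dparty2}, both \phragmen's ordered method and Thiele's ordered method reduce to \DHn's method, so that $\pi\party\Phro\ls=\pi\party\Tho\ls=\pi\party\DH\ls$, and then read off the common value $\ell/(S+1)$ from \eqref{party-DH}. Since \refT{Tps} and the general definitions already give all the auxiliary inequalities, the entire content of the statement is this reduction.

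First I would make precise what the scenario means here. By \refD{Dparty2} the voters split into disjoint groups (parties), each casting identical ordered ballots consisting of that party's list in a fixed order; the lists are pairwise disjoint. The internal ordering of a party's list then only decides \emph{which} of its candidates are taken once the \emph{number} of seats for that party is fixed, while a good or bad outcome depends solely on how many candidates of $\cW$'s party are elected, hence only on the per-party seat counts. Thus it suffices to check that the per-party seat allocation produced by each ordered method agrees with \DHn. This is the key step, and where the actual verification lies; it must be read off the definitions in \refApps{APhro} and~\ref{ATho}. For \phragmen's ordered method, a party with $v$ identical ballots that has already filled $s$ positions carries total load $s$ on those $v$ ballots (disjointness keeps all of it on that party), so each ballot has load $s/v$, and electing its next candidate raises the maximum load to $(s+1)/v$; the next seat therefore goes to the party minimizing $(s+1)/v$, i.e.\ maximizing the \DHn{} quotient $v/(s+1)$. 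For Thiele's ordered method with weights $w_k=1/k$, the marginal satisfaction of taking one more candidate from such a party is $v\cdot w_{s+1}=v/(s+1)$, and the greedy choice again maximizes $v/(s+1)$. In both cases the sequential selection is exactly \DHn, which is the fact already recorded for these methods in \cite[Theorem 11.1]{SJV9}, \cite{Phragmen1895} and \cite{Thiele}; I would cite these rather than reprove the reduction in full.

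Granting the reduction, the conclusion is immediate: both methods distribute seats among parties exactly as \DHn{} does, good and bad outcomes in \refD{Dparty2} depend only on these distributions, so the three thresholds coincide and \eqref{party-DH} yields the value $\ell/(S+1)$ for $\lele$. I expect the only real obstacle to be the bookkeeping confirming that the \emph{ordered} refinements of \phragmen's and Thiele's methods do not disturb the quotient structure of the underlying seat allocation in the disjoint party list case; once that is in hand, the theorem is a one-line consequence of the known divisor-method threshold.
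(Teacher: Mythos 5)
Your proposal is correct and follows exactly the paper's route: the paper's proof likewise observes that in the party list case both ordered methods reduce to \DHn's method (citing \cite[Theorem 11.1]{SJV9}) and then reads off the value from \eqref{party-DH}. Your explicit load and marginal-score computations verifying the reduction are accurate but are left implicit in the paper.
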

\begin{proof}
It is easy to see 
that in the party list case, 
both methods reduce to D'Hondt's method
\cite[Theorem 11.1]{SJV9}.
Hence, the result follows from \eqref{party-DH}.
\end{proof}

Next we note that, as said in \refS{SPSC}, \phragmen's and Thiele's methods
do not satisfy any PSC condition, since they do not eliminate candidates.

\begin{theorem}\label{TPhrPSCbad}
For \phragmen's and Thiele's ordered methods\kol
  \begin{align}
    \pi\Phro\PSC\ls= 
    \pi\Tho\PSC\ls=1,
\qquad\lele.
  \end{align}
\end{theorem}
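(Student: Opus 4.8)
The plan is to establish the lower bounds $\pi\Phro\PSC\ls\ge1$ and $\pi\Tho\PSC\ls\ge1$, the matching upper bound being the trivial $\pi\ls\le1$ of \eqref{pipix}. Concretely, for every $p<1$ I will exhibit a $\PSCx$-instance (\refD{DPSC}) with a bad outcome in which $\WWW/V=p$, and then let $p\uparrow1$. The mechanism is exactly the one flagged before the theorem in \refS{SPSC}: since neither \phragmen's nor Thiele's ordered method eliminates or transfers votes, a coalition $\cW$ that spreads its first preferences over very many candidates never concentrates its support, so each of its candidates can be beaten even by a tiny, well-organised opposition.

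For the construction I invoke homogeneity (\refR{Rhomo}) to allow real vote weights and normalise $V=1$. Take $S$ opposition candidates $B_1,\dots,B_S$ and $m$ coalition candidates $A_1,\dots,A_m$, with $m\ge\ell$ to be chosen large. Let $\cVW$ have total weight $1-p$, split into $S$ equal groups, the $k$-th bullet-voting for $B_k$; let $\cW$ have weight $p$, split into $m$ equal groups, the $j$-th ranking $A_j$ first and then the remaining $A_i$ in any order, so that all of $\cW$ share the same top-$m$ set $\cA=\set{A_1,\dots,A_m}$. This is a legitimate instance of $\PSCx$ with $|\cA|=m\ge\ell$. I then choose $m>pS/(1-p)$, equivalently $(1-p)/S>p/m$.

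The key step is to show that in both methods the elected set is $\cE=\set{B_1,\dots,B_S}$, so that no candidate of $\cA$ is elected. For \phragmen's ordered method I use the load formulation of \refApp{APhro}: since no $A_j$ is ever elected, each $\cW$-ballot is stuck supporting its top choice $A_j$ throughout (its lower preferences never activate) and carries load $0$, while each un-elected $B_k$ is supported only by its own group, also at load $0$. Hence at every stage the minimal new maximum load incurred by electing an un-elected $B_k$ is $S/(1-p)$, whereas that of any $A_j$ is $m/p$, and $m>pS/(1-p)$ makes the former strictly smaller; by induction the first $S$ candidates elected are $B_1,\dots,B_S$. For Thiele's ordered method (\refApp{ATho}) the same example works verbatim: with no $A_j$ elected, each $\cW$-ballot always contributes its full first-preference weight $w_1=1$ to its $A_j$, so the running score of each $A_j$ stays $p/m$, whereas each un-elected $B_k$ has score $(1-p)/S>p/m$; thus the $S$ seats again all go to the $B_k$.

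In both cases none of the candidates in $\cA$ is elected, so the outcome is bad, and \eqref{pi} gives $\pi\Phro\PSC\ls,\pi\Tho\PSC\ls\ge\WWW/V=p$; letting $p\uparrow1$ yields the value $1$. The main obstacle is purely the bookkeeping of the sequential dynamics: one must verify that the attractiveness of the $A_j$ never improves as the $B_k$ are elected one by one, which holds precisely because no $A_j$ is elected, so the $\cW$-ballots' loads (resp.\ scores) never change. Finally, note that the supremum is not attained — if $\cW=\cV$ then every elected candidate lies in $\cA$ and a proportion exactly $1$ forces a good outcome — which is the ``$1-$'' refinement recorded in \refR{Rrefined}\eqref{-irrational}; but the stated value $\pi\ls=1$ is all that is required here.
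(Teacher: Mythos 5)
Your proposal is correct and takes essentially the same route as the paper: both exhibit a bad $\PSCx$ instance by spreading $\cW$'s first preferences thinly over many candidates $A_1,\dots,A_m$ (the paper takes the special case where each $\cW$-voter is herself a candidate ranked first, i.e.\ $m=\WWW$ with one first-preference vote each) against a small opposition concentrated on $S$ bullet-voted candidates, and both rest on the observation that since no $A_j$ is ever elected the $\cW$-ballots never transfer, so their loads (resp.\ scores) stay frozen. The only blemish is the closing aside, where the ``$1-$'' refinement should point to item (i) of \refR{Rrefined} rather than to the irrational-value case; this does not affect the proof.
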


\begin{proof}
Consider an election where each voter in $\cW$ also is a candidate, and
votes for herself  first, followed by all others in $\cW$ (in any order).
Suppose also that there are $S$ other candidates $\cB$, and $2S$ other
voters $\cVW$, with each candidate in $\cB$ being the first and only name on
2 ballots from $\cVW$. This fits the scenario $\PSCx$, provided $\WWW\ge\ell$,
and obviously both \phragmen's and Thiele's methods will elect $\cB$, which
is a bad outcome. Hence,
\begin{equation}
  \pi\PSC\Phro\ls,
  \pi\PSC\Tho\ls\ge \frac{\WWW}{\WWW+2S}
\end{equation}
for any $\WWW\ge\ell$, and the result follows.
\end{proof}

We will see below (\refT{TPhrowPSC} and \refC{CTho})
that the weak PSC $\pi\wPSC\ls<\ell/S$
is satisfied by \phragmen's method but not
by Thiele's.

\subsection{\phragmen's ordered method}

\phragmen's method has the optimal result
for weak PSC, as well as for the scenarios $\samex$ and $\tacticx$.

\begin{theorem}[\cite{SJV9}]\label{TPhrowPSC}
For \phragmen's ordered method\kol
For $1\le \ell\le S$,
\begin{align}\label{tphrowpsc}
  \pi\party\Phro\ls
&
=  \pi\tactic\Phro\ls
=  \pi\same\Phro\ls
=  \pi\wPSC\Phro\ls
=\frac{\ell}{S+1}.
\end{align}
\end{theorem}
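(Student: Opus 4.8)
The plan is to reduce the four claimed equalities to a single upper bound, namely $\pi\wPSC\Phro\ls\le\frac{\ell}{S+1}$, and then read off everything else from the general machinery. Indeed, \refT{TPhroparty} already gives $\pi\party\Phro\ls=\frac{\ell}{S+1}$, and the chain \eqref{same-psc} of \refT{Tsame-psc} reads $\pi\party\Phro\ls\le\pi\same\Phro\ls\le\pi\wPSC\Phro\ls$. Hence once the upper bound is in hand it squeezes $\pi\same\Phro\ls$ and $\pi\wPSC\Phro\ls$ between $\frac{\ell}{S+1}$ and $\frac{\ell}{S+1}$, forcing all three to equal $\frac{\ell}{S+1}$. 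For the tactical threshold, \eqref{tactic-same} gives $\pi\tactic\Phro\ls\le\pi\same\Phro\ls=\frac{\ell}{S+1}$ for every $\ell$, whereupon \refT{Tsplit}\ref{Tsplit=} upgrades this inequality to the equality $\pi\tactic\Phro\ls=\frac{\ell}{S+1}$. So the whole theorem rests on that one upper bound.

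The main step is the upper bound, which I would prove by a load-balancing argument patterned on the proof of \refT{TsamePhru} for the unordered method. Consider a bad outcome for the scenario $\wPSCx$, so $\abs{\cA}=\ell$, every voter in $\cW$ ranks the $\ell$ candidates of $\cA$ on top, and at most $k:=\abs{\cA\cap\cE}\le\ell-1$ of them are elected. The key structural observation, special to ordered ballots, is that a ballot in $\cW$ always supports (puts load on) only its highest-ranked not-yet-elected candidate, which --- since at least one candidate of $\cA$ is unelected throughout a bad outcome --- is always one of the $\ell$ candidates of $\cA$. Consequently the $\cW$-ballots never carry any load from the $S-k$ elected candidates outside $\cA$; writing $t$ for the final maximum load, all of that load $S-k$ falls on $\cVW$, giving $(V-\WWW)t\ge S-k\ge S+1-\ell$. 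On the other side, the $k$ elected candidates of $\cA$ each place total load $1$, so the load they deposit on $\cW$ is at most $k$, i.e.\ $\sum_{i\in\cW}x_i\le k$ for the final loads $x_i$. Finally, the $\cW$-ballots split, according to their top remaining choice, among the $\ell-k$ unelected candidates of $\cA$; for each such candidate the free voting power $t-x_i$ of its supporters totals at most $1$ (else it could have taken the last seat at a strictly smaller maximum load), and summing over the $\ell-k$ of them yields $\sum_{i\in\cW}(t-x_i)\le\ell-k$. Adding the two estimates gives $\WWW t\le k+(\ell-k)=\ell$, and combining with $(V-\WWW)t\ge S+1-\ell$ produces $\frac{\WWW}{V-\WWW}\le\frac{\ell}{S+1-\ell}$, which rearranges to $\frac{\WWW}{V}\le\frac{\ell}{S+1}$. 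Since this holds for every bad outcome, \eqref{pi} gives $\pi\wPSC\Phro\ls\le\frac{\ell}{S+1}$.

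The delicate point --- and the only place where ordered ballots genuinely differ from the unordered treatment --- is the bookkeeping of free voting power. In the unordered case all of $\cW$ simultaneously supports the single unelected candidate of $\cA\setminus\cE$, so the free-power estimate is one clean ``$\le 1$''; here the support of $\cW$ is distributed over the $\ell-k$ remaining candidates of $\cA$, so I must argue the ``$\le 1$'' bound for each of them and sum, and I must also justify the per-candidate bound itself. That bound relies on the monotonicity of loads (an unelected candidate's cheapest achievable maximum load only grows as the count proceeds) together with the fact that \phragmen's method always fills the next seat with the candidate attainable at the least maximum load; spelling this out shows that a total free power exceeding $1$ for the supporters of some unelected $A_j\in\cA$ would have let $A_j$ capture the final seat more cheaply than the candidate actually elected, a contradiction. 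Once these two estimates are confirmed, everything else is the routine squeezing described in the first paragraph.
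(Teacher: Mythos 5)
Your proposal is correct and follows essentially the same route as the paper: reduce everything to the upper bound $\pi\wPSC\Phro\ls\le\ell/(S+1)$ via \refT{TPhroparty}, \eqref{same-psc}, \eqref{tactic-same} and \refT{Tsplit}\ref{Tsplit=}, and prove that bound by the load argument of \refT{TsamePhru} with the modified free-voting-power estimate $\sum_{i\in\cW}(t-x_i)\le\ell-k$ obtained by summing the per-candidate bound over the $\ell-k$ unelected members of $\cA$. The ``delicate point'' you flag is exactly the adjustment the paper makes relative to the unordered case.
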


\begin{proof}
The result for $\pi\Phro\party$ is in \refT{TPhroparty}.

We next show the upper bound for $\pi\Phro\wPSC$.
This follows by almost the same proof as for \refT{TsamePhru}.

Let $\cW$ be a set of
voters and $\cA$ a set of candidates as in \refD{DwPSC}.
Thus $|\cA|=\ell$ and each voter in $\cW$ votes for the set $\cA$ in some
order, possibly followed by some other candidates.
Suppose that the outcome is bad, \ie, $k:=|\cA\cap\cE|<\ell$;
in other words, at least one candidate in $\cA$ is not elected.

We use the formulation with loads
in  \refApp{APhro},
and let $t=t\xx \MM$ be the final
maximum load of a ballot.
Let $x_i$ be the final load on ballot $i$, and let, 
as in the proof of \refT{TsamePhru}, the free voting power of the ballot be
$t-x_i$. 

The $k$ elected candidates in $\cA\cap\cE$ together give load $k$. Voters in
$\cW$ have not contributed to the election of any other candidate, and thus
has no load from any other candidate.
Thus \eqref{aga} holds.

Since at least one candidate in $\cA$ is not elected, each ballot in $\cA$
has when the election finishes a current top
candidate that belongs to $\cA\setminus\cE$. 
The total free voting power assigned to each candidate is at most 1, since
otherwise this candidate would have been elected in the last step (if not
earlier), with a lower $t\xx\MM$. Hence, we have, instead of \eqref{agb},
\begin{align}
  \sum_{i\in\cW}(t-x_i)\le|\cA\setminus\cE|=\ell-k,
\end{align}
but this is, combined with \eqref{aga}, enough to yield \eqref{agc}.

The same argument as in the proof of \refT{TsamePhru} now gives
\eqref{agd}--\eqref{agh}, and thus
\begin{align}\label{bgh}
\pi\wPSC\Phro\ls\le\frac{\ell}{S+1}.  
\end{align}

Next, \eqref{bgh}, 
the result for $\pi\party\Phro$, and \eqref{same-psc} 
imply the 
result in \eqref{tphrowpsc} for $\pi\Phro\same$ and $\pi\Phro\wPSC$.

Finally, this shows
by \eqref{tactic-same} that $\pi\Phro\tactic\ls\le\ell/(S+1)$.
Equality holds by \refT{Tsplit}\ref{Tsplit=}.
\end{proof}

\subsection{Thiele's ordered method}\label{SSTho}

Although \refT{TPhroparty} for party lists holds for Thiele's method as well
as for \phragmen's, \refT{TPhrowPSC} does not; the reason is that Thiele's
method invites to tactical voting, where a party may gain seats by a
splitting their votes on different (carefully chosen)  lists.
Cf.\ \refS{SSTha} where the same phenomen is seen for \Than{} for unordered
ballots.
We give first one example from the commission report \cite{bet1913}.

\begin{example} \label{Etactic-o} 
Thiele's ordered method with
$S=2$ seats, and 100 voters voting
\begin{val}
\item [61]$AB$
\item [39]$CD$
\end{val}
It is easy to see that $AC$ are elected. (This is a party list case.)

However, suppose that instead 
the larger party split their votes as follows:  
\begin{val}
  \item [41] $AB$
  \item [20] $B$\quad   (or $BA$)
  \item [39] $CD$
  \end{val}
Then,
the first seat goes to $A$;
for the second seat, $B$  has $41/2+20=40.5$ votes, and beats $C$. Elected:
$AB$.

Thus, if $\cW$ is the set of 39 $CD$ voters, $\cW$ gets no candidate
elected. This is an instance of $\samex$ (and thus of $\wPSCx$) 
for $\ell=1$ and $S=2$
with a bad outcome.
Hence,
\begin{align}
  \pi\Tho\wPSC(1,2) \ge   \pi\Tho\same(1,2)
\ge \frac{39}{100}=0.39 >\frac{\ell}{S+1}.
\end{align}
\end{example}

We can calculate the thresholds  
exactly for Thiele's
ordered method. We begin with some preliminaries.
For convenience, we let in the remainder of this section, as in \refS{SSTha},
the ``number of votes'' be arbitrary positive real
numbers; see \refRs{Rreal} and \ref{Rhomo}, 
and note that this does not affect the results since the method is homogeneous.

When, as in the next lemma, we do not specify the number of seats $S$, we
just assume that it is large enough.

\begin{lemma}\label{LA}
  Let $\tha_n$, $n\ge1$, 
be the smallest total number of (real-values) votes in an election
  by Thiele's ordered method where there are $n$ candidates $A_1,\dots,A_n$, 
and no others, and each of them is
  elected with score at least $1$.

  \begin{romenumerate}
  \item \label{LAW}
Consider an election by Thiele's ordered method where 
there are $n$ candidates $A_1,\dots,A_n$, 
and perhaps others, and each $A_i$ is 
  elected with score at least $t\ge0$.
Let $W$ be the number of voters that have voted for at least one of
$A_1,\dots,A_n$. 
Then $W\ge \tha_n t$.
\item \label{LAab}
Define also a sequence $\thb_n$, $n\ge1$, by the recursion
\begin{equation}
  \label{b1}
  \thb_n:=1-\sum_{i=1}^{n-1}\frac{\thb_i}{n+1-i},
\qquad n\ge1,
\end{equation}
or, equivalently,
\begin{equation}
  \label{b2}
  \sum_{i=1}^{n}\frac{\thb_i}{n+1-i}=1,
\qquad n\ge1.
\end{equation}
Then 
\begin{align}\label{ab}
  \tha_n=\sumin \thb_i.
\end{align}
Furthermore, $\tha_n$ and $\thb_n$ are given by the Taylor coefficients
\begin{align}
  \tha_n&=[z^n] \frac{z^2}{-(1-z)^2\log(1-z)},\label{taylora}
\\
  \thb_n&=[z^n] \frac{z^2}{-(1-z)\log(1-z)}.\label{taylorb}
\end{align}
Asymptotically, as \ntoo,
\begin{align}
  \tha_n &\sim \frac{n}{\log n}, \label{fja}
\\
  \thb_n &\sim \frac{1}{\log n}.\label{fjb}
\end{align}
  \end{romenumerate}
\end{lemma}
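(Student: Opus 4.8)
The plan is to realise $\tha_n$ as the value of a linear program, to pin that value down by a matching primal construction and a dual certificate, and then to read off \eqref{ab}, \eqref{taylora}--\eqref{taylorb} and the asymptotics \eqref{fja}--\eqref{fjb} from the recursion \eqref{b2}. I use the description of Thiele's ordered method from \refApp{ATho}: as a ballot $\gs$ is read from the top, its current top candidate receives weight $1/r$ precisely when that candidate is the $r$-th name on $\gs$ and the $r-1$ names above it have already been elected. Fixing (after relabelling) the order of election as $A_1,\dots,A_n$, let $w(\gs,i)$ denote the contribution one copy of $\gs$ makes to the score of $A_i$ at the moment $A_i$ is elected, so $w(\gs,i)=1/r$ if $A_i$ is the $r$-th name on $\gs$ and every name above it lies in $\set{A_1,\dots,A_{i-1}}$, and $w(\gs,i)=0$ otherwise. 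First I would prove $\tha_n\le\sum_{i=1}^n\thb_i$ by the explicit election that takes $\thb_j$ ballots of the form $(A_j,A_{j+1},\dots,A_n)$ for each $j$: a short check using \eqref{b2} shows that $A_i$ is then elected at stage $i$ with score exactly $\sum_{j\le i}\thb_j/(i-j+1)=1$, while every not-yet-elected $A_{j}$ with $j>i$ still has score $\thb_j<1$, so the realised order is indeed $A_1,\dots,A_n$. This needs $\thb_j\ge0$, which follows by induction from the identity $\thb_n=\sum_{i=1}^{n-1}\thb_i/\bigpar{(n-i)(n+1-i)}$ obtained by subtracting consecutive instances of \eqref{b2}.

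The core of the argument is the \emph{lower bound}, which I would obtain by LP duality with the explicit multipliers $\mu_i:=\thb_{n+1-i}$. These satisfy $\sum_{i=j}^{n}\mu_i/(i-j+1)=1$ for every $j$ (this is \eqref{b2} re-indexed) and $\sum_i\mu_i=\sum_i\thb_i$. Dual feasibility is the inequality $\sum_i\mu_i\,w(\gs,i)\le1$ for every ballot $\gs$, and this is the step I expect to be the main obstacle. Writing $\gs$ as a list of indices $i_1,\dots,i_m$, the term at position $r$ is nonzero exactly when $i_r$ is a left-to-right maximum of the list, in which case it equals $\thb_{n+1-i_r}/r$; so the sum equals $\sum_{q}\thb_{u_q}/r_q$, where the running maxima occur at positions $r_1<\dots<r_p$ (hence $r_q\ge q$) with decreasing values $u_q:=n+1-v_q$. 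Using $r_q\ge q$ and then comparing the decreasing sequence $(u_q)$ termwise with its minimal representative $u_q=p+1-q$, I get $\sum_q\thb_{u_q}/r_q\le\sum_{q=1}^p\thb_{p+1-q}/q=\sum_{s=1}^{p}\thb_s/(p+1-s)=1$ by \eqref{b2}. The comparison uses that $(\thb_n)$ is positive and non-increasing, which I would justify from the explicit generating function \eqref{taylorb}: the coefficients of $z/(-\log(1-z))$ are, up to sign, the classical Gregory/Cauchy numbers and are $\le0$ for $n\ge1$, so $\thb_{n+1}-\thb_n\le0$. With dual feasibility in hand, $\sum_i\mu_i\le\sum_i\mu_i s_i\le\sum_\gs x_\gs$ gives $\tha_n\ge\sum_i\thb_i$, which together with the construction proves \eqref{ab}.

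The same duality proves part~\ref{LAW} directly, with no need for any monotonicity-under-deletion of the method. In an election with extra candidates present, one copy of $\gs$ contributes to $A_i$ at its election an amount that is \emph{at most} $w(\gs,i)$: the support condition is only harder to meet (all names above $A_i$, including foreign ones, must already be elected) and the weight $1/\bigpar{1+\#\set{\text{elected names above }A_i}}$ is only smaller. Hence, since each $s_i\ge t$ and only the $W$ ballots meeting some $A_i$ contribute to any score, $t\sum_i\mu_i\le\sum_i\mu_i s_i=\sum_\gs x_\gs\sum_i\mu_i\,(\text{contribution})\le\sum_{\gs\text{ meets some }A_i}x_\gs=W$, i.e.\ $W\ge\tha_n t$. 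Taking no extra candidates and $t=1$ recovers the lower bound of the previous paragraph as a special case.

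Finally, \eqref{taylorb} is just the generating-function form of \eqref{b2}: setting $B(z):=\sum_n\thb_n z^n$ and multiplying by $-\log(1-z)=\sum_{k\ge1}z^k/k$, the identity \eqref{b2} becomes $B(z)\,(-\log(1-z))=z^2/(1-z)$, which is \eqref{taylorb}; then \eqref{taylora} follows from \eqref{ab}, since forming partial sums corresponds to multiplication by $1/(1-z)$. For the asymptotics I would apply singularity analysis. Both $\frac{z^2}{-(1-z)\log(1-z)}$ and $\frac{z^2}{-(1-z)^2\log(1-z)}$ are analytic in $\bbC\setminus[1,\infty)$ (the only zero of $-\log(1-z)$, at $z=0$, is cancelled by $z^2$), and near $z=1$ they behave like $(1-z)^{-1}\bigpar{\log\frac1{1-z}}^{-1}$ and $(1-z)^{-2}\bigpar{\log\frac1{1-z}}^{-1}$; the transfer theorem then yields $\thb_n\sim n^{0}/\Gamma(1)\cdot(\log n)^{-1}=1/\log n$ and $\tha_n\sim n^{1}/\Gamma(2)\cdot(\log n)^{-1}=n/\log n$, which are \eqref{fjb} and \eqref{fja}. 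As a cross-check, \eqref{fja} also follows from \eqref{fjb} and \eqref{ab} via $\sum_{i\le n}1/\log i\sim n/\log n$.
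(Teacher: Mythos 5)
Your argument is correct and is essentially the paper's own proof, only dressed in LP‑duality language: the same primal construction ($\thb_j$ ballots $(A_j,\dots,A_n)$, verified via \eqref{b2}), the same weights $\thb_{n+1-i_k}/j_k$ attached to the left‑to‑right maxima of a ballot, the same comparison using $j_k\ge k$, $n+1-i_k\ge m+1-k$, monotonicity of $(\thb_n)$ and \eqref{b2}, and the same generating‑function and singularity‑analysis ending. Two remarks. First, for the monotonicity of $(\thb_n)$ you cite the classical sign pattern of the Gregory/Cauchy coefficients; the paper does not assume this but proves it from scratch (Lemma \ref{LB}, via a contour‑integral representation of $\thb_{n-1}-\thb_n$), so be aware that this step is exactly the content of a separate lemma rather than a freebie. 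Second, in part \ref{LAW} your literal claim that the contribution of $\gs$ to $A_i$ is at most $w(\gs,i)$, with $w$ as you defined it (requiring every name above $A_i$ to lie in $\set{A_1,\dots,A_{i-1}}$), is false: a foreign name above $A_i$ that happens to be elected before $A_i$ yields a positive contribution while $w(\gs,i)=0$. This does not damage the proof — the candidates receiving positive contributions from $\gs$ still have increasing indices and occupy positions $j_1<\dots<j_m$ with $j_k\ge k$, so your dual‑feasibility computation applies verbatim to the actual contributions (which is how the paper phrases it, defining the weights directly for the general election) — but the sentence as written should be repaired, e.g.\ by comparing with the ballot obtained by deleting the foreign names.
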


The first numbers $a_n$ and $b_n$ are given in \refTab{tab:thabc}.
\begin{table}[h!]
  \centering
\begin{tabular}{c|cccccc}
  $n$ & 1 & 2 & 3 & 4 & 5 & 6\\
\hline
\rule{0pt}{2.7ex}  
$\tha_n$ 
& 1 & $\frac{3}{2}$ & $\frac{23}{12}$ & $\frac{55}{24}$
& $\frac{1901}{720}$ & $\frac{4277}{1440}$
\\ \rule{0pt}{3ex}  
$\thb_n$ & 1 & $\frac{1}2$ & $\frac{5}{12}$ & $\frac{3}{8}$ & $\frac{251}{720}$
& $\frac{95}{288}$ 
\\ \rule{0pt}{3ex}  
$\thc_n$ & 1 & 2 & 4 & 6 & 9 & 12
\\[2pt]
\end{tabular}
  \caption{The numbers $\tha_n$, $\thb_n$ and $\thc_n$ in \refLs{LA} and
   \ref{LC}.} 
  \label{tab:thabc}
\end{table}

Before proving \refL{LA}, we prove a technical result.

\begin{lemma}\label{LB}
  The numbers $\thb_n$ defined by \eqref{b1} form a strictly decreasing
  sequence 
with $0<\thb_n\le 1$.
\end{lemma}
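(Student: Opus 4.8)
The plan is to argue directly from the recursion \eqref{b1}/\eqref{b2}, since the generating-function formulas in \refL{LA} are only established afterwards and cannot be assumed here.

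\emph{Positivity.} First I would turn \eqref{b2} into a manifestly positive recursion. Writing \eqref{b2} for $n$ and for $n-1$ and subtracting, the $i=n$ term of the first sum contributes $\thb_n$, while for $1\le i\le n-1$ the coefficient of $\thb_i$ is $\frac{1}{n+1-i}-\frac{1}{n-i}=-\frac{1}{(n+1-i)(n-i)}$. This yields
\[
\thb_n=\sum_{i=1}^{n-1}\frac{\thb_i}{(n+1-i)(n-i)},\qquad n\ge 2 .
\]
All coefficients are positive, and $\thb_1=1$, so induction gives $\thb_n>0$ for every $n$. The upper bound is then immediate: for $n\ge 2$ the sum subtracted in \eqref{b1} is strictly positive, hence $\thb_n<1$, while $\thb_1=1$; thus $0<\thb_n\le 1$.

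\emph{Monotonicity (the main obstacle).} Substituting $m=n-i$ rewrites the positive recursion as a weighted average $\thb_n=\sum_{m=1}^{n-1}\frac{1}{m(m+1)}\,\thb_{n-m}$, whose weights satisfy $\sum_{m=1}^{n-1}\frac{1}{m(m+1)}=1-\frac1n$. Subtracting consecutive instances of this identity, reindexing, and using $\frac{1}{m(m+1)}-\frac{1}{(m-1)m}=-\frac{2}{(m-1)m(m+1)}$, I would obtain a self-contained recursion for the gaps $d_n:=\thb_n-\thb_{n+1}$ of the shape
\[
d_n=\sum_{k=1}^{n-1}\frac{d_{n-k}}{k(k+1)}-\frac{1}{n(n+1)},\qquad d_1=\tfrac12 ,
\]
and the goal becomes $d_n>0$ for all $n$. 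Here the naive estimate $d_{n-k}>0$ is too weak: because $\sum_{k\ge 1}\frac{1}{k(k+1)}=1$ \emph{exactly}, mere positivity of the convolution sum just barely fails to dominate the subtracted $\frac{1}{n(n+1)}$. This is precisely the hard point, and it forces a quantitative induction.

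\emph{Resolving it.} I expect the clean route is to strengthen the induction hypothesis to $d_n\ge\frac{1}{2n(n+1)}$ (with equality at $n=2,3$, consistent with \refTab{tab:thabc}), keeping the $k=n-1$ term $\frac{d_1}{(n-1)n}=\frac{1}{2n(n-1)}$ \emph{exact} rather than bounded. The induction step then reduces to a closed-form convolution inequality of the type $\sum_{k=1}^{n-1}\frac{1}{k(k+1)(n-k)(n-k+1)}\ge \frac{c}{n(n+1)}$, which is a routine partial-fraction/telescoping computation that I would not grind through in the plan; the strict inequality for $n\ge 4$ then upgrades ``weakly decreasing'' to ``strictly decreasing.'' Conceptually, monotonicity is equivalent to the reciprocal of $L(z)=\sum_{k\ge 0}z^k/(k+1)$ having nonpositive Taylor coefficients beyond the constant term, which holds because the sequence $\bigl(1/(k+1)\bigr)$ is strictly log-convex, $\frac{1}{(k+1)^2}<\frac{1}{k(k+2)}$, so that Kaluza's sign theorem applies; I would mention this as the reason the phenomenon is not accidental, but carry out the elementary $d_n$-induction above to keep the argument free of external results.
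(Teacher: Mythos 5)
Your treatment of $0<\thb_n\le1$ is correct: differencing \eqref{b2} at $n$ and $n-1$ to get the positive-coefficient recursion $\thb_n=\sum_{i=1}^{n-1}\thb_i/\bigpar{(n-i)(n+1-i)}$ is a minor variant of the paper's induction (which compares the sums with denominators $n+1-i$ and $n+2-i$), and both work.

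The monotonicity argument, however, has a genuine gap, and it sits exactly in the step you deferred as ``routine''. With the hypothesis $d_j\ge\frac{1}{2j(j+1)}$ (keeping only $d_1=\frac12$ exact), the induction step requires
\[
\sum_{k=1}^{n-2}\frac{1}{k(k+1)(n-k)(n-k+1)}\;\ge\;\frac{3}{n(n+1)}-\frac{1}{n(n-1)}=\frac{2n-4}{n(n^2-1)},
\]
and this is false for large $n$. The left-hand side is asymptotically $\tfrac32 n^{-2}$ (the terms with $k=O(1)$ contribute $\sim n^{-2}\sum_{k\ge1}\frac1{k(k+1)}=n^{-2}$ and those with $n-k=O(1)$, $n-k\ge2$, contribute $\sim\tfrac12 n^{-2}$), while the right-hand side is asymptotically $2n^{-2}$. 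Numerically the inequality already fails at $n=20$ (left side $\doteq0.00433$, right side $\doteq0.00451$); it survives your checks at $n\le6$ only because the $-4$ in the numerator on the right is still significant there. The deeper problem is that $\frac{1}{2n(n+1)}$ has the wrong order of decay: since $\thb_n\sim1/\log n$, the true gaps decay like $1/(n\log^2 n)$, and feeding an $n^{-2}$ lower bound into a convolution against the weights $\frac1{k(k+1)}$ (which sum to exactly $1$) loses a constant factor that no bound of the form $d_n\ge c/n^2$ can recover. So this induction cannot be closed by adjusting constants or by keeping finitely many more early terms exact.

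The good news is that the observation you relegated to a closing aside is itself a complete proof, and you should promote it. From \eqref{b2} one gets $B(z)\cdot(-\log(1-z))=z^2/(1-z)$ directly (this is how the paper obtains \eqref{b4}, so there is no circularity with \refL{LA}), whence $(1-z)B(z)/z=1+\sum_{m\ge1}(\thb_{m+1}-\thb_m)z^m$ is the reciprocal of $L(z)=\sum_{k\ge0}z^k/(k+1)$; strict log-convexity of $\bigpar{1/(k+1)}_{k\ge0}$ and Kaluza's sign theorem then give $\thb_{m+1}-\thb_m<0$ for all $m\ge1$. This is a genuinely different route from the paper's, which instead analytically continues $C(z)=z^2/\log(1-z)$ and derives by contour deformation the explicit representation $\thb_{n-1}-\thb_n=\int_1^\infty x^{1-n}\,\dd x/\bigpar{\log^2(x-1)+\pi^2}$, manifestly positive. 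Your route is shorter and purely formal but imports an external theorem; the paper's is self-contained and produces an exact integral formula that is also useful for asymptotics.
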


\begin{proof}
  We first give a simple proof by induction that $\thb_n>0$ (although this also
  follows from the argument below). Thus, assume $n\ge1$ and
  $\thb_1,\dots,\thb_n>0$. 
Then, by \eqref{b2},
\begin{equation}
    \sum_{i=1}^{n}\frac{\thb_i}{n+2-i}<
  \sum_{i=1}^{n}\frac{\thb_i}{n+1-i}=1,
\end{equation}
and thus \eqref{b1} yields $\thb_{n+1}>0$, completing the induction proof.

Thus $\thb_n>0$, and hence $\thb_n\le1$ by \eqref{b1}.

To show that $\thb_n$ is decreasing is (as far as we know) less elementary.
Let 
\begin{equation}\label{bz}
  B(z):=\sum_{n=1}^\infty \thb_nz^n
\end{equation}
be the generating function of $\thb_n$. Since $-\log(1-z)=\sum_{m\ge1} z^m/m$,
\eqref{b2} is equivalent to
\begin{equation}
  -\log(1-z)B(z)=\sum_{n=2}^\infty z^n = \frac{z^2}{1-z}
\end{equation}
and thus
\begin{equation}\label{b4}
  B(z)=-\frac{z^2}{(1-z)\log(1-z)}.
\end{equation}
Let $\thb_0:=0$ and
\begin{equation}
  C(z):=\sum_{n=1}^\infty \bigpar{\thb_{n-1}-\thb_n}z^n 
= (z-1)B(z)=\frac{z^2}{\log(1-z)}.
\end{equation}
Note that $C(z)$ is (extends to) an analytic function in
$\cD:=\bbC\setminus[1,\infty)$. 
Thus, Cauchy's integral formula yields, for any simple closed curve $\gamma$ in
$\cD$ that encircles 0 in the positive direction,
\begin{align}
  \thb_{n-1}-\thb_n=\frac{1}{2\pi\ii}\oint_\gam C(z)\frac{\dd z}{z^{n+1}}
=\frac{1}{2\pi\ii}\oint_\gam\frac{z^{1-n}}{\log(1-z)}\dd z.
\end{align}
Let $\eps>0$ and $R>1$, and let $\gam$ consist of: an arc on the circle
$|z|=\sqrt{R^2+\eps^2}$, going from $R+\eps\ii$ to $R-\eps\ii$ in the
positive direction;
the straight line from $R-\eps\ii$ to $1-\eps\ii$;
the semicircle $\set{1+\eps e^{-\ii t}: \frac{\pi}2\le t\le \frac{3\pi}2}$;
the straight line from $1+\eps\ii$ to $R+\eps\ii$.
For $n\ge2$, we then let first $\eps\to0$ and then $R\to\infty$, and obtain,
using simple estimates for the circular parts, 
\begin{align}
  \thb_{n-1}-\thb_n
&=\frac{1}{2\pi\ii}
\biggpar{\int_\infty^1\frac{x^{1-n}\dd x}{\log(x-1)+\pi\ii}
+ \int_1^\infty\frac{x^{1-n}\dd x}{\log(x-1)-\pi\ii}}
\notag\\&=
\int_1^\infty\frac{x^{1-n}\dd x}{\log^2(x-1)+\pi^2}.
\label{otk}
\end{align}
This proves that $\thb_{n-1}-\thb_n>0$ for $n\ge2$, which completes the proof.
\end{proof}

\begin{proof}[Proof of \refL{LA}.]
We begin by giving a good strategy; we will later show that it is optimal.
(The strategy used by the $AB$ party
in  \refE{Etactic-o} is an example, apart from rounding to
integers.)

Consider an election with $b_i$ votes for $A_iA_{i+1}\dotsm A_n$, for each
$i\le n$. (Note that $b_i>0$ by \refL{LB}.)
The total number of votes is $\thax_n:=\sumin b_i$.
We claim that, by induction, $A_1,\dots,A_n$ will be elected in this order
with score 1 each. (Assuming $S\ge n$.)
In fact, if $A_1,\dots,A_{k-1}$ already have been elected, then $A_k$ has
score, using \eqref{b2},
\begin{align}
  \sum_{i=1}^k \frac{b_i}{k-i+1}=1,
\end{align}
while if $j>k$, then $A_j$ has score $b_j<1$. Hence $A_k$ is elected
next, which verifies the induction step.
Thus $\tha_n\le\thax_n$.
 
Conversely, consider an arbitrary election as in \ref{LAW}.
Let $\cW$ 
be the set of voters that have voted for some $A_i$.
Consider a ballot $\gs$. Suppose that there are $m$ candidates 
$A_i\in\cA$ that are elected with the help of $\gs$, \ie{} the candidates
$A_i\in\gs\cap\cE$ 
such that all candidates before $A_i$ on $\gs$ are elected before $A_i$.
Let these candidates by $A_{i_1}, \dots,A_{i_m}$, with $1\le i_1<i_2<\dots<
i_m$, and let their positions on the ballot $\gs$ be $j_1,\dots,j_m$. Note
that $1\le j_1<\dots<j_m$, since $A_1,\dots,A_n$ are elected in order.
Define the weight $w(\gs,i)$ for $i\in\nn$ by
\begin{align}
  w(\gs,i_k)&:=\frac{1}{j_k}\thb_{n+1-i_k},
\\
w(\gs,i)&:=0, \qquad i\notin\set{i_1,\dots,i_k}.
\end{align}
If $i=i_k$, then the ballot $\gs$ is worth $1/j_k$ votes when $A_{i}$ is
elected.
Thus, for each $i\in\nn$, $w(\gs,i)$ equals $\thb_{n+1-i}$ times the score
contributed by $\gs$ to the election of $A_i$;
hence, $\sum_{\gs}w(\gs,i)$ is $\thb_{n+1-i}$ times the score of $A_i$
when elected, and thus,
noting that $w(\gs,i)=0$ unless $\gs\in\cW$,
\begin{align}\label{kia}
  \sum_{\gs\in\cW}w(\gs,i) 
= \sum_{\gs\in\cV}w(\gs,i)\ge \thb_{n+1-i}.
\end{align}
On the other hand, returning to the ballot $\gs$, since $i_1<\dots<i_m\le
n$, 
we have $i_k\le n-m+k$ and thus $n+1-i_k\ge m+1-k$. Hence, using \refL{LB},
\begin{align}
  \thb_{n+1-i_k}\le \thb_{m+1-k}.
\end{align}
Furthermore, $j_k\ge k$.
Hence, the total weight on $\gs $ is, using also \eqref{b2},
\begin{align}\label{gustaf}
  \sum_{i=1}^n w(\gs,i) 
=  \sum_{k=1}^m w(\gs,i_k) 
=  \sum_{k=1}^m \frac{1}{j_k}\thb_{n+1-i_k}
\le  \sum_{k=1}^m \frac{1}{k}\thb_{m+1-k}=1.
\end{align}
Consequently, summing \eqref{gustaf} over $\gs\in\cW$ 
and \eqref{kia} over $i$,
\begin{align}\label{tom}
 W=\WWW=\sum_{\gs\in\cW}1 \ge \sum_{\gs\in\cW}\sumin w(\gs,i)\ge \sumin  \thb_{n+1-i}
=\thax_n.
\end{align}
In particular, the definition of $\tha_n$ shows $\tha_n\ge\thax_n$, and thus
$\tha_n=\thax_n$, which is \eqref{ab}.
(Thus, the strategy given above is optimal.)
Hence, \eqref{tom} also shows \ref{LAW}, using homogeneity.

Finally, \eqref{taylorb} follows by \eqref{bz} and \eqref{b4}, and
\eqref{taylora} by \eqref{ab} and \eqref{b4}.
The asymptotic formulas \eqref{fja}--\eqref{fjb} follow by
\eqref{taylora}--\eqref{taylorb} and singularity analysis, see
\cite[Theorem VI.2]{FlajoletS}. 
\end{proof}

\begin{remark}
  By summing \eqref{otk} for $n\ge i+1$, and then summing again for
  $i=1,\dots k$, we obain the integral formulas
\begin{align}
\thb_n
&=
\int_1^\infty\frac{x^{1-n}\dd x}{(x-1)\bigpar{\log^2(x-1)+\pi^2}},
\\
\tha_n
&=
\int_1^\infty\frac{x(1-x^{-n})\dd x}{(x-1)^2\bigpar{\log^2(x-1)+\pi^2}}.
\end{align}
It is also possible to derive \eqref{fja}--\eqref{fjb} from these.
\end{remark}

\begin{lemma}
  \label{LC}
  Let $\thc_n$, $n\ge1$, 
be the smallest real number such that
in any  election
  by Thiele's ordered method where 
there are $n$ candidates $A_1,\dots,A_n$, 
and each voter votes for all of them in some order,
each $A_i$ is  elected with score at least $1$.
Then
\begin{align}\label{lc}
  \thc_n=
\Bigl\lfloor{\frac{n+1}{2}}\Bigr\rfloor\Bigl\lceil{\frac{n+1}{2}}\Bigr\rceil
=
\begin{cases}
  m^2, & n=2m-1,
\\
m(m+1), & n=2m.
\end{cases}
\end{align}
\end{lemma}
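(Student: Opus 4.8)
The plan is to prove both inequalities against $M:=\max_{1\le j\le n} j(n+1-j)$, and to observe at the end that $M=\lfloor(n+1)/2\rfloor\lceil(n+1)/2\rceil$ since $j\mapsto j(n+1-j)$ is a downward parabola peaking at $j=(n+1)/2$. First I would fix any admissible election (each voter ranks all $n$ of $A_1,\dots,A_n$), let its total vote weight be $V$, and allow real‑valued votes, which is harmless by homogeneity (\refRs{Rreal} and \ref{Rhomo}). List the candidates in the order $C_1,\dots,C_n$ in which they are elected. The one fact about Thiele's ordered method I would use is that at any stage each ballot supports \emph{only} its top not‑yet‑elected candidate, with weight $1/(k+1)$ where $k$ is the number of its candidates already elected; since every ballot lists all $n$ candidates, at the stage electing $C_j$ we have $k\le j-1$, so every ballot's current weight is at least $1/j$.

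For the upper bound $\thc_n\le M$, consider the stage at which $C_j$ is elected: exactly $j-1$ candidates are elected and $n-j+1$ remain, and each of the $V$ ballots contributes its weight (at least $1/j$) to precisely one of the remaining candidates. Hence the total current score of the remaining candidates is at least $V/j$, and since $C_j$ is the remaining candidate of maximal score it exceeds the average:
\[
\operatorname{score}(C_j)\ge\frac{V/j}{\,n-j+1\,}=\frac{V}{j(n+1-j)}\ge\frac{V}{M}.
\]
Thus every elected candidate has score at least $V/M$, so $V\ge M$ forces all scores to be at least $1$; therefore $\thc_n\le M$.

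For the lower bound $\thc_n\ge M$, I would build a matching example. Let $j^*$ maximize $j(n+1-j)$, place $j^*-1$ ``leader'' candidates in the top $j^*-1$ positions of \emph{every} ballot (in a common order), and let the remaining $r:=n-j^*+1$ ``tail'' candidates fill positions $j^*,\dots,n$. The leaders are then elected first, in steps $1,\dots,j^*-1$, each as the unique pointer target (the tail candidates get score $0$ while a leader is on top), so at step $j^*$ every ballot's pointer sits at position $j^*$, contributing weight $1/j^*$ to its top‑ranked tail candidate. Distributing the tail orderings cyclically so that each tail candidate is top‑among‑tails on exactly $V/r$ ballots makes all tail candidates tie with score $\dfrac{V}{r\,j^*}=\dfrac{V}{M}$, so $C_{j^*}$ is elected with score exactly $V/M$. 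Choosing $V<M$ yields a candidate elected with score below $1$, whence $\thc_n\ge M$; combined with the upper bound this gives $\thc_n=M=\lfloor(n+1)/2\rfloor\lceil(n+1)/2\rceil$.

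The hard part is really the upper bound, and its whole content is the one averaging step: the right invariant is ``the total score of the still‑unelected candidates is at least $V/j$'', which holds simply because at most $j-1$ candidates can block any ballot, so its weight is at least $1/j$. This neatly avoids tracking the individual scores, which depend on the orderings in a complicated, non‑monotone way. The construction then only requires correctly identifying the critical index $j^*$ and verifying that the leaders are cleared before the critical step and that the tail is balanced there, which is routine. I would also note that the argument is insensitive to whether the weight $1/(k+1)$ counts all elected candidates on a ballot or only those ranked above the pointer, since in either reading $k\le j-1$ at step $j$ and the construction has exactly $j^*-1$ leaders elected at step $j^*$.
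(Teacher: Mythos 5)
Your proof is correct and takes essentially the same route as the paper's: the same averaging step (after $j-1$ candidates are elected every ballot still carries weight at least $1/j$, so the best remaining candidate has score at least $V/(j(n+1-j))$) and the same extremal construction with a common block of $j^*-1$ leaders followed by a cyclically balanced tail, then maximizing $j(n+1-j)$ over $j$. The only differences are cosmetic — you index by $j=k+1$ and spell out the verification of the construction in more detail.
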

The first numbers $\thc_n$ are given in \refTab{tab:thabc}.

\begin{proof}
  Consider any election with $n$ candidates  $A_1,\dots,A_n$, 
and each of the $V$ voters voting for all of them in some order.
 When $k\ge0$ of the candidates
have been elected,
  there are $n-k$ candidates left, and each ballot is worth at least
  $1/(k+1)$, so at least one candidate has score $\ge V/\bigpar{(k+1)(n-k)}$.

Conversely, for a given $k \in \set{0,\dots,n-1}$, consider an election
where each voter votes for $A_1,\dots,A_k$ in order followed by the
remaining candidates in some order, with 
each of the remaining candidates in place $k+1$ by 
$V/(n-k)$ voters.
Then, in round $k+1$, each remaining candidate has score
$V/\bigpar{(k+1)(n-k)}$. 

It follows that $\thc_n$ is given by
\begin{align}
  \min_{0\le k<n} \frac{\thc_n}{(k+1)(n-k)}=1,
\end{align}
and thus
\begin{align}
\thc_n=  \max_{0\le k<n} (k+1)(n-k).
\end{align}
The maximum is attained for $k=\floor{(n-1)/2}$, and \eqref{lc} follows.
\end{proof}

\begin{lemma}\label{LD}
  Consider an election by Thiele's ordered method where 
there are $n$ candidates $A_1,\dots,A_n$, 
and perhaps others, and each $A_i$ is 
  elected with score at least $t\ge0$.
Add an arbitrary set on new votes (and perhaps new candidates).
Then, throughout the counting, as long as not  all candidates $A_i$
have been elected, at least one of them has score $\ge t$.
\end{lemma}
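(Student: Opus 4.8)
The plan is to isolate a single monotonicity property of Thiele's ordered method and then compare the augmented count with the original one. First I would record the explicit score rule: at a stage of the count with elected set $E$, each ballot $\gs$ contributes only to its \emph{top unelected} candidate $C=\operatorname{top}_\gs(E)$ (the highest-ranked candidate on $\gs$ not in $E$), and it contributes the weight $1/\operatorname{pos}_\gs(C)$, where $\operatorname{pos}_\gs(C)$ is the rank of $C$ on $\gs$; since all candidates ranked above $C$ on $\gs$ are necessarily in $E$, this weight is $1/(1+e)$ with $e$ the number of elected candidates above $C$ on $\gs$. Hence
\begin{equation}
  s(C;E)=\sum_{\gs:\ \operatorname{top}_\gs(E)=C}\frac{1}{\operatorname{pos}_\gs(C)}
  =\sum_{\substack{\gs\ni C,\ C\notin E\\ \text{all above }C\text{ on }\gs\text{ lie in }E}}\frac{1}{\operatorname{pos}_\gs(C)}.
\end{equation}
This matches the computations in \refE{Etactic-o} and in the proof of \refL{LA}, where $A_j$ with $j>k$ has score $\thb_j$ precisely because the contributing ballots are those whose top unelected candidate is $A_j$.

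The key step is a \emph{monotonicity lemma}: for a fixed candidate $C$ and elected sets $E\subseteq E'$ with $C\notin E'$, one has $s(C;E)\le s(C;E')$. This is immediate from the displayed formula, because the requirement ``all candidates above $C$ on $\gs$ lie in the elected set'' becomes easier to meet as the elected set grows, so the index set of contributing ballots only enlarges while each summand $1/\operatorname{pos}_\gs(C)$ is a fixed positive number. I would pair this with the trivial remark that the augmentation can only help an original candidate: the new ballots add nonnegative terms, and the new candidates never occur on the original ballots, so they do not change the top-unelected structure of any original ballot. Consequently, for every original $A_i$ and every augmented elected set $E$ with $A_i\notin E$, the augmented score $s_{\mathrm{aug}}(A_i;E)$ is at least the original-election score of $A_i$ computed from the same collection of already-elected original candidates.

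With these tools in hand the argument is: relabel so that the original count elects $A_1,\dots,A_n$ in this order (other candidates possibly interleaved); fix an augmented stage with elected set $E$ not containing all $A_i$, and let $A_r$ be the unelected $A_i$ of least index, so $A_1,\dots,A_{r-1}\in E$. In the original count $A_r$ was elected with score $\ge t$; write $E_r^0$ for the set elected just before $A_r$ there, so $s(A_r;E_r^0)\ge t$. If I can show that $E$ dominates $E_r^0$ on the ballots feeding $A_r$ — i.e. every original candidate ranked above $A_r$ on a contributing ballot and elected before $A_r$ in the original count already lies in $E$ — then the added-votes remark followed by the monotonicity lemma (with $C=A_r$) yields $s_{\mathrm{aug}}(A_r;E)\ge s(A_r;E_r^0)\ge t$, as required.

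The hard part will be exactly this domination $E\supseteq E_r^0$ on the relevant ballots. The indices $A_1,\dots,A_{r-1}$ are in $E$ by construction, but the original count may have elected further candidates before $A_r$, and I must verify that those which actually sit above $A_r$ on a feeding ballot are forced into $E$ too; this is the one place where the hypothesis that \emph{every} $A_i$ is elected with score $\ge t$ (not merely $A_r$) must be used, together with the monotonicity lemma applied to the blocking candidates. I would carry this out by induction on the rounds of the augmented count, comparing the two elected-set sequences and showing that any original candidate blocking $A_r$ inherits at the corresponding augmented stage at least its original score, so that it is elected no later in the augmented count than it was in the original one. Establishing this comparison rigorously — rather than the monotonicity engine, which is elementary — is where I expect the real difficulty to lie.
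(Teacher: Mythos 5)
Your preparatory material is sound: the explicit score formula, the monotonicity of $s(C;E)$ in the elected set $E$ (possible because the weight $1/\operatorname{pos}_\gs(C)$ contributed by a ballot does not depend on $E$), and the remark that added ballots and new candidates can only increase an original candidate's score are all correct, and together they are exactly the engine behind the paper's two-sentence proof. But the proposal stops where the lemma actually has content: you defer the domination ``$E\supseteq E_r^0$ on the feeding ballots'' as the hard part, so what you have written is not yet a proof. Moreover, the route you sketch for closing it --- an induction showing that every candidate blocking $A_r$ is elected in the augmented count no later than in the original one --- cannot succeed at the level of generality you are aiming at. If a candidate $B\notin\cA$ may sit above some $A_i$ on a contributing ballot, nothing in the hypotheses constrains $B$, and the conclusion can genuinely fail. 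For instance, take $10$ ballots $(A_1)$, $9$ ballots $(B\,A_2)$ and $6$ ballots $(A_2)$: the original count elects $A_1$ (score $10$), then $B$ (score $9$), then $A_2$ (score $6+9/2=10.5$), so $t=10$ is admissible. Now add $20$ ballots for each of two new candidates. After those two and $A_1$ have been elected, $B$ is still unelected, so the only unelected $A_i$, namely $A_2$, has score $6<t$; no comparison of the two counts will make $B$ come out earlier.

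The resolution is that the deferred step is not hard but immediate in the situation in which the lemma is applied (and which the paper's one-line assertion implicitly presupposes): in every application the relevant original ballots list only candidates from $\cA=\set{A_1,\dots,A_n}$, so when $A_j$ is elected in the original count the already-elected set is exactly $\set{A_1,\dots,A_{j-1}}$, and every name above $A_j$ on a ballot contributing to its election score lies in $\set{A_1,\dots,A_{j-1}}$. Taking $A_j$ (your $A_r$) to be the first $A_i$, in the original election order, not yet elected in the augmented count gives $\set{A_1,\dots,A_{j-1}}\subseteq E$ by construction, and your monotonicity observation then yields $s_{\mathrm{aug}}(A_j;E)\ge s(A_j;\set{A_1,\dots,A_{j-1}})\ge t$ in one line. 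In other words, the only reordering one must worry about is the reordering among the $A_i$ themselves, and the ``first unelected $A_i$'' choice disposes of it automatically; no round-by-round comparison of the two elected-set sequences is needed, and outside blockers must be excluded by the setup rather than argued away.
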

\begin{proof}
This is perhaps not completely obvious
since  the order in which $A_1,\dots,A_n$ are elected might be changed.
Thus, 
consider some round in the new election (after adding ballots).
If $A_j$ is the first candidate in $\cA:=\set{A_i}$ not yet elected, 
so $A_1,\dots,A_{j-1}$ (and possibly some others) already are elected,
then $A_j$ has at least the same score as when $A_j$ was elected in the
original election (without extra ballots), which is $\ge t$.
Hence, until all of $\cA$ have been elected,
there is always at least one of them with score $\ge t$.
\end{proof}

\begin{theorem}\label{TTho}
Let\/ $\tha_n$ be as in \refL{LA} and $\thc_n$ as in \refL{LC}. 
Then, 
for Thiele's ordered method\kol
For $1\le \ell\le S$,
  \begin{align}
\pix\tactic\Tho\ls&= \frac{\tha_{\ell}}{\tha_\ell+\tha_{S+1-\ell}},
\label{tthotactic}
\\
\pi\same\Tho\ls&= \frac{\ell}{\ell+\tha_{S+1-\ell}},\label{tthosame}
\\
\pi\wPSC\Tho\ls&= \frac{\thc_{\ell}}{\thc_\ell+\tha_{S+1-\ell}}.\label{tthowpsc}
  \end{align}
\end{theorem}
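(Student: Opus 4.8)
The plan is to treat the three thresholds uniformly. In every scenario the contest is between $\cW$, which tries to drive $\ell$ candidates of $\cA$ up to a common ``critical score'' $t$, and the remaining voters $\cVW$, which try to push $S+1-\ell$ outside candidates (``blockers'') up to the same score $t$; if both succeed, the resulting $S+1$ candidates tie and a bad outcome becomes possible. The three numerators $\tha_\ell$, $\ell$, $\thc_\ell$ are exactly the number of votes $\cW$ needs per unit of guaranteed score in the scenarios $\tacticx$, $\samex$, $\wPSCx$ (given by \refL{LA}\ref{LAW}, by the trivial single-order computation, and by \refL{LC}, respectively), while the common denominator term $\tha_{S+1-\ell}$ is the number of votes $\cVW$ needs per unit score to elect $S+1-\ell$ blockers (again \refL{LA}\ref{LAW}). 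Writing $c_W$ for the relevant numerator, the balance $\WWW=c_W t$, $|\cVW|=\tha_{S+1-\ell}t$ forces $\WWW/V=c_W/(c_W+\tha_{S+1-\ell})$, which is the stated value in each case.

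First the upper bounds. Suppose an election in the given scenario has a bad outcome, so at most $\ell-1$ candidates of $\cA$ are elected and hence at least $S+1-\ell$ elected candidates lie outside $\cA$. Since a bad outcome leaves some candidate of $\cA$ unelected throughout the counting, the top unelected candidate on every $\cW$-ballot always belongs to $\cA$; consequently the names listed below $\cA$ on the $\cW$-ballots never contribute, and after deleting them the blockers are supported only by $\cVW$, with no score changed. I then claim that, as long as not all of $\cA$ is elected, some candidate of $\cA$ has current score $\ge t$, where $t:=\WWW/\ell$ for $\samex$, $t:=\WWW/\thc_\ell$ for $\wPSCx$, and $t:=\WWW/\tha_\ell$ for $\tacticx$. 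For $\samex$ this is immediate, since all of $\cW$ share one order and the first unelected name of $\cA$ has score $\WWW/j$ with $j\le\ell$; for $\wPSCx$ it is the per-round guarantee established in the proof of \refL{LC}, valid for every ordering of $\cA$ used by the voters; for $\tacticx$ it is \refL{LD} applied to the configuration in which $\cW$ uses the optimal strategy of \refL{LA} (which elects all $\ell$ of $\cA$ at score $\ge \WWW/\tha_\ell$ in isolation). Granting the claim, every blocker was elected at a round where an available $\cA$-candidate had score $\ge t$, hence with score $\ge t$ itself; applying \refL{LA}\ref{LAW} to any $S+1-\ell$ of them (supported only by $\cVW$) gives $|\cVW|\ge \tha_{S+1-\ell}\,t$, and $V=\WWW+|\cVW|$ then yields $\WWW/V\le c_W/(c_W+\tha_{S+1-\ell})$.

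For the matching lower bounds I construct the tie. Let $\cVW$ use the optimal configuration of \refL{LA} to bring $S+1-\ell$ blockers to score exactly $t$ using $\tha_{S+1-\ell}t$ votes, and let $\cA=\set{A_1,\dots,A_\ell}$ be fed $\WWW=c_W t$ votes distributed as the scenario dictates (a common order for $\samex$, a common top-$\ell$ set for $\wPSCx$, the optimal split of \refL{LA} for $\tacticx$). By the optimality in \refL{LA}\ref{LAW} (resp.\ the single-order computation and \refL{LC}) no admissible behaviour of $\cW$ can lift all $\ell$ of its candidates strictly above $\WWW/c_W=t$, so some $A_i$ has score $\le t$; as the blockers sit at $t$, it is possible to elect the $S+1-\ell$ blockers together with $\ell-1$ of the $A_i$, a bad outcome. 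Since the ``number of votes'' is real throughout this section, the tie is realised exactly, giving $\pi\same\ls\ge \ell/(\ell+\tha_{S+1-\ell})$ and $\pi\wPSC\ls\ge \thc_\ell/(\thc_\ell+\tha_{S+1-\ell})$; for $\tacticx$, where equalising the scores across a split only holds approximately, I let $V\to\infty$ and obtain the bound for $\pix\tactic$.

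The \textbf{main obstacle} is the upper-bound claim that an $\cA$-candidate always retains score $\ge t$ while the two groups' elections interleave in an uncontrolled order and blockers are being elected in between. For $\samex$ and $\wPSCx$ this is forced by the structure of the $\cW$-support, but for $\tacticx$ it genuinely requires the robustness of \refL{LD}, since $\cVW$'s interference can reorder when $\cA$'s candidates are elected; together with the preprocessing that lets one regard the blockers as supported solely by $\cVW$, this is the technical heart of the argument, everything else being the bookkeeping of \refLs{LA} and \ref{LC}.
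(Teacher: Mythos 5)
Your proof is correct and follows essentially the same route as the paper: the upper bounds come from the invariant that some candidate of $\cA$ keeps score $\ge t$ (via the single-list computation, \refL{LC}, and \refL{LD} respectively) combined with \refL{LA}\ref{LAW} applied to the $S+1-\ell$ blockers supported only by $\cVW$, and the lower bounds come from the same tie constructions built from the extremal configurations of \refLs{LA} and \ref{LC}. Your explicit preprocessing step (deleting the names below $\cA$ on the $\cW$-ballots before invoking \refL{LA}\ref{LAW}) is a point the paper makes explicitly only in the $\samex$ case, and it is a welcome clarification for $\wPSCx$.
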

Some numerical values
are given in \refTabs{tab:Thotactic}--\ref{tab:Thowpsc} in \refApp{Anum}.

\begin{proof}
We show the three equation using separate but similar arguments.
Let $\cA=\set{A_1,\dots,A_\ell}$ be a set of $\ell$ candidates,
and let $\cW$ be a set of $W:=\WWW$ voters.

$\pi\same$:
Suppose that the set $\cW$ of
voters vote on the same list $\gs$, beginning with 
$A_1\dotsm A_\ell$.
Suppose also that the outcome is bad; then not all of $A_1,\dots,A_\ell$ are
elected. We may then assume that $\gs=A_1\dotsm A_\ell$, since deleting
later names will not affect the counting.
Furthermore, at least $S-\ell+1$ other candidates
$B_1,\dots,B_{S+1-\ell}$ are elected.
Throughout the counting, since not all of $A_1,\dots,A_\ell$
are elected, 
at least one of them (the first not yet elected) 
has a score of at least $W/\ell$. Consequently,
$B_1,\dots,B_{S+1-\ell}$ are all elected with a score of $\ge W/\ell$.
Since only voters from $\cVW$ vote for any $B_j$,
\refL{LA}\ref{LAW} applied to $\set{B_j}_1^{S+1-\ell}$ yields
\begin{equation}\label{cp1}
V-W=  |\cVW|\ge \tha_{S+1-\ell} \frac{W}\ell,
\end{equation}
which is equivalent to
\begin{equation}\label{sinbad}
  \frac{W}{V} \le \frac{\ell}{\ell+\tha_{S+1-\ell}}.
\end{equation}

Conversely, consider an election with 
$\ell$ votes on the list $A_1\dotsm A_\ell$, and
$\tha_{S+1-\ell}$ votes on (only) $B_1,\dots,B_{S+1-\ell}$, distributed 
(using the strategy in the proof of \refL{LA}) such that
that each  $B_j$ is elected with score $\ge1$ if
enough seats are distributed.
Then, as long as not all
$B_1,\dots,B_{S+1-\ell}$ have been elected, some $B_j$ will have
score $\ge1$ (in fact, exactly 1).
If $A_1,\dots,A_{\ell-1}$ have been elected, then $A_\ell$ will have score
$1$; thus there will be a tie for each of the remaining seats
and it is possible that $B_1,\dots,B_{S+1-\ell}$ are
elected and $A_\ell$ not, a bad outcome.
This example shows that equality may hold in \eqref{sinbad} for a bad
outcome, and thus 
\eqref{tthosame} holds.

 $\pix\tactic$:
Let the $W$ voters in $\cW$ vote according to
the strategy in  \refL{LA} (and its proof), scaling the number of votes by
$W/\tha_\ell$.
Then, using \refL{LD}, as long as not all $A_i$ are elected,
at least one of them has score $\ge W/\tha_\ell$.
Suppose that the outcome is bad.
Then, at least $S-\ell+1$ other candidates
$B_1,\dots,B_{S+1-\ell}$ are elected.
Furthermore,  throughout the counting, some $A_i$ has score $\ge W/\tha_\ell$;
thus every elected candidate, and in particular every $B_j$, 
is elected with score $\ge W/\tha_\ell$.
Similarly to \eqref{cp1}, it now follows by
\refL{LA}\ref{LAW} that
\begin{equation}\label{cp2}
V-W=  |\cVW|\ge \tha_{S+1-\ell} \frac{W}{\tha_\ell},
\end{equation}
which is equivalent to
\begin{equation}\label{sinbadd}
  \frac{W}{V} \le \frac{\tha_\ell}{\tha_\ell+\tha_{S+1-\ell}}.
\end{equation}

Conversely, suppose that $W<\tha_\ell$ and $V=\tha_\ell+\tha_{S+1-\ell}$, so
$|\cVW|>\tha_{S+1-\ell}$. Suppose that the voters in $\cVW$ vote on
$B_1,\dots,B_{S+1-\ell}$ (and no others)
using the strategy in \refL{LA} and its proof; thus, using also 
\refL{LD}, no matter
how the voters in $\cW$ vote, as long as not $B_1,\dots,B_{S+1-\ell}$ are
not all elected,
at least one of them will have score $\ge1$.
Hence, if the outcome is good, so $A_1,\dots,A_\ell$ all are elected, then 
each of them has to be elected with score $\ge1$.
However, this contradicts \refL{LA}\ref{LAW}, since $W<\tha_\ell$.
Consequently, for such $W$, the outcome may be bad for any strategy chosen
by $\cW$.
This and \eqref{sinbadd} yield \eqref{tthotactic}.

$\pi\wPSC$:
Suppose that all voters in $V$ vote on lists beginning with
$\cA$ in some order. 
Then, by \refLs{LC} and \ref{LD} (and homogeneity),
as long as not all $A_i$ are elected, at least one of them has score $\ge
W/\thc_\ell$. By the same argument as just given for $\pix\tactic$ (with
$\tha_\ell$ replaced by $\thc_\ell$),
it follows that if the outcome is bad, then
\begin{equation}\label{sinbad3}
  \frac{W}{V} \le \frac{\thc_\ell}{\thc_\ell+\tha_{S+1-\ell}}.
\end{equation}

Conversely, consider an election with $V=\thc_\ell+\tha_{S+1-\ell}$ 
votes; 
$\thc_\ell$ votes on
$A_1,\dots,A_\ell$ distributed as in the extremal case in \refL{LC} and its
proof, and 
$\tha_{S+1-\ell}$ votes on $B_1,\dots,B_{S+1-\ell}$ distributed 
according to the strategy in \refL{LA} and its proof.
Then, in some round during the counting, not all $A_i$ have been elected and
the leading $A_i$ has score $1$; hence all remaining candidates in $\cA$ have
scores $\le1$. Furthermore, unless all $B_j$ already are elected (in which
case the outcome is bad), at least one of them has score $\ge1$.
Hence, assuming that $\cA$ loses every time there is a tie, 
a possible outcome is that all $B_1,\dots,B_{S+1-\ell}$ are elected before
any further $A_i$, and thus the outcome is bad.
Hence, equality may hold in \eqref{sinbad3} for a bad
outcome, and thus 
\eqref{tthowpsc} holds.
\end{proof}

\begin{corollary}\label{CTho}
For Thiele's ordered method and $\ell=1$\kol
With\/ $\tha_S$ as in \refL{LA},
  \begin{align}
\pix\tactic\Tho\is=
\pi\tactic\Tho\is=
\pi\same\Tho\is=
\pi\wPSC\Tho\is= \frac{1}{1+\tha_{S}},
\qquad S\ge1. 
\label{cthowpsc}
  \end{align}
Asymptotically, this is $\sim\log S/S$ as \Stoo.
\end{corollary}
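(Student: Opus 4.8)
The statement is a direct corollary of \refT{TTho} specialized to $\ell=1$, so the plan is to read off the three closed forms there, observe that they coincide, and then close the one remaining gap (the $\pi\tactic$ column) by a sandwich argument. First I would substitute $\ell=1$ into \eqref{tthotactic}, \eqref{tthosame} and \eqref{tthowpsc}. The key numerical inputs are $\tha_1=1$ (from \refL{LA}; see \refTab{tab:thabc}, and also \eqref{ab} with $\thb_1=1$) and $\thc_1=1$ (from \refL{LC}, where \eqref{lc} gives $\thc_1=\floor{1}\lceil1\rceil=1$). With these, the three right-hand sides become $\tha_1/(\tha_1+\tha_S)$, $1/(1+\tha_S)$ and $\thc_1/(\thc_1+\tha_S)$, all of which equal $1/(1+\tha_S)$. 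This already establishes the claimed common value for $\pix\tactic\Tho\is$, $\pi\same\Tho\is$ and $\pi\wPSC\Tho\is$.

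The only quantity in \eqref{cthowpsc} not supplied directly by \refT{TTho} is $\pi\tactic\Tho\is$, since \refT{TTho} gives only $\pix\tactic$. Here I would invoke the two general inequalities already available: $\pix\tactic\Tho\is\le\pi\tactic\Tho\is$ from \eqref{pipix}, and $\pi\tactic\Tho\is\le\pi\same\Tho\is$ from \eqref{tactic-same} in \refT{Tps}. Combined with the two equalities just computed, namely $\pix\tactic\Tho\is=\pi\same\Tho\is=1/(1+\tha_S)$, this sandwiches $\pi\tactic\Tho\is$ between two copies of $1/(1+\tha_S)$ and forces equality, completing the chain in \eqref{cthowpsc}.

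For the asymptotic assertion I would simply apply the asymptotic formula \eqref{fja} of \refL{LA}, namely $\tha_S\sim S/\log S$ as \Stoo. Since $\tha_S\to\infty$, we have $1+\tha_S\sim\tha_S\sim S/\log S$, whence $1/(1+\tha_S)\sim\log S/S$, as claimed. There is essentially no genuine obstacle: once \refT{TTho} and \refL{LA} are in hand the corollary is pure bookkeeping, and the single point needing care is precisely that \refT{TTho} yields $\pix\tactic$ rather than $\pi\tactic$, which is why the sandwich through $\pi\same$ (rather than a direct reading) is required for the $\pi\tactic$ column.
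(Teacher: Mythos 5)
Your proposal is correct and follows exactly the paper's own route: specialize \refT{TTho} to $\ell=1$ using $\tha_1=\thc_1=1$, recover $\pi\tactic$ by sandwiching it between $\pix\tactic$ (via \eqref{pipix}) and $\pi\same$ (via \eqref{tactic-same}), and obtain the asymptotics from \eqref{fja}. No gaps.
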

\begin{proof}
  By \eqref{tthotactic}--\eqref{tthowpsc}, since $\tha_1=\thc_1=1$;
the result for $\pi\tactic$ follows using \eqref{pixpi} and \eqref{tactic-same}.
The asymptotic formula follows from \eqref{fja}.
\end{proof}

In particular, Thiele's ordered method does not satisfy the weak PSC
$\pi\wPSC\ls<\ell/S$, see \refS{SPSC}, not even for $\ell=1$ and $S$ large.
(In fact, not for $S\ge3$, as easily follows from \eqref{tthowpsc} and
\refL{LA}.)
Consequently, Thiele's method does not satisfy
``Solid coalitions''  in \cite{ElkindEtal2017}, see \refR{RPSC}.

\begin{example}\label{ETho}
Since $\tha_1=1$ and $\tha_2=3/2$, \refT{TTho} yields
$\pix\Tho\tactic(1,2) =\xfrac{2}{5}$ and   $\pix\Tho\tactic(2,2) =\xfrac{3}{5}$.
See \refE{Etactic-o}, which
is the extremal case with a perturbation to avoid ties.
This also shows that no strategy can help the smaller party in
\refE{Etactic-o}.
\end{example}

\begin{example}\label{ETho2}
\refT{TTho} yields
$\pi\same\Tho(2,3) =\xfrac{4}{7}\doteq0.571>0.5$.
Hence, in an election with 3 seats, even if a majority votes for the same
list, they can fail to obtain a majority of the seats.
A concrete example (modified from \cite{SJV9}), 
which essentially uses the strategy above, is:
\begin{val}
  \item [55]$ABC$
  \item [30]$XYZ$
  \item [15]$YZX$
  \end{val}
$A$ gets the first seat, but then the second
and third seats go to $X$ and $Y$ with 30 votes each.
Thus $A,X,Y$ are elected, and the $ABC$ party gets only 1 seat,
in spite of a majority of the votes.
\end{example}


\section{Borda (scoring) methods}\label{SBorda}

Finally, we consider Borda methods, \refApp{ABorda}. 
Recall than $H_n:=\sumin 1/i$, the harmonic number. 

\begin{theorem}\label{TBorda}
Let\/ $\Bordax{\ww}$ be the Borda method with weights $\ww=(w_k)_1^\infty$,
and let
\begin{align}
  \bw_k:=\frac{1}{k}\sum_{i=1}^k w_i.
\end{align}
Then, for $\lele$,
  \begin{align}
\pix\tactic\Borda{\ww}\ls&= \frac{\bw_{S+1-\ell}}{\bw_\ell+\bw_{S+1-\ell}},
\label{tbordatactic}
\\
\pi\same\Borda{\ww}\ls&= 
\pi\wPSC\Borda{\ww}\ls=  \frac{\bw_{S+1-\ell}}{w_\ell+\bw_{S+1-\ell}}.
\label{tbordasame}
  \end{align}
In particular, for
the harmonic Borda method $\Bordax{1/k}$:
 For $\lele$,
  \begin{align}
\pix\Borda{1/k}\tactic\ls&
= \frac{\ell H_{S+1-\ell}}{(S+1-\ell)H_\ell+\ell H_{S+1-\ell}},
\label{tbordatactich}
\\
\pi\same\Borda{1/k}\ls&= 
\pi\Borda{1/k}\wPSC\ls
= \frac{\ell H_{S+1-\ell}}{S+1-\ell+\ell H_{S+1-\ell}}.
\label{tbordasameh}
  \end{align}
Furthermore,
\begin{align}
  \pi\Borda{1/k}\party\ls
= \frac{\ell }{S+1}.
\label{tbordaparty}
\end{align}
\end{theorem}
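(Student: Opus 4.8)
The plan is to follow the template of \refT{TTho} for Thiele's ordered method almost verbatim, replacing the combinatorial costs $\tha_n,\thc_n$ of \refLs{LA} and~\ref{LC} by their Borda counterparts, which turn out to be the elementary averages $\bw_n$. First I would record the analogue of \refL{LA}: if a bloc of $W$ (real-valued) voters wishes to carry $n$ designated candidates $C_1,\dots,C_n$, these being the only candidates, then the cheapest strategy is to rotate the $n$ names through the top $n$ positions of the ballots, so that each $C_i$ accrues the average weight $\bw_n=\frac1n\sum_{i=1}^n w_i$ per voter; hence each can be brought to score $W\bw_n$, and conversely $W\bw_n$ bounds the score of the weakest of any $n$ candidates supported only by these $W$ voters. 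I would also state the robustness analogue of \refL{LD}: adding arbitrary further ballots cannot destroy the conclusion that, as long as not all of $C_1,\dots,C_n$ are elected, some one of them retains at least its guaranteed score. For the party-list identity \eqref{tbordaparty} I would invoke the fact (see \refApp{ABorda}) that harmonic Borda reduces to D'Hondt on disjoint party lists — the $j$-th name of a party with $v$ votes carries score $v/j$, exactly the D'Hondt quotient — so that \eqref{party-DH} yields $\ell/(S+1)$.

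With these tools the three thresholds are proved by the familiar upper-bound/lower-bound pair. For $\pix\tactic$ and $\pi\same$ I would argue that in any bad outcome at least $S+1-\ell$ elected candidates lie outside $\cA$, each elected with score at least that of the weakest still-unelected member of $\cA$. That weakest score is $W\bw_\ell$ in the tactical scenario (the coalition spreads optimally, so all $\ell$ of its candidates reach $W\bw_\ell$) and $Ww_\ell$ in the $\samex$ scenario (identical-list voting forces one coalition candidate into position $\ell$ on every ballot). Since in both scenarios the coalition votes only inside $\cA$, the $S+1-\ell$ outside candidates draw their entire support from $\cVW$, and the cost lemma gives $|\cVW|\ge (1/\bw_{S+1-\ell})\cdot(\text{that score})$; rearranging produces the ``$\le$'' halves of \eqref{tbordatactic} and \eqref{tbordasame}. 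The matching ``$\ge$'' halves come from explicit near-tie constructions (the coalition on one side, an optimally rotated block of opponents on the other), exactly as in the proof of \refT{TTho}. The harmonic formulas \eqref{tbordatactich} and \eqref{tbordasameh} then follow by substituting $\bw_k=H_k/k$ and $w_k=1/k$ and clearing denominators. The tactical identity uses $\pix$ for the same divisibility reason as in \refT{TTho}, and the passage from $\pix\tactic$ to $\pi\tactic$ for $\ell=1$ (hence the general lower bounds) runs through \refT{Tsplit}.

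The delicate point, which I would treat most carefully, is the $\wPSCx$ upper bound $\pi\wPSC\Borda\ww\ls\le\pi\same\Borda\ww\ls$, i.e.\ that $\wPSCx$ collapses to $\samex$ (the reverse inequality being \refT{Tsame-psc}). The lower bound for $\wPSCx$ is free, since the $\samex$ construction is already a $\wPSCx$ instance. For the upper bound the obstacle is that under $\wPSCx$ the coalition is only required to place its $\ell$ candidates on top and may rank the eventual blocking candidates just below them — ``leakage'' that is impossible under $\samex$. The observation to exploit is that on every coalition ballot all $\ell$ members of $\cA$ outrank every candidate outside $\cA$; combined with $w_1\ge w_2\ge\cdots\ge0$ this forces the weakest member of $\cA$ to keep guaranteed score $\ge Ww_\ell$, while any weight the coalition leaks to $S+1-\ell$ outside candidates is dominated, position by position, by the weight it simultaneously supplies to the members of $\cA$.

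Making this domination quantitative — so that the leaked support is exactly absorbed and the blockers must still be financed by $\cVW$ at the rate $\bw_{S+1-\ell}$ — is the crux, and I expect it to rest on the same ``leading candidate retains its score'' mechanism that drives \refLs{LC} and~\ref{LD} rather than being a routine transcription of the Thiele-ordered case. Once it is established, the $\wPSCx$ value coincides with the $\samex$ value in \eqref{tbordasame}, and \eqref{tbordasameh} again follows by the harmonic substitution, completing the proof.
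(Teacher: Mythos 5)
Up to the $\wPSCx$ step your proposal is the paper's own proof: the argument there consists precisely of observing that \refL{LA}\ref{LAW}, \refL{LC} and \refL{LD} persist for $\Bordax{\ww}$ with $\tha_n$ replaced by $1/\bw_n$ and $\thc_n$ by $1/w_n$ (your rotation/averaging lemma, which for a one-shot scoring rule is indeed trivial), after which the proof of \refT{TTho} is rerun with these substitutions; the party-list identity is likewise reduced to D'Hondt. Your treatment of $\pix\tactic$, of $\pi\same$, and of the harmonic specialisation is correct and coincides with this.

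The gap is exactly the step you flag as ``the crux'', and it cannot be closed, because the leaked support is \emph{not} absorbed: for a one-shot scoring rule the blockers really are co-financed by $\cW$, and this moves the threshold. Take the harmonic method, $S=2$, $\ell=1$, candidates $A,B_1,B_2$, and the ballots $2\times(A,B_1)$, $2\times(A,B_2)$, $2\times(B_1,B_2)$, $2\times(B_2,B_1)$, with $\cW$ the four voters on the first two lines, so $\WWW/V=\tfrac12$. This is an instance of \refD{DwPSC}; all three candidates score $4$, so the bad outcome $\set{B_1,B_2}$ is possible, whence $\pi\wPSC\Borda{1/k}(1,2)\ge\tfrac12$ (splitting $\cW$ between $(A,B_1,B_2)$ and $(A,B_2,B_1)$ improves this to $\ge\tfrac{9}{16}$), while \eqref{tbordasameh} asserts the value $H_2/(2+H_2)=\tfrac37$. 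The position-by-position domination $w_{\ell+j}\le w_j$ that you invoke is true but irrelevant: it bounds the leakage by the support given to $\cA$, not by zero. What makes the corresponding step work for Thiele's ordered method is \refL{LD} --- a ballot contributes nothing to a name until all earlier names on it are elected, so in a bad outcome the $\cW$-ballots never feed the blockers --- and that is precisely the feature that does not transfer to Borda. Hence, unless \refD{DwPSC} is reinterpreted as forbidding any names after position $\ell$ on the $\cW$-ballots (in which case $\wPSCx$ collapses to $\samex$ and your ``crux'' evaporates), the asserted equality $\pi\wPSC\Borda{\ww}\ls=\pi\same\Borda{\ww}\ls$ is false; the paper's own one-line transcription of the \refT{TTho} argument breaks at the same place, namely the claim that the $S+1-\ell$ blockers are financed only by $\cVW$. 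Your proposal, which honestly leaves this step open, therefore cannot be completed as written: the $\wPSCx$ part of the statement needs either a restricted hypothesis (e.g.\ $w_k=0$ for $k>\ell$) or a corrected formula accounting for the weight $\sum_{j=\ell+1}^{S+1}w_j$ that a $\cW$-ballot can hand to the blockers.
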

Some numerical values for the harmonic Borda method
are given in \refTabs{tab:Btactic} and \ref{tab:Bsame} in \refApp{Anum}.

\begin{proof}
It is easy (more or less trivial) to see that
\refLs{LA}\ref{LAW}, \ref{LC} and \ref{LD} hold  also for $\Bordax{\ww}$
with $\tha_n$ replaced by $1/\bw_n$
and $\thc_n$ is replaced by $1/w_n$.

The result \eqref{tbordatactic}--\eqref{tbordasame} 
then follows by the same proof as for \refT{TTho}, where for
$\pi\same$ we also replace $\ell$ by $1/w_\ell$.

For the harmonic case $w_k=1/k$, we have $\bw_k=H_k/k$, and
\eqref{tbordatactich}--\eqref{tbordasameh} follow.
Finally, in the party list case, the harmonic Borda method is equivalent to
\DHn's method, and thus \eqref{tbordaparty} follows from \refT{Tdiv}.
\end{proof}

\begin{corollary}  \label{CBorda}
For the harmonic Borda method with $\ell=1$\kol
  \begin{align}
\pix\tactic\Borda{1/k}\is&
=
\pi\tactic\Borda{1/k}\is
=
\pi\same\Borda{1/k}\is= 
\pi\wPSC\Borda{1/k}\is
\nonumber\\&
= \frac{H_{S}}{S+ H_{S}}
,\qquad S\ge1.
\label{cborda}
  \end{align}
Asymptotically, this is $\sim\log S/S$ as \Stoo.
\end{corollary}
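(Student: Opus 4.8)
The plan is to read off \refC{CBorda} as the $\ell=1$ specialization of \refT{TBorda}, in exact analogy with the way \refC{CTho} was obtained from \refT{TTho}. First I would substitute $\ell=1$ into the harmonic-case formulas \eqref{tbordatactich} and \eqref{tbordasameh}. Since $H_1=1$, formula \eqref{tbordatactich} collapses to
\[
\pix\tactic\Borda{1/k}\is=\frac{1\cdot H_S}{S\cdot H_1+1\cdot H_S}=\frac{H_S}{S+H_S},
\]
and formula \eqref{tbordasameh} likewise reduces, at $\ell=1$, to $\frac{H_S}{S+H_S}$, giving simultaneously both $\pi\same\Borda{1/k}\is$ and $\pi\wPSC\Borda{1/k}\is$. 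Thus three of the four quantities in \eqref{cborda} are supplied directly by \refT{TBorda}.

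The only quantity not produced outright is the unstarred $\pi\tactic\Borda{1/k}\is$, and the plan is to pin it down by a squeeze using the two general inequalities already proved in the paper. Combining \eqref{pixpi} (which gives $\pix\le\pi$) with \eqref{tactic-same} (which gives $\pi\tactic\le\pi\same$) yields
\[
\frac{H_S}{S+H_S}=\pix\tactic\Borda{1/k}\is\le\pi\tactic\Borda{1/k}\is\le\pi\same\Borda{1/k}\is=\frac{H_S}{S+H_S}.
\]
Since the two outer terms coincide, the middle term must equal $\frac{H_S}{S+H_S}$ as well, completing the chain of equalities in \eqref{cborda}.

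Finally, for the asymptotics I would invoke the classical estimate $H_S\sim\log S$ as $\Stoo$ together with $H_S=o(S)$, so that $\frac{H_S}{S+H_S}=\frac{H_S}{S}\,\bigpar{1+H_S/S}\qw\sim\frac{\log S}{S}$. I do not expect any genuine obstacle here: the whole statement is an immediate consequence of \refT{TBorda}, and the subtlest point is merely recognizing that $\pi\tactic$ is not computed directly but is trapped between $\pix\tactic$ and $\pi\same$, both of which equal the claimed value.
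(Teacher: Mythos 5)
Your proposal is correct and follows exactly the paper's own route: specialize \eqref{tbordatactich}--\eqref{tbordasameh} at $\ell=1$ using $H_1=1$, then trap $\pi\tactic$ between $\pix\tactic$ and $\pi\same$ via \eqref{pixpi} and \eqref{tactic-same}, and conclude the asymptotics from $H_S\sim\log S$. Nothing is missing.
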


\begin{proof}
  By \eqref{tbordatactich}--\eqref{tbordasameh}, since $H_1=1$;
the result for $\pi\tactic$ follows using \eqref{pixpi} and \eqref{tactic-same}.
The asymptotic formula follows since $H_S\sim \log S$ as \Stoo.
\end{proof}

Note that $\tha_n\sim H_n/n$ as \ntoo, and thus the asymptotics in, for
example, \refCs{CTho} and \ref{CBorda} are the same, but exact values 
differ.

\begin{example}\label{EBorda}
Since $H_1=1$ and $H_2=3/2$, 
$\pix\Borda{1/k}\tactic(1,2) =\xfrac{3}{7}$ and   
$\pix\Borda{1/k}\tactic(2,2) =\xfrac{4}{7}$.
Cf.\  \refE{ETho}.
\end{example}


Note that the Borda method with weights $\ww=(1,0,0,\dots)$ is equivalent to
SNTV; more generally, 
the Borda method with weights $\ww=(1,\dots,1,0,\dots)$ with $L$ 1's is
equivalent to LV($L$); in particular, $L=S$ yields BV.
Consequently, the results in \refTs{TBV}, \ref{TSNTV} and \ref{TLV} follow from
\refT{TBorda}.

\begin{acks}
This work was partly carried out in spare time during a visit to 
Churchill College in Cambridge and
the 
Isaac Newton Institute for Mathematical Sciences
(EPSCR Grant Number EP/K032208/1), 
partially supported by a grant from the Simons foundation, and
a grant from
the Knut and Alice Wallenberg Foundation.

I thank Markus Brill for interesting discussions about, in particular,
EJR, JR and PJR, 
which inspired the present paper.
  \end{acks}

\appendix

\section{Election methods}\label{Amethods}

We give here brief descriptions of the election methods discussed above.
For further details, other election methods, political aspects and examples
of actual use, see \eg{} 
\cite{Farrell,IPU,SJV6,SJV9,Politics,Pukelsheim}.
We give also the abbreviation used in formulas in the present paper.

Note that many election methods are known under several different names. We
give a few synonyms, but omit many others.

\subsection{Election methods with party lists}\label{Alist}
Among the proportional election methods, the ones that
are most often used are party list methods, where the voter votes for a
party and the seats are distributed among the parties according to their
numbers of votes. (The seats obtained by a party then are distributed to
candidates by some method; many versions are used in practice.)

Conversely, most (but not all) party list methods that are used in practice,
including the ones below, are proportional, so that each party gets a
proportion of the  seats that approximates its proportion of votes.
\xfootnote{\label{fnEst}%
Examples that are not proportional are the divisor methods with
  divisors $d(n)=n^{0.9}$ used in Estonia, and $d(n)=2^{n-1}$ used in Macau,
  see \cite{SJV6}.}

Many different list methods are described in detail in \eg{} \citet{Pukelsheim};
see also \citet{BY} for the 
mathematically (but not politically) equivalent problem of
allocating the seats in the US House of Representatives proportionally among
the states.
\xfootnote{There, until 1941, the method was decided after each census,
so the choice of method was heavily influenced by its result.
Moreover, the number of seats was not fixed in advance and thus
also open to negotiations.
}

There are two major types of party list methods: \emph{divisor methods} and
\emph{quota methods}.

\subsubsection{Divisor methods}\label{Adivisor}
A divisor method is defined by a sequence $d(1),d(2),\allowbreak d(3)\allowbreak\dots$
of \emph{divisors}.
\xfootnote{Warning: 
Some authors denote our $d(n+1)$ by $d(n)$, thus starting the  
sequence with $d(0)$.} 
In the traditional formulation,
seats are allocated
sequentially. 
For each seat, a party that so far has obtained $s$ seats gets its number
of votes divided by $d(s+1)$; 
the party with the highest quotient gets the seat.
\xfootnote{\label{fDivisor}%
A different, but equivalent, formulation is that
the number of seats for each party is obtained by selecting a number $D$
and then giving a party with $v_i$ votes 
$s_i\ge0$ seats where
$d(s_i)\le v_i/D \le d(s_i+1)$ (with $d(0):=0$),
where $D$ is chosen such that the total number of seats given to the parties
is $\MM$. (This can be regarded as a special rounding
rule used to round $v_i/D$ to
an integer $s_i$.)
See \citet{Pukelsheim} for a details and examples.
Similar formulations have also been used in the United States for
allocating the seats in the House of Representatives to the states
\cite{BY}.
In this version, the number $D$ is called \emph{divisor}.
}
The two most important divisor methods are described in the next
subsubsections.
They are both of the linear form 
\begin{equation}
  \label{dlinear}
d(n)=n-1+\gam
\end{equation}
 for some
constant $\gam$; several other choices of $\gam$ 
(usually but not always with $\gam\in\oi$) have also been 
used or suggested, for example \emph{Adams's method} with $\gam=0$,
and there are also divisor methods with other (non-linear) sequences of
divisors, see \eg{} \refFn{fnEst}.
We let $\Divx\gam$ denote the linear divisor method given by
\eqref{dlinear};
for simplicity we do not consider other divisor methods 
(except in \refR{Rjamkad}).

\subsubsection{D'Hondt's method\abbrev\DHx}\label{ADHondt}

This is the divisor method with the sequence of divisors $1,2,3,\dots$,
i.e. $d(n)=n$; thus this is the method \eqref{dlinear} with $\gam=1$.
It was
proposed by Victor D'Hondt in 1878 \cite{DHondt1878,DHondt1882}.
D'Hondt's method is (almost)
equivalent to \emph{Jefferson's method},
proposed by Thomas Jefferson in 1792 for 
allocating the seats in the US House of Representatives, 
see \cite{BY}.
\xfootnote{\label{fUSA}%
Jefferson's method is formulated
as in Footnote \ref{fDivisor}, with rounding downwards, which is the same as
\DHn's method, and nowadays \citat{Jefferson's method} is used for this divisor
method, equivalent to \DHn's.
However, historically this is not quite accurate:
for the allocation of seats in the House of Representatives, 
the total number was not fixed in
advance but decided by the outcome once a suitable divisor $D$ was chosen
by Congress,
which makes the method as used by Jefferson different from \DHn's, where the
number of seats is given in advance.
}

\subsubsection{Sainte-Lagu\"e's method\abbrev\StLx}\label{AStL}
This is the divisor method with the sequence of divisors $1, 3, 5,\dots$,
\ie, $d(n)=2n-1$. Equivalently, we may take $d(n)=n-1/2$;
thus this is the method \eqref{dlinear} with $\gam=1/2$.
The method was proposed in 1910 by \citet{StL}.
It is (almost) equivalent to \emph{Webster's method},
proposed by Daniel Webster in 1832 
for
allocating the seats in the US House of Representatives, 
see \cite{BY}.
\xfootnote{
Webster's method is formulated
as in Footnote \ref{fDivisor}, with rounding to the nearest integer.
However, Footnote \ref{fUSA} applies here too.}

\subsubsection{Quota methods}\label{Aquota}
In a quota method, first a \emph{quota} $Q$ is calculated; this is roughly the
number of votes required for each seat, and different quota methods differ
in the choice of $Q$. The two main quotas that are used
are  the \emph{Hare quota} $V/\MM$ and the \emph{Droop quota} $V/(\MM+1)$
\cite{Droop}, where $V$
is the total number of (valid) votes and $\MM$ is the number of seats
(\cf{} \refR{Rquota});
in practice,
the quota is often rounded to an integer, either up, or  down, or to the nearest
integer (see \eg{} \cite{Pukelsheim} and \cite{SJV6} for various examples with
different, or no,  rounding).
Rounding is mathematically a nuisance, leading to various problems, but has
in practice usually no effect. In the present paper, we consider the ideal
versions without rounding, but see \refR{RQrounding}.

The quota method  gives a party with $v_i$ votes first $\floor{v_i/Q}$
seats; the remaining seats, if any, are given
to the parties with largest remainder in these divisions.
In other words, we find a threshold $t\in\oi$ such that
we round $v_i/Q$ upwards if the remainder is larger than $t$.
Mathematically, this means that we find $t$ and integers $s_i$ (the number
of seats for party $i$)  such that
\begin{equation}\label{qm}
  s_i-1+t \le \frac{v_i}Q  \le s_i+t.
\end{equation}
The (traditional) description above with remainders assumes that 
\begin{equation}
\sum_i\floor{v_i/Q}\le S\le \sum_i\ceil{v_i/Q},   
\end{equation}
and thus that we can find
$t\in\oi$ such that \eqref{qm} holds and $\sum_i s_i=S$.
This is always the case with the Hare and Droop quotas, 
and more generally with any quota in the
interval $[V/(S+1),V/S]$. (Except a case with
ties for the Droop quota, but then \eqref{qm} still works with $t=1$.)
Mathematically, the method works with any quota $Q>0$, taking \eqref{qm} as
the definition together with $\sum_is_i=S$ and letting $t$ be
arbitrary real (and requiring $s_i\ge0$ with a modification of \eqref{qm} if
$t>1$). It is easily shown that if $Q\ge V/(S+1)$, then we can take $t\le
1$, and if $Q\le V/S$, then we can take $t\ge0$.

As said above, there are two main quotas used, and thus two main quota methods. 

\subsubsection{Method of largest remainder  (Hare's method)\abbrev\vkx}
\label{Avalkvot}
The quota method using the Hare quota $V/S$.

\subsubsection{Droop's method\abbrev\Droopx}\label{ADroop}
The quota method using the Droop quota $V/(S+1)$.

\subsection{Election methods with unordered ballots}\label{Au}

In these methods, each ballot contains a list of names, but their order is
ignored. 
The number of candidates on each ballot is, depending on the method,
either arbitrary or limited to at most (or exactly) a given number, for
example $\MM$, the number to be elected.

\subsubsection{Block Vote\abbrev\BVx}\label{ABV}
\relax{Each voter votes for at most $\MM$ candidates. The $\MM$ candidates
  with the largest numbers of votes are elected.}
\xfootnote{One version  requires each voter to vote for exactly $\MM$
  candidates. For our purposes this makes no difference, see \refR{Rdummy}.
}
Also called \eg{}
\emph{Multi-Member Plurality}; \emph{Plurality-at-Large}.

The special case $\MM=1$ is the widely used \emph{Single-Member Plurality}
method, also called \emph{First-Past-The-Post}, where each voter votes for
one candidate, and the candidate with the largest number of votes wins;
this simple method has probably been used since pre-historic times.
\BVn{} usually means the multi-member case ($\MM\ge2$); this too
is an ancient method, and it is widely used
in non-political elections.
 Block Vote is well-known for not being proportional; when there are
organized parties, the largest party will get all seats,
\cf{} \refT{TBV}.



\subsubsection{\AVn\abbrev\AVx}\label{AAV}
\relax{Each voter votes for an arbitrary number of candidates. The $\MM$
  candidates with the largest numbers of votes are elected.}

This is also an old method, 
and has been used, even for public elections, since the 19th century,
although much less frequently than  Block Vote.

The method is one of the methods
proposed by \citet{Thiele} in 1895,
viz.~ his ``strong method'', see \refApp{AThopt} below.
The method was reinvented again (and given the name Approval Voting) by Weber 
c.~1976; see further \eg{} \cite{Weber,BramsFishburn:AV,Kilgour}.

For our problems, \AVn{} behaves often similarly to Block
Vote, see \eg{} \refT{TAV}, but note also the difference in
\refTs{TEJRBV} and \ref{TEJRAV}.

\subsubsection{Single Non-Transferable Vote\abbrev\SNTVx}\label{ASNTV}
\relax{Each voter votes for one candidate.
The $\MM$ candidates with the
largest numbers of votes are elected.}

The idea is that a set $\cW$ of voters can concentrate their votes on a
suitable number of candidates (one or several), 
according to the size of $\cW$, and thus get these candidates elected.
In an ideal situation where all voters belong to parties 
and vote as instructed by their party, 
and, moreover,
the parties accurately know  in advance  the number of votes
for each party,
optimal strategies lead to the same result as D'Hondt's method,
see \refT{TSNTV} and \cite[Appendix A.8]{SJV6}.

However, in practice, there are several practical problems.
In particular, the outcome depends heavily on how the votes are distributed
inside the parties; note that a party can get hurt by too much concentration
of votes as well as by too little concentration. Hence, parties are more or
less forced to use schemes of tactical voting.


\subsubsection{Limited Vote\abbrev{\LVxx,\;\LVx{L}}}\label{ALV}
\relax{Each voter votes for at most $\lv$ candidates, where $\lv$ is some
  given   number.
The $\MM$ candidates with the
largest numbers of votes are elected.}
\xfootnote{As for Block Vote, one version  requires each voter to
  vote for exactly $\lv$   candidates.}

Here $\lv\le\MM$. (Although, formally, this is not necessary, and Approval
Voting in \refApp{AAV} could be seen as the case $\lv=\infty$.)
The case $\lv=\MM$ is Block Vote, and $\lv=1$ is SNTV,
so usually only the case $1<\lv<\MM$ is called Limited Vote.

The idea behind Limited Vote is the same as for SNTV, and its problems are
more or less the same, see \eg{}
\cite[p.~193]{Carstairs}, 
\cite{Dodgson},
\cite[p.~27]{Farrell}. 

\subsubsection{\CVn\abbrev\CVx}\label{ACV}
Each voter can choose between voting for a single
candidate or splitting the vote between two or several. 
Usually there are, for practical
reasons, limitations on how the votes may be split.
\xfootnote{
An ideal mathematical version, probably never used,  
is that each voter may give each candidate $i$ a vote with weight
$w_i\ge0$, where these weights are arbitrary real numbers with 
$\sum_iw_i=1$.
}
In one common version, each
voter has a fixed number $\lv$ votes, and can give distribute them
arbitrarily between one or several (up to $\lv$) candidates. 
In another version, 
sometimes called \emph{Equal and Even Cumulative Voting},
the voter has one vote that may be given to a single candidate or 
split between two or several candidates;
this version uses unordered ballots, and a ballot with $n$ names is counted
as $1/n$ vote for each of them.
We let $\CVqx$ denote the mathematically ideal version of this,
where there is no limitation on the number of candidates on each ballot.
(This version has been reinvented and
called \emph{Satisfaction Approval Voting (SAV)} 
\cite{BramsKilgour}.)
Another version of {Equal and Even Cumulative Voting}, used in practice, 
allows at most $S$ names on each ballot.

\subsubsection{\phragmen's unordered method\abbrev{\Phrux}}\label{APhru}
\phragmen's method for unordered ballots
was proposed in 1894 by the Swedish mathematician
\phragmen
\cite{Phragmen1894, Phragmen1895, Phragmen1896,Phragmen1899},
using several different (but equivalent) formulations.
See \cite{SJV9} for a detailed discussion, including an algorithmic definition.
We use here a description from \cite{Phragmen1899}.

The candidates are elected sequentially. 
When a candidate is elected, the participating ballots 
(i.e., the ballots that include this candidate)
incur a total \emph{load} of 1 unit, which has to be 
distributed between them. 
This load is distributed such that
the maximum load on each ballot (inluding loads from earlier elected
candidates) is a small as possible;
furthermore, 
in each round, the candidate is elected 
that makes the resulting maximal load on a ballot as small as possible.
(For example, the first elected candidate is the one that appears on the
largest number of ballots; if she appears on $m$ ballots, these ballots get
a load $1/m$ each.)

The total load when $k$ candidates have been elected is thus $k$.
Note that the load on a given ballot increases when someone on the ballot is
elected, and otherwise stays the same; it never decreases.
We denote the maximum load 
when $k$ candidates have been elected by $t\xx k$.
Thus $0=t\xx0<t\xx1\le t\xx2\dots$ 
(equalities are possible when there are ties).
When $k$ candidates have been elected, say $A_1,\dots,A_k$ in this order,
the load on each
ballot that includes $A_k$ is exactly $t\xx k$.
More generally, the load on a ballot $\gs$ is $t\xx j$, where 
$j:=\max \set{i\le k:A_i\in\gs}$.

\subsubsection{Thiele's \opt{} method\abbrev{\Thoptx}}\label{AThopt}
Let $\psi(n)$ be a given function, which represents the ``satisfication'' of a
voter that sees $n$ of her candidates elected. Thus the total satisfication
if a set $\cE$ is elected is
\begin{equation}\label{sat}
\Psi(\cE):=\sum_{\gs\in\cV} \psi\bigpar{|\gs\cap\cE|},  
\end{equation}
The elected set $\cE$ is the set with $|\cE|=S$ that maximizes 
$\Psi(\cE)$. 

Of course, the method depends on the choice of the satisfaction function $f$.
It is convenient to write $\psi(n)=\sumkn w_k$ for a non-negative sequence
$\ww=(w_k)_1^\infty$ of weights. 
The standard choice is $w_k=1/k$ and thus $\psi(n)=H_n$; this is assumed 
unless another choice is explicitly mentioned.
In particular, $\Thoptx$ always denotes the standard version; we denote
the version defined by another weight sequence $\ww$ by $\Thoptwx{\ww}$.

This method was proposed in 1895 by \citet{Thiele}.
Thiele considered apart from the standard choice $w_k=1/k$ also the
\emph{strong method} with $w_k=1$ and thus $\psi(n)=n$
(which is the same as appoval voting, see \refApp{AAV}) 
and the \emph{weak method} with 
$\psi(n)=1$, $n\ge1$, and thus $w_k=0$ for $k\ge2$.
We denote the weak method by $\Thoptwx\weakw$.

\Thoptn{} was reinvented in 2001 
under the name 
\emph{Proportional Approval Voting (PAV)}
\cite{Kilgour}.

\citet{Thiele} realized
that is was computationally difficult to find the optimal
set $\cE$, and thus proposed also two sequential methods as approximations,
the \emph{addition method} and the \emph{elimination method}, 
\xfootnote{
In Danish:
\emph{Tilf{\o}jelsesreglen} and 
\emph{Udskydelsesreglen}, respectively.
\citet{Thiele} gave also the French names
 \emph{règle d'addition} and \emph{règle de rejet}.
}
described in
the following subsubsections.

\subsubsection{Thiele's addition method\abbrev{\Thax}}\label{ATha}

Candidates are elected one by one. In each round, the candidate $A$ is elected
that maximizes the increase of the satisfaction \eqref{sat},
i.e., $\Psi(\cE\cup \set A)-\Psi(\cE)$, where $\cE$ is the set of already
elected.
(Equivalently, $\Psi(\cE\cup\set A)$ is maximized.)

This is thus the greedy version of Thiele's \opt{} method.
As in \refApp{AThopt}, we assume the standard choice $w_k=1/k$ unless 
we say otherwise. Thus $\Thax$ denotes the standard version,
the weak version is denoted $\Thawx\weakw$,
and the general
version with weights $\ww$ is denoted $\Thawx{\ww}$.

The standard version can, equivalently but more concretely, be described by
the following rule. 
(For general weights, $1/(k+1)$ is replaced by $w_{k+1}$.)

{\em
Seats are awarded one by one.
For each seat, a ballot where $k\ge0$ names already have been
elected is counted as $1/(k+1)$ votes for each remaining candidate on
the ballot. 
The \emph{score} of a (not yet elected) candidate is 
the number of votes for her
counted with these weights.
The candidate with the largest score is elected.
}

Note that the score may change (decrease) from one round to another.

Thiele's addition method was used in Sweden 1909--1921 for the distribution
of seats within parties, see \cite[Appendix D]{SJV9} for details.


\subsubsection{Thiele's elimination method\abbrev{\Thex}}\label{AThe}

Candidates are eliminated one by one until only $S$ remain; these are
elected. In each round, the candidate is eliminated that minimized the
decrease in total satisfaction \eqref{sat}. In other words, if the set of
candidates is $\cC$, we define by backwards recursion 
a sequence of subsets $\cE_i$ for $i=|\cC|,\dots,S$ (in decreasing order)
by $\cE_{|\cC|}=\cC$ and then letting $\cE_i$ be the subset of $\cE_{i+1}$
with $|\cE_i|=i$ that maximizes $\Psi(\cE_i)$.

We consider only the standard version with weights $w_k=1/k$. 
We then have an equivalent, more explicit, description:

{\em
Candidates are eliminated one by one until only $S$ remain; these are
elected.
In each round, a ballot where $k\ge1$ names remain
is counted as $1/k$ votes for each remaining candidate on
the ballot. 
The \emph{score} of a candidate is 
the number of votes for her counted with these weights.
The candidate with the smallest score is eliminated.
}

The elimination method seems to have been ignored
after Thiele's paper.
It was recently reinvented 
under the name \emph{Harmonic Weighting}
as a method for ordering alternatives  for the electronic voting
system \emph{LiquidFeedback}  \cite{LiquidFeedback}.

\subsection{Election methods with ordered ballots}\label{Ao}

In these method, each ballot contains an ordered list of names.
The number of candidates on each ballot is
usually arbitrary, but may be limited to at most, or at least, or exactly
some given number, for 
example $\MM$, the number to be elected;
some versions require each voter to rank all candidates.
(In the latter case, 
each ballot can be seen as a permutation of the set of candidates.)

\subsubsection{Single Transferable Vote (STV) \abbrev\STVx }\label{ASTV}
First, a quota is calculated, nowadays almost always the Droop quota
$V/(S+1)$  (usually but not always rounded to the nearest higher integer), 
\cf{} \refApp{Aquota}.
Each ballot is counted for its first name only
(at later stages, ignoring candidates that have been elected or eliminated).
A candidate whose number of
votes is
at least the quota is elected; the surplus, \ie, the ballots
exceeding the quota, are transferred to
the next (remaining) name on the ballot.
This is repeated as long as some unelected candidate reaches the quota.
If there is no such candidate, and not enough candidates have been elected,
then the candidate with the least number of votes is eliminated, and the 
votes for that candidate are transferred to the next name on the ballot.
At the end, if the number of remaining candidates equals the number of
remaining seats, then all these are elected, even if they have not reached
the quota.

This description is far from a complete definition; many details are
omitted, and can be filled in in different ways, 
see \eg{} 
\cite{Tideman},
\cite{SJV6}, \cite[Appendix E.2]{SJV9}.
In particular, several quite different methods 
are used to transfer the
surplus, 
\xfootnote{
Some versions even involve randomness, \eg{} 
the version of STV used for the lower house,  Dáil,
of the Irish parliament,
where the outcome may depend on the random order in
which ballots are counted.}
but some other details can also vary (which might influence the outcome).
Thus, STV is a
family of election methods rather than a single method. 


\subsubsection{\phragmen's ordered method\abbrev{\Phrox}}\label{APhro}
\phragmen's method for ordered ballots
was proposed in 1913 by 
a Swedish Royal Commission (where \phragmen{} was one of the members)
\cite{bet1913}.  
\xfootnote{
\phragmen's ordered method has since 1921
been part of the Swedish election law (for
distribution of seats within parties; nowadays only in a minor role).
}
As the version for unordered ballots,
\refApp{APhru}, it can be defined in several different, equivalent ways;
see again \cite{SJV9} for a detailed discussion, including an algorithmic
definition. 

We use here, as for the unordered version, a definition using \emph{loads};
the method is defined as \phragmen's unordered method in \refApp{APhru},
with
the added rule that
in each round, a ballot is only counted for the first name on it, ignoring
candidates already elected. 
(For example, the first elected candidate is the one that is first
 on the largest number of ballots.)

\subsubsection{Thiele's ordered method\abbrev{\Thox}}\label{ATho}
Seats are distributed sequentially. In each round, each ballot is only
counted for the first name on it that has not already been elected; if this
is name number $k$ (so the $k-1$ preceding have been elected), then this
ballot is counted as a vote with weight $1/k$.
To avoid confusion, we use ``score'' for the total number of votes for a
given candidate, counted with these weights. (Thus the score may change from
one round to another.)
In each round, the candidate with the largest score is elected.

This version of Thiele's method is not due to \citet{Thiele}, 
who only considered unordered ballots (\refApp{AThopt}--\ref{ATha}).
\xfootnote{Thus our name for the method is historically incorrect, but we 
regard it as a version of Thiele's addition method for unordered ballots.}
The method was proposed in 1912 in the Swedish parliament 
by  Nilson in \"Orebro,
for distribution of seats within parties in parliamentary elections;
it is discussed (and rejected) in \cite{bet1913}.
\xfootnote{
The method was not adopted for general elections, but it was later chosen, 
still within parties, for
elections of committees inside city and county councils, 
and it is still (formally, at least) used for that purpose, see \cite{SJV6,SJV9}.
}

\begin{remark}
 As for Thiele's unordered methods (\refApp{AThopt}--\ref{ATha}), 
an arbitrary given sequence of weights
$w_k$ could be used instead of $1/k$. 
We do not consider this possibility in the present paper.
\end{remark}

\subsubsection{Borda (scoring) methods\abbrev{\Bordax{\ww}}}
\label{ABorda}
Each ballot is counted as $w_1$ votes for the first name, $w_2$ for
  the second name, and so on, according to some given non-increasing
  sequence $\ww=(w_k)_{k=1}^\infty$ of non-negative numbers.
The $S$ candidates with highest scores are elected.

Several different sequences $(w_k)_k$ are used.
The method was suggested 
(for the election of a single person)
by Jean-Charles de Borda in 1770 \cite{Borda} 
with 
$w_k=n-k+1$ where $n$ is the number of candidates, i.e., the sequence
$n,n-1,\dots,1$. (His proposal required each voter to rank all candidates.)
\xfootnote{
The same method was proposed already in 1433 by Nicolas Cusanus for election
of the king (and future emperor, after being crowned by the pope) of the Holy
Roman Empire \cite{McLean-Borda,Pukelsheim:Cusa}.
} 
This scoring rule
is called the \emph{Borda method}; more generally, any scoring rule is
sometimes called a Borda method.

Another common choice of scoring rule uses the \emph{harmonic series},
$w_k=1/k$; we call this the \emph{harmonic Borda method}.
This was
proposed in 1857 by Thomas Hare (who later instead developed and proposed
STV, \refS{ASTV}).
The harmonic Borda method gives in the
party-list case (see \refSS{SSpartylist}) 
the same result as D'Hondt's method.
However, tactical voting where
the voters of a party vote on the same name in different orders may give
quite different results, \cf{} \refT{TBorda}.

The choice $\ww=(1,\dots,1,0,\dots)$ with $w_k=1$ for $k\le L$ and $w_k=0$
for $k>L$, for some given $L\ge1$, 
means that only the first $L$ names on each ballot matter, and that their
order is irrelevant; this Borda method is equivalent to  \LVn{} with $L$
names on each ballot, see \refApp{ALV}.
In particular, the weights $(1,0,0,\dots)$ yield SNTV, \refApp{ASNTV}.

\newpage
\section{Some numerical values}\label{Anum} 

The tables below give,
as illustrations  and to faciliate comparisons,
numerical values for $1\le S\le 5$ for many of the thresholds discussed in
the paper.

\newcommand\hvc{\hfilneg}
\newcommand\tabstrut{\rule{0pt}{3ex}}  
\newcommand\tabbottom{\\[6pt]}  
\newcommand\Sl{\relax}
\newcommand\topell{$\ell=1$}
\newcommand\leftS{$S=1$}

\begin{table}[h]
\hvc
\begin{tabular}{r|lllll}
\Sl
  & \topell&2&3&4&5
\\ 
\hline \tabstrut 
\leftS & $\frac12=0.5$
\\ \tabstrut 
2 & $\frac{1}{3}\doteq0.333$ & 
$\frac23\doteq0.667$
\\ \tabstrut 
3 & 
$\frac {1}{4}=0.25$ & 
$\frac12=0.5$ & 
$\frac {3}{4}=0.75$
\\ \tabstrut 
4 & 
$\frac {1}{5}=0.2$ & 
${\frac {2}{5}}=0.4$ &
${\frac {3}{5}}=0.6$ & 
$\frac {4}{5}=0.8$
\\ \tabstrut 
5 & 
${\frac {1}{6}}\doteq0.167$ & 
${\frac {1}{3}}\doteq0.333$ &
$\frac12=0.5$ &
${\frac {2}{3}}\doteq0.667$ & 
${\frac {5}{6}}\doteq0.833$
\tabbottom
\end{tabular}
  \caption{The ``optimal'' threshold $\ell/(S+1)$, attained by \eg{}
$\pi\party\DH$, 
$\pi\party\Droop$, 
$\pix\tactic\SNTV$, 
$\pi\tactic\CV$, 
$\pi\Phru\PJR$,
$\pi\Thopt\EJR$, 
$\pi\The\same$, 
$\pi\STV\PSC$ (with the Droop quota),
$\pi\Phro\wPSC$.
(\refTs{Tdiv}, \ref{Tquot}, \ref{TSNTV}, \ref{TCV}, \ref{TsamePhru},
\ref{TEJRTh}, \ref{TThe}, \ref{TSTV}, \ref{TPhrowPSC})
}
  \label{tab:optimal}
\end{table}

\begin{table}[htbp]
\hvc
\begin{tabular}{r|lllll}
\Sl
  & \topell&2&3&4&5
\\ 
\hline \tabstrut 
\leftS & $\frac12=0.5$
\\ \tabstrut 
2 & $\frac{1}{3}\doteq0.333$ & 
$\frac34=0.75$
\\ \tabstrut 
3 & 
$\frac {1}{4}=0.25$ & 
$\frac35=0.6$ & 
${\frac {5}{6}}\doteq0.833$
\\ \tabstrut 
4 & 
$\frac {1}{5}=0.2$ & 
$\frac12=0.5$ &
${\frac {5}{7}}\doteq0.714$ &
$\frac {7}{8}=0.875$
\\ \tabstrut 
5 & 
${\frac {1}{6}}\doteq0.167$ & 
${\frac {3}{7}}\doteq0.429$ &
$\frac {5}{8}=0.625$ &
${\frac {7}{9}}\doteq0.778$ & 
${\frac {9}{10}}=0.9$
\tabbottom
\end{tabular}
  \caption{
$\pi\party\StL\ls$ for $1\le\ell\le S\le5$.
(\refT{Tdiv})
}
  \label{tab:StL}
\end{table}

\begin{table}[htbp]
\hvc
\begin{tabular}{r|lllll}
\Sl
  & \topell&2&3&4&5
\\ 
\hline \tabstrut 
\leftS & $\frac12=0.5$
\\ \tabstrut 
2 & $\frac{1}{3}\doteq0.333$ & 
$\frac34=0.75$
\\ \tabstrut 
3 & 
$\frac {1}{4}=0.25$ & 
${\frac {5}{9}}\doteq0.556$ &
${\frac {5}{6}}\doteq0.833$
\\ \tabstrut 
4 & 
$\frac {1}{5}=0.2$ & 
$\frac{7}{16}\doteq0.438$ &
${\frac {2}{3}}\doteq0.667$ & 
$\frac {7}{8}=0.875$
\\ \tabstrut 
5 & 
${\frac {1}{6}}\doteq0.167$ & 
${\frac {9}{25}}=0.36$ &
$\frac{11}{20}=0.55$ &
${\frac {11}{15}}\doteq0.733$ & 
${\frac {9}{10}}=0.9$
\tabbottom
\end{tabular}
  \caption{
$\pi\party\vk\ls$ for $1\le\ell\le S\le5$.
(\refT{Tquot})
}
  \label{tab:LR}
\end{table}

\begin{table}[htbp]
\hvc
\begin{tabular}{r|lllll}
\Sl
  & \topell&2&3&4&5
\\ 
\hline \tabstrut 
\leftS & $\frac12=0.5$
\\ \tabstrut 
2 & 
$\frac {1}{2}=0.5$ & 
$\frac {1}{2}=0.5$
\\ \tabstrut 
3 & 
$\frac {1}{2}=0.5$ & 
$\frac35=0.6$ & 
$\frac {1}{2}=0.5$
\\ \tabstrut 
4 & 
$\frac12=0.5$ &
${\frac {4}{7}}\doteq0.571$ &
${\frac {3}{5}}=0.6$ & 
$\frac12=0.5$ 
\\ \tabstrut 
5 & 
$\frac12=0.5$ &
${\frac {5}{9}}\doteq0.555$ &
${\frac {5}{8}}=0.625$ & 
${\frac {3}{5}}=0.6$ &
$\frac12=0.5$ 
\tabbottom
\end{tabular}
  \caption{The non-monotone
$\pi\EJR\BV\ls$ for $1\le \ell\le S\le 5$.
(\refT{TEJRBV})
}
  \label{tab:BVEJR}
\end{table}

\begin{table}[htbp]
\hvc
\begin{tabular}{r|lllll}
\Sl
  & \topell&2&3&4&5
\\ 
\hline \tabstrut 
\leftS & $\frac12=0.5$
\\ \tabstrut 
2 & 
$\frac {1}{2}=0.5$ & 
${\frac {2}{3}}\doteq0.667$ 
\\ \tabstrut 
3 & 
$\frac {1}{2}=0.5$ & 
$\frac35=0.6$ & 
$\frac {3}{4}=0.75$
\\ \tabstrut 
4 & 
$\frac12=0.5$ &
${\frac {4}{7}}\doteq0.571$ &
${\frac {2}{3}}\doteq0.667$ &
${\frac {4}{5}}=0.8$
\\ \tabstrut 
5 & 
$\frac12=0.5$ &
${\frac {5}{9}}\doteq0.555$ &
${\frac {5}{8}}=0.625$ & 
${\frac {5}{7}}\doteq0.714$ &
${\frac {5}{6}}\doteq0.833$
\tabbottom
\end{tabular}
  \caption{
$\pi\EJR\AV\ls$ for $1\le \ell\le S\le 5$.
(\refT{TEJRAV})
}
  \label{tab:AVEJR}
\end{table}

\begin{table}[htbp]
\hvc
\begin{tabular}{r|lllll}
\Sl
  & \topell&2&3&4&5
\\ 
\hline \tabstrut 
\leftS & $\frac12=0.5$
\\ \tabstrut 
2 & $\frac{1}{3}\doteq0.333$ & 
$\frac23\doteq0.667$ ?
\\ \tabstrut 
3 & 
$\frac {3}{11}\doteq 0.273$  & 
$\frac12=0.5$ ? & 
$\frac {3}{4}=0.75$ ?
\\ \tabstrut 
4 & 
$\frac {7}{31}\doteq0.226$ & 
${\frac {3}{7}}\doteq0.429$ ? &
${\frac {3}{5}}=0.6$ ? & 
$\frac {4}{5}=0.8$ ?
\\ \tabstrut 
5 & 
$??\doteq0.193$  & 
${\frac {7}{19}}\doteq0.368$ ?&
$\frac9{17}\doteq0.529$ ?&
${\frac {2}{3}}\doteq0.667$ ?& 
${\frac {5}{6}}\doteq0.833$ ?
\tabbottom
\end{tabular}
  \caption{Known and conjectured values of $\pi\same\Tha\ls$  and
    $\pi\PJR\Tha\ls$  
for $1\le\ell \le S\le 5$. (\refT{TThaJR}, \refConj{ConjTh})}
  \label{tab:Thasame}
\end{table}

\begin{table}[htbp]
\hvc
\begin{tabular}{r|lllll}
\Sl
  & \topell&2&3&4&5
\\ 
\hline \tabstrut 
\leftS & $\frac12=0.5$
\\ \tabstrut 
2 & $\frac{2}{5}=0.4$ & 
$\frac35=0.6$
\\ \tabstrut 
3 & 
$\frac {12}{35}\doteq 0.343$ & 
$\frac12=0.5$ & 
$\frac {23}{35} \doteq 0.657$
\\ \tabstrut 
4 & 
$\frac {24}{79}\doteq 0.304$ & 
${\frac {18}{41}}\doteq 0.439$ &
${\frac {23}{41}}\doteq 0.561$ & 
$\frac {55}{79}\doteq 0.696$
\\ \tabstrut 
5 & 
${\frac {720}{2621}}\doteq0.275$ & 
${\frac {36}{91}}\doteq0.396$ &
$\frac12=0.5$ &
${\frac {55}{91}}\doteq0.604$ & 
${\frac {1901}{2621}}\doteq0.725$
\tabbottom
\end{tabular}
  \caption{$\pix\tactic\Tho\ls$ for $1\le\ell \le S\le 5$.
  (\refT{TTho})}
  \label{tab:Thotactic}
\end{table}

\begin{table}[htbp]
\hvc
\begin{tabular}{r|lllll}
\Sl
  & \topell&2&3&4&5
\\ 
\hline \tabstrut 
\leftS & $\frac12=0.5$
\\ \tabstrut 
2 & $\frac{2}{5}=0.4$ & 
$\frac23\doteq0.667$
\\ \tabstrut 
3 & $\frac {12}{35}\doteq 0.343$ & 
$\frac47\doteq0.571$ & 
$\frac {3}{4}=0.75$
\\ \tabstrut 
4 & $\frac {24}{79}\doteq 0.304$ & 
${\frac {24}{47}}\doteq0.511$ &
$\frac{2}3\doteq0.667$ &
$\frac{4}5=0.8$
\\ \tabstrut 
5 & ${\frac {720}{2621}}\doteq0.275$ & 
${\frac {48}{103}}\doteq0.466$ &
${\frac {36}{59}}\doteq0.610$ &
$\frac {8}{11}\doteq0.727$ &
$\frac{5}{6}\doteq0.833$
\tabbottom
\end{tabular}
  \caption{$\pi\same\Tho\ls$ for $1\le\ell \le S\le 5$.  (\refT{TTho})}
  \label{tab:Thosame}
\end{table}

\begin{table}[htbp]
\hvc
\begin{tabular}{r|lllll}
\Sl
  & \topell&2&3&4&5
\\ 
\hline \tabstrut 
\leftS & $\frac12=0.5$
\\ \tabstrut 
2 & $\frac{2}{5}=0.4$ & 
$\frac23\doteq0.667$
\\ \tabstrut 
3 & $\frac {12}{35}\doteq 0.343$ & 
$\frac47\doteq0.571$ & 
$\frac {4}{5}=0.8$
\\ \tabstrut 
4 & $\frac {24}{79}\doteq 0.304$ & 
${\frac {24}{47}}\doteq0.511$ &
$\frac {8}{11}\doteq0.727$ &
$\frac{6}{7}\doteq0.857$
\\ \tabstrut 
5 & ${\frac {720}{2621}}\doteq0.275$ & 
${\frac {48}{103}}\doteq0.466$ &
${\frac {48}{71}}\doteq0.676$ &
$\frac {4}{5}=0.8$ &
$\frac{9}{10}=0.9$
\tabbottom
\end{tabular}
  \caption{$\pi\wPSC\Tho\ls$ for $1\le\ell \le S\le 5$.
(\refT{TTho})}
  \label{tab:Thowpsc}
\end{table}

\begin{table}[htbp]
\hvc
\begin{tabular}{r|lllll}
\Sl
  & \topell&2&3&4&5
\\ 
\hline \tabstrut 
\leftS & $\frac12=0.5$
\\ \tabstrut 
2 & $\frac{3}{7}\doteq0.429$ & 
$\frac47\doteq0.571$
\\ \tabstrut 
3 & $\frac {11}{29}\doteq 0.379$ & 
$\frac12=0.5$ & 
$\frac {18}{29}=0.621$
\\ \tabstrut 
4 & 
$\frac {25}{73}\doteq0.342$ &
${\frac {22}{49}}\doteq0.449$&
${\frac {27}{49}}\doteq0.551$&
${\frac {48}{73}}\doteq0.657$
\\ \tabstrut 
5 & 
${\frac {137}{437}}\doteq0.314$ &
${\frac {25}{61}}\doteq0.410$ &
$\frac12=0.5$ &
${\frac {36}{61}}\doteq0.59$ &
${\frac {300}{437}}\doteq0.686$
\tabbottom
\end{tabular}
  \caption{$\pix\Borda{1/k}\tactic\ls$ for $1\le\ell \le S\le 5$.
(\refT{TBorda})}
  \label{tab:Btactic}
\end{table}

\begin{table}[htbp]
\hvc
\begin{tabular}{r|lllll}
\Sl
  & \topell&2&3&4&5
\\ 
\hline \tabstrut 
\leftS & $\frac12=0.5$
\\ \tabstrut 
2 & $\frac{3}{7}\doteq0.429$ & 
$\frac23\doteq0.667$
\\ \tabstrut 
3 & $\frac {11}{29}\doteq 0.379$ & 
$\frac35=0.6$ & 
$\frac {3}{4}=0.75$
\\ \tabstrut 
4 & 
$\frac {25}{73}\doteq0.342$ &
${\frac {11}{20}}=0.55$ &
${\frac {9}{13}}\doteq0.692$ &
$\frac{4}5=0.8$
\\ \tabstrut 
5 & 
${\frac {137}{437}}\doteq0.314$ &
${\frac {25}{49}}\doteq0.510$ &
${\frac {11}{17}}\doteq 0.647$ &
$\frac{3}4=0.75$ &
$\frac{5}{6}\doteq0.833$
\tabbottom
\end{tabular}
  \caption{$\pi\Borda{1/k}\same\ls=\pi\Borda{1/k}\wPSC\ls$ 
for $1\le\ell \le S\le 5$.
(\refT{TBorda})}
  \label{tab:Bsame}
\end{table}

\clearpage

\newcommand\vol{\textbf}
\newcommand\jour{\emph}
\newcommand\book{\emph}
\newcommand\inbook{\emph}
\def\no#1#2,{\unskip#2, no. #1,} 
\newcommand\toappear{\unskip, to appear}

\newcommand\arxiv[1]{\texttt{arXiv:#1}}
\newcommand\arXiv{\arxiv}

\def\nobibitem#1\par{}

\newcounter{bibrub}
\newcommand\bibrub[1]{\stepcounter{bibrub} \medskip 
  \centerline{\textsc{\Alph{bibrub}. #1}} 
  \medskip}
\newcounter{bibsubrub}[bibrub]
\newcommand\bibsubrub[1]{\stepcounter{bibsubrub} \medskip 
  \centerline{\textsc{\Alph{bibrub}\arabic{bibsubrub}. #1}} 
  \medskip}

\newcommand\urldag[1]{(#1)}
\newcommand\urlq[2]{\url{#1} \urldag{#2}}
\newcommand\xurl{\\\url}
\newcommand\xurlq{\\\urlq}
\newcommand\supplera[1]{#1}


\end{document}